\def\llncs{0}
\def\fullpage{1}
\def\anonymous{0}
\def\authnote{0}
\def\notxfont{0}
\def\submission{0}
\def\reply{0}
\def\cameraready{0}
\def\anonymous{1}
\def\authnote{0}
\def\llncs{1}
\def\llncs{1}
\def\anonymous{0}
\def\authnote{0}
\renewcommand{\emph}{\textit}
\definecolor{darkblue}{rgb}{0,0,0.6}
\definecolor{darkgreen}{rgb}{0,0.5,0}
\definecolor{maroon}{rgb}{0.5,0.1,0.1}
\definecolor{dpurple}{rgb}{0.2,0,0.65}
\definecolor{chocolate}{rgb}{0.8,0.4,0.1}
\DeclareMathAlphabet{\mathpzc}{OT1}{pzc}{m}{it}
\renewcommand*{\backref}[1]{}
\def\notxfont{1}
\renewcommand{\subparagraph}{\paragraph}
\newtheoremstyle{thicktheorem}%
{\topsep}
{\topsep}
{\itshape}{}%
{\bfseries}%
{.}
{ }%
{\thmname{#1}\thmnumber{ #2}%
		\thmnote{ (#3)}%
}
\newtheoremstyle{remark}
{\topsep}
{\topsep}
	{}
	{}
	{}
	{.}
	{ }
	{\textit{\thmname{#1}}\thmnumber{ #2}
			\thmnote{ (#3)}%
	}
	\theoremstyle{thicktheorem}
	\newtheorem{theorem}{Theorem}[section]
	\newtheorem{lemma}[theorem]{Lemma}
	\newtheorem{corollary}[theorem]{Corollary}
	\newtheorem{proposition}[theorem]{Proposition}
	\newtheorem{definition}[theorem]{Definition}
	\theoremstyle{remark}
	\newtheorem{claim}[theorem]{Claim}
	\newtheorem{remark}[theorem]{Remark}
	\crefname{theorem}{Theorem}{Theorems}
	\crefname{assumption}{Assumption}{Assumptions}
	\crefname{construction}{Construction}{Constructions}
	\crefname{corollary}{Corollary}{Corollaries}
	\crefname{conjecture}{Conjecture}{Conjectures}
	\crefname{definition}{Definition}{Definitions}
	\crefname{exmaple}{Example}{Examples}
	\crefname{experiment}{Experiment}{Experiments}
	\crefname{counterexample}{Counterexample}{Counterexamples}
	\crefname{lemma}{Lemma}{Lemmata}
	\crefname{observation}{Observation}{Observations}
	\crefname{proposition}{Proposition}{Propositions}
	\crefname{remark}{Remark}{Remarks}
	\crefname{claim}{Claim}{Claims}
	\crefname{fact}{Fact}{Facts}
	\crefname{note}{Note}{Notes}
 \crefname{appendix}{App.}{Appendices}
 \crefname{section}{Sec.}{Sections}
\renewcommand*{\backref}[1]{}
	\renewcommand*{\backref}[1]{(Cited on page~#1.)}
\newcommand*{\keys}[1]{\mathsf{#1}}
\newcommand{\Oracle}[1]{O_{\mathtt{#1}}}
\newcommand*{\algo}[1]{\ensuremath{\mathsf{#1}}}
\newcommand*{\qalgo}[1]{\ensuremath{\mathpzc{#1}}}
\newcommand*{\qstate}[1]{\mathpzc{#1}}
\newcommand*{\qreg}[1]{{\color{gray}{\mathsf{#1}}}}
\newcommand*{\entity}[1]{\mathcal{#1}}
\newcounter{expitem}
\newcommand{\chosen}{\leftarrow}
\newcommand{\sample}{\leftarrow}
\newcommand{\lrun}{\leftarrow}
\newcommand{\la}{\leftarrow}
\newcommand{\ra}{\rightarrow}
\renewcommand{\gets}{\leftarrow}
\newcommand{\seteq}{\coloneqq}
\newcommand{\tensor}{\otimes}
\newcommand{\concat}{\|}
\newcommand{\setbk}[1]{\{#1\}}
\newcommand{\cC}{\mathcal{C}}
\newcommand{\cD}{\mathcal{D}}
\newcommand{\cE}{\mathcal{E}}
\newcommand{\cF}{\mathcal{F}}
\newcommand{\cI}{\mathcal{I}}
\newcommand{\cM}{\mathcal{M}}
\newcommand{\cO}{\mathcal{O}}
\newcommand{\cP}{\mathcal{P}}
\newcommand{\cQ}{\mathcal{Q}}
\newcommand{\cR}{\mathcal{R}}
\newcommand{\cT}{\mathcal{T}}
\newcommand{\qA}{\qalgo{A}}
\newcommand{\qB}{\qalgo{B}}
\newcommand{\qC}{\qalgo{C}}
\newcommand{\qG}{\qalgo{G}}
\def\makeuppercase#1{
\expandafter\newcommand\csname sf#1\endcsname{\mathsf{#1}}
\expandafter\newcommand\csname frak#1\endcsname{\mathfrak{#1}}
\expandafter\newcommand\csname bb#1\endcsname{\mathbb{#1}}
\expandafter\newcommand\csname bf#1\endcsname{\textbf{#1}}
}
\def\makelowercase#1{
\expandafter\newcommand\csname frak#1\endcsname{\mathfrak{#1}}
\expandafter\newcommand\csname bf#1\endcsname{\textbf{#1}}
}
\newcounter{char}
   \edef\letter{\alph{char}}
   \edef\Letter{\Alph{char}}
\def\makeuppercase#1{
\expandafter\newcommand\csname tl#1\endcsname{\widetilde{#1}}
}
\def\makelowercase#1{
\expandafter\newcommand\csname tl#1\endcsname{\widetilde{#1}}
}
\newcommand{\R}{\mathbb{R}}
\newcommand{\bit}{\{0,1\}}
\newcommand{\Fs}{\mathcal{F}}
\newcommand{\Ms}{\mathcal{M}}
\newcommand{\secp}{\lambda}
\newcommand{\secpar}{\secp}
\newcommand{\coin}{\keys{coin}}
\newcommand{\sig}{\sigma}
\newcommand{\cert}{\keys{cert}}
\newcommand{\aux}{\mathsf{aux}}
\newcommand{\A}{\entity{A}}
\newcommand{\IND}{\textrm{IND}}
\newcommand{\CPA}{\textrm{CPA}}
\newcommand{\INDCPA}{\IND\textrm{-}\CPA}
\newcommand*{\xProblem}[1]{\ensuremath{\mathrm{#1}}}
\newcommand*{\GapSVP}{\xProblem{GapSVP}}
\newcommand*{\SIS}{\xProblem{SIS}}
\newcommand{\advb}[3]{\mathsf{Adv}_{#1}^{\mathsf{#2} \mbox{-} \mathsf{#3}}}
\newcommand{\advc}[4]{\mathsf{Adv}_{#1}^{\mathsf{#2} \mbox{-} \mathsf{#3} \mbox{-} \mathsf{#4}}}
\newcommand{\expb}[3]{\mathsf{Exp}_{#1}^{ \mathsf{#2} \mbox{-} \mathsf{#3}}}
\newcommand{\expc}[4]{\mathsf{Exp}_{#1}^{ \mathsf{#2} \mbox{-} \mathsf{#3} \mbox{-} \mathsf{#4}}}
\newcommand{\hybi}[1]{\mathsf{Hyb}_{#1}}
\newcommand*{\pk}{\keys{pk}}
\newcommand*{\sk}{\keys{sk}}
\newcommand*{\dk}{\keys{dk}}
\newcommand*{\ek}{\keys{ek}}
\newcommand*{\dvk}{\keys{dvk}}
\newcommand*{\vk}{\keys{vk}}
\newcommand*{\svk}{\keys{svk}}
\newcommand*{\msk}{\keys{msk}}
\newcommand*{\ct}{\keys{ct}}
\newcommand*{\hatct}{\widehat{\ct}}
\newcommand*{\msg}{\keys{m}}
\newcommand{\qsk}{\qstate{sk}}
\newcommand{\qdk}{\qstate{dk}}
\newenvironment{boxfig}[2]{\begin{figure}[#1]\fbox{\begin{minipage}{0.97\linewidth}
                        \vspace{0.2em}
                        \makebox[0.025\linewidth]{}
                        \begin{minipage}{0.95\linewidth}
            {{
                        #2 }}
                        \end{minipage}
                        \vspace{0.2em}
                        \end{minipage}}
                        }
                        {\end{figure}}
\newcommand{\Setup}{\algo{Setup}}
\newcommand{\KG}{\algo{KG}}
\newcommand{\Enc}{\algo{Enc}}
\newcommand{\Dec}{\algo{Dec}}
\newcommand{\Sign}{\algo{Sign}}
\newcommand{\Vrfy}{\algo{Vrfy}}
\newcommand{\DelVrfy}{\algo{DelVrfy}}
\newcommand{\SigVrfy}{\algo{SigVrfy}}
\newcommand{\qSign}{\qalgo{Sign}}
\newcommand{\qKG}{\qalgo{KG}}
\newcommand{\qDec}{\qalgo{Dec}}
\newcommand{\TrapGen}{\algo{TrapGen}}
\newcommand{\EvalF}{\algo{EvalF}}
\newcommand{\EvalFX}{\algo{EvalFX}}
\newcommand\PKE{\algo{PKE}}
\newcommand{\SKE}{\algo{SKE}}
\newcommand{\E}{\algo{E}}
\newcommand{\D}{\algo{D}}
\newcommand{\Sim}{\algo{Sim}}
\newcommand{\Eval}{\algo{Eval}}
\newcommand{\negl}{{\mathsf{negl}}}
\newcommand{\poly}{{\mathrm{poly}}}
\newcommand{\zo}[1]{\{0,1\}^{#1}}
\newcommand{\bin}{\{0,1\}}
\newcommand{\Ext}{\mathrm{Ext}}
\newcommand{\UPFSKL}{\mathsf{UPFSKL}}
\newcommand{\PKESKL}{\algo{PKESKL}}
\newcommand{\PRFSKL}{\algo{PRFSKL}}
\newcommand{\DSSKL}{\algo{DSSKL}}
\newcommand{\qst}{\qstate{st}}
\newcommand{\qstateq}{\qstate{q}}
\newcommand{\mat}[1]{\boldsymbol{#1}}
\newcommand{\cPrev}{\overline{\cP}}
\newcommand{\sigvk}{\mathsf{svk}}
\newcommand{\qsigk}{\qstate{sigk}}
\newcommand{\API}{\qalgo{API}}
\newcommand{\projimp}{\algo{ProjImp}}
\newcommand{\qDel}{\qalgo{Del}}
\newcommand{\qExt}{\qalgo{Ext}}
\newcommand{\qaux}{\qstate{aux}}
\newcommand{\returned}{\top}
\newcommand{\unreturned}{\bot}
\newcommand{\requestchallenge}{\mathsf{RequestChallenge}}
\newcommand{\InvertLeft}{\mathsf{InvertLeft}}
\newcommand{\InvertRight}{\mathsf{InvertRight}}
\newcommand{\Invert}{\mathsf{Invert}}
\newcommand{\sigk}{\mathsf{sigk}}
\newcommand{\td}{\mathsf{td}}
\newcommand{\Sam}{\mathsf{Sam}}
\newcommand{\SamPre}{\mathsf{SamPre}}
\newcommand{\QSampGauss}{\qalgo{QSampGauss}}
\newcommand{\signature}{\mathsf{sig}}
\newcommand{\descf}{\langle f \rangle}
\newcommand{\QSign}{\qalgo{QSign}}
\newcommand{\Gauss}[2]{D_{#1, #2}} 
\newcommand{\NN}{\mathbb{N}}   
\newcommand{\ZZ}{\mathbb{Z}}
\newcommand{\RR}{\mathbb{R}}  
\newcommand{\CC}{\mathbb{C}} 
\newcommand{\mA}{\mathbf{A}}    \newcommand{\mB}{\mathbf{B}}
    \newcommand{\mG}{\mathbf{G}}
\newcommand{\mH}{\mathbf{H}}    
\newcommand{\mI}{\mathbf{I}}    \newcommand{\mR}{\mathbf{R}}
\newcommand{\mV}{\mathbf{V}}    
\newcommand{\mC}{\mathbf{C}}
\newcommand{\mS}{\mathbf{S}}
\newcommand{\vc}{\mathbf{c}}    
\newcommand{\ve}{\mathbf{e}}    
\newcommand{\vs}{\mathbf{s}}    \newcommand{\vt}{\mathbf{t}}
\renewcommand{\vu}{\mathbf{u}} 
\newcommand{\vx}{\mathbf{x}}    \newcommand{\vy}{\mathbf{y}}
\newcommand{\vz}{\mathbf{z}}    
\newcommand{\vzero}{\mathbf{0}} 
\newcommand{\smudpar}{\Gamma}
\newcommand{\trans}{\top}
\newcommand{\samp}{\leftarrow}
\newcommand{\bd}{\begin{description}}
\newcommand{\ed}{\end{description}}
\newcommand{\bt}{\begin{theorem}}
\newcommand{\et}{\end{theorem}}
\newcommand{\bc}{\begin{claim}}
\newcommand{\ec}{\end{claim}}
\newcommand{\bl}{\begin{lemma}}
\newcommand{\el}{\end{lemma}}
\newcommand{\br}{\begin{remark}}
\newcommand{\er}{\end{remark}}
\newcommand{\bde}{\begin{definition}}
\newcommand{\ede}{\end{definition}}
\newcommand{\bi}{\begin{itemize}}
\newcommand{\ei}{\end{itemize}}
\newcommand{\be}{\begin{enumerate}}
\newcommand{\ee}{\end{enumerate}}
\newcommand{\authornote}[3]{\textcolor{#3}{[\textbf{#1:} {#2}]}}
\newcommand{\fuyuki}[1]{\authornote{fuyuki}{#1}{chocolate}}
\newcommand{\takashi}[1]{\authornote{takashi}{#1}{dpurple}}
\newcommand{\mor}[1]{\authornote{tomoyuki}{#1}{red}}
\newcommand{\fuyuki}[1]{}
\newcommand{\takashi}[1]{}
\newcommand{\mor}[1]{}
\let\oldvec\vec
\let\vec\oldvec
\renewcommand*\l@author[2]{}
\renewcommand*\l@title[2]{}
\theoremstyle{remark}
\title{
\textbf{A Simple Framework for Secure Key Leasing}
}
\title{
\textbf{A Simple Framework for Secure Key Leasing}
}
\begin{document}

\ifnum\anonymous=1 
\ifnum\llncs=1
\author{\empty}\institute{\empty}
\else
\author{}
\fi
\else
%
%
\ifnum\llncs=1
\author{
	Fuyuki Kitagawa\inst{1,2} \and 
Tomoyuki Morimae\inst{3} \and
Takashi Yamakawa\inst{1,2,3}
}
\institute{ 
NTT Social Informatics Laboratories, Tokyo, Japan \and
NTT Research Center for Theoretical Quantum Information, Atsugi, Japan \and
    Yukawa Institute for Theoretical Physics, Kyoto University, Kyoto, Japan
}
\else
%
%
\author[1,2]{\hskip 1em Fuyuki Kitagawa}
\author[3]{\hskip 1em Tomoyuki Morimae}
\author[1,2,3]{\hskip 1em Takashi Yamakawa}
\affil[1]{{\small NTT Social Informatics Laboratories, Tokyo, Japan}\authorcr{\small \{fuyuki.kitagawa,takashi.yamakawa\}@ntt.com}}
\affil[2]{{\small NTT Research Center for Theoretical Quantum Information, Atsugi, Japan}}
\affil[3]{{\small Yukawa Institute for Theoretical Physics, Kyoto University, Kyoto, Japan}\authorcr{\small tomoyuki.morimae@yukawa.kyoto-u.ac.jp}}
\renewcommand\Authands{, }
\fi 
\fi

\ifnum\llncs=1
\date{}
\else
\ifnum\anonymous=0
\date{\today}
\else
\date{}
\fi
\fi

\maketitle

\begin{abstract}
Secure key leasing (a.k.a. key-revocable cryptography) enables us to lease a cryptographic key as a quantum state in such a way that the key can be later revoked in a verifiable manner. 
We propose a simple framework for constructing cryptographic primitives with secure key leasing via the certified deletion property of BB84 states. Based on our framework, we obtain the following schemes. 
\begin{itemize}
\item A public key encryption scheme with secure key leasing that has  classical revocation based on any IND-CPA secure public key encryption scheme. Prior works rely on either quantum revocation or stronger assumptions such as the quantum hardness of the learning with errors (LWE) problem.
\item A pseudorandom function with secure key leasing that has classical revocation based on one-way functions. Prior works rely on stronger assumptions such as the quantum hardness of the LWE problem.
\item A digital signature scheme with secure key leasing that has classical revocation based on the quantum hardness of the short integer solution (SIS) problem. 
Our construction has static signing keys, i.e., the state of a signing key almost does not change before and after signing. Prior constructions either rely on non-static signing keys or indistinguishability obfuscation to achieve a stronger goal of copy-protection.  
\end{itemize}
In addition, all of our schemes remain secure even if a verification key for revocation is leaked after the adversary submits a valid certificate of deletion. To our knowledge, all prior constructions are totally broken in this setting. Moreover, in our view, our security proofs are much simpler than those for existing schemes.    
\end{abstract}

\ifnum\llncs=1
\else
\newpage
\setcounter{tocdepth}{2}
\tableofcontents

\newpage
\fi


\newcommand{\qLEval}{\qalgo{LEval}}
\newcommand{\Domprf}{D_{\mathsf{prf}}}
\newcommand{\Ranprf}{R_{\mathsf{prf}}}
\newcommand{\Constrain}{\algo{Constrain}}
\newcommand{\CEval}{\algo{CEval}}
\newcommand{\stinfo}{\mathsf{st}}
\newcommand{\SMoEBB}{\mathsf{SMoEBB84}}
\newcommand{\expo}{\mathsf{expo}}
\newcommand{\ctlen}{\ell_{\ct}}

\newcommand{\CPRF}{\algo{CPRF}}
\newcommand{\inplen}{{\ell_{\mathsf{inp}}}}
\newcommand{\outlen}{{\ell_{\mathsf{out}}}}
\newcommand{\prfinp}{s}
\newcommand{\prfout}{t}
\newcommand{\cprfout}{\CPRF.\prfout}
\newcommand{\tlp}{\tilde{p}}
\newcommand{\tlD}{\tilde{D}}
\newcommand{\tlrho}{\tilde{\qsk}}
\newcommand{\qExtComp}{\qalgo{ExtComp}}
\newcommand{\ConfPRF}{\algo{ConfPRF}}
\newcommand{\TPRF}{\algo{TPRF}}
\newcommand{\qQKG}{\qalgo{QKG}}
\newcommand{\qQEval}{\qalgo{QEval}}
\renewcommand{\Ext}{\algo{Ext}}

\newcommand{\Dreal}[1]{D_{\mathsf{real},#1}}
\newcommand{\Sample}{\algo{Smp}}
\newcommand{\qp}{\qalgo{qp}}

\newcommand{\ObvKG}{\mathsf{ObvKG}}
\newcommand{\key}{\mathsf{sk}}
\newcommand{\TEPRF}{\mathsf{TEPRF}}

\newcommand{\chall}{\mathsf{chall}}
\newcommand{\ans}{\mathsf{ans}}
\newcommand{\Rel}{\mathsf{Rel}}
\newcommand{\str}{\mathsf{str}}
\renewcommand{\Ext}{\mathsf{Ext}}
\newcommand{\qSimKG}{\qalgo{SimKG}}
\newcommand{\hatP}{\widehat{P}}
\newcommand{\tlprfinp}{\widetilde{\prfinp}}
\newcommand{\hatprfinp}{\widehat{\prfinp}}
\newcommand{\hatprfout}{\widehat{\prfout}}
\newcommand{\rep}{\mathsf{rep}}
\newcommand{\hatm}{\widehat{m}}

\newcommand{\CS}{\mathsf{CS}}

\newcommand{\DS}{\mathsf{DS}}

\renewcommand{\key}{\mathsf{key}}

\newcommand{\hatD}{\widehat{D}}

\newcommand{\ExtComp}{\mathsf{ExtComp}}

\section{Introduction}

\subsection{Backgrounds}

\paragraph{Leasing cryptographic keys.}
The delegation and revocation of privileges are required in many application scenarios of cryptographic primitives. For example, the following situation is common in many applications: The lessor Alice leases cryptographic secret key $\sk$ to the lessee Bob so that Bob can have the privilege for a certain task by using $\sk$ for a limited time period, and at later point, Alice needs to revoke Bob's privilege. The de fact solution for this problem in classical cryptography is key update. Namely, Alice simply stops using $\sk$ when Alice revokes Bob, and starts to use a new key $\sk^\prime$. However, updating key is undesirable for many reasons and it should be avoided as long as possible. If $\sk$ is a secret key of public key encryption (PKE) or digital signature scheme, Alice needs to update not only $\sk$ but also its corresponding public key and announce the new public key all over the world, which is highly costly. Even worse, updating public key could cause vulnerability based on the time-lag in the update. For example, it might be the case that Charlie who does not realize that Alice's public encryption key is updated sends a ciphertext generated by using the old public encryption key after Bob is revoked. Since Bob has the old secret key needed to decrypt the ciphertext, Bob can get the message sent by Charlie even though Bob is already revoked.

\paragraph{Key leasing without key update?}
The above situation raises one natural question whether we can achieve key leasing capability without key update. It is clearly impossible in classical cryptography. For example, if Alice keeps using the un-updated public encryption key, it is impossible to prevent Bob from decrypting future ciphertexts.
However, surprisingly, key leasing without key update is in fact possible by utilizing quantum information together with cryptography.

\paragraph{Secure key leasing.}
Secure key leasing (a.k.a. key-revocable cryptography)~\cite{AC:KitNis22,EC:AKNYY23,TCC:AnaPorVai23} is a quantum cryptographic primitive that achieves key leasing capability without key update by considering key deletion protocol. In a secure key leasing scheme, we can generate a quantum secret key (that is, secret key encoded into a quantum state) together with a secret deletion verification key. Moreover, the secure key leasing scheme has the deletion algorithm and deletion verification algorithm, in addition to the algorithms that form a standard cryptographic primitive (such as the key generation, encryption, and decryption algorithms for PKE). The deletion verification key and the additional two algorithms are used in the deletion protocol. Consider again the situation where Alice leases (quantum) secret key $\qsk$ to Bob, and at later point, Alice needs to revoke Bob. In the secure key leasing scheme, when Alice revokes Bob, Alice requires Bob to send a deletion certificate by executing the deletion algorithm to the leased secret key $\qsk$. On receiving the deletion certificate from Bob, Alice can check its validity by using the deletion verification algorithm with the deletion verification key. The security of the secure key leasing scheme guarantees that if the deletion verification algorithm accepts the deletion certificate, Bob's privilege is lost, that is, Bob can no longer perform the task that requires $\qsk$. For example, in PKE with secure key leasing, Bob cannot decrypt a newly seen ciphertext once Bob deletes the leased key and the generated deletion certificate is accepted.
We emphasize that the security guarantee holds even if the ciphertext is generated by using the unupdated public encryption key, and it achieves key leasing capability without any key update.

In secure key leasing, the deletion certificate can either be a quantum state or a classical string. If a secure key leasing scheme has completely classical deletion certificate and deletion verification algorithm, 
we say that it has classical revocation.

\paragraph{Previous results on secure key leasing.}
A notable feature of secure key leasing is that it can be realized based on standard cryptographic assumptions.
This is contrast to the related notion of unclonable cryptographic keys that currently require highly strong assumption of post quantum indistinguishability obfuscation (iO)~\cite{C:CLLZ21,TCC:LLQZ22}.
Kitagawa and Nishimaki \cite{AC:KitNis22} constructed bounded-time secure secret key encryption (SKE) with secure key leasing based on any one-way function (OWF).\footnote{More precisely, they constructed bounded collusion secure secret key functional encryption with secure key leasing.}
Agrawal, Kitagawa, Nishimaki, Yamada, and Yamakawa \cite{EC:AKNYY23} and Ananth, Poremba, and Vaikuntanathan \cite{TCC:AnaPorVai23} concurrently and independently introduced PKE with secure key leasing (PKE-SKL)\footnote{\cite{TCC:AnaPorVai23} calls it revocable PKE.} and gave its constructions with incomparable features. The construction of \cite{EC:AKNYY23} is based on any standard (classical) PKE, but needs quantum revocation, while the construction of \cite{TCC:AnaPorVai23}  has classical revocation, but relies on a stronger assumption called the learning with errors (LWE) assumption and an unproven conjecture. 
The reliance on the unproven conjecture of \cite{TCC:AnaPorVai23} was removed by a follow-up work by Ananth, Hu, and Huang~\cite{AnaHuHua24}, but they still rely on the LWE assumption,  which is stronger than the mere existence of standard PKE. 
In addition to PKE, these works construct more primitives with secure key leasing.  
Specifically, \cite{EC:AKNYY23} extended their PKE-SKL to more advanced encryption primitives such as identity-based encryption (IBE), attribute-based encryption (ABE), and functional encryption (FE), with secure key leasing by assuming only their standard counterparts. 
\cite{TCC:AnaPorVai23,AnaHuHua24} extended their PKE-SKL to fully-homomorphic encryption with secure key leasing (FHE-SKL) and also constructed pseudorandom function with secure key leasing (PRF-SKL)  having classical revocation based on the LWE assumption.  
More recently, Morimae, Poremba, and Yamakawa \cite{TQC:MorPorYam24} introduced digital signatures with secure key leasing (DS-SKL)\footnote{They call it digital signatures with revocable signing keys (DSR-Key).} and constructed it with classical revocation based on the LWE assumption. 
However, their construction has a drawback that a signing key evolves whenever a new signature is issued, and in particular, the sizes of signing keys and signatures linearly grow with the total number of issued signatures.\footnote{They also sketched an idea to reduce the signature size to be logarithmic in the number of issued signatures, but it does not reduce the signing key size.} 
They also give another construction based on one-way group actions (OWGA)~\cite{TCC:JQSY19}. but the construction only supports quantum revocation and has a similar drawback that the sizes of signing keys and signatures linearly grow with the total number of issued signatures. 

Although we already have some positive results as listed above, secure key leasing is a relatively new primitive and our understanding for it has not yet developed enough.
In fact, we have many natural questions raised from the initial results.
This includes the following.
\begin{itemize}
\item Can we construct PKE-SKL with classical revocation from any standard PKE?
\item Can we construct PRF-SKL from standard PRFs or equivalently from any OWFs?
\item Can we construct DS-SKL such that the sizes of signing keys and signatures are independent of  the number of issued signatures?   
\end{itemize}

\begin{table}[t]
\centering
\caption{Comparison of PKE-SKL}
\label{table:pkeskl}
\begin{tabular}{cccc}
\toprule
Reference& Revocation&VRA security&Assumption\\
\midrule
\cite{EC:AKNYY23}& Quantum & &PKE\\
\midrule
\cite{TCC:AnaPorVai23,AnaHuHua24}& Classical & &LWE\\
\midrule
Ours& Classical & \checkmark&PKE\\
\bottomrule
\end{tabular}
\end{table}

\begin{table}[t]
\centering
\caption{Comparison of PRF-SKL}
\label{table:prfskl}
\begin{tabular}{cccc}
\toprule
Reference& Revocation&VRA security&Assumption\\
\midrule
\cite{TCC:AnaPorVai23,AnaHuHua24}& Classical & &LWE\\
\midrule
Ours& Classical & \checkmark& OWF\\
\bottomrule
\end{tabular}
\end{table}

\begin{table}[t]
\centering
\caption{Comparison of DS-SKL}
\label{table:dsskl}
\begin{tabular}{cccccc}
\toprule
Reference& Revocation&VRA security&Signing key size&Signature size &Assumption\\
\midrule
\cite{TQC:MorPorYam24}& Classical& &$n\cdot \poly(\secp)$& $n\cdot \poly(\secp)$& LWE\\
\midrule
\cite{TQC:MorPorYam24}& Quantum& &$n\cdot \poly(\secp)$& $n\cdot \poly(\secp)$& OWGA\\
\midrule
Ours& Classical&\checkmark &$\poly(\secp)$&$\poly(\secp)$& SIS\\
\bottomrule
\end{tabular}\\
$n$: the number of issued signatures, $\secp$: the security parameter.
\end{table}

\subsection{Our Results}\label{sec-our-results}

In this work, we propose a new unified framework for realizing secure key leasing schemes, and solve the above questions affirmatively.
More specifically, we propose a simple framework for constructing secure key leasing schemes via the certified deletion property of BB84 states.
Based on our framework, we obtain the following schemes.
\begin{itemize}
\item PKE-SKL with classical revocation based on any IND-CPA secure PKE scheme.
\item PRF-SKL with classical revocation based on OWFs.
\item DS-SKL with classical revocation based on the 
short integer solution (SIS) assumption. Our scheme has static signing keys, i.e., the states of a signing key almost does not change before and after signing. In particular, this implies that the sizes of signing keys and signatures are independent of the number of issued signatures.    
\end{itemize}
In addition, all of our schemes remain secure even if a deletion verification key used by the deletion verification algorithm is leaked after the adversary submits a valid certificate of deletion. We call it security against verification key revealing attack (VRA security).
To our knowledge, all prior constructions are totally broken in this setting.   
Moreover, in our view, our security proofs are much simpler than those for existing schemes. 
Comparisons among existing and our constructions can be found in \Cref{table:pkeskl,table:prfskl,table:dsskl}. 

\paragraph{On security definitions.} 
We introduce new security definitions for PKE-SKL, PRF-SKL, and DS-SKL that capture VRA security explained above. While ours are not implied by any of existing security definitions, some of existing definitions are incomparable to ours. For clarity, we summarize relationships among our and existing security definitions. 
\begin{description}
\item[\bf PKE-SKL:] Our security definition, called IND-VRA security, implies the security definition from \cite{EC:AKNYY23}, known as IND-KLA security.\footnote{Strictly speaking, the implication holds only when the message space is polynomial-size, but this is not an issue in terms of existential implication. That is, if we have an IND-VRA secure PKE-SKL scheme for single-bit messages, then the scheme also satisfies IND-KLA security, and its parallel repetition yields an IND-VRA secure PKE-SKL scheme for arbitrarily long messages.
} 
While IND-VRA security does not imply the definition in \cite{TCC:AnaPorVai23,AnaHuHua24}, referred to as APV-security, this is because APV-security requires ciphertext pseudorandomness rather than a more standard indistinguishability-based definition. If we consider a slightly weakened variant of APV-security with an indistinguishability-based definition, then it would be implied by IND-VRA security. 
\ifnum\llncs=0 See the last paragraph of \Cref{sec:pke-skl-defs} and \Cref{sec:comparison_def} for more details. \fi
\item[\bf PRF-SKL:] Our security definition, called PR-VRA security, implies the security definition of \cite{TCC:AnaPorVai23,AnaHuHua24}.
\item[\bf DS-SKL:] Our security definition, called RUF-VRA security, is incomparable to the security definition of \cite{TQC:MorPorYam24}. In fact, their security definition is impossible to achieve by a scheme that has static signing keys as in our scheme. Thus, we introduce a new security definition that is achievable by schemes with static signing keys, which resembles the security definition of copy-protection of signing keys in \cite{TCC:LLQZ22}. 
\ifnum\llncs=0 See also \Cref{rem:comparison_DS}. \fi   
\end{description}
\subsection{More on Related Works}
\paragraph{Secure software leasing} Ananth and La Placa \cite{EC:AnaLaP21} introduced a notion called secure software leasing (SSL).
SSL is a relaxed variant of quantum copy protection~\cite{aaronson2009quantum} for general functions.
In this primitive, a lessor who has a program can generate a quantum state encoding the program and can lease it to a lessee.
The security notion of SSL roughly guarantees that once the lessee deletes it correctly with the designated deletion process, the lessee can no longer compute the program.
They provided the SSL scheme supporting a subclass of evasive functions\footnote{Evasive functions is a class of functions for which it is hard to find an accepting input given only black-box access to the function.} based on public key quantum money and the hardness of the LWE assumption.
Subsequently, Coladangelo, Majenz, and Poremba~\cite{ARXIV:ColMajPor20} and Kitagawa, Nishimaki, and Yamakawa~\cite{TCC:KitNisYam21} showed more positive results on SSL.

However, the main focus of \cite{EC:AnaLaP21} was impossibility on quantum copy protection for general functions and its relaxed variants, thus they defined the security notion for SSL with a highly strong restriction, which makes their impossibility result strong.
Roughly speaking, SSL provides the security guarantee only when the lessee tries to compute the leased program on a correct platform designated by the lessor, which leaves a possibility that the lessee can keep computing the program with other platforms even after deleting the program state.
The restriction is fatal especially when we consider the notion in cryptographic applications.
Then, Kitagawa and Nishimaki~\cite{AC:KitNis22} introduced the notion of secure key leasing as a relaxation of unclonable cryptographic key (that is, quantum copy protection for cryptographic functionalities), removing the restriction in the security notion of SSL.

\paragraph{Unclonable cryptographic key.} 
Georgiou and Zhandry~\cite{EPRINT:GeoZha20} initiated the study of PKE with unclonable secret  keys and construct it based on very strong assumptions including one-shot signatures and extractable witness encryption. 
Coladangelo, Liu, Liu, and Zhandry~\cite{C:CLLZ21} constructed PKE with unclonable secret keys and PRF with unclonable evaluation keys assuming sub-exponentially secure iO and the hardness of the LWE assumption.
Liu, Liu, Qian, and Zhandry~\cite{TCC:LLQZ22} extended the constructions by \cite{C:CLLZ21} into ones with bounded collusion resistance.
Bounded collusion resistant unclonability roughly guarantees that given $k$ quantum secret key state for a-priori bounded $k$, any adversary cannot generate $k+1$ copies of the secret key state.
The paper also proposed a construction of digital signature with bounded collusion resistant unclonable signing key assuming sub-exponentially secure iO and the hardness of the LWE assumption.
Recently,  {\c{C}}akan and Goyal\cite{CG24} constructed PKE with fully collusion resistant unblonable secret key from a similar set of assumptions.\footnote{More precisely, they constructed public key functional encryption with fully collusion resistant unblonable secret key.}

Unclonable cryptographic key directly inherits the unclonability of quantum states and seems more desirable compared to secure key leasing.
However, unclonable cryptographic key would be too strong to be realized from standard assumptions.
As we saw above, all existing constructions of it requires strong assumption of iO that currently does not have any instantiation based on well-studied post quantum assumptions.
In fact, the use of iO is somewhat inherent with the current techniques from the fact that they imply publicly verifiable quantum money\footnote{More precisely, PKE and digital signature with unclonable secret key imply publicly verifiable quantum money, but it is not clear if PRF with unclonable evaluation key does.} whose realization without post quantum secure iO is one of the biggest challenge in quantum cryptography.
The situation encourages to study relaxed notions including secure key leasing that is still useful and can be realized from standard assumptions.

\paragraph{Secure key leasing with classical communication.} \takashi{I added this paragraph.}
Chardouvelis, Goyal, Jain, and Liu~\cite{EPRINT:CGJL23} constructed PKE-SKL with a classical lessor, i.e., a leased quantum secret key can be generated by interaction between quantum lessee and classical lessor over classical channel and revocation can also be done classically. Their construction is based on the LWE assumption. 
It is unlikely to have PKE-SKL with a classical lessor only assuming the existence of PKE since PKE-SKL with a classical lessor immediately implies proofs of quantumness~\cite{JACM:BCMVV21} but it is a major open problem to construct proofs of quantumness from PKE.

\paragraph{Certified deletion.} \takashi{I added this paragraph.}
An encryption scheme with certified deletion~\cite{TCC:BroIsl20} enables us to delete a quantum ciphertext in a verifiable manner. 
Certified deletion is conceptually related to secure key leasing since secure key leasing can be regarded as certified deletion of keys rather than ciphertexts. 
Broadbent and Islam~\cite{TCC:BroIsl20} constructed an unconditionally secure one-time SKE with certified deletion. The follow-up works 
\cite{AC:HMNY21,ITCS:Poremba23,C:BarKhu23,EC:HKMNPY24,EC:BGKMRR24} study certified deletion for more advanced types of encryption including PKE, attribute-based encryption, functional encryption, quantum fully homomorphic encryption, and witness encryption. 
The works by \cite{TCC:KitNisYam23,TCC:BKMPW23} extend the above schemes to publicly verifiable ones from minimal assumptions. 

\ifnum\llncs=1
\subsection{Organization} 
In \Cref{sec:overview}, we give an overview of our new framework and how we apply it to construct new PKE-SKL, PRF-SKL, and DS-SKL schemes.  
Due to space limitations, the remainder of the main body focuses on the construction of PKE-SKL, while the constructions of PRF-SKL and DS-SKL are deferred to the full version.  
\fi

\section{Technical Overview}\label{sec:overview}
We provide the technical overview.

\subsection{Definition of Secure Key Leasing}\label{sec-overview-def-skl}
We first introduce the definition of secure key leasing.
Below, we use calligraphic font to denote that the object is a quantum algorithm or quantum state (e.g., $\qKG$ and $\qstate{q}$).

\paragraph{Syntax of secure key leasing with classical revocation.}
The focus of this work is secure key leasing primitives having classical revocation.
The syntax of X-SKL with classical revocation for a classical cryptographic primitive X$\in\{\textrm{PKE,PRF,DS}\}$ is the same as that of X except for the following.
\begin{itemize}
\item The key generation algorithm $\qKG$ is a quantum algorithm that outputs a quantum secret key $\qsk$ instead of a classical one, together with a classical secret deletion verification key $\dvk$. If X is PKE (resp. DS), it also outputs public encryption (resp. signature verification) key. Moreover, if X is PRF, it also outputs a classical master secret key $\msk$.
\item X-SKL has the deletion algorithms $\qDel$ and the deletion verification algorithm $\DelVrfy$ in addition to the algorithms that form X. $\qDel$ takes as input a quantum secret key $\qsk^\prime$, and outputs a classical deletion certificate $\cert$. $\DelVrfy$ takes as input a deletion verification key $\dvk$ and a deletion certificate $\cert$, and outputs $\top$ or $\bot$.

\item In addition to the correctness as a cryptographic primitive X, X-SKL satisfies deletion verification correctness. It guarantees that for a pair of a quantum secret key $\qsk$ and the corresponding deletion verification key $\dvk$ output by $\qKG$, if we generate $\cert\gets\qDel(\qsk)$, then we have $\DelVrfy(\dvk,\cert)=\top$, which means a correctly generated deletion certificate always passes the verification.
\end{itemize}

\paragraph{Security notion for secure key leasing.}
The security of the secure key leasing scheme roughly guarantees that once the adversary who is given a quantum secret key $\qsk$ outputs a deletion certificate that is accepted by $\DelVrfy(\dvk,\cdot)$, the adversary can no longer perform the task that requires $\qsk$.
As stated in \cref{sec-our-results}, our schemes remain secure even if the secret deletion verification key is given to the adversary after the adversary outputs a deletion certificate.
We introduce security definitions that capture such attacks and call it
security against verification revealing attacks (VRA security).
We prove that VRA security is stronger than security under the existence of the verification oracle for secure key leasing schemes with classical revocation, which was studied in a previous work~\cite{EC:AKNYY23}. \ifnum\llncs=0(See \Cref{sec:comparison_def}.)\fi

We can define both unpredictability style security definition and indistinguishability style one for PKE-SKL and PRF-SKL, though it is inherently unpredictability style for DS-SKL.
For PKE-SKL, the former can be upgraded into the latter by using the standard (quantum) Goldreich-Levin technique as shown in \cite{EC:AKNYY23}.
We show that a similar upgrade is possible for PRF-SKL.
Thus, in this work, we primarily focus on constructing secure key leasing primitives satisfying unpredictability style security notions.

The unpredictability style VRA security experiment for X-SKL is abstracted by using a predicate $P$ that determines the adversary's goal.
It is described as follows.

\begin{enumerate}
\item The challenger generates $\qsk$ and $\dvk$ by executing the key generation algorithm $\qKG$ and sends $\qsk$ to $\qA$. (If X is PKE or DS, $\qA$ is also given the corresponding public key  output by $\qKG$.)
\item $\qA$ outputs a classical string $\cert$. If $\DelVrfy(\dvk,\cert)=\bot$, the experiment ends. Otherwise, the challenger generates a challenge $\chall$ and corresponding auxiliary information $\aux$, and sends $\dvk$ and $\chall$ to $\qA$. Note that $\aux$ is used by the predicate $P$ to determine whether $\qA$ wins the experiment or not.
\item $\qA$ outputs $\ans$.
\end{enumerate}
$\qA$ wins the experiment if both $\DelVrfy(\dvk,\cert)=\top$ and $P(\aux,\ans)=1$ holds simultaneously.
For example, if X is PKE, $\chall$ is a ciphertext of a uniformly random message $m$, $\aux$ is $m$, and $\ans$ is the guess $m^\prime$ for $m$ by $\qA$. Also, $P(\aux=m,\ans=m^\prime)$ outputs $1$ if and only if $m^\prime=m$. 
It guarantees that the winning probability of any QPT $\qA$ is negligible in the security parameter.


\subsection{Certified Deletion Property of BB84 States.}
As stated above, we propose a new framework for achieving secure key leasing schemes.
Its security is based on the certified deletion property of BB84 states proven in \cite{EPRINT:BehSatShi21}.
More specifically, we use its variant with ``verification key revealing'' that can be reduced to the original version.
It is defined as follows.

For strings $x\in\bit^n$ and $\theta\in\bit^n$, we let $\ket{x^\theta} \seteq H^{\theta[1]}\ket{x[1]}\tensor\cdots \tensor H^{\theta[n]}\ket{x[n]}$, where  
$H$ is the Hadamard operator, and $x[i]$ and $\theta[i]$ are the $i$-th bits of $x$ and $\theta$, respectively.
A state of the form $\ket{x^\theta}$ is called a BB84 state~\cite{BB84}. 
Consider the following experiment.
\begin{enumerate}
\item The challenger picks uniformly random strings $x\in\bit^n$ and $\theta\in\bit^n$ and sends the BB84 state $\ket{x^\theta}$ to $\qA$. 
\item $\qA$ outputs a classical string $y\in\bit^n$.
\item $\qA$ is then given $\theta$ and $(x[i])_{i\in[n]:\theta[i]=1}$, and outputs a classical string $z\in\bit^n$.
\end{enumerate}
$\qA$ wins the experiment if $y[i]=x[i]$ for every $i\in[n]$ such that $\theta[i]=1$ and $z[i]=x[i]$ for every $i\in[n]$ such that $\theta[i]=0$.
The certified deletion property of BB84 states guarantees that the winning probability of any adversary $\qA$ is negligible in $n$.

The string $y$ output by $\qA$ can be seen as a deletion certificate that is accepted if it agrees with $x$ in the Hadamard basis positions. 
The verification of the deletion certificate is done by the deletion verification key $(\theta, (x[i])_{i\in[n]:\theta[i]=1})$.
Note that $\qA$ can generate an accepting deletion certificate by just measuring the given BB84 state $\ket{x^\theta}$ in the Hadamard basis.
Then, the property says that once $\qA$ outputs an accepting deletion certificate, it is impossible for $\qA$ to obtain the value of $x$ in all of the computational basis positions even given the deletion verification key $(\theta,(x[i])_{i\in[n]:\theta[i]=1})$.

\subsection{The General Framework}\label{sec-overview-framework}
We now move on to our general framework for realizing secure key leasing schemes with VRA security.\footnote{We do not provide the framework formally in the main body, and show each construction directly, because introducing it in the main body requires defining many abstracted notions formally that could prevent readers from verifying the correctness of our schemes efficiently. Nevertheless, we choose to provide the framework in this overview because we believe it is highly helpful to acquire the unified design strategy lying behind all of our constructions.}
The way we construct key generation, deletion, and deletion verification algorithms is common among all of our constructions. 
\takashi{It may be better to mention that we actually do not explicitly give the "framework" in the main body.}\fuyuki{I added a footnote.}

\paragraph{Key generation algorithm.}
The key generation algorithm $\qKG$ first generates $2n$ classical secret keys $(\sk_{i,b})_{i\in[n],b\in\bit}$ and 
$(x,\theta)\gets\bit^n\times\bit^n$. 
It then applies the map $M_i$ that acts as
\begin{align}
\ket{b}\ket{c} \ra \ket{b}\ket{c \oplus \sk_{i,c}}
\end{align}
to the $i$-th qubit of the BB84 state $\ket{x^\theta}$ and $\ket{0\cdots0}$, and obtains a quantum state $\qsk_i$ for every $i\in[n]$.
Namely, we have
    \begin{align}
    \qsk_i\seteq
    \begin{cases}
    \ket{x[i]}\ket{\sk_{i,x[i]}} & (\textrm{if~}\theta[i]=0)\\
    \frac{1}{\sqrt{2}}\left(\ket{0}\ket{\sk_{i,0}}+(-1)^{x[i]}\ket{1}\ket{\sk_{i,1}}\right) & (\textrm{if~}\theta[i]=1).
    \end{cases}
    \end{align}
The resulting secret key is set to $\qsk:=(\qsk_1,\ldots,\qsk_n)$.
The corresponding deletion verification key is set to $\dvk\seteq (\theta,(x[i])_{i\in[n]:\theta[i]=1},(\sk_{i,0},\sk_{i,1})_{i\in[n]:\theta[i]=1})$.

We explain how the public keys of PKE-SKL and DS-SKL and the master secret key of PRF-SKL are generated when we provide concrete constructions.


\paragraph{Deletion and deletion verification algorithms.}
The deletion algorithm $\qDel$, given the above $\qsk=(\qsk_1,\ldots,\qsk_n)$, measures $\qsk_i$ in the Hadamard basis and obtains $(e_i,d_i)$ for every $i\in[n]$. The resulting deletion certificate is set to $\cert:=(e_i,d_i)_{i\in[n]}$. The deletion verification algorithm, given a deletion verification key $\dvk\seteq (\theta,(x[i])_{i\in[n]:\theta[i]=1},\allowbreak (\sk_{i,0},\sk_{i,1})_{i\in[n]:\theta[i]=1})$ and a deletion certificate $\cert^\prime=(e^\prime_i,d^\prime_i)_{i\in[n]}$, outputs $\top$ if and only if $e^\prime_i=x[i]\oplus d^\prime_i \cdot (\sk_{i,0}\oplus\sk_{i,1})$ holds for every $i\in[n]$ such that $\theta[i]=1$. We can confirm the deletion verification correctness of the construction by simple calculations.

\fuyuki{Yamakawa san's suggestion is adopted here.}
Given the construction of the above three algorithms, we show that the construction satisfies a nice property that we can convert a valid deletion certificate $\cert$ for the quantum secret key $\qsk$ to a valid deletion certificate $y$ for the BB84 state $\ket{x^\theta}$ by using $2n$ secret keys $(\sk_{i,b})_{i\in[n],b\in\bit}$.
Suppose $\cert^\prime=(e^\prime_i,d^\prime_i)_{i\in[n]}$ is a valid deletion certificate for $\qsk$.
Then, we have $e^\prime_i=x[i]\oplus d^\prime_i \cdot (\sk_{i,0}\oplus\sk_{i,1})$  for every $i\in[n]$ such that $\theta[i]=1$. Thus, if we set 
$y[i]=e^\prime_i\oplus d^\prime_i \cdot (\sk_{i,0}\oplus\sk_{i,1})$ for every $i\in[n]$, then we have $y[i]=x[i]$ for all every $i\in[n]$ such that $\theta[i]=1$, and thus $y$ is a valid deletion certificate for the BB84 state $\ket{x^\theta}$.

\paragraph{Special challenge generation.}
In our secure key leasing schemes, we construct the remaining algorithms with the property that allows us to reduce their security to the certified deletion property of BB84 states.
The property is that we can generate a special challenge $\chall_\theta$ and corresponding special auxiliary information $\aux_{x,\theta}$ with the following two requirements, where $\chall_\theta$ can depend on $\theta$ but not on $x$ and $\aux_{x,\theta}$ can depend on both $x$ and $\theta$ as the notation suggests.

\begin{description}
\item[Indistinguishability from the normal challenge:] Let 
\ifnum\llncs=0
$\key:=(x,\theta,(\sk_{i,x[i]})_{i\in[n]:\theta[i]=0},(\sk_{i,0},\sk_{i,1})_{i\in[n]:\theta[i]=1})$.
\else 
$$\key:=(x,\theta,(\sk_{i,x[i]})_{i\in[n]:\theta[i]=0},(\sk_{i,0},\sk_{i,1})_{i\in[n]:\theta[i]=1}).$$ \fi
It requires that the normal challenge tuple $(\chall,\aux)$ and the special challenge tuple $(\chall_\theta,\aux_{x,\theta})$ are computationally indistinguishable given $\key$.

Suppose we simulate the VRA security experiment for a QPT adversary $\qA$ until it outputs deletion certificate $\cert$, gives special challenge $\chall_\theta$ to $\A$, and obtains $\qA$'s final output $\ans^\prime$.
Note that we can generate $\qsk$ and $\dvk$ required for the simulation from $\key$.
Thus, indistinguishability from the normal challenge property implies that $\Pr[\DelVrfy(\dvk,\cert)=\top \land P(\aux_{x,\theta},\ans^\prime)=1]$ is negligibly close to $\qA$'s winning probability in the original experiment, where $\aux_{x,\theta}$ is the special auxiliary information corresponding to $\chall_\theta$. 
\item[Extractability for the computational basis positions:] It requires that there exists an extractor $\ExtComp_\theta$ with the following property. 
Suppose we simulate the VRA security experiment for a QPT adversary $\qA$ until it outputs deletion certificate $\cert$, gives $\chall_\theta$ to $\A$, and obtains $\qA$'s final output $\ans^\prime$. Then, if $P(\aux_{x,\theta},\ans^\prime)=1$ holds for the special auxiliary information $\aux_{x,\theta}$ corresponding to $\chall_\theta$, $\ExtComp_\theta$ can extract $(x[i])_{i\in[n]:\theta[i]=0}$ almost certainly from $\ans^\prime$.
As the notation suggests, $\ExtComp_\theta$ can depend on $\theta$ but not on $x$.
\end{description}
The reason for the constraints for the dependence on $x$ and $\theta$ posed to $\chall_\theta$, $\aux_{x,\theta}$ and $\ExtComp_\theta$ will be clear in the next paragraph.

\paragraph{Reduction to the certified deletion property of BB84 states.}
We now sketch the reduction algorithm $\qB$ attacking the certified deletion property of BB84 states using an adversary $\qA$ for our secure key leasing scheme.
\begin{enumerate}
\item Given a BB84 state $\ket{x^\theta}$, $\qB$ samples $2n$ secret keys $(\sk_{i,b})_{i\in[n],b\in\bit}$, and generates $\qsk$ from $\ket{x^\theta}$ and those $2n$ secret keys as the key generation algorithm designates. $\qB$ then sends $\qsk$ to $\qA$.
\item When $\qA$ outputs $\cert^\prime$, $\qB$ converts it into a deletion certificate $y$ for the BB84 state using $(\sk_{i,b})_{i\in[n],b\in\bit}$ as described above and outputs $y$.
\item Given $(\theta,(x[i])_{i\in[n]:\theta[i]=1})$, $\qB$ sets $\dvk=(\theta,(x[i])_{i\in[n]:\theta[i]=1},(\sk_{i,0},\sk_{i,1})_{i\in[n]:\theta[i]=1})$. $\qB$ also generates a special challenge $\chall_\theta$ and sends $\dvk$ and $\chall_\theta$ to $\qA$. When $\qA$ outputs $\ans^\prime$, $\qB$ outputs $(z[i])_{i\in[n]:\theta[i]=0}\gets\ExtComp_\theta(\ans^\prime)$. $\qB$ can generate $\chall_\theta$ and executes $\ExtComp_\theta$ since they do not depend on $x$ though they might depend on $\theta$.
\end{enumerate}

Suppose $\qA$'s winning probability in the original security experiment is $\epsilon$.
From the indistinguishability from the normal challenge of the special challenge, we have $\Pr[\DelVrfy(\dvk,\cert)=\top \land P(\aux_{x,\theta},\ans^\prime)=1]=\epsilon-\negl(\secp)$, where $\aux_{x,\theta}$ is the special auxiliary information corresponding to $\chall_\theta$.
$\qB$ can output a valid deletion certificate for the given BB84 state $\ket{x^\theta}$ if the event $\DelVrfy(\dvk,\cert)=\top$ occurs as we checked above. Moreover, from the extractability for the computational basis positions, $\qB$ can also outputs $(x[i])_{i\in[n]:\theta[i]=0}$ with overwhelming probability if the event $P(\aux_{x,\theta},\ans^\prime)=1$ occurs.
This means $\qB$ wins the certified deletion experiment for BB84 states with probability at least $\epsilon-\negl(\secp)$, which means $\epsilon$ is negligible and the proof is done.
Note that $\qB$ do not need to check the condition $P(\aux_{x,\theta},\ans^\prime)=1$ and thus it is not an issue if $\aux_{x,\theta}$ depends on $x$ that is not given to $\qB$ throughout the experiment.

Therefore, all we have to do is to show how to construct the algorithms other than the key generation, deletion, and deletion verification algorithms so that the resulting construction meets the required property.
We below show it for each of our PKE-SKL, PRF-SKL, and DS-SKL.

\paragraph{Idea behind VRA security.}
Here, we briefly explain why our framework enables us to achieve VRA security, unlike existing constructions. 

Our construction shares some similarities with that of PKE-SKL in \cite{EC:AKNYY23}. In their scheme, the quantum secret key is defined as  
$
\qsk_i=\frac{1}{\sqrt{2}}\left(\ket{0}\ket{\sk_{i,0}}+\ket{1}\ket{\sk_{i,1}}\right)
$ 
for all \( i \in [n] \). The deletion algorithm follows the same approach as ours: measuring \( \qsk_i \) in the Hadamard basis yields an outcome \( (e_i, d_i) \) for each \( i \in [n] \), and the deletion certificate is set as 
$
\cert \seteq (e_i, d_i)_{i \in [n]}.
$ 
This ensures that 
$
e_i = d_i \cdot (\sk_{i,0} \oplus \sk_{i,1})
$ 
for all \( i \in [n] \). Consequently, their deletion verification algorithm checks whether this condition holds for all \( i \in [n] \). However, to perform this verification, the deletion verification key \( \dvk \) must include \( (\sk_{i,0}, \sk_{i,1}) \) for every \( i \in [n] \), effectively revealing the entire secret of the scheme. As a result, once \( \dvk \) is leaked, no security remains.

In contrast, in our construction, the deletion verification algorithm checks validity of $(e_i,d_i)$ only for $i$ such that $\theta[i]=1$. Our key idea is that this partial check is sufficient for a reduction to the certified deletion property of BB84 states. 
Consequently,  \( \dvk \) contains \( (\sk_{i,0}, \sk_{i,1}) \) only for $i$ such that $\theta[i]=1$, leaving the remaining secret keys undisclosed even if \( \dvk \) is leaked. This is the key reason why our scheme achieves VRA security.

\subsection{PKE-SKL}
A PKE-SKL scheme has the encryption algorithm $\Enc$ and the decryption algorithm $\Dec$ in addition to $\qKG$, $\qDel$, and $\DelVrfy$.
$\Enc$ and $\qDec$ have the same input-output behavior with those of standard PKE schemes except that $\qDec$ takes as input a quantum secret key instead of a classical one.
A PKE-SKL scheme also satisfies decryption correctness similarly to standard PKE.
The decryption correctness also guarantees the reusability of the quantum secret key (see \Cref{rem:reusability}).

\paragraph{Unpredictability style VRA security definition.}
As stated above, it is obtained by setting $\chall$ as a ciphertext $\ct^*$ of a random message $m^*$, $\aux$ as $m^*$, $\ans$ as the guess $m^\prime$ for $m^*$ by the adversary $\qA$, and $P$ as the predicate such that given $\aux=m^*$ and $\ans=m^\prime$, it outputs $1$ if and only if $m^\prime=m^*$, in the security experiment provided in \cref{sec-overview-def-skl}.

\paragraph{Construction.}
We construct $\PKESKL=\PKESKL.(\qKG,\Enc,\qDec,\qDel,\DelVrfy)$ for the message space $\bit^n$ using a classical PKE scheme $\PKE=\PKE.(\KG,\Enc,\Dec)$ for the message space $\bit$.
\begin{description}
\item[$\PKESKL.\KG$:] It generates $(\pk_{i,b},\sk_{i,b})\gets\PKE.\KG(1^\secp)$ for every $i\in[n]$ and $b\in\bit$. It also generates $(x,\theta)\gets\bit^n\times\bit^n$. It then generates $\qsk$ and $\dvk$ using $2n$ secret keys $(\PKE.\sk_{i,b})_{i\in[n],b\in\bit}$ and $(x,\theta)$ as described in \cref{sec-overview-framework}. It also sets $\pk:=(\PKE.\pk_{i,b})_{i\in[n],b\in\bit}$.
\item[\rm $\PKESKL.\Enc$ and $\PKESKL.\Dec$:] Given a public key $\pk=(\PKE.\pk_{i,b})_{i\in[n],b\in\bit}$, and a message $m\in\bit^n$, $\PKESKL.\Enc$ generates $\PKE.\ct_{i,b}\gets\PKE.\Enc(\PKE.\pk_{i,b},m[i])$ for every $i\in[n]$ and $b\in\bit$. The resulting ciphertext is $\ct:=(\PKE.\ct_{i,b})_{i\in[n],b\in\bit}$.

$\PKESKL.\Dec$ takes $\qsk=(\qsk_1,\ldots,\qsk_n)$ and $\ct:=(\PKE.\ct_{i,b})_{i\in[n],b\in\bit}$ and acts as follows.
Let $D_i$ be a unitary that has $\PKE.\ct_{i,0}$ and $\PKE.\ct_{i,1}$ hardwired, and maps $\ket{b}\ket{\PKE.\sk}\ket{c}$ to $\ket{b}\ket{\PKE.\sk}\ket{c\oplus\PKE.\Dec(\sk,\PKE.\ct_{i,b})}$.
$\PKESKL.\Dec$ applies $D_i$ to $\qsk_i\tensor\ket{0}$ and obtains $m_i$ by measuring the last register for every $i\in[n]$, and output $m_1\|\cdots\|m_n$.
The construction satisfies the decryption correctness.
For the proof for decryption correctness, see \cref{sec:const-PKE-SKL-classical-certificate}.
\end{description}
$\PKESKL.\qDel$ and $\PKESKL.\DelVrfy$ are constructed as described in \cref{sec-overview-framework}.

\paragraph{Special challenge and auxiliary information.}
In the construction of $\PKESKL$, the special challenge $\chall_\theta=\hatct=(\PKE.\hatct_{i,b})_{i\in[n],b\in\bit}$ is generated as 
\begin{align}
\PKE.\hatct_{i,b}\gets
\begin{cases}
\PKE.\Enc(\PKE.\ek_{i,b}, m^*[i]\oplus b) & (\textrm{if}~~\theta[i]=0)\\
\PKE.\Enc(\PKE.\ek_{i,b}, m^*[i]) & (\textrm{if}~~\theta[i]=1),
\end{cases}
\end{align}
where $m^*\gets\bit^n$.
Let $x_c$ be a string such that $x_c[i]=x[i]$ if $\theta[i]=0$ and $x_c=0$ otherwise and $\hatm=m^*\oplus x_c$.
The special auxiliary information $\aux_{x,\theta}$ corresponding to $\chall_\theta$ is $\hatm$.

Assuming $\PKE$ is IND-CPA secure, we prove the indistinguishability from the normal challenge, that is, we prove the above $(\chall_\theta=\hatct,\aux_{x,\theta}=\hatm)$ is computationally indistinguishable from the normal challenge distribution $(\chall=\PKESKL.\Enc(\pk,m^*),\aux=m^*)$, where $m^*\gets\bit^n$, given $\key=(x,\theta,(\PKE.\sk_{i,x[i]})_{i\in[n]:\theta[i]=0},\allowbreak(\PKE.\sk_{i,b})_{i\in[n],b\in\bit:\theta[i]=1})$.
The normal challenge distribution is the same as the distribution $(\PKESKL.\Enc(\pk,\hatm),\hatm)$ since $\hatm=m^*\oplus x_c$ is distributed uniformly at random. 
Moreover, the only difference between the latter distribution and the special challenge distribution is that in the former, $\PKE.\ct_{i,1-x[i]}$ for every $i\in[n]$ such that $\theta[i]=0$ is generated as $\PKE.\Enc(\PKE.\pk_{i,1-x[i]},m^*[i]\oplus x[i])$ but in the latter, it is generated as $\PKE.\Enc(\PKE.\pk_{i,1-x[i]},m^*[i]\oplus (1-x[i]))$.
Then, the proof is completed from the IND-CPA security of $\PKE$ since $\PKE.\sk_{i,1-x[i]}$ for every $i\in[n]$ such that $\theta[i]=0$ is not given to the distinguisher.

We finally check the extractability for the computational basis position.
Suppose $\ans^\prime=m^\prime$ satisfies $P(\aux_{x,\theta},\ans^\prime)=1$. Then, we have $m^\prime[i]=\hatm[i]=m^*[i]\oplus x[i]$ holds for every $i\in[n]$ such that $\theta[i]=0$.
We can easily extract $x[i]$ by computing $m^\prime[i]\oplus m^*[i]$ for every $i\in[n]$ such that $\theta[i]=0$.

\subsection{PRF-SKL}

We move on to our PRF-SKL construction.
As stated before, we first construct one with unpredictability style security, that is, unpredictable function with secure key leasing (UPF-SKL), and then upgrade it into PRF-SKL via the Goldreich-Levin technique.
We below review the syntax and security notion for UPF-SKL, and then provide the construction of our UPF-SKL and explain its special challenge property.

A UPF-SKL scheme has two evaluation algorithms $\Eval$ and $\qLEval$ in addition to $\qKG$, $\qDel$, and $\DelVrfy$.
$\Eval$ (resp. $\qLEval$) takes master secret key $\msk$ (resp. quantum secret key $\qsk$) and an input $\prfinp\in\Domprf$ and outputs $\prfout\in\Ranprf$, where $\Domprf$ and $\Ranprf$ are the domain and range, respectively.
For evaluation correctness, we require that $\Eval(\msk,\prfinp)=\qLEval(\qsk,\prfinp)$ holds with overwhelming probability for any input $\prfinp\in\Domprf$, where $(\msk,\qsk,\dvk)\gets\qKG(1^\secp)$.
The evaluation correctness also guarantees the reusability of the quantum secret key.

\paragraph{Unpredictability style VRA security definition.}
It is obtained by setting $\chall$ as a random input $\prfinp^*$, $\aux$ as $\prfout^*=\Eval(\msk,\prfinp^*)$, $\ans$ as the guess $\prfout^\prime$ for $\prfout^*$ by the adversary $\qA$, and $P$ as the predicate that takes $\aux=\prfout^*$ and $\ans=\prfout^\prime$ and outputs $1$ if and only if $\prfout^\prime=\prfout^*$, in the security experiment provided in \cref{sec-overview-def-skl}.

\paragraph{Building block: two-key equivocal PRF.}
We use two-key equivocal PRF (TEPRF) that can be constructed from one-way functions~\cite{C:HJOSW16} as the building block of our UPF-SKL.
A TEPRF for the domain $\Domprf$ consists of two algorithms $\TEPRF.\KG$ and $\TEPRF.\Eval$.
$\TEPRF.\KG$ takes as input a security parameter $1^\secp$ and an input $\tlprfinp\in\Domprf$ and outputs two keys $\TEPRF.\sk_0$ and $\TEPRF.\sk_1$.
$\TEPRF.\Eval$ takes as input a secret key $\TEPRF.\sk$ and an input $\prfinp\in\Domprf$, and outputs $\prfout$.
As correctness, it satisfies the following:  
for any pair of keys $(\TEPRF.\sk_0,\TEPRF.\sk_1)$ generated by $\TEPRF.\KG(1^\secp,\tlprfinp)$, it hold that 
$\TEPRF.\Eval(\TEPRF.\sk_0,\tlprfinp)\ne \TEPRF.\Eval(\TEPRF.\sk_1,\tlprfinp)$
and 
$\TEPRF.\Eval(\TEPRF.\sk_0,\prfinp)=\TEPRF.\Eval(\TEPRF.\sk_1,\prfinp)$ for any  $\prfinp\ne\tlprfinp$.
It also satisfies the security notion called differing point hiding that guarantees that for any QPT adversary and $\tlprfinp\in\Domprf$, if the adversary is given only either one of $(\TEPRF.\sk_0,\TEPRF.\sk_1)\gets\TEPRF.\KG(1^\secp,\tlprfinp)$, the adversary cannot obtain any information of $\tlprfinp$.

\paragraph{Construction.}
We construct $\UPFSKL=\UPFSKL.(\qKG,\Eval,\qLEval,\qDel,\DelVrfy)$ for the domain $\bit^{n\cdot\secp}$ using a TEPRF $\TEPRF=\TEPRF.(\KG,\Eval)$ for the domain $\bit^\secp$.
\begin{description}
\item[$\UPFSKL.\qKG$:] It generates $\tlprfinp_i\gets\bit^\ell$, and $(\TEPRF.\sk_{i,0},\TEPRF.\sk_{i,1})\gets\TEPRF.\KG(1^\secp,\tlprfinp_i)$ for every $i\in[n]$. It also generates $(x,\theta)\gets\bit^n\times\bit^n$. It then generates $\qsk$ and $\dvk$ using $2n$ secret keys $(\TEPRF.\sk_{i,b})_{i\in[n],b\in\bit}$ and $(x,\theta)$ as described in \cref{sec-overview-framework}.
It also sets $\msk=(\TEPRF.\sk_{i,0})_{i\in[n]}$. 
\item[\rm $\UPFSKL.\Eval$ and $\UPFSKL.\qLEval$:] Given a master secret key $\msk=(\TEPRF.\sk_{i,0})_{i\in[n]}$ and an input $\prfinp=\prfinp_1\|\cdots\|\prfinp_n$, where $\prfinp_i\in\bit^\secp$ for every $i\in[n]$, it computes $\prfout_{i}\gets\TEPRF.\Eval(\TEPRF.\sk_{i,0},\prfinp_i)$ for every $i\in[n]$, and outputs $\prfout:=\prfout_1\|\cdots\|\prfout_n$.

$\UPFSKL.\qLEval$ takes $\qsk=(\qsk_1,\ldots,\qsk_n)$ and $\prfinp=\prfinp_1\|\cdots\|\prfinp_n$ and acts as follows.
Let $E_i$ be a unitary that has $\prfinp_i$ hardwired, and maps $\ket{b}\ket{\TEPRF.\sk}\ket{c}$ to $\ket{b}\ket{\TEPRF.\sk}\ket{c\oplus\TEPRF.\Eval(\TEPRF.\sk,\prfinp_i)}$.
$\UPFSKL.\qLEval$ applies $E_i$ to $\qsk_i\tensor\ket{0\cdots0}$ and obtains $\prfout_i$ by measuring the last register for every $i\in[n]$, and outputs $\prfout:=\prfout_1\|\cdots\|\prfout_n$.
We can check that $\UPFSKL.\qLEval(\qsk,\prfinp)$ ouptuts $\UPFSKL.\Eval(\msk,\prfinp)$, unless $\prfinp_i=\tlprfinp_i$ holds for some $i\in[n]$, where $\tlprfinp_i$ is the input used when generating $(\TEPRF.\sk_{i,0},\TEPRF.\sk_{i,1})$.
The construction satisfies evaluation correctness.
\ifnum\llncs=0 For the formal proof of evaluation correctness, see \cref{sec:const-PRF-SKL-classical-certificate}. \fi
\end{description}
$\UPFSKL.\qDel$ and $\UPFSKL.\DelVrfy$ are constructed as described in \cref{sec-overview-framework}.

\paragraph{Special challenge and auxiliary information.}
In the construction of $\UPFSKL$, the special challenge $\chall_\theta=\hatprfinp=\hatprfinp_1\|\cdots\|\hatprfinp_n$ is generated as 
\begin{align}
\hatprfinp=
\begin{cases}
\tlprfinp_i & (\textrm{if}~~\theta[i]=0)\\
\prfinp_i^* & (\textrm{if}~~\theta[i]=1),
\end{cases}
\end{align}
where $\tlprfinp_i$ is the input used when generating $(\TEPRF.\sk_{i,0},\TEPRF.\sk_{i,1})$ and $\prfinp_i^*\gets\bit^\secp$ for every $i\in[n]$.
Let $\hatprfout_i\gets\TEPRF.\Eval(\TEPRF.\sk_{i,x[i]},\hatprfinp_i)$ for every $i\in[n]$ and $\hatprfout=\hatprfout_1\|\cdots\|\hatprfout_n$.
The special auxiliary information $\aux_{x,\theta}$ corresponding to $\chall_\theta$ is $\hatprfout$.

Based on the differing point hiding property of $\TEPRF$, we prove the indistinguishability from the normal challenge, that is, we prove that the above $(\chall_\theta=\hatprfinp,\aux_{x,\theta}=\hatprfout)$ is computationally indistinguishable from the normal challenge $(\chall=\prfinp^*,\aux=\prfout^*=\UPFSKL.\Eval(\msk,\prfinp^*))$, where $\prfinp^*\gets\bit^{\secp\cdot n}$, given $\key=(x,\theta,(\TEPRF.\sk_{i,x[i]})_{i\in[n]:\theta[i]=0},\allowbreak(\TEPRF.\sk_{i,b})_{i\in[n],b\in\bit:\theta[i]=1})$.
Given $\key$, we can generate $\prfout^*$ and $\hatprfout$ in exactly the same way respectively from $\prfinp^*$ and $\hatprfinp$.
Thus, what we need to prove is that $\prfinp^*$ and $\hatprfinp$ are computationally indistinguishable given $\key$.
This follows from the differing point hiding property of $\TEPRF$ and the fact that for every $i\in[n]$ such that $\theta[i]=0$, only $\TEPRF.\sk_{i,x[i]}$ is given and $\TEPRF.\sk_{i,1-x[i]}$ is completely hidden.

We finally check the extractability for the computational basis position.
Suppose $\ans^\prime=\prfout^\prime=\prfout^\prime_1\|\cdots\|\prfout^\prime_n$ satisfies $P(\aux_{x,\theta},\ans^\prime)=1$. Then, we have $\prfout^\prime_i=\TEPRF.\Eval(\TEPRF.\sk_{i,x[i]},\hatprfinp_i)$ holds for every $i\in[n]$ such that $\theta[i]=0$.
From the correctness of $\TEPRF$, we have $\TEPRF.\Eval(\TEPRF.\sk_{i,0},\hatprfinp_i)\ne\TEPRF.\Eval(\TEPRF.\sk_{i,1},\hatprfinp_i)$ for every $i\in[n]$ such that $\theta[i]=0$.
Thus, we can extract $x[i]$ by checking which one of $\TEPRF.\Eval(\TEPRF.\sk_{i,0},\hatprfinp_i)$ and $\TEPRF.\Eval(\TEPRF.\sk_{i,1},\hatprfinp_i)$ matches with $\prfout^\prime_i$ for every $i\in[n]$ such that $\theta[i]=0$.

\subsection{DS-SKL}
A DS-SKL scheme has the signing algorithm $\qSign$ and the signature verification algorithm $\SigVrfy$ in addition to $\qKG$, $\qDel$, and $\DelVrfy$.
$\qSign$ and $\SigVrfy$ have the same input-output behavior with those of standard digital signature schemes except that $\qSign$ takes as input a quantum secret key instead of a classical one.
We quire standard signature verification correctness for a DS-SKL scheme that guarantees that $\SigVrfy(\svk,m,\qSign(\sk,m))=\top$ with overwhelming probability for any message $m$, where $(\svk,\qsk,\dvk)\gets\qKG(1^\secp)$. We denote the signature verification key as $\svk$.
Differently from PKE-SKL and PRF-SKL (and UPF-SKL), the correctness does not imply reusability of the quantum secret key.
Thus, we separately require reusability that guarantees that we can generate a signature almost without affecting the quantum secret key.

\paragraph{Unpredictability style VRA security definition.}
It is obtained by setting $\chall$ as a random message $m^*$, $\aux$ as $m^*$, $\ans$ as a signature $\sig^\prime$, and $P$ as the predicate that takes as input $\aux=m^*$ and $\ans=\sig^\prime$ and outputs $1$ if and only if $\SigVrfy(\svk,m^*,\sig^\prime)=\top$, in the security experiment provided in \cref{sec-overview-def-skl}. 
\paragraph{Building blocks: TEPRF and coherently-signable constrained signature.}
We use TEPRF also in our construction of DS-SKL.
Moreover, we use constrained signature satisfying coherently-signability that we introduce in this work.

A constrained signature scheme consists of four algorithms $\KG$, $\Constrain$, $\Sign$, $\SigVrfy$.
By executing $\KG$, we obtain a signature verification key $\svk$ and a master secret key $\msk$.
By executing the constrain algorithm $\Constrain$ with $\msk$ and a function $f$, we can generate a constrained signing key $\sk_f$.
$\Sign$ and $\SigVrfy$ acts as those of a standard digital signature scheme, where $\Sign$ takes a constrained signing key $\sk_f$ as input together with a message $m$.
As the correctness, the constrained signing key $\sk_f$ satisfies that it can be used to generate a valid signature for a message $m$ such that $f(m)=1$.
As the security, it satisfies that any QPT adversary who is given $\sk_f$ cannot generate a valid signature for a message $m$ such that $f(m)=0$.
\takashi{I changed the roles of $f(m)=1$ and $f(m)=0$ (since this is more convenient for the description of the CS scheme.)}

Coherent-signability of constrained signature roughly guarantees that there exists a QPT algorithm $\QSign$ such that for any $(f_0,f_1,\ket{\psi_0},\ket{\psi_1},m,\alpha,\beta)$ satisfying $f_0(m)=f_1(m)=1$ and $|\alpha|^2+|\beta|^2=1$, $\QSign$ takes as input $\alpha\ket{0}\ket{\sk_{f_0}}\ket{\psi_0}+ \beta\ket{1}\ket{\sk_{f_1}}\ket{\psi_1}$\fuyuki{I assume $-$ can be absorbed by $\beta$, but does it feel weird?}\takashi{I see, you are right.} and $m$, and can output a valid classical signature $\sigma$ almost without affecting $\alpha\ket{0}\ket{\sk_{f_0}}\ket{\psi_0}+\beta\ket{1}\ket{\sk_{f_1}}\ket{\psi_1}$, where $\sk_{f_0}$ and $\sk_{f_1}$ are constrained keys for $f_0$ and $f_1$, respectively.
We show how to construct coherently-signable constrained signature based on the SIS assumption. 
We give an overview of the construction of our coherently-signable constrained signature in the final paragraph of this subsection.  

\paragraph{Construction.}
We construct $\DSSKL=\DSSKL.(\qKG,\qSign,\SigVrfy,\qDel,\DelVrfy)$ for the message space $\bit^{n\cdot\secp}$ using a TEPRF $\TEPRF=\TEPRF.(\KG,\Eval)$ for the domain $\bit^\secp$ and a coherently-signable constrained signature scheme $\CS=\CS.(\KG,\allowbreak\Constrain,\Sign,\SigVrfy,\QSign)$.
\begin{description}
\item[$\DSSKL.\KG$:] It first generates $\tlprfinp_i\gets\bit^\ell$, $(\TEPRF.\sk_{i,0},\TEPRF.\sk_{i,1})\gets\TEPRF.\KG(1^\secp,\tlprfinp_i)$, and $(\CS.\svk_i,\CS.\msk_i)\gets\CS.\KG(1^\secp)$ for every $i\in[n]$.
It then generates $\CS.\sk_{i,b}\gets \CS.\Constrain(\CS.\msk_{i},F[\TEPRF.\sk_{i.b}])$ for $i\in[n]$ and $b\in \bit$, where $F[\TEPRF.\sk_{i.b}]$ is a function that takes $(u,v)$ as input and outputs $1$ if $v=\TEPRF.\Eval(\TEPRF.\sk_{i,b},u)$ and $0$ otherwise.
It sets $\DS.\sk_{i,b}=(\TEPRF.\sk_{i,b},\CS.\sk_{i,b})$ for every $i\in[n]$ and $b\in\bit$.
It also generates $(x,\theta)\gets\bit^n\times\bit^n$.
It then generates $\qsk$ and $\dvk$ using $2n$ secret keys $(\DS.\sk_{i,b})_{i\in[n],b\in\bit}$ and $(x,\theta)$ as described in \cref{sec-overview-framework}.
It also sets $\svk=(\CS.\svk_i)_{i\in[n]}$.
\item[\rm $\DSSKL.\qSign$ and $\DSSKL.\SigVrfy$:] $\DSSKL.\qSign$ takes $\qsk=(\qsk_1,\ldots,\qsk_n)$ and $m=\prfinp_1\|\cdots\|\prfinp_n$ as inputs.
Let $E_i$ be a unitary that has $\prfinp_i$ hardwired, and maps $\ket{b}\ket{\TEPRF.\sk}\ket{\CS.\sk}\ket{c}$ to $\ket{b}\ket{\TEPRF.\sk}\ket{\CS.\sk}\ket{c\oplus\TEPRF.\Eval(\TEPRF.\sk,\prfinp_i)}$.
$\DSSKL.\qSign$ first applies $E_i$ to $\qsk_i\tensor\ket{0\cdots0}$ and obtains $\prfout_i$ by measuring the last register for every $i\in[n]$. Then, it generates $\CS.\sigma_i$ by executing $\QSign(\qsk_i,\prfinp_i\|\prfout_i)$.
The resulting signature is $\sigma=(\prfout_i,\CS.\sigma_i)_{i\in[n]}$.

$\DSSKL.\DelVrfy$ takes $\svk=(\CS.\svk_i)_{i\in[n]}$, $m=m_1\|\cdots\|m_n$, and $\sigma=(\prfout_i,\CS.\sigma_i)_{i\in[n]}$ as input, and checks if $\CS.\SigVrfy(\CS.\svk_i,m_i,\CS.\sigma_i)=\top$ or not for every $i\in[n]$.
$\DSSKL.\qSign$ outputs a valid signature almost without affecting $\qsk$, unless $\prfinp_i=\tlprfinp_i$ holds for some $i\in[n]$, from the correctness of $\TEPRF$ and the coherent-signability of $\CS$.
As a result, the construction satisfies signature verification correctness and reusability of the quantum secret key.
\ifnum\llncs=0 For the proofs for them, see \cref{sec:const-DS-SKL-classical-certificate}. \fi

\end{description}
$\DSSKL.\qDel$ and $\DSSKL.\DelVrfy$ are constructed as described in \cref{sec-overview-framework}.

\paragraph{Special challenge and auxiliary information.}
In the construction of $\DSSKL$, the special challenge $\chall_\theta=\hatm=\hatprfinp_1\|\cdots\|\hatprfinp_n$ is generated as 
\begin{align}
\hatprfinp=
\begin{cases}
\tlprfinp_i & (\textrm{if}~~\theta[i]=0)\\
\prfinp^* & (\textrm{if}~~\theta[i]=1),
\end{cases}
\end{align}
where $\tlprfinp_i$ is the input used when generating $(\TEPRF.\sk_{i,0},\TEPRF.\sk_{i,1})$ and $\prfinp^*\gets\bit^\ell$ for every $i\in[n]$.
The corresponding special auxiliary information $\aux_{x,\theta}$ is also set to $\hatm$.

We can prove indistinguishability from the normal challenge of this special challenge based on the security of $\TEPRF$, similarly to our UPF-SKL scheme.
Thus, we omit it and go to the extractability for the computational basis positions.
Suppose an QPT adversary outputs $\ans^\prime=\sigma^\prime=(\prfout^\prime_i,\CS.\sigma_i^\prime)_{i\in[n]}$ that satisfies $P(\aux_{x,\theta},\ans^\prime)=1$. Then, we have $\CS.\SigVrfy(\CS.\svk_i,\hatprfinp_i\|\prfout_i^\prime,\CS.\sigma_i)=\top$ holds for every $i\in[n]$ such that $\theta[i]=0$.
For $i\in[n]$ such that $\theta[i]=0$, only $\CS.\sk_{i,x[i]}$ is given to the adversary and $\CS.\sk_{i,1-x[i]}$ is completely hidden.
Then, from the unforgeability of $\CS$, it should hold that $\prfout^\prime_i=\TEPRF.\Eval(\TEPRF.\sk_{i,x[i]},\hatprfinp_i)$ for every $i\in[n]$ such that $\theta[i]=0$.
From the correctness of $\TEPRF$, we have $\TEPRF.\Eval(\TEPRF.\sk_{i,0},\hatprfinp_i)\ne\TEPRF.\Eval(\TEPRF.\sk_{i,1},\hatprfinp_i)$ for every $i\in[n]$ such that $\theta[i]=0$.
Then, we can extract $x[i]$ by checking which one of $\TEPRF.\Eval(\TEPRF.\sk_{i,0},\hatprfinp_i)$ and $\TEPRF.\Eval(\TEPRF.\sk_{i,1},\hatprfinp_i)$ matches with $\prfout^\prime_i$ for every $i\in[n]$ such that $\theta[i]=0$.

\paragraph{Construction of coherently-signable constrained signatures.}  \takashi{I added this paragraph.}
Tsabary \cite{TCC:Tsabary17} gave a simple generic conversion from homomorphic signatures to constrained signatures. 
We consider a constrained signature scheme obtained by applying the above conversion to the homomorphic signature scheme of \cite{STOC:GorVaiWic15} based on the SIS assumption, and show that the scheme satisfies coherent-signability.\footnote{Strictly speaking, our scheme is slightly different from the one obtained by applying the conversion of \cite{TCC:Tsabary17} to the scheme of \cite{STOC:GorVaiWic15}. \ifnum\llncs=0 See \Cref{rem:CS_difference}.\fi} Very roughly speaking, a signature of their scheme on a message $\msg$ is a solution of the SIS problem $\vx$ with respect to a matrix $\mA_\msg$ associated with $\msg$, i.e., $\vx$ is a short vector such that $\mA_\msg\vx=\vzero \mod q$. A constrained signing key for a function $f$ enables us to generate a trapdoor of $\mA_\msg$, which enables us to solve the SIS problem with respect to $\mA_\msg$, for any $\msg$ such that $f(\msg)=1$. To show that the scheme satisfies coherent-signability, we use the quantum Gaussian sampling technique developed in \cite{ITCS:Poremba23,TCC:AnaPorVai23}. Roughly, the quantum Gaussian sampling enables us to generate a Gaussian superposition over solutions to the SIS problem with respect to a matrix $\mA$ given a trapdoor of $\mA$. Notably, the Gaussian superposition is independent of a specific trapdoor used in the procedure, and thus we can coherently run it over a superposition of different trapdoors without collapsing the state.\footnote{Precisely speaking, we have to also make sure that the garbage (i.e., the state of ancilla qubits) is also independent of the trapdoor.} This property of the quantum Gaussian sampling is directly used to show the coherent-signability of the constrained signature scheme.


\section{Preliminaries}\label{sec:prelim}
\paragraph{Notations and conventions.}
In this paper, standard math or sans serif font stands for classical algorithms (e.g., $C$ or $\algo{Gen}$) and classical variables (e.g., $x$ or $\keys{pk}$).
Calligraphic font stands for quantum algorithms (e.g., $\qalgo{Gen}$) and calligraphic font and/or the bracket notation for (mixed) quantum states (e.g., $\qstate{q}$ or $\ket{\psi}$).

Let $[\ell]$ denote the set of integers $\{1, \cdots, \ell \}$, $\secp$ denote a security parameter, and $y \seteq z$ denote that $y$ is set, defined, or substituted by $z$.
For a finite set $X$ and a distribution $D$, $x \chosen X$ denotes selecting an element from $X$ uniformly at random, $x \chosen D$ denotes sampling an element $x$ according to $D$. Let $y \gets \algo{A}(x)$ and $y \gets \qalgo{A}(\qstate{x})$ denote assigning to $y$ the output of a probabilistic or deterministic algorithm $\algo{A}$ and a quantum algorithm $\qalgo{A}$ on an input $x$ and $\qstate{x}$, respectively. When we explicitly show that $\algo{A}$ uses randomness $r$, we write $y \gets \algo{A}(x;r)$.
For a classical algorithm $\algo{A}$, we write $y\in \algo{A}(x)$ to mean that $y$ is in the range of $\algo{A}(x)$, i.e., $\Pr[\algo{A}(x)=y]>0$. \takashi{I added the definition.}
PPT and QPT algorithms stand for probabilistic polynomial-time algorithms and polynomial-time quantum algorithms, respectively.
Let $\negl$ denote a negligible function and $\poly$ denote a positive polynomial. 
For a string $x\in\bit^n$, $x[i]$ is its $i$-th bit. 
For strings $x,y\in \bit^n$, $x\cdot y$ denotes $\bigoplus_{i\in[n]} x[i]\cdot y[i]$. For quantum states $\rho$ and $\sigma$, $\|\rho-\sigma\|_{tr}$ means their trace distance. 

\if0
\subsection{Quantum Information}

We review several quantum information concepts.

\paragraph{Measurement Implementation.}\label{sec:measurement_implementation}

We review some concepts on quantum measurements.

\begin{definition}[Projective Implementation~\cite{TCC:Zhandry20}]\label{def:projective_implementation}
Let:
\begin{itemize}
 \item $\cD$ be a finite set of distributions over an index set $\cI$.
 \item $\cP=\setbk{\mat{P}_i}_{i\in \cI}$ be a POVM
 \item $\cE = \setbk{\mat{E}_D}_{D\in\cD}$ be a projective measurement with index set $\cD$.
 \end{itemize}
 We consider the following measurement procedure.
 \begin{enumerate}
 \item Measure under the projective measurement $\cE$ and obtain a distribution $D$.
 \item Output a random sample from the distribution $D$.
 \end{enumerate}
 We say $\cE$ is the projective implementation of $\cP$, denoted by $\projimp(\cP)$, if the measurement process above is equivalent to $\cP$.
\end{definition}

\begin{theorem}[{\cite[Lemma 1]{TCC:Zhandry20}}]\label{lem:commutative_projective_implementation}
Any binary outcome POVM $\cP=(\mat{P},\mat{I}-\mat{P})$ has a unique projective implementation $\projimp(\cP)$.
\end{theorem}


\begin{definition}[Mixture of Projetive Measurement~\cite{TCC:Zhandry20}]\label{def:mixture_projective_measurement}
Let $D: \cR \ra \cI$ where $\cR$ and $\cI$ are some sets.
Let $\cP=\setbk{(\mat{P}_i,\mat{Q}_i)}_{i\in \cI}$ be a collection of binary projective measurements.
The mixture of projective measurements $\setbk{(\mat{P}_i,\mat{Q}_i)}_{i\in \cI}$ with respect to $D$ is the binary POVM $\cP_D =(\mat{P}_D,\mat{Q}_D)$ defined as follows
\begin{align}
& \mat{P}_D = \sum_{i\in\cI}\Pr[i \chosen D(R)]\mat{P}_i && \mat{Q}_D = \sum_{i\in\cI}\Pr[i \chosen D(R)]\mat{Q}_i,
\end{align}
where $R$ is uniformly distributed in $\cR$.
\end{definition}


\begin{theorem}[\cite{TCC:Zhandry20,EC:KitNis22}]\label{thm:api_property}
Let $D$ be any distribution and $\cP=\setbk{(\mat{P}_i,\mat{Q}_i)}_i$ be a collection of binary outcome projective measurements. For any $0<\epsilon,\delta<1$, there exists an algorithm of measurement $\API_{\cP,D}^{\epsilon,\delta}$ that satisfies the following.
\begin{itemize}
\item We have $\Pr[\API_{\cP,D}^{\epsilon,\delta}(\rho)\ge x-\epsilon]\ge\Pr[\projimp(\cP_D)(\rho)\ge x]-\delta$ for any quantum state $\rho$ and $x\in\R$.
\item $\API_{\cP,D}^{\epsilon,\delta}$ is $(\epsilon,\delta)$-almost projective in the following sense. For any quantum state $\rho$, we apply $ \API_{\cP,D}^{\epsilon ,\delta}$ twice to $\rho$ and obtain measurement outcomes $x$ and $y$, respectively. Then, $\Pr[\abs{x-y}\le \epsilon]\ge 1-\delta$.
\item $\API_{\cP,D}^{\epsilon,\delta}$ is $(\epsilon,\delta)$-reverse almost projective in the following sense. For any quantum state $\rho$, we apply $ \API_{\cP,D}^{\epsilon ,\delta}$ and $\API_{\cPrev,D}^{\epsilon,\delta}$ succesively to $\rho$ and obtain measurement outcomes $x$ and $y$, respectively, where $\cPrev=\setbk{(\mat{Q}_i,\mat{P}_i)}_i$. Then, $\Pr[\abs{(1-x)-y}\le \epsilon]\ge 1-\delta$.
\item The expected running time of $\API_{\cP,D}^{\epsilon,\delta}$ is $T_{\cP,D}\cdot \poly(1/\epsilon,\log(1/\delta))$ where $T_{\cP,D}$ is the combined running time of sampling index $i$ from $D$, computing the mapping $i \ra (\mat{P}_i,\mat{Q}_i)$, and executing the measurement $(\mat{P}_i,\mat{Q}_i)$.
\item $\API_{\cP,D}^{\epsilon,\delta}$ can be executed with quantum oracle access to the map $\ket{r}\ket{t}\ra\ket{r}\ket{t\oplus D(r)}$.
\end{itemize}
\end{theorem}


\fuyuki{The following theorem is slightly different from the original one in \cite{TCC:Zhandry20}. The original one uses the notion of shift distance, but I think we cannot state the theorem using shift distance. Also, we have to be careful about the fact that two distributions should be indistinguishable under the auxiliary information of $\rho$, but this fact is not explicitly stated in the original one.}
\begin{theorem}[{\cite[Corollary 1]{TCC:Zhandry20}}]\label{cor:cind_sample_api}
Let $(D_0,D_1,\cP,\ket{\psi})\gets\Sample(1^\secp)$ be a QPT algorithm that outputs two distributions $D_0$ and $D_1$, a collection of binary outcome measurements $\cP$, and a quantum state $\ket{\psi}$.
Suppose for any QPT $\qA$, we have
\begin{align}
\abs{
\Pr_{
\scriptsize
\begin{array}{c}
(D_0,D_1,\cP,\ket{\psi})\gets\Sample(1^\secp)\\
i\gets D_0
\end{array}
}
\left[
\qA(1^\secp,i,\cP,\ket{\psi})=1
\right]-
\Pr_{
\scriptsize
\begin{array}{c}
(D_0,D_1,\cP,\ket{\psi})\gets\Sample(1^\secp)\\
i\gets D_1
\end{array}
}
\left[
\qA(1^\secp,i,\cP,\ket{\psi})=1
\right]
}
=\negl(\secp).
\end{align}
Then, for any $x\in\R$, we have
\begin{align}
\Pr_{
(D_0,D_1,\cP,\ket{\psi})\gets\Sample(1^\secp)
}
\left[
\API_{\cP,D_1}^{\epsilon,\delta}(\rho)\ge x-3\epsilon
\right]
\ge
\Pr_{
(D_0,D_1,\cP,\ket{\psi})\gets\Sample(1^\secp)
}
\left[
\API_{\cP,D_0}^{\epsilon,\delta}\ket{\psi}\ge x
\right]
- 2\delta+\negl(\secp),
\end{align}
and
\begin{align}
\Pr_{
(D_0,D_1,\cP,\ket{\psi})\gets\Sample(1^\secp)
}
\left[
\API_{\cP,D_1}^{\epsilon,\delta}\ket{\psi}\le x+3\epsilon
\right]
\ge
\Pr_{
(D_0,D_1,\cP,\ket{\psi})\gets\Sample(1^\secp)
}
\left[
\API_{\cP,D_0}^{\epsilon,\delta}\ket{\psi}\le x
\right]
- 2\delta+\negl(\secp).
\end{align}
\end{theorem}

\paragraph{Quantum program with classical inputs and outputs.}
We formalize quantum programs whose inputs and outputs are always classical strings.

\begin{definition}[Quantum Program with Classical Inputs and Outputs~\cite{C:ALLZZ21}]\label{def:Q_program_C_IO}
A quantum program with classical inputs and outputs is a pair of quantum state $\qstateq$ and unitaries $\setbk{\mat{U}_x}_{x\in[N]}$ where $[N]$ is the domain, such that the state of the program evaluated on input $x$ is equal to $\mat{U}_x \qstateq \mat{U}_x^\dagger$. We measure the first register of $\mat{U}_x \qstateq \mat{U}_x^\dagger$ to obtain an output. We say that $\setbk{\mat{U}_x}_{x\in[N]}$ has a compact classical description $\mat{U}$ when applying $\mat{U}_x$ can be efficiently computed given $\mat{U}$ and $x$.
\end{definition}
\fi

\newcommand{\CDBB}{\mathsf{CDBB84}}

\paragraph{Certified deletion property of BB84 states.}
For strings $x\in\bit^n$ and $\theta\in\bit^n$, let $\ket{x^\theta} \seteq H^{\theta[1]}\ket{x[1]}\tensor\cdots \tensor H^{\theta[n]}\ket{x[n]}$ where  
$H$ is the Hadamard operator. A state of the form $\ket{x^\theta}$ is called a BB84 state~\cite{BB84}. We review the certified deletion property of BB84 states.
\if0 
\takashi{It may be better to clarify where it's written in the paper. Is that Theorem 2?}
\fuyuki{I thought the property was proved in that paper, but in fact it was not. Theorem 2 that states the tokenized MAC security of BB84 state does not imply the property since in the security game of tokenized MAC, $\theta$ is not given to the adversary. The paper also proves one-way certified deletion security of BB84 state in Proposition 4, but it also does not imply the property since in the security game, $\theta$ is not given to the adversary before outputting the deletion certificate (Hadamard basis values). However, for our purpose, one-way certified deletion security of BB84 state is sufficient. Namely, our VRA security notions can be proved based on one-way certified deletion security of BB84 state. What do you think?}
\takashi{I agree that certified deletion suffices. 
But isn't the following theorem almost proven in Theorem 4.1 of  \url{https://arxiv.org/pdf/2107.13324}? The only difference is that $\theta$ takes $1$ on exactly half of the positions, but I guess this restriction can be easily removed. (Or we can just use their version if we slightly modify the schemes to choose such $\theta$.)
}

\begin{theorem}[Strong Monogamy Property of BB84 States]\label{thm:SMoE_BB84}
Consider the following game between a challenger and an adversary $\qA=(\qA_0,\qA_1,\qA_2)$.
\begin{enumerate}
\item The challenger picks a uniformly random strings $x\in\bit^n$ and $\theta\in\bit^n$. It sends $\ket{x^\theta} \seteq H^{\theta[1]}\ket{x[1]}\tensor\cdots \tensor H^{\theta[n]}\ket{x[n]}$ to $\qA_0$.
\item $\qA_0$ creates a bipartite state $\qstateq$ over registers $\qreg{R}_1$ and $\qreg{R}_2$. Then, $\qA_0$ sends register $\qreg{R}_1$ to $\qA_1$ and register $\qreg{R}_2$ to $\qA_2$.
\item $\theta$ is then sent to both $\qA_1$ and $\qA_2$.
$\qA_1$ and $\qA_2$ return respectively $(x^\prime_i)_{i\in[n]:\theta[i]=0}$ and $(x^\prime_i)_{i\in[n]:\theta[i]=1}$.
\end{enumerate}
Let $\SMoEBB(\qA,n)$ be a random variable which takes the value $1$ if $x^\prime_i=x[i]$ for every $i\in[n]$, and takes the value $0$ otherwise. Then 
for any adversary $\qA=(\qA_0,\qA_1,\qA_2)$, it holds that
\begin{align}
\Pr[\SMoEBB(\qA,n)=1]\leq 2^{-\Omega(n)}.
\end{align}
\end{theorem}
\fi

\begin{theorem}[Certified Deletion Property of BB84 States~{\cite[Proposition 4]{EPRINT:BehSatShi21}}]\label{thm:CD_BB84}
Consider the following game between an adversary $\qA=(\qA_0,\qA_1)$ the challenger.
\begin{enumerate}
\item The challenger picks a uniformly random strings $x\in\bit^n$ and $\theta\in\bit^n$ and sends $\ket{x^\theta}$ to $\qA_0$. 
\item $\qA_0$ outputs a classical string $y\in\bit^n$ and a state $\qstate{st}$.
\item $\theta$, $(x[i])_{i\in[n]:\theta[i]=1}$,  \takashi{I added it for consistency to the technical overview.}
and $\qstate{st}$ are sent to $\qA_1$.
$\qA_1$ outputs a classical string $z\in\bit^n$.
\end{enumerate}
Let $\CDBB(\qA,n)$ be a random variable which takes the value $1$ if $y[i]=x[i]$ for every $i\in[n]$ such that $\theta[i]=1$ and $z[i]=x[i]$ for every $i\in[n]$ such that $\theta[i]=0$, and takes the value $0$ otherwise. Then 
for any adversary $\qA=(\qA_0,\qA_1)$, it holds that
\begin{align}
\Pr[\CDBB(\qA,n)=1]\leq 2^{-\Omega(n)}.
\end{align}
\end{theorem}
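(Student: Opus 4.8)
The plan is to reprove the statement from scratch via an entropic-uncertainty argument, since it is really the certified-deletion analogue of the Berta--Christandl--Colbeck--Renes--Renner uncertainty relation with quantum side information, rather than a monogamy-of-entanglement statement. First I would pass to an equivalent EPR reformulation by deferred measurement: instead of sampling $x,\theta$ and handing $\ket{x^\theta}$ to $\qA_0$, the challenger prepares $n$ EPR pairs, sends one half of each to $\qA_0$, and only at the very end measures its own halves in basis $\theta$ to define $x$. From $\qA$'s viewpoint this game is identical to the original, but now $x$ and each $x[i]$ are genuine measurement outcomes of the joint state, which is what lets me reason about their conditional entropy given the adversary's registers.

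With this reformulation in place, I would isolate the combinatorial heart of the difficulty. Writing $P=\{i:\theta[i]=1\}$, winning requires simultaneously that (i) the classical string $y$, committed by $\qA_0$ \emph{before} $\theta$ is revealed, agrees on $P$ with the Hadamard-basis outcomes, and (ii) the string $z$, produced by $\qA_1$ \emph{after} receiving $\theta$, the leaked values $(x[i])_{i\in P}$, and the residual state $\qst$, agrees on $\bar P$ with the computational-basis outcomes. The key point is that $\qA_0$ must fix, qubit by qubit and without knowing $\theta$, a measurement strategy that predicts the Hadamard value at a position (in case $\theta[i]=1$) while leaving enough computational information for $\qA_1$ (in case $\theta[i]=0$); complementarity makes these two demands mutually exclusive at each position, and this per-position tension is precisely what amplifies to $2^{-\Omega(n)}$ over the $n$ positions.

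To turn this intuition into a bound I would invoke the smooth-min-entropy form of the uncertainty relation, as used in QKD security proofs: conditioning on the event that $y$ correctly predicts the Hadamard-basis outcomes on $P$ caps the smooth max-entropy of those outcomes, which forces the smooth min-entropy of the computational-basis outcomes on $\bar P$, given everything $\qA_1$ holds (the residual state $\qst$, the announced $\theta$, and the leaked bits $(x[i])_{i\in P}$), to be $\Omega(n)$. A state carrying $\Omega(n)$ bits of min-entropy about $(x[i])_{i\in\bar P}$ can be guessed in full only with probability $2^{-\Omega(n)}$, which bounds $\Pr[\CDBB(\qA,n)=1]$ as claimed.

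The step I expect to be the main obstacle is exactly this third one, because the sequential structure means the \emph{same} global state is first used to extract $y$ and then, after the challenger's measurement, to extract $z$; I must therefore track carefully how committing to an accepting $y$ disturbs and constrains the residual state, while also charging the $|P|$ leaked bits $(x[i])_{i\in P}$ against the remaining min-entropy. This is precisely why a plain monogamy-of-entanglement game does not suffice: the transfer of $\qst$ from $\qA_0$ to $\qA_1$ is genuine communication, so the two stages cannot be modelled as non-communicating players, and a dedicated sequential analysis is required. This is the content of \cite[Proposition 4]{EPRINT:BehSatShi21}, which I would ultimately cite for the detailed smoothing and entropy bookkeeping.
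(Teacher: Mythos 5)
Your entropic-uncertainty sketch is a plausible route to the underlying fact (and is essentially how the certified-deletion literature argues it), but it is not what the paper does: the paper does not reprove the certified deletion property at all. It cites \cite[Proposition 4]{EPRINT:BehSatShi21} verbatim and supplies exactly one new observation, in the remark following the theorem, to bridge the cited statement to the statement as used here --- namely that additionally handing $(x[i])_{i\in[n]:\theta[i]=1}$ to $\qA_1$ does not strengthen the game, because on the event $\CDBB(\qA,n)=1$ we have $y[i]=x[i]$ for every $i$ with $\theta[i]=1$, so $\qA_1$ can recompute these bits from $y$ and $\theta$ by itself; an adversary for the augmented game therefore yields one for the original game with the same winning probability. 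Everything else in your proposal (the EPR reformulation, the smooth min-/max-entropy bookkeeping for the sequential use of the state) is precisely the content you ultimately defer to \cite[Proposition 4]{EPRINT:BehSatShi21} anyway, so as written the proposal neither independently reproves that proposition nor isolates the one step the paper actually has to argue.

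That bridging step is also the one place where your plan is concretely shakier than the paper's. You propose to handle the extra inputs by ``charging the $|P|$ leaked bits $(x[i])_{i\in P}$ against the remaining min-entropy.'' Taken literally this is a chain-rule deduction of up to $|P|\approx n/2$ bits from a smooth min-entropy of $(x[i])_{i\in\bar P}$ that the uncertainty relation only guarantees to be some constant fraction of $|\bar P|\approx n/2$; depending on the constants this can render the final bound vacuous, and in any case it is unnecessary. The correct and much simpler observation is the paper's: conditioned on the first-stage check passing, the leaked string is a deterministic function of $(y,\theta)$, which $\qA_1$ already holds, so it carries zero additional information and no entropy needs to be surrendered. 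I would replace the ``charging'' step with this reduction and otherwise simply cite \cite[Proposition 4]{EPRINT:BehSatShi21}, as the paper does.
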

\begin{remark}
In \cite[Proposition 4]{EPRINT:BehSatShi21}, they do not include $(x[i])_{i\in[n]:\theta[i]=1}$ as input of $\qA_1$.  However, we observe that giving $(x[i])_{i\in[n]:\theta[i]=1}$ to $\qA_1$ does not strengthen the statement  
 since we have $(x[i])_{i\in[n]:\theta[i]=1}=(y[i])_{i\in[n]:\theta[i]=1}$ whenever $\CDBB(\qA,n)=1$, in which case the adversary can recover it from $y$ and $\theta$ by itself. We include $(x[i])_{i\in[n]:\theta[i]=1}$ as input of $\qA_1$ since this is more convenient for security proofs of our schemes. \takashi{I added this remark.}
\end{remark}

\subsection{Classical Cryptographic Primitives}\label{sec:standard_crypto}

\if0
\paragraph{Secret-key encryption.}
\begin{definition}[SKE with Ciphertext Pseudorandomness]\label{def:ske}
A SKE scheme $\SKE$ with the message space $\cM$ and the ciphertext space $\cC\cT$ is a tuple of two algorithms $(\E, \D)$.
\begin{itemize}
\item The encryption algorithm $\E$, given a key $\sk \in \bin^\lambda$ and a plaintext $m \in \cM$, outputs a ciphertext $\ct$,
where $\cM$ is the plaintext space of $\SKE$.

\item The decryption algorithm $\D$, given a key $\sk$ and a ciphertext $\ct$, outputs a plaintext $\tilde{m} \in \{ \bot \} \cup \cM$.
This algorithm is deterministic.
\end{itemize}
We require $\SKE$ to satisfy the following properties.
\begin{description}
\item[Correctness:]We have $\D(\sk, \E(\sk, m)) = m$ for every $m \in \cM$ and key $\sk \in \bin^\lambda$.
\item[Sparseness:]For any $\ct\in\cC\cT$, we have
\begin{align}
\Pr \left [
\Dec(\sk, \ct) \neq \bot ~\left|~
\sk \gets \bit^\secp\right.
\right ] \le \negl(\secp).
\end{align}
\item[Ciphertext pseudorandomness:]
We define the following experiment $\expb{\qA,\SKE}{psdr}{ct}(1^\secp,\coin)$ between a challenger and an adversary $\qA$.
\begin{enumerate}
\item The challenger generates $\sk \chosen \bin^\lambda$.
Then, the challenger sends $1^\lambda$ to $\qA$.
\item $\qA$ may make polynomially many encryption queries adaptively.
$\qA$ sends $m \in \cM $ to the challenger.
Then, the challenger returns $\ct \la \E(\sk, m)$ if $\coin=0$, otherwise $\ct\gets\cC\cT$.
\item $\qA$ outputs $\coin' \in \bin$.
\end{enumerate}
For any QPT adversary $\qA$, we have 
\begin{align}
\advb{\SKE, \qA}{psdr}{ct}(\lambda) = \abs{\Pr[\expb{\qA,\SKE}{psdr}{ct}(1^\secp,0)=1]-\Pr[\expb{\qA,\SKE}{psdr}{ct}(1^\secp,1)=1]} \le \negl(\secp).
\end{align}
\end{description}

\begin{theorem}\label{thm:pseudorandom_ske}
If OWFs exist, there exists a SKE scheme with pseudorandom ciphertexts.
\end{theorem}
\end{definition}
\fi

\paragraph{Public-key encryption.}
\begin{definition}[PKE]
A PKE scheme $\PKE$ is a tuple of three algorithms $(\KG, \Enc, \Dec)$. 
Below, let $\Ms$  be the message space of $\PKE$.\mor{The results of this paper hold even if some of $\KG,\Enc,\Dec$ are QPT?}\takashi{Yes (If we allow $\Enc$ of PKE-SKL to be quantum as well.)}
\begin{description}
\item[$\KG(1^\secp)\ra(\ek,\dk)$:] The key generation algorithm is a PPT algorithm that takes a security parameter $1^\lambda$, and outputs an encryption key $\ek$ and a decryption key $\dk$. 

\item[$\Enc(\ek,\msg)\ra\ct$:] The encryption algorithm is a PPT algorithm that takes an encryption key $\ek$ and a message $\msg \in \Ms$, and outputs a ciphertext $\ct$.

\item[$\Dec(\dk,\ct)\ra\tilde{\msg}$:] The decryption algorithm is a deterministic classical 
polynomial-time algorithm that takes a decryption key $\dk$ and a ciphertext $\ct$, and outputs a value $\tilde{\msg}$.

\item[Correctness:]For every $\msg \in \Ms$, we have
\begin{align}
\Pr\left[
\Dec(\dk, \ct) \allowbreak = \msg
\ \middle |
\begin{array}{ll}
(\ek,\dk)\gets\KG(1^\secp)\\
\ct\gets\Enc(\ek,\msg)
\end{array}
\right] 
=1-\negl(\secp).
\end{align}
\end{description}
\end{definition}

\begin{definition}[IND-CPA Security]\label{def:IND-CPA_PKE}
We say that a PKE scheme $\PKE$ with the message space $\Ms$ is IND-CPA secure if it satisfies the following requirement, formalized from the experiment $\expb{\PKE,\qA}{ind}{cpa}(1^\secp,\coin)$ between an adversary $\qA$ and the challenger:
        \begin{enumerate}
            \item  The challenger runs $(\ek,\dk)\gets\KG(1^\secp)$ and sends $\ek$ to $\qA$. 
            \item $\qA$ sends $(\msg_0^*,\msg_1^*)\in \Ms^2$ 
            to the challenger. 
            \item The challenger generates $\ct^*\la\Enc(\ek,\msg_\coin^*)$ and sends $\ct^*$ to $\qA$.
            \item $\qA$ outputs a guess $\coin^\prime$ for $\coin$. The challenger outputs $\coin'$ as the final output of the experiment.
        \end{enumerate}
        For any QPT $\qA$, it holds that
\begin{align}
\advb{\PKE,\qA}{ind}{cpa}(\secp) \seteq \abs{\Pr[\expb{\PKE,\qA}{ind}{cpa} (1^\secp,0) = 1] - \Pr[\expb{\PKE,\qA}{ind}{cpa} (1^\secp,1) = 1] }\leq \negl(\secp).
\end{align}
\end{definition}

\if0
\paragraph{Constrained pseudorandom function.}

\begin{definition}[Constrained PRF (Syntax)]\label{def:cprf_syntax}
A constrained PRF (CPRF) with domain $\Domprf$, range $\Ranprf$, and constraint class $\Fs= \setbk{f \colon \Domprf \ra \zo{}}$ is a tuple of four algorithms.
\begin{description}
\item[$\Setup(1^\secp) \ra \msk$:] The setup algorithm takes as input the security parameter $\secp$, and outputs a master PRF key $\msk$.
\item[$\Constrain(\msk,f)\ra \sk_f$:] The constrain algorithm takes as input the master PRF key $\msk$ and a constraint $f\in \Fs$, and outputs a constrained key $\sk_f$.
\item[$\Eval(\msk,\prfinp) \ra \prfout$:] The evaluation algorithm takes as input $\msk$ and an input $\prfinp \in \Domprf$, and outputs a value $\prfout\in \Ranprf$.
\item[$\CEval(\sk_f,\prfinp) \ra \prfout$:] The constrained evaluation algorithm takes as input $\sk_f$ and $\prfinp \in \Domprf$, and outputs a value $\prfout \in \Ranprf$.
\end{description}
\end{definition}

\begin{definition}[Security for CPRF]\label{def:cprf_security}
A private CPRF should satisfy correctness, pseudorandomness, and privacy.
\begin{description}
\item[Correctness:] A CPRF is correct if for any (stateful) QPT adversary $\qA$, it holds that
\[
\Pr\left[
\begin{array}{rl}
&\Eval(\msk,\prfinp) \ne \CEval(\sk_f,\prfinp)\\
&\land \ \prfinp\in\Domprf \land f(\prfinp)=0
\end{array}
 \ \middle |
\begin{array}{rl}
 &f \lrun \qA(1^\secp),\\
 & \msk\lrun \Setup(1^\secp),\\
 &\sk_f \lrun \Constrain(\msk,f), \\
 &\prfinp \lrun \qA^{\Eval(\msk,\cdot)}(\sk_f)
\end{array}
\right]\le  \negl(\secp).
\]
\item[Selective single-key pseudorandomness:]
A CPRF is selectively single-key pseudorandom if for any (stateful) QPT adversary $\qA$, it hods that
\[
2\abs{\Pr\left[
\begin{array}{rl}
&\qA^{\Eval(\msk,\cdot)}(y_b)=b\\
&\land \ \prfinp\notin \cQ_e \\
&\land f(\prfinp)\ne0
\end{array}
 \ \middle |
\begin{array}{rl}
 &f \lrun \qA(1^\secp),\\
 & \msk\lrun \Setup(1^\secp),\\
 & \sk_f \lrun \Constrain(\msk,f)\\
 &\prfinp \lrun \qA^{\Eval(\msk,\cdot)}(\sk_f)\\
 &y_0 \seteq \Eval(\msk,\prfinp),y_1\chosen \Ranprf,\\
 &b\chosen \zo{}
\end{array}
\right] -\frac{1}{2}} \le \negl(\secp),
\]
where $\cQ_e$ is the sets of queries to $\Eval(\msk,\cdot)$.
\mor{2 is necessary?}

\item[Selective single-key privacy:]
A CPRF is selectively single-key private if for any (stateful) QPT adversary $\qA$, there exists a stateful PPT simulator $\Sim= (\Sim_1,\Sim_2)$ that satisfying that
\[
2\abs{\Pr\left[
\qA^{\cO_b(\cdot)}(\sk_b)=b
 \ \middle |
\begin{array}{rl}
 &f \lrun \qA(1^\secp),\\
 & \msk\lrun \Setup(1^\secp), b\chosen \zo{},\\
 &\sk_0 \lrun \Constrain(\msk,f),\\
 &(\stinfo_\Sim,\sk_1) \lrun \Sim_1(1^\secp)
\end{array}
\right] -\frac{1}{2}} \le  \negl(\secp),
\]
where $\cO_0(\cdot) \seteq \Eval(\msk,\cdot)$ and $\cO_1(\cdot) \seteq \Sim_2(\stinfo_\Sim,\cdot,f(\cdot))$.
\end{description}
We say that a CPRF is a selectively single-key private CPRF if it satisfies correctness, selective single-key pseudorandomness, and selective single-key privacy.
\end{definition}

\begin{theorem}[\cite{TCC:BTVW17,PKC:PeiShi18}]\label{thm:pcprf_lwe}
If the QLWE assumption holds, there exists a selectively single-key private CPRF for polynomial-size classical circuits.
\end{theorem}
\fi

\ifnum\llncs=0
\paragraph{Two-key equivocal PRF.}
We rely on a primitive called two-key equivocal PRF (TEPRF) introduced in  \cite{C:HJOSW16}.
\begin{definition}[Two-Key Equivocal PRF]\label{def:TEPRF} \takashi{I changed the definition according to the technical overview.}
A two-key equivocal PRF (TEPRF) with input length $\ell$ (and output length $1$)\footnote{Though we can also define it for larger output length, we assume the output length to be $1$ by default similarly to \cite{C:HJOSW16}.} is a tuple of two algorithms $(\KG,\Eval)$.
\begin{description}
\item[$\KG(1^\secp,s^*) \ra (\key_0,\key_1)$:] The key generation algorithm is a PPT algorithm that takes as input the security parameter $1^\secp$ and a string $s^*\in \bit^\ell$, and outputs two keys $\key_0$ and $\key_1$. 
\item[$\Eval(\key,s) \ra b$:] The evaluation algorithm is a deterministic classical polynomial-time algorithm that takes as input a key $\key$ and an input $s\in \bit^\ell$, and outputs a bit $b\in \bit$. 
\end{description}
\begin{description}
    \item[Equality:]
    For all $\secp\in \mathbb{N}$, $s^*\in \bit^\ell$, $(\key_0,\key_1)\gets \KG(1^\secpar,s^*)$, $s\in \bit^\ell\setminus \{s^*\}$, 
    \[
    \Eval(\key_0,s)=\Eval(\key_1,s).
    \]
    \item[Different values on target:] 
     For all $\secp\in \mathbb{N}$, $s^*\in \bit^n$, $(\key_0,\key_1)\gets \KG(1^\secpar,s^*)$, 
    \[
    \Eval(\key_0,s^*)\ne \Eval(\key_1,s^*).
    \]
  \item[Differing point hiding.] 
For any (stateful) QPT adversary $\A$, 
    \begin{align}
    \abs{
    \Pr\left[
    \A(\key_b)=1:
    \begin{array}{l}
    (s^*_0,s^*_1,b) \gets \A(1^\secpar)\\
    (\key_0,\key_1)\gets \KG(1^\secpar,s^*_0)\\
    \end{array}
    \right]
    -
   \Pr\left[
    \A(\key_b)=1:
    \begin{array}{l}
    (s^*_0,s^*_1,b) \gets \A(1^\secpar)\\
    (\key_0,\key_1)\gets \KG(1^\secpar,s^*_1)\\
    \end{array}
    \right]
    }\le \negl(\secp).
    \end{align}
\end{description}
\end{definition}
Our definition of TEPRFs is slightly different from the original one in \cite{C:HJOSW16}. In \Cref{sec:TEPRF}, we show that their definition implies ours. Since they show that OWFs imply TEPRFs under their definition, a similar implication holds for our definition of TEPRFs. 
\begin{theorem}\label{thm:OWF_to_TEPRF}
Assuming the existence of OWFs, there exist TEPRFs with input length $\ell$  for any polynomial $\ell=\ell(\secpar)$.
\end{theorem}
\fi

\if0
We rely on the following simple lemma about TEPRFs.
\begin{lemma}\label{lem:differing_point_hiding}
A  TEPRF $(\ObvKG,\KG,\Eval)$ satisfies the following property which we call differing point hiding.
\begin{description}
\item[Differing point hiding.] 
For any (stateful) QPT adversary $\A$, 
    \begin{align}
    \abs{
    \Pr\left[
    \A(\key_b)=1:
    \begin{array}{l}
    (s^*_0,s^*_1,b) \gets \A(1^\secpar)\\
    (\key_0,\key_1)\gets \KG(1^\secpar,s^*_0)\\
    \end{array}
    \right]
    -
   \Pr\left[
    \A(\key_b)=1:
    \begin{array}{l}
    (s^*_0,s^*_1,b) \gets \A(1^\secpar)\\
    (\key_0,\key_1)\gets \KG(1^\secpar,s^*_1)\\
    \end{array}
    \right]
    }\le \negl(\secp).
    \end{align}
\end{description}
\end{lemma}
\begin{proof}
    By indistinguishability of TEPRF, 
    for $\beta\in \bit$, we have
        \begin{align}
    \abs{
    \Pr\left[
    \A(\key_b)=1:
    \begin{array}{l}
    (s^*_0,s^*_1,b) \gets \A(1^\secpar)\\
    (\key_0,\key_1)\gets \KG(1^\secpar,s^*_\beta)\\
    \end{array}
    \right]
    -
   \Pr\left[
    \A(\key)=1:
    \begin{array}{l}
    (s^*_0,s^*_1,b) \gets \A(1^\secpar)\\
    \key\gets \ObvKG(1^\secpar)\\
    \end{array}
    \right]
    }\le \negl(\secp).
    \end{align}
Note that the second probability does not depend on $\beta$. 
Then differing point hiding follows from the above inequality and the triangle inequality. 
\end{proof}
\fi

\ifnum\llncs=0
\section{Cryptographic Primitives with Secure Key Leasing}\label{sec:cryptoprimitive-skl-defs} 
\subsection{Public Key Encryption with Secure Key Leasing} \label{sec:pke-skl-defs}
In this subsection, we define public key encryption with secure key leasing (PKE-SKL) with classical revocation. 
\else  
\section{Definition of PKE-SKL} \label{sec:pke-skl-defs}
In this section, we define public key encryption with secure key leasing (PKE-SKL) with classical revocation.  
\fi
The syntax is similar to that in \cite{EC:AKNYY23} except that we introduce an algorithm that generates a classical certificate of deletion of a decryption key. A similar definition is introduced in \cite{TCC:AnaPorVai23} as key-revocable PKE with classical revocation.   
\begin{definition}[PKE-SKL with classical revocation]
A PKE-SKL scheme $\PKESKL$ with classical revocation is a tuple of five algorithms $(\qKG, \Enc, \qDec,\qDel,\DelVrfy)$. 
Below, let $\cM$  be the message space of $\PKESKL$. 
\begin{description}
\item[$\qKG(1^\secp)\ra(\ek,\qdk,\dvk)$:] The key generation algorithm is a QPT algorithm that takes a security parameter $1^\lambda$, and outputs an encryption key $\ek$, a decryption key $\qdk$, and a deletion verification key $\dvk$.

\item[$\Enc(\ek,\msg)\ra\ct$:] The encryption algorithm is a PPT algorithm that takes an encryption key $\ek$ and a message $\msg \in \cM$, and outputs a ciphertext $\ct$.

\item[$\qDec(\qdk,\ct)\ra\tilde{\msg}$:] The decryption algorithm is a QPT algorithm that takes a decryption key $\qdk$ and a ciphertext $\ct$, and outputs a value $\tilde{\msg}$.

\item[$\qDel(\qdk)\ra\cert$:] The deletion algorithm is a QPT algorithm that takes a decryption key $\qdk$, and outputs a classical string $\cert$.

\item[$\DelVrfy(\dvk,\cert)\ra\top/\bot$:] The deletion verification algorithm is a deterministic classical polynomial-time algorithm that takes a deletion verification key $\dvk$ and a deletion certificate $\cert$, and outputs $\top$ or $\bot$.

\item[Decryption correctness:]For every $\msg \in \cM$, we have
\begin{align}
\Pr\left[
\qDec(\qdk, \ct) \allowbreak = \msg
\ \middle |
\begin{array}{ll}
(\ek,\qdk,\dvk)\gets\qKG(1^\secp)\\
\ct\gets\Enc(\ek,\msg)
\end{array}
\right] 
=1-\negl(\secp).
\end{align}

\item[Deletion verification correctness:] We have 
\begin{align}
\Pr\left[
\DelVrfy(\dvk,\cert)=\top
\ \middle |
\begin{array}{ll}
(\ek,\qdk,\dvk)\gets\qKG(1^\secp)\\
\cert\gets\qDel(\qdk)
\end{array}
\right] 
=1-\negl(\secp).
\end{align}
\end{description}
\end{definition}
\begin{remark}[Reusability]\label{rem:reusability}
We can assume without loss of generality that a decryption key of a PKE-SKL scheme is reusable, i.e., it can be reused to decrypt (polynomially) many ciphertexts. In particular, we can assume that 
for honestly generated $\ct$ and $\qdk$, if we decrypt $\ct$ by using $\qdk$, the state of the decryption key after the decryption is negligibly close to that before the decryption in terms of trace distance. 
This is because the output of the decryption is almost deterministic by decryption correctness, and thus such an operation can be done without almost disturbing the input state by the gentle measurement lemma~\cite{Winter99}.    
\end{remark}

We next introduce the security notions for PKE-SKL with classical revocation.
Unlike existing security definitions of PKE-SKL~\cite{EC:AKNYY23} (or key-revocable PKE~\cite{TCC:AnaPorVai23}), we allow the adversary to obtain the verification key after submitting a valid certificate that passes the verification. We give more detailed comparisons after presenting our definitions.    

\begin{definition}[IND-VRA security]\label{def:IND-VRA_PKESKL} 
We say that a PKE-SKL scheme $\PKESKL$  with classical revocation for  the message space $\cM$ is IND-VRA secure,\footnote{"VRA" stands for "\textbf{V}erifiation key \textbf{R}evealing \textbf{A}ttack"} if it satisfies the following requirement, formalized by the experiment $\expb{\PKESKL,\qA}{ind}{vra}(1^\secp,\coin)$ between an adversary $\qA$ and the challenger:
        \begin{enumerate}
            \item  The challenger runs $(\ek,\qdk,\dvk)\gets\qKG(1^\secp)$ and sends $\ek$ and $\qdk$ to $\qA$. 
            \item $\qA$ sends $\cert$ and $(\msg_0^*,\msg_1^*)\in \cM^2$ to the challenger. If $\DelVrfy(\dvk,\cert)=\bot$, the challenger outputs $0$ as the final output of this experiment. Otherwise, the challenger generates $\ct^*\la\Enc(\ek,\msg_\coin^*)$, and sends $\dvk$ and $\ct^*$ to $\qA$.
            \item $\qA$ outputs a guess $\coin^\prime$ for $\coin$. The challenger outputs $\coin'$ as the final output of the experiment.
        \end{enumerate}
        For any QPT $\qA$, it holds that
\begin{align}
\advb{\PKESKL,\qA}{ind}{vra}(\secp) \seteq \abs{\Pr[\expb{\PKESKL,\qA}{ind}{vra} (1^\secp,0) = 1] - \Pr[\expb{\PKESKL,\qA}{ind}{vra} (1^\secp,1) = 1] }\leq \negl(\secp).
\end{align} 
\end{definition}

We also define the one-way variant of the above security. 
\begin{definition}[OW-VRA security]\label{def:OW-VRA_PKESKL}
We say that a PKE-SKL scheme  $\PKESKL$ with classical revocation  for the message space $\cM$ is OW-VRA secure, if it satisfies the following requirement, formalized by the experiment $\expb{\PKESKL,\qA}{ow}{vra}(1^\secp)$ between an adversary $\qA$ and the challenger:
        \begin{enumerate}
            \item  The challenger runs $(\ek,\qdk,\dvk)\gets\qKG(1^\secp)$ and sends $\ek$ and $\qdk$ to $\qA$.              
            \item $\qA$ sends $\cert$ to the challenger. If $\DelVrfy(\dvk,\cert)=\bot$, the challenger outputs $0$ as the final output of this experiment.
            Otherwise, the challenger chooses $\msg^*\gets \cM$, generates $\ct^*\la\Enc(\ek,\msg^*)$, and sends $\dvk$ and $\ct^*$ to $\qA$.
            \item $\qA$ outputs $\msg^\prime$. The challenger outputs $1$ if $\msg^\prime=\msg^*$ and otherwise outputs $0$ as the final output of the experiment.
        \end{enumerate}
        For any QPT $\qA$, it holds that
\begin{align}
\advb{\PKESKL,\qA}{ow}{vra}(\secp) \seteq \Pr[\expb{\PKESKL,\qA}{ow}{vra} (1^\secp) = 1]\leq \negl(\secp).
\end{align}
\end{definition}
\begin{remark}[Security as plain PKE]
It is straightforward to see that IND-VRA (resp. OW-VRA) security implies IND-CPA (resp. OW-CPA) security as a plain PKE scheme with quantum decryption keys.
\end{remark}

By the quantum Goldreich-Levin lemma~\cite{AC02,C:CLLZ21} we have the following lemma. 
\begin{lemma}\label{lem:ow-ind}
If there exists a OW-VRA secure PKE-SKL scheme with classical revocation, then there exists an IND-VRA secure PKE-SKL scheme with classical revocation.
\end{lemma}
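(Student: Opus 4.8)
The plan is to apply the (quantum) Goldreich--Levin lemma to convert one-wayness into indistinguishability, exactly as in the classical hardcore-bit construction. Starting from a OW-VRA secure scheme $\PKESKL=(\qKG,\Enc,\qDec,\qDel,\DelVrfy)$ whose message space we may assume to be $\bit^\ell$ for some polynomial $\ell=\ell(\secp)$, I would build an IND-VRA secure scheme $\PKESKL'$ for single-bit messages, which already witnesses the existence of an IND-VRA scheme. Its key generation, deletion, and deletion verification algorithms are inherited unchanged from $\PKESKL$, so in particular $\PKESKL'$ still has classical revocation. To encrypt $\beta\in\bit$, sample $m\gets\bit^\ell$ and $r\gets\bit^\ell$, compute $\ct\gets\Enc(\ek,m)$, and output $\ct'=(\ct,r,\beta\oplus(m\cdot r))$; to decrypt, recover $m\gets\qDec(\qdk,\ct)$ and output $c\oplus(m\cdot r)$. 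Decryption correctness is immediate from that of $\PKESKL$, and deletion verification correctness is unchanged.

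For security, suppose toward a contradiction that a QPT adversary $\qA'$ breaks IND-VRA security of $\PKESKL'$ with non-negligible advantage $\epsilon$; for single-bit messages the challenge pair is necessarily $(0,1)$. I would construct $\qB$ attacking OW-VRA security of $\PKESKL$ as follows. $\qB$ relays $(\ek,\qdk)$ to $\qA'$, receives its certificate $\cert$, and forwards $\cert$ to the OW-VRA challenger; if the certificate is rejected both experiments halt, so $\qB$ proceeds only on the valid-certificate branch, on which it receives $(\dvk,\ct^*)$ with $\ct^*=\Enc(\ek,m^*)$ for a uniform $m^*\gets\bit^\ell$. At this point $\qA'$ is in some residual quantum state $\rho$ (the state after it output $\cert$), and this state together with the classically-hardwired map ``on input $r$, sample $c\gets\bit$, run the challenge phase of $\qA'$ on input $(\dvk,(\ct^*,r,c))$ to obtain $\coin'$, and output $c\oplus\coin'$'' forms a predictor for the hardcore bit $m^*\cdot r$.

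The key calculation, which I would carry out but only sketch here, is that this predictor guesses $m^*\cdot r$ correctly with probability $\tfrac12+\Omega(\epsilon)$ on the valid-certificate branch: feeding $\qA'$ a uniformly random $c$ makes it see a genuine IND-VRA challenge for the uniformly random bit $\coin=c\oplus(m^*\cdot r)$, and the predictor outputs $m^*\cdot r$ exactly when $\qA'$ guesses $\coin$ correctly, whose probability is $\tfrac12$ plus half the conditional distinguishing advantage. Invoking the quantum Goldreich--Levin lemma~\cite{AC02,C:CLLZ21}, $\qB$ runs the single-copy coherent extractor on $\rho$ with this prediction procedure and recovers $m^*$ with probability $\poly(\epsilon)$, thereby winning the OW-VRA experiment with non-negligible probability and contradicting OW-VRA security.

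The main obstacle is the faithful invocation of the quantum Goldreich--Levin lemma in this setting rather than the construction itself: the extractor may use only a single copy of the post-measurement state $\rho$ (it cannot be cloned or regenerated, since $\qA'$ has already consumed its quantum decryption key to produce $\cert$), so I must appeal to the coherent, single-invocation version of the lemma and verify that the prediction procedure is implementable as the required efficiently computable unitary acting on $\rho$ and indexed by the classical string $r$. A secondary point requiring care is the advantage bookkeeping across the valid-certificate event $V$: since $\qA'$'s behavior up to $\cert$ is identical in both experiments, $\Pr[V]$ is preserved, and one checks that $\Pr[V]\cdot\poly(\epsilon/\Pr[V])$ remains non-negligible whenever $\epsilon$ is, so the reduction retains a non-negligible advantage even when valid certificates are rare.
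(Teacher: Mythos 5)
Your proposal is correct and follows essentially the same route as the paper: the paper omits this proof, noting it is the quantum Goldreich--Levin hardcore-bit conversion of \cite[Lemma 3.12]{EC:AKNYY23}, and the fully analogous argument it does spell out (the UPF-SKL to PRF-SKL conversion in the appendix) uses exactly your construction, your predictor built from the adversary's post-certificate residual state as quantum auxiliary input, and the same averaging over the valid-certificate event before invoking the single-copy extractor. The only detail worth tightening is that the adversary may submit the challenge pair $(1,0)$ or two equal messages rather than $(0,1)$, which is handled by a trivial relabeling/conditioning step.
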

The proof is almost identical to that of \cite[Lemma 3.12]{EC:AKNYY23}, and thus we omit it. \takashi{I removed it since that is too similar.}

\paragraph{\bf Comparisons with existing definitions.} 
Here, we give brief comparisons among our and existing security definitions. \ifnum\llncs=0 See \Cref{sec:comparison_def} for more details. \fi

Agrawal et al. \cite{EC:AKNYY23}   introduced security notions called IND-KLA and OW-KLA security where the adversary is allowed to make arbitrarily many queries to the deletion verification oracle.\footnote{Strictly speaking, these security notions  are defined for PKE-SKL with quantum revocation, but the definition can be extended to the classical revocation setting in a straightforward manner. \ifnum\llncs=0 See \Cref{def:IND-CPA_PKESKL,def:OW-CPA_PKESKL} for the definitions.\fi} 
Though we do not explicitly allow the adversary to access the deletion verification oracle, we can show that 
\begin{itemize}
\item OW-VRA security implies 
\ifnum\llncs=1
OW-KLA security. 
\else
OW-KLA security  
(\Cref{lem:IND-VRA_to_IND-KLA}). 
\fi
\item IND-VRA security implies IND-KLA security if the message space size is  
\ifnum\llncs=1
polynomial. 
\else
polynomial  (\Cref{lem:OW-VRA_to_OW-KLA}).
\fi
\end{itemize}
The above suggests that IND-VRA security and IND-KLA security may be incomparable when the message space size is super-polynomial. However, this is not a problem in terms of feasibility since we can always convert IND-VRA secure scheme into IND-KLA secure scheme by first restricting the message space to be single bits and then expanding the message space by parallel repetition which preserves IND-KLA security. In particular, combined with \Cref{lem:ow-ind}, if there exists a OW-VRA secure PKE-SKL scheme with classical revocation, then there exists an IND-SKL secure PKE-SKL scheme with classical revocation. 

\takashi{updated}
The security definition of Ananth et al. \cite{TCC:AnaPorVai23}, which we call APV-security, is very close to a weaker variant of IND-VRA security where the verification key is not given to the adversary. However, APV-security is not implied by IND-VRA security due to a subtle reason that it requires ciphertext pseudorandomness rather than a more standard indistinguishability-based definition. If we consider a slightly weakened variant of APV-security with an indistinguishability-based definition, then it would be implied by IND-VRA security. 
We remark that achieving APV-security based solely on the existence of IND-CPA PKE may be challenging, as the gap between indistinguishability-based security and ciphertext pseudorandomness, even for standard PKE, remains unresolved.  

\ifnum\llncs=0
\subsection{Pseudorandom and Unpredictable Functions with Secure Key Leasing}\label{def:PRF-SKL}
In this subsection, we define pseudorandom functions with secure key leasing (PRF-SKL) with classical revocation. 
The syntax is similar to that of key-revocable PRFs in \cite{TCC:AnaPorVai23}   
except that we introduce a deletion algorithm that generates a classical certificate of deletion of a key.\footnote{\cite{TCC:AnaPorVai23} also mentions a classical revocation variant, but they do not provide its formal definition.}

\begin{definition}[PRF-SKL with classical revocation]
A PRF-SKL scheme  $\PRFSKL$ with classical revocation for the domain $\Domprf$ and the range $\Ranprf$ is a tuple of five algorithms $(\qKG, \Eval, \qLEval,\qDel,\DelVrfy)$. 
\begin{description}
\item[$\qKG(1^\secp)\ra(\msk,\qsk,\dvk)$:] The key generation algorithm is a QPT algorithm that takes a security parameter $1^\lambda$, and outputs a master secret key $\msk$, a secret key $\qsk$, and a deletion verification key $\dvk$.

\item[$\Eval(\msk,\prfinp)\ra \prfout$:] The evaluation algorithm is a deterministic classical polynomial-time algorithm that takes a master secret key $\msk$ and an input $\prfinp \in \Domprf$, and outputs a value $\prfout$.

\item[$\qLEval(\qsk,\prfinp)\ra \prfout$:] The leased evaluation algorithm is a QPT algorithm that takes a secret key $\qsk$ and an input $\prfinp \in \Domprf$, and outputs a value $\prfout$.

\item[$\qDel(\qsk)\ra\cert$:] The deletion algorithm is a QPT algorithm that takes a secret key $\qsk$, and outputs a classical string $\cert$.

\item[$\DelVrfy(\dvk,\cert)\ra\top/\bot$:] The deletion verification algorithm is a deterministic classical polynomial-time algorithm that takes a deletion verification key $\dvk$ and a deletion certificate $\cert$, and outputs $\top$ or $\bot$.

\item[Evaluation correctness:]For every $\prfinp \in \Domprf$, we have
\begin{align}
\Pr\left[
\qLEval(\qsk,\prfinp) \allowbreak = \Eval(\msk,\prfinp)
\ \middle |
\begin{array}{ll}
(\msk,\qsk,\dvk)\gets\qKG(1^\secp)\\
\end{array}
\right] 
=1-\negl(\secp).
\end{align}

\item[Deletion verification correctness:] We have 
\begin{align}
\Pr\left[
\DelVrfy(\dvk,\cert)=\top
\ \middle |
\begin{array}{ll}
(\msk,\qsk,\dvk)\gets\qKG(1^\secp)\\
\cert\gets\qDel(\qsk)
\end{array}
\right] 
=1-\negl(\secp).
\end{align}
\end{description}
\end{definition}
\begin{remark}[Reusability]\label{rem:reusability＿PRF}
We can assume without loss of generality that a key of a PRF-SKL scheme is reusable, i.e., it can be reused to evaluate on  (polynomially) many inputs. This is because the output of the leased evaluation algorithm is almost unique by evaluation correctness, and thus such an operation can be done without almost disturbing the input state by the gentle measurement lemma~\cite{Winter99}.    
\end{remark}

Our security definition is similar to that in \cite{TCC:AnaPorVai23} except that we allow the adversary to receive the verification key after submitting a valid certificate.  
\begin{definition}[PR-VRA security]\label{def:PR-VRA}
We say that a PRF-SKL scheme  $\PRFSKL$ with classical revocation for the domain $\Domprf$ and the range $\Ranprf$ is PR-VRA secure, if it satisfies the following requirement, formalized by the experiment $\expb{\PRFSKL,\qA}{pr}{vra}(1^\secp,\coin)$ between an adversary $\qA$ and the challenger:
        \begin{enumerate}
            \item  The challenger runs $(\msk,\qsk,\dvk)\gets\qKG(1^\secp)$ and sends $\qsk$ to $\qA$. 
            \item $\qA$ sends $\cert$ to the challenger. If $\DelVrfy(\dvk,\cert)=\bot$, the challenger outputs $0$ as the final output of this experiment. Otherwise, the challenger generates $\prfinp^*\la\Domprf$, $\prfout^*_0\la\Eval(\msk,\prfinp^*)$, and $\prfout^*_1\la\Ranprf$, and sends $(\dvk,\prfinp^*,\prfout^*_\coin)$ to $\qA$.
            \item $\qA$ outputs a guess $\coin^\prime$ for $\coin$. The challenger outputs $\coin'$ as the final output of the experiment.
        \end{enumerate}
        For any QPT $\qA$, it holds that
\begin{align}
\advb{\PRFSKL,\qA}{pr}{vra}(\secp) \seteq \abs{\Pr[\expb{\PRFSKL,\qA}{pr}{vra} (1^\secp,0) = 1] - \Pr[\expb{\PRFSKL,\qA}{pr}{vra} (1^\secp,1) = 1] }\leq \negl(\secp).
\end{align} 
\end{definition}
\begin{remark}[Relationship with the definition of \cite{TCC:AnaPorVai23}]
It is easy to see that PR-VRA security implies the security of \cite{TCC:AnaPorVai23} since theirs is a weaker variant of PR-VRA security where the verification key is not given to the adversary.\footnote{Strictly speaking, \cite{TCC:AnaPorVai23} only gives a security definition in the quantum revocation setting, but it can be extended to the classical revocation setting in a straightforward manner.}
\end{remark}
\begin{remark}[Variant with the deletion verification oracle]
We can consider a seemingly stronger definition where we allow the adversary to make arbitrarily many queries to the deletion verification oracle in the first phase. However, we can show that the PR-VRA security implies security in such a setting as well by a similar argument to the proof of \Cref{lem:IND-VRA_to_IND-KLA}. Thus, we omit the deletion verification oracle from the definition for simplicity. 
\end{remark}
\begin{remark}[Security as a plain PRF]\label{rem:security_plain_PRF}
PR-VRA security does not imply security as a plain PRF though it \emph{does} imply security as a plain \emph{weak} PRF. However, there is a very simple transformation to add security as a plain PRF as observed in \cite[Claim 9.8]{TCC:AnaPorVai23}.  
We can simply take an XOR of PR-VRA secure PRF-SKL and a plain PRF to achieve both security simultaneously. 
Thus we focus on achieving PR-VRA security in this paper. 
\end{remark}

\if0
 \begin{definition}[Strong PR-KLA Security]\label{def:strong-PR-KLA}
Let $\gamma\geq0$.
We define the game $\expc{\PRFSKL,\qA}{s}{pr}{kla}(\secp,\gamma)$ as follows.
\begin{enumerate}
            \item  The challenger runs $(\msk,\qsk,\dvk)\gets\qKG(1^\secp)$ and sends $\qsk$ to $\qA$. 
            \item $\qA$ outputs a classical string $\cert$ and a quantum program with classical input and output $(\ket{\psi},\mat{U})$ whose first register (i.e., output register) is $\bbC^2$ and $\mat{U}$ is a compact classical description of $\{\mat{U}_{\dvk^\prime,\prfinp,\prfout}\}_{\dvk^\prime,\prfinp,\prfout}$.

\item Let $D$ be the following distribution.
\begin{description}
\item[$D$:]Generate $\beta\gets\bit$, $\prfinp\gets\Domprf$, $\prfout_0\gets\Eval(\msk,\prfinp)$, and $\prfout_1\gets\Ranprf$. Output $(\beta,\prfinp,\prfout_\beta)$.
\end{description}
Also, let $\cP^{\dvk}=(\mat{P}^{\dvk}_{\beta,\prfinp,\prfout},\mat{Q}^{\dvk}_{\beta,\prfinp,\prfout})_{b,\prfinp,\prfout}$ be a collection of binary outcome projective measurements, where
\begin{align}
\mat{P}^{\dvk}_{\beta,\prfinp,\prfout}=\mat{U}_{\dvk,\prfinp,\prfout}^\dagger\ket{\beta}\bra{\beta}\mat{U}_{\dvk,\prfinp,\prfout}\textrm{~~~~and~~~~}\mat{Q}^{\dvk}_{\beta,\prfinp,\prfout}=\mat{I}-\mat{P}^{\dvk}_{\beta,\prfinp,\prfout},
\end{align}
and moreover $\cP^{\dvk}_D=(\mat{P}^{\dvk}_D,\mat{Q}^{\dvk}_D)$ be the mixture of $\cP^{\dvk}$ with respect to $D$.
Then, the challenger applies $\projimp(\mat{P}^{\dvk}_{D})$ to $\ket{\psi}$ and obtains $p$.
If $\DelVrfy(\dvk,\cert)=\top$ and $p> \frac{1}{2}+\gamma$, the challenger outputs $1$.
Otherwise, the challenger outputs $0$.
 \end{enumerate}
We say that $\PRFSKL$ satisfies strong PR-KLA security if for any QPT $\qA$ and $\gamma\in[0,1]$, it holds that
\begin{align}
\advc{\PRFSKL,\qA}{s}{psdr}{kla}(\secp) \seteq \Pr[\expc{\PRFSKL,\qA}{s}{psdr}{kla} (1^\secp,\epsilon) = 1]\leq \negl(\secp).
\end{align} 
\end{definition}
\fi

As an intermediate goal towards constructing PRF-SKL, we introduce a primitive which we call unpredictable functions with secure key leasing (UPF-SKL).\footnote{Though UPF-SKL and PRF-SKL are syntactically identical, we treat them as different primitives for clarity.}

\begin{definition}[UPF-SKL]\label{def:UPF-SKL}
A UPF-SKL scheme $\mathsf{UPFSKL}$ with classical revocation has the same syntax as PRF-SKL with classical revocation and satisfies the following security, which we call UP-VRA security, 
formalized by the experiment $\expb{\UPFSKL,\qA}{up}{vra}(1^\secp)$ between an adversary $\qA$ and the challenger:
        \begin{enumerate}
            \item  The challenger runs $(\msk,\qsk,\dvk)\gets\qKG(1^\secp)$ and sends $\qsk$ to $\qA$. 
            \item $\qA$ sends $\cert$ to the challenger. If $\DelVrfy(\dvk,\cert)=\bot$, the challenger outputs $0$ as the final output of this experiment. Otherwise, the challenger generates $\prfinp^*\la\Domprf$, and sends $(\dvk,\prfinp^*)$ to $\qA$.
            \item $\qA$ outputs a guess $t^\prime$ for the output on $s^*$. The challenger outputs $1$ if $t^\prime=\Eval(\msk,s^*)$ and otherwise outputs $0$. 
        \end{enumerate}
        For any QPT $\qA$, it holds that
\begin{align}
\advb{\UPFSKL,\qA}{up}{vra}(\secp) \seteq \Pr[\expb{\UPFSKL,\qA}{up}{vra} (1^\secp) = 1] \leq \negl(\secp).
\end{align} 
\end{definition}

By the quantum Goldreich-Levin lemma~\cite{AC02,C:CLLZ21} we have the following theorem. 
\begin{lemma}\label{lem:upf-prf}
If there exists a UP-VRA secure UPF-SKL scheme with classical revocation, then there exists a PR-VRA secure PRF-SKL scheme with classical revocation.
\end{lemma}
See \Cref{sec:UPF-to-PRF} for the proof. 

\subsection{Digital Signatures with Secure Key Leasing}
In this subsection, we define digital signatures with secure key leasing (DS-SKL) with classical revocation. 
A recent work by Morimae et al.~\cite{TQC:MorPorYam24} introduced 
a similar primitive called digital signatures with revocable signing keys. A crucial difference from their definition is that we require the quantum signing key to be static, i.e., the state of the signing key almost does not change before and after signing. 
\mor{static is the same as reusability defined below? In that case, it is better to use a single terminology.
reusability is formaly defined, but static is not. In remark 3.18, you use static.}
\takashi{
Yes, that is the same as reusability defined below. 
However, reusability is a confusing term when comparing with \cite{TQC:MorPorYam24} since theirs is also "reusable" in some sense. 
To clarify the difference, I renamed it to "reusability with static signing keys".
}

\begin{definition}[DS-SKL with classical revocation]
A DS-SKL scheme $\DSSKL$  with classical revocation  
is a tuple of five algorithms $(\qKG, \qSign, \SigVrfy,\qDel,\DelVrfy)$. 
Below, let $\cM$  be the message space of $\DSSKL$. 
\begin{description}
\item[$\qKG(1^\secp)\ra(\qsigk, \sigvk,\dvk)$:] The key generation algorithm is a QPT algorithm that takes a security parameter $1^\lambda$, and outputs a signing key $\qsigk$, a signature verification key $\sigvk$, and a deletion verification key $\dvk$.

\item[$\qSign(\qsigk,\msg)\ra(\qsigk',\sigma)$:] The signing algorithm is a QPT algorithm that takes a signing key $\qsigk$ and a message $\msg \in \cM$, and outputs a subsequent signing key $\qsigk'$ and a signature $\sigma$.  \takashi{I modified the syntax to output $\qsigk'$.}

\item[$\SigVrfy(\sigvk,\msg,\sigma)\ra \top/\bot$:] The signature verification algorithm is a deterministic classical polynomial-time algorithm that takes a signature verification key $\sigvk$, a message $\msg \in \cM$, and a signature $\sigma$, and outputs $\top$ or $\bot$. 

\item[$\qDel(\qsigk)\ra\cert$:] The deletion algorithm is a QPT algorithm that takes a signing key $\qsigk$, and outputs a deletion certificate $\cert$.

\item[$\DelVrfy(\dvk,\cert)\ra\top/\bot$:] The deletion verification algorithm is a deterministic classical polynomial-time algorithm that takes a deletion  verification key $\dvk$ and a deletion certificate $\cert$, and outputs $\top$ or $\bot$.

\item[Signature verification correctness:]For every $\msg \in \cM$, we have
\begin{align}
\Pr\left[
\SigVrfy(\sigvk, \msg,\sigma) \allowbreak = \top
\ \middle |
\begin{array}{ll}
(\qsigk,\sigvk,\dvk)\gets\qKG(1^\secp)\\
(\qsigk',\sigma)\gets\qSign(\qsigk,\msg)
\end{array}
\right] 
=1-\negl(\secp).
\end{align}

\item[Deletion verification correctness:] We have 
\begin{align}
\Pr\left[
\DelVrfy(\dvk,\cert)=\top
\ \middle |
\begin{array}{ll}
(\qsigk,\sigvk,\dvk)\gets\qKG(1^\secp)\\
\cert\gets\qDel(\qsigk)
\end{array}
\right] 
=1-\negl(\secp).
\end{align}
\item[Reusability with static signing keys:]
Let $\qsigk$ be an honestly generated signing key and $\msg\in \mathcal{M}$ be any message. 
Suppose that we run $(\qsigk',\sigma)\gets\qSign(\qsigk,\msg)$. 
Then we have 
\begin{align}
    \|\qsigk-\qsigk'\|_{tr}=\negl(\secp).
\end{align}
\end{description}
\end{definition}

\begin{remark}[Reusability]\label{rem:reusability_DS}
The previous work \cite{TQC:MorPorYam24} considered a weaker definition of reusability, where $\sigk'$ is not required to be close to $\sigk$ as long as it can be still used to generate signatures on further messages.
To emphasize the difference from their definition, we call the above property reusability with static signing keys. 
Unlike the cases of PKE-SKL and PRF-SKL, we cannot assume (even the weaker version of) reusability without loss of generality since signatures generated by the signing algorithm may not be unique. Thus, we explicitly state it as a requirement.
\end{remark}

We next introduce the security notions for DS-SKL with classical revocation.

\if0
\begin{definition}[EUF-CMA Security]\label{def:EUF-CMA}
We say that a DS-SKL scheme with classical revocation $\DSSKL$  with the message space $\cM$ is EUF-CMA secure, if it satisfies the following requirement, formalized from the experiment $\expb{\DSSKL,\qA}{euf}{cma}(1^\secp)$ between an adversary $\qA$ and the challenger:
        \begin{enumerate}
            \item  The challenger runs $(\qsigk,\sigvk,\dvk)\gets\qKG(1^\secp)$ and sends $\sigvk$ to $\qA$. \takashi{Can we also give $\dvk$ to $\qA$?}  
            \item $\qA$ makes arbitrarily many \takashi{classical?} queries to the signing oracle    $\qSign(\qsigk,\cdot)$. \takashi{We may need to remark that $\qsigk$ may change after each querybut only negligibly.}
            \item $\qA$ outputs a message $\msg^*$ that is never queried to the signing oracle and a 
            signature $\sigma^*$. The challenger outputs $1$ if $\SigVrfy(\sigvk,\msg^*,\sigma^*)=\top$ and otherwise outputs $0$.  
        \end{enumerate}
        For any QPT $\qA$, it holds that
\begin{align}
\advb{\DSSKL,\qA}{euf}{cma}(\secp) \seteq \Pr[\expb{\DSSKL,\qA}{euf}{cma} (1^\secp) = 1] \leq \negl(\secp).
\end{align} 
\end{definition}
\fi

\begin{definition}[RUF-VRA security]\label{def:RUF-VRA} 
We say that a DS-SKL scheme  $\DSSKL$ with classical revocation  for the message space $\cM$ is RUF-VRA secure,\footnote{"RUF" stands for "\textbf{R}andom message \textbf{U}n\textbf{F}orgeability".} if it satisfies the following requirement, formalized by the experiment $\expb{\DSSKL,\qA}{ruf}{vra}(1^\secp)$  between an adversary $\qA$ and the\mor{a dattari the dattari surunode touitsu shitahougayoi}\takashi{let's use "the".} challenger:
        \begin{enumerate}
            \item  The challenger runs $(\qsigk,\sigvk,\dvk)\gets\qKG(1^\secp)$ and sends $\qsigk$ and $\sigvk$ to $\qA$. 
            \item $\qA$ sends $\cert$  to the challenger. If $\DelVrfy(\dvk,\cert)=\bot$, the challenger outputs $0$ as the final output of this experiment. Otherwise, the challenger chooses $\msg^*\la\mathcal{M}$, and sends $\dvk$ and $\msg^*$ to $\qA$.
            \item $\qA$ outputs a signature $\sigma^\prime$. The challenger outputs $1$ if $\SigVrfy(\sigvk,\msg^*,\sigma^\prime)=\top$ and otherwise outputs $0$.  
        \end{enumerate}
        For any QPT $\qA$, it holds that
\begin{align}
\advb{\DSSKL,\qA}{ruf}{vra}(\secp) \seteq \Pr[\expb{\DSSKL,\qA}{ruf}{vra} (1^\secp) = 1] \leq \negl(\secp).
\end{align} 
\end{definition}

\begin{remark}[Comparison with \cite{TQC:MorPorYam24} ]\label{rem:comparison_DS}
We cannot directly compare our security definition with that in \cite{TQC:MorPorYam24} since the syntax is significantly different. In \cite{TQC:MorPorYam24}, the deletion verification algorithm additionally takes a set $S$ of messages as input. Intuitively, the deletion verification algorithm verifies that the signer generated signatures only on messages that belong to $S$. Thus, their security definition requires that the adversary cannot generate a valid signature on any message outside $S$ after submitting a certificate that passes verification w.r.t. $S$. While their definition looks conceptually stronger, such security is impossible to achieve when the signing key is static as in our definition since the adversary may generate signatures on any messages without being known by the verifier. Thus, we adopt the above random message style definition analogously to a security definition of copy-protection of signing keys~\cite{TCC:LLQZ22}. 
\end{remark}
\begin{remark}[Variant with the deletion  verification oracle]
We can consider a seemingly stronger definition where we allow the adversary to make arbitrarily many queries to the deletion verification oracle in the first phase similarly. However, we can show that the RUF-VRA security implies security in such a setting as well by a similar argument to the proof of \Cref{lem:IND-VRA_to_IND-KLA}.  Thus, we omit the deletion verification oracle from the definition for simplicity.
\end{remark}
\begin{remark}[Security as a plain DS]
 RUF-VRA security does not imply EUF-CMA security as a plain DS. However, similarly to the case of PRF, there is a very simple transformation to add EUF-CMA security as a plain DS. 
We can simply parallelly run an RUF-VRA secure DS-SKL scheme and EUF-CMA secure plain DS scheme where the signature verification algorithm accepts if the signature consists of valid signatures of both schemes. It is straightforward to see that the resulting scheme satisfies both RUF-VRA security as an DS-SKL scheme and EUF-CMA security as a plain DS scheme. Thus, we focus on constructing an RUF-VRA secure DS-SKL scheme.  
\end{remark}
\fi


\newcommand{\qstateA}{\qstate{st}_{\qA}}
\ifnum\llncs=0
\section{Public Key Encryption with Secure Key Leasing}\label{sec:const-PKE-SKL-classical-certificate} 
\else
\section{Construction of PKE-SKL}\label{sec:const-PKE-SKL-classical-certificate} 
\fi
In this section, we prove the following theorem. 
\begin{theorem}\label{thm:PKE-SKL}
If there exists an IND-CPA secure PKE scheme, then there exists an IND-VRA secure PKE-SKL scheme with classical revocation.  
\end{theorem}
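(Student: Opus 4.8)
The plan is to instantiate the general framework of \Cref{sec-overview-framework} with the concrete construction $\PKESKL$ sketched in the technical overview, and then to argue security by reducing to the certified deletion property of BB84 states (\Cref{thm:CD_BB84}). By \Cref{lem:ow-ind} it suffices to establish OW-VRA security, since the quantum Goldreich--Levin lemma upgrades any OW-VRA secure scheme into an IND-VRA secure one. First I would verify the two correctness properties: decryption correctness follows because applying each $D_i$ to $\qsk_i\tensor\ket{0}$ and measuring the last register yields $m[i]$ with overwhelming probability (using correctness of $\PKE$ on both branches of the superposition when $\theta[i]=1$ and on the single branch when $\theta[i]=0$), and deletion verification correctness follows from the direct calculation already indicated in \Cref{sec-overview-framework}.

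For OW-VRA security, suppose a QPT adversary $\qA$ wins with probability $\epsilon$. I would build a two-stage adversary $\qB=(\qB_0,\qB_1)$ against the BB84 certified deletion game of \Cref{thm:CD_BB84}. On input a BB84 state $\ket{x^\theta}$, $\qB_0$ samples the $2n$ key pairs $(\pk_{i,b},\sk_{i,b})\gets\PKE.\KG(1^\secp)$ itself, assembles $\qsk$ by applying the maps $M_i$ exactly as $\qKG$ prescribes, and hands $(\ek=(\pk_{i,b})_{i,b},\qsk)$ to $\qA$. When $\qA$ returns $\cert'=(e'_i,d'_i)_i$, $\qB_0$ sets $y[i]\seteq e'_i\oplus d'_i\cdot(\sk_{i,0}\oplus\sk_{i,1})$ and outputs $y$ together with its residual state (which carries the sampled keys). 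In the second stage, given $\theta$ and $(x[i])_{i:\theta[i]=1}$, $\qB_1$ reconstructs $\dvk$, samples a fresh $m^*\gets\bit^n$, prepares the special challenge ciphertext $\hatct$ (which depends only on $\theta$ and the freshly sampled $m^*$, never on $x$), sends $(\dvk,\hatct)$ to $\qA$, and upon receiving $\qA$'s guess $m'$ outputs $z$ with $z[i]\seteq m'[i]\oplus m^*[i]$ for each $i$ with $\theta[i]=0$.

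Two lemmas tie the reduction together, and the first is the main obstacle: indistinguishability of the special challenge from the real one. I would show that $(\Enc(\ek,m^*),m^*)$ and $(\hatct,\hatm)$, with $\hatm=m^*\oplus x_c$, are computationally indistinguishable given $\key$. Since $m^*$ is uniform, $\hatm$ is uniform, so the real distribution equals $(\Enc(\ek,\hatm),\hatm)$ up to relabeling; comparing this with $(\hatct,\hatm)$, the two agree on every component except the ciphertexts $\PKE.\ct_{i,1-x[i]}$ at positions with $\theta[i]=0$, where the encrypted bit flips from $m^*[i]\oplus x[i]$ to $m^*[i]\oplus(1-x[i])$. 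This is the delicate step: the hybrid must switch the plaintext under each such independent key pair one at a time, and throughout one must ensure that the simulated view—namely $\qsk$, $\dvk$, and $\hatct$—is generated without ever using the excluded secret keys $\sk_{i,1-x[i]}$, so that the IND-CPA challenger's public key and challenge ciphertext can be embedded at the active position. Note that $\qsk_i$ for $\theta[i]=0$ requires only $\sk_{i,x[i]}$ and $\dvk$ and $\hatct$ require no further secret keys, which is precisely why the reduction to IND-CPA security of $\PKE$ goes through.

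The second lemma is extractability: whenever $m'=\hatm$ we have $m'[i]=m^*[i]\oplus x[i]$ for every $i$ with $\theta[i]=0$, so $\qB_1$'s output satisfies $z[i]=x[i]$ on all computational-basis positions. Combining the pieces, indistinguishability gives $\Pr[\DelVrfy(\dvk,\cert')=\top\land m'=\hatm]\geq\epsilon-\negl(\secp)$ in the simulated game; whenever $\DelVrfy(\dvk,\cert')=\top$ the converted string $y$ agrees with $x$ on all Hadamard-basis positions by the framework's verification equation, and whenever $m'=\hatm$ the extracted $z$ agrees with $x$ on all computational-basis positions. Hence $\qB$ wins the BB84 game with probability at least $\epsilon-\negl(\secp)$, and \Cref{thm:CD_BB84} forces $\epsilon\leq 2^{-\Omega(n)}+\negl(\secp)=\negl(\secp)$. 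Crucially, $\qB$ never needs to evaluate the winning predicate itself, so it is harmless that $\hatm$ depends on the unknown $x$.
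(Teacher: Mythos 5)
Your proposal is correct and follows essentially the same route as the paper: reduce to OW-VRA via \Cref{lem:ow-ind}, establish indistinguishability of the special challenge by a hybrid argument over the unused keys $\PKE.\sk_{i,1-x[i]}$ at positions with $\theta[i]=0$ (the paper's $\hybi{1}$ and $\hybi{2}$), and then run the two-stage reduction to \Cref{thm:CD_BB84} with the same conversion of $\cert'$ into $y$ and of $\msg'$ into $z$. The only difference is presentational — you phrase the argument in the "special challenge / extractability" language of the overview rather than as explicit numbered hybrids — and all the substantive steps, including the observation that $\qB$ never needs to evaluate the winning predicate, match the paper's proof.
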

By \Cref{lem:ow-ind}, it suffices to construct a OW-VRA secure PKE-SKL scheme with classical revocation, which we construct in the next subsection.
\subsection{Construction of OW-VRA Secure PKE-SKL}\label{sec:construction_PKE-SKL}
Let $\PKE=(\PKE.\KG,\PKE.\Enc,\PKE.\Dec)$ be an IND-CPA secure PKE scheme for single-bit messages. 
Let $\ell=\ell(\secpar)$ be the length of a decryption key generated by $\PKE.\KG(1^\secpar)$.\footnote{We can assume that all decryption keys are of the same length w.l.o.g.} 
Then, we construct a PKE-SKL scheme  $(\PKESKL.\qKG, \PKESKL.\Enc, \allowbreak\PKESKL.\qDec,\allowbreak\PKESKL.\qDel,\PKESKL.\DelVrfy)$ with classical revocation for $n$-bit messages where $n=\omega(\log \secpar)$ 
as follows.

\begin{description}
\item[$\PKESKL.\qKG(1^\secp)$:] $ $
\begin{itemize}
    \item Generate $x,\theta\gets\bit^n$.
    \item Generate $(\PKE.\ek_{i,b},\PKE.\dk_{i,b})\gets \PKE.\KG(1^\secp)$ for $i\in [n]$ and $b\in \bit$. 
    \item Generate
    \begin{align}
    \qdk_i\seteq
    \begin{cases}
    \ket{x[i]}\ket{\PKE.\dk_{i,x[i]}} & (\textrm{if~}\theta[i]=0)\\
    \frac{1}{\sqrt{2}}\left(\ket{0}\ket{\PKE.\dk_{i,0}}+(-1)^{x[i]}\ket{1}\ket{\PKE.\dk_{i,1}}\right) & (\textrm{if~}\theta[i]=1).
    \end{cases}
    \end{align}
    \item Output an encryption key 
    \[\ek\seteq (\PKE.\ek_{i,b})_{i\in[n],b\in\bit},\] 
    a decryption key 
    \[\qdk\seteq (\qdk_1,\ldots,\qdk_n)\]
    and a deletion verification key 
    \[\dvk\seteq ((x[i])_{i\in[n]:\theta[i]=1},\theta,(\PKE.\dk_{i,0},\PKE.\dk_{i,1})_{i\in[n]:\theta[i]=1}).\] 
\end{itemize}

\item[$\PKESKL.\Enc(\ek,\msg)$:] $ $
\begin{itemize}
    \item Parse $\ek=(\PKE.\ek_{i,b})_{i\in[n],b\in\bit}$. 
    \item Generate $\PKE.\ct_{i,b}\gets \PKE.\Enc(\PKE.\ek_{i,b},\msg[i])$ for $i\in [n]$ and $b\in \bit$.
\item Output $\ct\seteq (\PKE.\ct_{i,b})_{i\in [n],b\in \bit}$. 
\end{itemize}
 
\item[$\PKESKL.\qDec(\qdk,\ct)$:] $ $
\begin{itemize}
   \item Parse  $\qdk= (\qdk_1,\ldots,\qdk_n)$ and $\ct= (\PKE.\ct_{i,b})_{i\in [n],b\in \bit}$.
   \item Let $D_i$ be a unitary that works as follows:
   \[\ket{u}\ket{\PKE.\dk}\ket{v} \mapsto \ket{u}\ket{\PKE.\dk}\ket{v\oplus\PKE.\Dec(\PKE.\dk,\PKE.\ct_{i,u})}\] 
   for any $u,v\in \bit$ and $\PKE.\dk\in \bit^\ell$ 
   for every $i\in[n]$. 
   \mor{I first misunderstood that $\PKE.\dk=\{\PKE.\dk_{i,b}\}_{i,b}$, because $x$ is $\{x[i]\}_i$. Maybe just some other notation such as $a$ instead of $\PKE.\dk$?}\takashi{I see what you mean, but I still prefer the current notation since using another character may make it difficult to understand its role.} 
   \item Apply $D_i$ to $\qdk_i\tensor\ket{0}$ and measure the rightmost register to obtain an outcome $\msg_i^\prime$ for every $i\in[n]$.
\item Output $\msg^\prime\seteq \msg_i^\prime\concat\ldots\concat \msg_n^\prime$. 
\end{itemize}

\item[$\PKESKL.\qDel(\qdk)$:] $ $
\begin{itemize}
\item Parse $\qdk= (\qdk_1,\ldots,\qdk_n)$.
\item Measure $\qdk_i$ in the Hadamard basis to obtain an outcome $(e_i,d_i)\in \bit\times \bit^{\ell}$ for every $i\in[n]$.
\item Output $\cert\seteq (e_i,d_i)_{i\in[n]}$.
\end{itemize}

\item[$\PKESKL.\DelVrfy(\dvk,\cert)$:] $ $
\begin{itemize}
    \item Parse $\dvk=((x[i])_{i\in[n]:\theta[i]=1},\theta,(\PKE.\dk_{i,0},\PKE.\dk_{i,1})_{i\in[n]:\theta[i]=1})$ and  $\cert= (e_i,d_i)_{i\in [n]}$. 
    \item Output $\top$ if it holds that  
    \[e_i=x[i] \oplus d_i\cdot(\PKE.\dk_{i,0}\oplus\PKE.\dk_{i,1})\]
    for every $i\in[n]$ such that $\theta[i]=1$. Otherwise, output $\bot$.
\end{itemize}
\end{description}

\paragraph{\bf Decryption correctness.}
\mor{Some reviewers might think that decryption correctness and verification correnctness are trivial that do not need any proof.}\takashi{I think it's fine to keep it since this is not so long and one can just skip it if s/he thinks it's trivial.}
We show that $\msg_i^\prime=\msg[i]$ holds with overwhelming probability for $\msg_i^\prime$ computed by $\PKESKL.\qDec(\qdk,\ct)$ for every $i\in[n]$, where $(\ek,\qdk,\dvk)\gets\PKESKL.\qKG(1^\secp)$ and $\ct\gets\PKESKL.\Enc(\ek,\msg)$.

We first consider $i\in[n]$ such that $\theta[i]=0$. In this case, $\qdk_i=\ket{x[i]}\ket{\PKE.\dk_{i,x[i]}}$.
Thus, the state obtained by applying the map $D_i$ to $\qdk_i\tensor\ket{0}$ is $\ket{x[i]}\ket{\PKE.\dk_{i,x[i]}}\ket{m[i]}$ with overwhelming probability from the correctness of $\PKE$ and the fact that $\PKE.\ct_{i,x[i]}$ is a ciphertext of $m[i]$.
Therefore, we have $m_i^\prime=m[i]$ with overwhelming probability.

We next consider $i\in[n]$ such that $\theta[i]=1$. In this case, $\qdk_i=\frac{1}{\sqrt{2}}(\ket{0}\ket{\PKE.\dk_{i,0}}+(-1)^{x[i]}\ket{1}\ket{\PKE.\dk_{i,1}})$.
Thus, the state obtained by applying the map $D_i$ to $\qdk_i\tensor\ket{0}$ is $\frac{1}{\sqrt{2}}(\ket{0}\ket{\PKE.\dk_{i,0}}+(-1)^{x[i]}\ket{1}\ket{\PKE.\dk_{i,1}})\tensor\ket{m[i]}$ with overwhelming probability from the correctness of $\PKE$ and the fact that both $\PKE.\ct_{i,0}$ and $\PKE.\ct_{i,1}$ are ciphertexts of $m[i]$.
Therefore, we have $m_i^\prime=m[i]$ with overwhelming probability also in this case.

\paragraph{\bf Deletion verification correctness.}
We show that $e_i=x[i]\oplus d_i \cdot (\PKE.\dk_{i,0}\oplus\PKE.\dk_{i,1})$ holds for every $(e_i,d_i)$ for $i\in[n]$ such that $\theta[i]=1$ where $(\ek,\qdk,\dvk)\gets\PKESKL.\qKG(1^\secp)$ and 
$(e_i,d_i)_{i\in [n]}\gets \PKESKL.\qDel(\qdk)$. 
For every $i\in[n]$ such that $\theta[i]=1$, we have $\qdk_i=\frac{1}{\sqrt{2}}(\ket{0}\ket{\PKE.\dk_{i,0}}+(-1)^{x[i]}\ket{1}\ket{\PKE.\dk_{i,1}})$.
If we apply the Hadamard gate to every qubit of $\qdk_i$, we obtain
\begin{align}
&\frac{1}{\sqrt{2}}\left(
\frac{1}{\sqrt{2^{\ell+1}}}\sum_{e_i,d_i} (-1)^{d_i\cdot\PKE.\dk_{i,0}}\ket{e_i}\ket{d_i}
+
\frac{1}{\sqrt{2^{\ell+1}}}\sum_{e_i,d_i} (-1)^{x[i]\oplus e_i\oplus d_i\cdot\PKE.\dk_{i,1}}\ket{e_i}\ket{d_i}
\right)\\
=&
\frac{1}{\sqrt{2^{\ell}}}\sum_{
\substack{
(e_i,d_i):\\
d_i\cdot\PKE.\dk_{i,0}=x[i]\oplus e_i\oplus d_i\cdot\PKE.\dk_{i,1}
}}(-1)^{d_i\cdot\PKE.\dk_{i,0}}\ket{e_i}\ket{d_i}.
\end{align}
Thus, we have $e_i=x[i]\oplus d_i \cdot (\PKE.\dk_{i,0}\oplus\PKE.\dk_{i,1})$ for $(e_i,d_i)$ for every $i\in[n]$ such that $\theta[i]=1$.

\paragraph{\bf Security.} 

\begin{theorem}\label{thm:PKE_OW-VRA}
If $\PKE$ is $\INDCPA$ secure, then $\PKESKL$ is OW-VRA secure. 
\end{theorem}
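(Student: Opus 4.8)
The plan is to follow the general framework of the technical overview, instantiated for $\PKESKL$, and reduce OW-VRA security to the certified deletion property of BB84 states (\Cref{thm:CD_BB84}). Concretely, assuming a QPT adversary $\qA$ wins the OW-VRA experiment with probability $\epsilon$, I would construct a two-stage reduction $\qB=(\qB_0,\qB_1)$ against \Cref{thm:CD_BB84} whose winning probability is at least $\epsilon-\negl(\secp)$. Since \Cref{thm:CD_BB84} bounds that probability by $2^{-\Omega(n)}$ and the construction takes $n=\omega(\log\secp)$, this forces $\epsilon=\negl(\secp)$.

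First I would isolate the two properties of the \emph{special challenge}. For a uniform $m^*\gets\bit^n$, define $\hatct=(\PKE.\hatct_{i,b})_{i,b}$ by $\PKE.\hatct_{i,b}\gets\PKE.\Enc(\PKE.\ek_{i,b},m^*[i]\oplus b)$ when $\theta[i]=0$ and $\PKE.\hatct_{i,b}\gets\PKE.\Enc(\PKE.\ek_{i,b},m^*[i])$ when $\theta[i]=1$, and set $\hatm:=m^*\oplus x_c$ where $x_c[i]=x[i]$ for $\theta[i]=0$ and $x_c[i]=0$ otherwise. Writing $\key=(x,\theta,(\PKE.\dk_{i,x[i]})_{i:\theta[i]=0},(\PKE.\dk_{i,b})_{i,b:\theta[i]=1})$, the first property is that $(\Enc(\ek,m^*),m^*)$ and $(\hatct,\hatm)$ are computationally indistinguishable given $\key$. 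To establish it, I would first observe that, with $x_c$ fixed by $\key$ and $m^*$ uniform, $\hatm$ is uniform, so $(\Enc(\ek,m^*),m^*)$ and $(\Enc(\ek,\hatm),\hatm)$ are identically distributed; I would then compare $(\Enc(\ek,\hatm),\hatm)$ with $(\hatct,\hatm)$, which agree everywhere except that for each $i$ with $\theta[i]=0$ the ciphertext under $\PKE.\ek_{i,1-x[i]}$ encrypts $m^*[i]\oplus x[i]$ in the former and $m^*[i]\oplus(1-x[i])$ in the latter. Because $\key$ omits $\PKE.\dk_{i,1-x[i]}$ at exactly these positions, a hybrid over the $\theta[i]=0$ positions reduces indistinguishability to the IND-CPA security of $\PKE$. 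The second property, extractability, is immediate: if $\qA$'s output $m'$ equals $\hatm$, then $m'[i]=m^*[i]\oplus x[i]$ for every $\theta[i]=0$, so $x[i]=m'[i]\oplus m^*[i]$ is recovered exactly.

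Next I would specify $\qB$. On input $\ket{x^\theta}$, $\qB_0$ samples the $2n$ key pairs $(\PKE.\ek_{i,b},\PKE.\dk_{i,b})$ and a uniform $m^*$, coherently loads $\PKE.\dk_{i,b}$ onto the $i$-th qubit to assemble $\qdk$ exactly as $\PKESKL.\qKG$ prescribes, and runs $\qA$ on $(\ek,\qdk)$; this is a perfect simulation up to $\qA$ outputting $\cert'=(e'_i,d'_i)_i$. Then $\qB_0$ outputs the BB84 certificate $y$ defined by $y[i]=e'_i\oplus d'_i\cdot(\PKE.\dk_{i,0}\oplus\PKE.\dk_{i,1})$ together with $\qA$'s residual state. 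On receiving $\theta$ and $(x[i])_{i:\theta[i]=1}$, $\qB_1$ forms $\dvk$, generates $\hatct$ (which depends only on $\theta$, $m^*$, and the keys, never on $x[i]$ at $\theta[i]=0$ positions), feeds $(\dvk,\hatct)$ to $\qA$, and from $\qA$'s output $m'$ returns $z$ with $z[i]=m'[i]\oplus m^*[i]$ on the $\theta[i]=0$ positions. For the analysis I would chain the three facts: whenever $\DelVrfy(\dvk,\cert')=\top$, the deletion check makes the $d'_i\cdot(\PKE.\dk_{i,0}\oplus\PKE.\dk_{i,1})$ terms cancel so that $y[i]=x[i]$ on all $\theta[i]=1$ positions; by indistinguishability (with the ``distinguisher'' being ``run $\qA$ through the experiment and test the winning predicate''), the event $\DelVrfy(\dvk,\cert')=\top\wedge m'=\hatm$ holds with probability $\epsilon-\negl(\secp)$ in $\qB$'s run; and on that event extractability yields $z[i]=x[i]$ on all $\theta[i]=0$ positions, so $\qB$ wins the BB84 game.

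I expect the main obstacle to be making the indistinguishability reduction precise, specifically checking that the IND-CPA reduction never requires the hidden keys $\PKE.\dk_{i,1-x[i]}$ at the $\theta[i]=0$ positions. These are exactly the keys absent from $\key$, used to produce the single differing ciphertext per position, and whose possession would let $\qA$ read $x[i]$ in the computational basis. I would therefore verify that $\qdk$, $\dvk$, and all remaining components of $\hatct$ can be built from the keys the reduction samples itself, so the IND-CPA challenger need only supply the one challenge ciphertext under $\PKE.\ek_{i,1-x[i]}$; the perfectness of the pre-certificate simulation, the cancellation in the $y$-conversion, and the exactness of the extractor are then routine.
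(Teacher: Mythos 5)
Your proposal is correct and follows essentially the same route as the paper's proof: the two game hops (re-randomizing $\msg^*$ by XORing with $x$ on the computational-basis positions, then switching the ciphertexts under the unrevealed keys $\PKE.\dk_{i,1-x[i]}$ via IND-CPA) are exactly the paper's $\hybi{0}\to\hybi{1}\to\hybi{2}$, and your reduction $\qB=(\qB_0,\qB_1)$ to \Cref{thm:CD_BB84}, including the certificate conversion $y[i]=e'_i\oplus d'_i\cdot(\PKE.\dk_{i,0}\oplus\PKE.\dk_{i,1})$ and the extractor $z[i]=m'[i]\oplus m^*[i]$, matches the paper's verbatim. The only difference is presentational: you phrase the argument through the ``special challenge'' abstraction of the technical overview, whereas the paper inlines it as explicit hybrids.
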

\begin{proof}  
Let $\qA$ be a QPT adversary against OW-VRA security of $\PKESKL$. 
We consider the following sequence of hybrids. 
\begin{description}
\item[$\hybi{0}$:] This is the original security experiment $\expb{\PKESKL,\qA}{ow}{vra}(1^\secp)$. More specifically, it works as follows.
\begin{enumerate}
            \item  The challenger generates $x,\theta\gets\bit^n$ and $(\PKE.\ek_{i,b},\PKE.\dk_{i,b})\gets \PKE.\KG(1^\secp)$ for $i\in [n]$ and $b\in \bit$. The challenger also generates
            \begin{align}
    \qdk_i\seteq
    \begin{cases}
    \ket{x[i]}\ket{\PKE.\dk_{i,x[i]}} & (\textrm{if}~~\theta[i]=0)\\
    \frac{1}{\sqrt{2}}\left(\ket{0}\ket{\PKE.\dk_{i,0}}+(-1)^{x[i]}\ket{1}\ket{\PKE.\dk_{i,1}}\right) & (\textrm{if}~~\theta[i]=1).
    \end{cases}
    \end{align}
     The challenger sets
     \begin{align}
    &\ek\seteq (\PKE.\ek_{i,b})_{i\in[n],b\in\bit},~~~~~
    \qdk\seteq (\qdk_1,\ldots,\qdk_n),\\
    &\dvk\seteq ((x[i])_{i\in[n]:\theta[i]=1},\theta,(\PKE.\dk_{i,0},\PKE.\dk_{i,1})_{i\in[n]:\theta[i]=1}).
    \end{align}
     The challenger sends $\ek$ and $\qdk$ to $\qA$.  
     \item $\qA$ sends $\cert=(e_i,d_i)_{i\in[n]}$ to the challenger.
     The challenger outputs $0$ and the experiment ends if $e_i\ne x[i] \oplus d_i\cdot(\PKE.\dk_{i,0}\oplus\PKE.\dk_{i,1})$  for some $i\in[n]$ such that $\theta[i]=1$. Otherwise, the challenger generates $\ct^*$ as follows. The challenger chooses $\msg^*\gets \bit^n$, generates $\PKE.\ct^*_{i,b}\gets \PKE.\Enc(\PKE.\ek_{i,b},\msg^*[i])$ for every $i\in [n]$ and $b\in \bit$, and sets $\ct^*\seteq(\PKE.\ct^*_{i,b})_{i\in[n],b\in\bit}$. The challenger sends $\dvk$ and $\ct^*$ to $\qA$.
         \item $\qA$ outputs $\msg'$. The challenger outputs $1$ if $\msg'=\msg^*$ and otherwise $0$ as the final outcome of this experiment.  
        \end{enumerate}
By the definition, we clearly have 
\[
\Pr[\hybi{0}=1]=\advb{\PKESKL,\qA}{ow}{vra}(1^\secp).
\]



\takashi{I removed one hybrid here. As a result, the numbering of the remaining hybrids should be decremented by one. I did so carefully, but if you find this is not done somewhere, please just correct it.}

\item[$\hybi{1}$:] This is identical to $\hybi{0}$ except that we replace $\msg^*[i]$ with $\msg^*[i]\oplus x[i]$ for every $i\in[n]$ such that $\theta[i]=0$ throughout the experiment. In particular, this causes the following changes.
\begin{itemize}
\item For every $i\in[n]$ such that $\theta[i]=0$, we generate $\PKE.\ct^*_{i,b}$ as $\PKE.\ct^*_{i,b}\gets\PKE.\Enc(\PKE.\ek_{i,b},\msg^*[i]\oplus x[i])$ for every $b\in\bit$. 
\item $\qA$ wins if $m^\prime[i]=\msg^*[i]\oplus x[i]$ holds for every $i\in[n]$ such that $\theta[i]=0$.
\end{itemize}

Since $\msg^*[i]$ is chosen uniformly at random independently of all the other random variables that are generated by that point,
$\msg^*[i]\oplus x[i]$ is also uniformly random and independent  
for every $i\in[n]$. Thus, this change does not affect the view of $\qA$  
and
thus we have 
\[
\Pr[\hybi{0}=1]=\Pr[\hybi{1}=1].
\]

\item[$\hybi{2}$:] This is identical to $\hybi{1}$ except that the challenger generates $\PKE.\ct^*_{i,1-x[i]}$ as
\begin{align}
\PKE.\ct^*_{i,1-x[i]}\gets\PKE.\Enc(\PKE.\ek_{i,1-x[i]},\msg^*[i]\oplus (1-x[i]))
\end{align}
for every $i\in[n]$ such that $\theta[i]=0$.

For $i\in[n]$ such that $\theta[i]=0$, we have $\qdk_i=\ket{x[i]}\ket{\PKE.\dk_{i,x[i]}}$ and $\PKE.\dk_{i,1-x[i]}$ is not given to $\qA$.
Also, all the decryption keys of $\PKE$ contained in 
the deletion verification key $\dvk$ are $(\PKE.\dk_{i,0},\PKE.\dk_{i,1})_{i\in[n]:\theta[i]=1}$, and in particular the verification key contains no information about $\PKE.\dk_{i,1-x[i]}$ for $i\in[n]$ such that $\theta[i]=0$. 
Then, by a straightforward hybrid argument based on the $\INDCPA$ security of $\PKE$, we have 
\[
\abs{\Pr[\hybi{1}=1]-\Pr[\hybi{2}=1]}\le\negl(\secp).
\]
\end{description}

Below, by a reduction to the certified deletion property of BB84 states (\Cref{thm:CD_BB84}), 
we show 
\[
\Pr[\hybi{2}=1]\le\negl(\secp). 
\]

In $\hybi{2}$, $\PKE.\ct^*_{i,b}$ is generated as 
\begin{align}
\PKE.\ct^*_{i,b}\gets
\begin{cases}
\PKE.\Enc(\PKE.\ek_{i,b}, \msg^*[i]\oplus b) & (\textrm{if}~~\theta[i]=0)\\
\PKE.\Enc(\PKE.\ek_{i,b}, \msg^*[i]) & (\textrm{if}~~\theta[i]=1).
\end{cases}
\end{align}

\if0
We construct $\qB=(\qB_0,\qB_1,\qB_2)$ that tries to attack the strong monogamy  property of BB84 states as follows. 
\begin{description}
\item[$\qB_0$:] Given an $n$-qubit BB84 state $\ket{\psi}=\ket{\psi_1}\tensor\cdots\tensor\ket{\psi_n}$, $\qB_0$ first generates $(\PKE.\ek_{i,b},\PKE.\dk_{i,b})\gets\PKE.\KG(1^\secp)$ for every $i\in[n]$ and $b\in\bit$.
Let $U_i$ be a unitary that maps $\ket{u}\ket{v}$ to $\ket{u}\ket{v\oplus\PKE.\dk_{i,u}}$ where $u\in \bit$ and $v\in \bit^\ell$. 
$\qB_0$ applies $U_i$ to $\ket{\psi_i}\ket{0^{\ell}}$ to obtain $\qdk_i$ for every $i\in[n]$.
$\qB_0$ then sets
     \begin{align}
    \ek\seteq (\PKE.\ek_{i,b})_{i\in[n],b\in\bit}~~~~\textrm{and}~~~~
    \qdk\seteq (\qdk_1,\ldots,\qdk_n)
    \end{align}
     and sends $\ek$ and $\qdk$ to $\qA$.  When $\qA$ outputs $\cert$, letting the state of $\qA$ at that time be $\qstateA$, $\qB_0$ sends $(\ek,(\PKE.\dk_{i,b})_{i\in[n],b\in\bit},\cert,\qstateA)$ to $\qB_1$ and $((\PKE.\dk_{i,b})_{i\in[n],b\in\bit},\cert)$ to $\qB_2$.
\item[$\qB_1$:] Given $\theta$ and $(\ek,(\PKE.\dk_{i,b})_{i\in[n],b\in\bit},\cert,\qstateA)$, $\qB_1$ first parses $\cert=(e_i,d_i)_{i\in[n]}$ and sets $x^\prime_i=e_i\oplus d_i\cdot(\PKE.\dk_{i,0}\oplus\PKE.\dk_{i,1})$ for every $i\in[n]$ such that $\theta[i]=1$, and sets $\dvk^\prime\seteq((x^\prime_i)_{i\in[n]:\theta[i]=1},\theta,(\PKE.\dk_{i,0},\PKE.\dk_{i,1})_{i\in[n]:\theta[i]=1})$.
$\qB_1$ then generates
\begin{align}
\PKE.\ct^*_{i,b}\gets
\begin{cases}
\PKE.\Enc(\ek_{i,b}, \msg^*[i]\oplus b) & (\textrm{if}~~\theta[i]=0)\\
\PKE.\Enc(\ek_{i,b}, \msg^*[i]) & (\textrm{if}~~\theta[i]=1),
\end{cases}
\end{align}
where $m^*\gets\bit^n$.
$\qB_1$ executes $\qA$ with the inputs $\dvk^\prime$ and $\ct^*\seteq(\PKE.\ct^*_{i,b})_{i\in[n],b\in\bit}$ and the state $\qstateA$, and obtains an output $\msg^\prime$.
$\qB_1$ sets $x^\prime_i\seteq \msg^\prime[i]\oplus \msg^*[i]$ for every $i\in[n]$ such that $\theta[i]=0$, and outputs $(x^\prime_i)_{i\in[n]:\theta[i]=0}$.
\item[$\qB_2$:]Given $\theta$ and $(\cert,(\PKE.\dk_{i,b})_{i\in[n],b\in\bit})$, $\qB_2$ parses $\cert=(e_i,d_i)_{i\in[n]}$, and computes $x^\prime_i=e_i\oplus d_i\cdot(\PKE.\dk_{i,0}\oplus\PKE.\dk_{i,1})$ for every $i\in[n]$ such that $\theta[i]=1$.
$\qB_2$ outputs $(x^\prime_i)_{i\in[n]:\theta[i]=1}$.
\end{description}

$\qB$ perfectly simulates $\hybi{3}$ for $\qA$ until $\qA$ outputs $\cert$.
Let $x,\theta\in\bit^n$ be the strings such that $\ket{\psi}=\ket{x}_\theta$. 
Suppose that $\PKESKL.\DelVrfy(\dvk,\cert)=\top$ holds in the simulated experiment. Then, we have
\begin{align}
e_i=x[i]\oplus d_i \cdot (\PKE.\dk_{i,0}\oplus\PKE.\dk_{i,1})
\end{align}
for $(e_i,d_i)$ for every $i\in[n]$ such that $\theta[i]=1$.
Thus, conditioned that $\PKESKL.\DelVrfy(\dvk,\cert)=\top$, $\qB$ perfectly simulates the remaining part of $\hybi{3}$ for $\qA$.
Moreover, if $\qA$ wins the simulated experiment (i.e., $\hybi{3}$ outputs $1$), we also have
\begin{align}
\msg^\prime[i]=\msg^*[i]\oplus x[i]
\end{align}
for every $i\in[n]$ such that $\theta[i]=0$. 
From these facts, whenever $\qA$ wins the simulated execution of $\hybi{2}$, $\qB$ outputs the correct answer in the experiment  $\SMoEBB(\qB,n)$. 
Thus, we have 
\[
\Pr[\hybi{2}=1]\le\Pr[\SMoEBB(\qB,n)=1]\le \negl(\secpar)
\]
where the latter inequality follows from the strong monogamy property of BB84 states (\Cref{thm:SMoE_BB84}) 
and $n=\omega(\secpar)$. 
\fi

We construct an adversary $\qB=(\qB_0,\qB_1)$ against the certified deletion property of BB84 states (\Cref{thm:CD_BB84}) as follows. 
\begin{description}
\item[$\qB_0$:] Given an $n$-qubit BB84 state
$\ket{\psi}=\ket{x}_{\theta}$, 
$\qB_0$ parses 
$\ket{\psi}=\ket{\psi_1}\tensor\cdots\tensor\ket{\psi_n}$ and generates $(\PKE.\ek_{i,b},\PKE.\dk_{i,b})\gets\PKE.\KG(1^\secp)$ for every $i\in[n]$ and $b\in\bit$.
Let $U_i$ be a unitary that maps $\ket{u}\ket{v}$ to $\ket{u}\ket{v\oplus\PKE.\dk_{i,u}}$ where $u\in \bit$ and $v\in \bit^\ell$. 
$\qB_0$ applies $U_i$ to $\ket{\psi_i}\ket{0^{\ell}}$ to obtain $\qdk_i$ for every $i\in[n]$.
$\qB_0$ then sets
     \begin{align}
    \ek\seteq (\PKE.\ek_{i,b})_{i\in[n],b\in\bit}~~~~\textrm{and}~~~~
    \qdk\seteq (\qdk_1,\ldots,\qdk_n)
    \end{align}
     and sends $\ek$ and $\qdk$ to $\qA$.  When $\qA$ outputs $\cert=(e_i,d_i)_{i\in[n]}$, let $\qstateA$ be the state of $\qA$ at this point. 
     $\qB_0$ computes $y_i=e_i\oplus d_i\cdot(\PKE.\dk_{i,0}\oplus\PKE.\dk_{i,1})$ for every $i\in[n]$. 
     $\qB_0$ outputs $y\seteq y_1\concat\ldots\concat y_n$ and  
    $\qstate{st}_{\qB}\seteq (\ek,(\PKE.\dk_{i,b})_{i\in[n],b\in\bit},\qstateA)$.
\item[$\qB_1$:] Given $\theta$, $(x[i])_{i\in[n]:\theta[i]=1}$, and $\qstate{st}_{\qB}=(\ek,(\PKE.\dk_{i,b})_{i\in[n],b\in\bit},\qstateA)$, $\qB_1$ sets $$\dvk^\prime\seteq((x[i])_{i\in[n]:\theta[i]=1},\theta,(\PKE.\dk_{i,0},\PKE.\dk_{i,1})_{i\in[n]:\theta[i]=1}).$$
$\qB_1$ then generates
\begin{align}
\PKE.\ct^*_{i,b}\gets
\begin{cases}
\PKE.\Enc(\PKE.\ek_{i,b}, \msg^*[i]\oplus b) & (\textrm{if}~~\theta[i]=0)\\
\PKE.\Enc(\PKE.\ek_{i,b}, \msg^*[i]) & (\textrm{if}~~\theta[i]=1),
\end{cases}
\end{align}
where $\msg^*\gets\bit^n$.
$\qB_1$ executes $\qA$ with the input $\dvk^\prime$ and $\ct^*\seteq(\PKE.\ct^*_{i,b})_{i\in[n],b\in\bit}$ and the state $\qstateA$, and obtains an output $\msg^\prime$.
$\qB_1$ sets $z_i\seteq \msg^\prime[i]\oplus \msg^*[i]$ for every $i\in[n]$ and outputs $z=z_1\concat\ldots\concat z_n$.
\end{description}

$\qB$ perfectly simulates $\hybi{2}$ for $\qA$. 
If $\qA$ wins the simulated experiment (i.e., $\hybi{2}$ outputs $1$), we have
\begin{align}
e_i=x[i]\oplus d_i \cdot (\PKE.\dk_{i,0}\oplus\PKE.\dk_{i,1})
\end{align}
 for every $i\in[n]$ such that $\theta[i]=1$
and 
\begin{align}
\msg^\prime[i]=\msg^*[i]\oplus x[i]
\end{align}
for every $i\in[n]$ such that $\theta[i]=0$. 
In this case, $y_i=x[i]$ for every $i\in [n]$ such that $\theta[i]=1$ and $z_i=x[i]$ for every $i\in [n]$ such that $\theta[i]=0$. 
Thus, we have 
\[
\Pr[\hybi{2}=1]\le\Pr[\CDBB(\qB,n)=1]\le \negl(\secpar)
\] 
where the latter inequality follows from the certified deletion property of BB84 states (\Cref{thm:CD_BB84}) 
and $n=\omega(\log\secpar)$.

\if0
$\qB$ perfectly simulates $\hybi{3}$ for $\qA$ until $\qA$ outputs $\cert$.
Let $x,\theta\in\bit^n$ be the strings such that $\ket{\psi}=\ket{x}_\theta$. 
Suppose that $\PKESKL.\DelVrfy(\dvk,\cert)=\top$ holds in the simulated experiment. Then, we have
\begin{align}
e_i=x[i]\oplus d_i \cdot (\PKE.\dk_{i,0}\oplus\PKE.\dk_{i,1})
\end{align}
for $(e_i,d_i)$ for every $i\in[n]$ such that $\theta[i]=1$.
Thus, conditioned that $\PKESKL.\DelVrfy(\dvk,\cert)=\top$, 
we have $y_i=x[i]$ for every $i\in [n]$ such that $\theta[i]=1$. 
In this case, $\qB$ perfectly simulates the remaining part of $\hybi{3}$ for $\qA$. Thus, if $\qA$ wins the simulated experiment (i.e., $\hybi{3}$ outputs $1$), we also have
\begin{align}
\msg^\prime[i]=\msg^*[i]\oplus x[i]
\end{align}
for every $i\in[n]$ such that $\theta[i]=0$.  
This implies that we have $z_i=x[i]$ for every $i\in [n]$ such that $\theta[i]=0$. 
Thus, whenever $\qA$ wins the simulated execution of $\hybi{2}$, $\qB$ outputs the correct answer in the experiment  $\CDBB(\qB,n)$. 
Thus, we have 
\[
\Pr[\hybi{2}=1]\le\Pr[\CDBB(\qB,n)=1]\le \negl(\secpar)
\] 
where the latter inequality follows from the certified deletion property of BB84 states (\Cref{thm:CD_BB84}) 
and $n=\omega(\secpar)$. 
\fi

Combining the above, we have 
\[
\advb{\PKESKL,\qA}{ow}{vra}(1^\secp)\le \negl(\secpar).
\] 
This completes the proof of \cref{thm:PKE_OW-VRA}.
\end{proof}
By \Cref{lem:ow-ind,thm:PKE_OW-VRA}, we obtain \Cref{thm:PKE-SKL}. 

\subsection{Variants}
\paragraph{\bf PKE variants.} 
Thanks to the simplicity of our construction and its security proof, we can easily extend it to other forms of PKE. For example, it is immediate to see that our construction works even if the base PKE scheme has quantum ciphertexts and/or quantum public keys as in \cite{EPRINT:MorYam22c,cryptoeprint:2023/282,TCC:BGHMSVW23} 
if we allow the resulting PKE-SKL scheme to inherit the similar property. 
Moreover, if the base PKE scheme has tamper-resilient quantum public keys \cite{C:KMNY24}, i.e., the security holds against adversaries that tamper with quantum public keys, then our construction satisfies a similar level of security. 
Contrary, it is unclear if the construction of Agrawal et al. \cite{EC:AKNYY23} extends to those settings.  

\paragraph{\bf IBE and ABE.} 
Agrawal et al. \cite{EC:AKNYY23} defined IBE-SKL and ABE-SKL and showed that PKE-SKL can be generically upgraded into IBE-SKL (resp. ABE-SKL for a function class $\Fs$) assuming the existence of plain IBE (resp. ABE for $\Fs$). Though their conversion is given in the quantum revocation setting, it is easy to see that the conversion works in the classical revocation setting and preserves IND-VRA security.\footnote{
We omit a formal definition of IND-VRA security for IBE-SKL and ABE-SKL with classical revocation, but it can be defined similarly to \cite{EC:AKNYY23}. 
Similarly to \cite{EC:AKNYY23}, the security has bounded decrypting key restriction which means that the adversary can obtain only bounded number of copies of a decryption key that can decrypt the challenge ciphertext (but there is no limitation about the number of decryption keys that cannot decrypt the challenge ciphertext).}  
Thus, we can show that plain IBE and ABE can be generically upgraded into IND-VRA secure IBE-SKL and ABE-SKL schemes. A caveat of their conversion is that it relies on garbled circuits where we garble encryption algorithm of the underlying IBE or ABE, which makes it very inefficient. \fuyuki{??}\takashi{ok now?}
We observe that our construction of OW-VRA secure PKE-SKL can be used to  \emph{directly} construct IBE-SKL and ABE-SKL from plain IBE and ABE. That is, if we instantiate the scheme based on plain IBE (resp. ABE) instead of plain PKE, then we obtain OW-VRA secure IBE-SKL (resp. ABE-SKL). This construction only makes black-box use of the underlying scheme and thus the actual efficiency would be much better. 
\ifnum\llncs=0 However there is a subtlety that we do not know how to upgrade OW-VRA security into IND-VRA security for IBE-SKL or ABE-SKL.  A natural idea to do so is to use quantum Goldreich-Levin lemma similarly to the case of PKE-SKL, but the problem is that the reduction does not seem to work due to the decryption key queries in the second stage.
\fi

\ifnum\llncs=0

\section{Pseudorandom Functions with Secure Key Leasing}\label{sec:const-PRF-SKL-classical-certificate}
In this section, we prove the following theorem. 
\begin{theorem}\label{thm:PRF-SKL}
If there exist OWFs, then there exist PR-VRA secure PRF-SKL with classical revocation.  
\end{theorem}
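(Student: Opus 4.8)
The plan is to follow the two-layer route already prepared in the paper. By \Cref{lem:upf-prf}, a UP-VRA secure UPF-SKL scheme with classical revocation yields, via the quantum Goldreich--Levin lemma, a PR-VRA secure PRF-SKL scheme with classical revocation; so it suffices to build a UP-VRA secure $\UPFSKL$. By \Cref{thm:OWF_to_TEPRF}, OWFs give a TEPRF for any polynomial input length $\ell$, so the entire task reduces to instantiating the BB84-based framework of \Cref{sec:const-PKE-SKL-classical-certificate} with a TEPRF in place of a plain PKE scheme.

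For the construction I would use the scheme $\UPFSKL$ from the overview, for the domain $\bit^{n\cdot\ell}$ with $n=\omega(\log\secp)$: $\qKG$ samples $\tlprfinp_i\gets\bit^\ell$ and $(\TEPRF.\sk_{i,0},\TEPRF.\sk_{i,1})\gets\TEPRF.\KG(1^\secp,\tlprfinp_i)$ for each $i\in[n]$, samples $(x,\theta)\gets\bit^n\times\bit^n$, forms $\qsk$ and $\dvk$ from the $2n$ keys $(\TEPRF.\sk_{i,b})_{i,b}$ and $(x,\theta)$ exactly as in the framework, and sets $\msk=(\TEPRF.\sk_{i,0})_{i\in[n]}$. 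Evaluation applies the decoding unitary $E_i$ coherently to each $\qsk_i$, and $\qDel,\DelVrfy$ are the generic framework algorithms. Deletion verification correctness is the identical Hadamard-basis calculation as in \Cref{sec:const-PKE-SKL-classical-certificate}. Evaluation correctness holds whenever $\prfinp_i\neq\tlprfinp_i$ for all $i$ by the equality property of the TEPRF; the bad event $\prfinp_i=\tlprfinp_i$ occurs with probability $2^{-\ell}$ per coordinate, hence is negligible on any fixed input, and reusability then follows by the gentle measurement lemma~\cite{Winter99}.

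The security proof mirrors the OW-VRA proof for PKE-SKL, but is shorter since no $x$-XOR relabeling is needed. I would use two hybrids: $\hybi{0}$ is the real UP-VRA experiment, and $\hybi{1}$ replaces the normal challenge $(\prfinp^*,\Eval(\msk,\prfinp^*))$ with the special challenge $(\hatprfinp,\hatprfout)$, where $\hatprfinp_i=\tlprfinp_i$ for $\theta[i]=0$, $\hatprfinp_i=\prfinp_i^*$ for $\theta[i]=1$, and $\hatprfout_i=\TEPRF.\Eval(\TEPRF.\sk_{i,x[i]},\hatprfinp_i)$. Indistinguishability of the two challenges given $\key=(x,\theta,(\TEPRF.\sk_{i,x[i]})_{\theta[i]=0},(\TEPRF.\sk_{i,b})_{b,\theta[i]=1})$ reduces to differing point hiding: for $\theta[i]=0$ only $\TEPRF.\sk_{i,x[i]}$ is exposed, so a coordinate-wise reduction (setting the two candidate targets to the sampled input and to a fresh uniform point) shows $\prfinp^*$ and $\hatprfinp$ are indistinguishable, and the outputs are recomputed identically from $\key$ using the equality property up to the $2^{-\ell}$ collision at $\theta[i]=1$ coordinates. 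In $\hybi{1}$ I then build $\qB=(\qB_0,\qB_1)$ against \Cref{thm:CD_BB84}: $\qB_0$ embeds the challenge BB84 state into $\qsk$ with freshly generated TEPRF keys and converts a certificate $(e_i,d_i)_i$ into $y_i=e_i\oplus d_i\cdot(\TEPRF.\sk_{i,0}\oplus\TEPRF.\sk_{i,1})$; $\qB_1$ runs $\ExtComp_\theta$ on the guess $\prfout'$, outputting for each $\theta[i]=0$ the bit $z_i$ with $\TEPRF.\Eval(\TEPRF.\sk_{i,z_i},\hatprfinp_i)=\prfout'_i$. A winning adversary forces $y_i=x[i]$ on Hadamard positions and $z_i=x[i]$ on computational positions, giving $\Pr[\hybi{1}=1]\le\Pr[\CDBB(\qB,n)=1]\le 2^{-\Omega(n)}=\negl(\secp)$, and hence $\advb{\UPFSKL,\qA}{up}{vra}(\secp)\le\negl(\secp)$.

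The main obstacle is the extractability step at the $\theta[i]=0$ coordinates, which is what makes the reduction go through. It relies on two TEPRF guarantees simultaneously: \emph{different values on target} ensures $\TEPRF.\Eval(\TEPRF.\sk_{i,0},\hatprfinp_i)\neq\TEPRF.\Eval(\TEPRF.\sk_{i,1},\hatprfinp_i)$ precisely because $\hatprfinp_i=\tlprfinp_i$ is the differing point, so the winning value $\prfout'_i$ uniquely determines $x[i]$ (and $\qB$, having generated both keys, can perform the test); and I must verify that switching to the special challenge does not collide with $\tlprfinp_i$ at the $\theta[i]=1$ coordinates, where $\hatprfinp_i=\prfinp_i^*$ is uniform, which is handled by the $2^{-\ell}$ union bound. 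Everything else is bookkeeping directly analogous to the PKE-SKL argument.
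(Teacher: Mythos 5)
Your proposal is correct and follows essentially the same route as the paper: reduce to UP-VRA secure UPF-SKL via \Cref{lem:upf-prf}, instantiate the BB84 framework with a TEPRF from OWFs, and reduce security to the certified deletion property using the equality, differing point hiding, and different-values-on-target properties. The paper merely spreads your single "special challenge" switch across three explicit hybrids (switching the evaluation key to $\TEPRF.\sk_{i,x[i]}$, then reprogramming $\tilde{s}_i:=s^*_i$ at $\theta[i]=0$ positions, then reordering the sampling), but the underlying arguments and the final reduction to \Cref{thm:CD_BB84} are identical to yours.
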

By \Cref{lem:upf-prf}, it suffices to construct a UP-VRA secure UPF-SKL scheme with classical revocation, which we construct in the next subsection. 

\subsection{Construction of UP-VRA Secure UPF-SKL} 
Let $\TEPRF=(\TEPRF.\ObvKG,\TEPRF.\KG,\TEPRF.\Eval)$ be a TEPRF with input length $\ell=\omega(\log \secpar)$. 
Let $\ell_\key$ be the length of a key generated by $\TEPRF.\KG(1^\secpar,s)$ for $s\in \bit^\ell$.\footnote{We can assume that all keys are of the same length w.l.o.g.} 
Then, we construct a UPF-SKL scheme   $\UPFSKL=(\UPFSKL.\qKG,\allowbreak \UPFSKL.\Eval, \UPFSKL.\qLEval,\allowbreak\UPFSKL.\qDel,\UPFSKL.\DelVrfy)$ with classical revocation with $n\ell$-bit inputs and $n$-bit outputs where $n=\omega(\log \secpar)$ as follows.

\begin{description}
\item[$\UPFSKL.\qKG(1^\secp)$:] $ $
\begin{itemize}
    \item Generate $x,\theta\gets\bit^n$.
    \item $\tilde{s}_i\gets \bit^\ell$ for $i\in [n]$.
    \item Generate $(\TEPRF.\key_{i,0},\TEPRF.\key_{i,1})\gets\TEPRF.\KG(1^\secp,\tilde{s}_i)$ for $i\in [n]$.
    \item Generate
    \begin{align}
    \qsk_i\seteq
    \begin{cases}
    \ket{x[i]}\ket{\TEPRF.\key_{i,x[i]}} & (\textrm{if~}\theta[i]=0)\\
    \frac{1}{\sqrt{2}}\left(\ket{0}\ket{\TEPRF.\key_{i,0}}+(-1)^{x[i]}\ket{1}\ket{\TEPRF.\key_{i,1}}\right) & (\textrm{if~}\theta[i]=1).
    \end{cases}
    \end{align}
    \item Output a master secret key 
    \[
    \msk\seteq(\TEPRF.\key_{i,0})_{i\in [n]},
    \] 
    a quantum secret key 
    \[
    \qsk\seteq(\qsk_1,\ldots, \qsk_n),
    \]
    and a verification key
    \[\dvk\seteq((x[i])_{i\in[n]:\theta[i]=1},\theta,(\TEPRF.\key_{i,0},\TEPRF.\key_{i,1})_{i\in[n]:\theta[i]=1}).
    \]
\end{itemize}

\item[$\UPFSKL.\Eval(\msk,\prfinp)$:] $ $
\begin{itemize}
    \item Parse $\msk=(\TEPRF.\key_{i,0})_{i\in [n]}$
    and $s=s_1\concat\ldots \concat s_n$ 
    where $s_i\in \bit^\ell$ for each $i\in[n]$.
    \item Compute 
    $t_i\gets \TEPRF.\Eval(\TEPRF.\key_{i,0},s_i)$ for $i\in [n]$. 
    \item Output $t\seteq t_1\concat\ldots \concat t_n$. 
\end{itemize}
 
\item[$\UPFSKL.\qLEval(\qsk,\prfinp)$:] $ $
\begin{itemize}
   \item Parse  $\qsk= (\qsk_1,\ldots,\qsk_n)$ and $s=s_1\concat\ldots \concat s_n$ where $s_i\in \bit^\ell$ for each $i\in[n]$.
   \item Let $E_i$ be a unitary that works as follows: 
   \mor{$\TEPRF.\key$ could be misunderstood as $\{\TEPRF.\key_{i,b}\}_{i,b}$.}\takashi{I see what you mean, but I still prefer the current notation since using another character may make it difficult to understand its role.} 
   \[\ket{u}\ket{\TEPRF.\key}\ket{v} \mapsto \ket{u}\ket{\TEPRF.\key}\ket{v\oplus\TEPRF.\Eval(\TEPRF.\key,s_i)}\]
   where $u,v\in \bit$ and $\TEPRF.\key\in \bit^{\ell_\key}$.  
   \item Apply $E_i$ to $\qsk_i\tensor\ket{0}$ and measure the rightmost register to obtain $t_i^\prime$ for every $i\in[n]$.
  \item Output $t^\prime\seteq t^\prime_1\concat\ldots \concat t^\prime_n$. 
\end{itemize}

\item[$\UPFSKL.\qDel(\qsk)$:] $ $
\begin{itemize}
\item Parse $\qsk= (\qsk_1,...,\qsk_n)$.
\item Measure $\qsk_i$ in the Hadamard basis and obtain $(e_i,d_i)$ for every $i\in[n]$.
\item Output $\cert\seteq (e_i,d_i)_{i\in [n]}$. 
\end{itemize}

\item[$\UPFSKL.\DelVrfy(\dvk,\cert)$:] $ $
\begin{itemize}
    \item Parse $\dvk=((x[i])_{i\in[n]:\theta[i]=1},\theta,(\TEPRF.\key_{i,0},\TEPRF.\key_{i,1})_{i\in[n]:\theta[i]=1})$ and $\cert= (e_i,d_i)_{i\in [n]}$. 
    \item Output $\top$ if it holds that  
\[e_i=x[i]\oplus d_i\cdot (\TEPRF.\key_{i,0}\oplus \TEPRF.\key_{i,1})\]
for every $i\in [n]$ such that $\theta[i]=1$. Otherwise, output $\bot$. 
\end{itemize}
\end{description}

\paragraph{\bf Evaluation correctness.}
We show that 
for any $s=s_1\concat\ldots\concat s_n\in \bit^{n\ell}$,  
$t_i^\prime=t_i=\TEPRF.\Eval(\TEPRF.\sk_{i,0},s_i)$ holds for every $i\in[n]$ with overwhelming probability 
where 
$(\msk,\qsk,\dvk)\gets\UPFSKL.\qKG(1^\secp)$,  $t_i^\prime$ is computed as in $\UPFSKL.\qLEval(\qsk,s)$
and $t_i$ computed as in $\UPFSKL.\Eval(\msk,s)$. 

We first consider $i\in[n]$ such that $\theta[i]=0$. In this case, $\qsk_i=\ket{x[i]}\ket{\TEPRF.\sk_{i,x[i]}}$.
Note that 
by the equality property of $\TEPRF$, we have $\Eval(\TEPRF.\sk_{i,0},s_i)=\Eval(\TEPRF.\sk_{i,1},s_i)$ unless $s_i=\tilde{s}_i$ where $\tilde{s}_i$ is the value chosen in $\UPFSKL.\qKG$. 
Thus, 
the state obtained by applying the map $E_i$ to $\qsk_i\tensor\ket{0}$ is \ifnum\llncs=0  $\ket{x[i]}\ket{\TEPRF.\sk_{i,x[i]}}\ket{\TEPRF.\Eval(\TEPRF.\sk_{i,0},s_i)}$ 
\else
\[\ket{x[i]}\ket{\TEPRF.\sk_{i,x[i]}}\ket{\TEPRF.\Eval(\TEPRF.\sk_{i,0},s_i)}\] \fi
unless $s_i=\tilde{s}_i$. 
Since the probability that $s_i=\tilde{s}_i$  is $2^{-\ell}\le \negl(\secpar)$, 
we have $t_i^\prime=t_i$ with overwhelming probability.

We next consider $i\in[n]$ such that $\theta[i]=1$. In this case, $\qsk_i=\frac{1}{\sqrt{2}}(\ket{0}\ket{\TEPRF.\sk_{i,0}}+(-1)^{x[i]}\ket{1}\ket{\TEPRF.\sk_{i,1}})$.
Thus, the state obtained by applying the map $E_i$ to $\qsk_i\tensor\ket{0}$ is $\frac{1}{\sqrt{2}}(\ket{0}\ket{\TEPRF.\sk_{i,0}}+(-1)^{x[i]}\ket{1}\ket{\TEPRF.\sk_{i,1}})\tensor\ket{\TEPRF.\Eval(\TEPRF.\sk_{i,0},s_i)}$ unless $s_i=\tilde{s}_i$ where $\tilde{s}_i$ is the value chosen in $\UPFSKL.\qKG$. 
Since the probability that $s_i=\tilde{s}_i$  is $2^{-\ell}\le \negl(\secpar)$, 
we have $t_i^\prime=t_i$ with overwhelming probability also in this case.

\paragraph{\bf Deletion verification correctness.}
We show that $e_i=x[i]\oplus d_i \cdot (\TEPRF.\sk_{i,0}\oplus\TEPRF.\sk_{i,1})$ holds for every $(e_i,d_i)$ for $i\in[n]$ such that $\theta[i]=1$ where $(\msk,\qsk,\dvk)\gets\UPFSKL.\qKG(1^\secp)$ and 
$(e_i,d_i)_{i\in [n]}\gets \UPFSKL.\qDel(\qsk)$. 
For every $i\in[n]$ such that $\theta[i]=1$, we have $\qsk_i=\frac{1}{\sqrt{2}}(\ket{0}\ket{\TEPRF.\sk_{i,0}}+(-1)^{x[i]}\ket{1}\ket{\TEPRF.\sk_{i,1}})$.
If we apply the Hadamard gate to every qubit of $\qsk_i$, we obtain
\begin{align}
&\frac{1}{\sqrt{2}}\left(
\frac{1}{\sqrt{2^{\ell_\key+1}}}\sum_{e_i,d_i} (-1)^{d_i\cdot\TEPRF.\sk_{i,0}}\ket{e_i}\ket{d_i}
+
\frac{1}{\sqrt{2^{\ell_\key+1}}}\sum_{e_i,d_i} (-1)^{x[i]\oplus e_i\oplus d_i\cdot\TEPRF.\sk_{i,1}}\ket{e_i}\ket{d_i}
\right)\\
=&
\frac{1}{\sqrt{2^{\ell_\key}}}\sum_{
\substack{
(e_i,d_i):\\
d_i\cdot\TEPRF.\sk_{i,0}=x[i]\oplus e_i\oplus d_i\cdot\TEPRF.\sk_{i,1}
}}(-1)^{d_i\cdot\TEPRF.\sk_{i,0}}\ket{e_i}\ket{d_i}.
\end{align}
Thus, we have $e_i=x[i]\oplus d_i \cdot (\TEPRF.\sk_{i,0}\oplus\TEPRF.\sk_{i,1})$ for every $i\in[n]$ such that $\theta[i]=1$.

\paragraph{\bf Security.} 

\begin{theorem}\label{thm:UPF_UP-VRA}
If $\TEPRF$ satisfies the  properties of \Cref{def:TEPRF}, then $\UPFSKL$ is UP-VRA secure. 
\end{theorem}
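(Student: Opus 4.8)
The plan is to follow the same three-move blueprint as the proof of \Cref{thm:PKE_OW-VRA}, replacing the IND-CPA step by the differing-point-hiding property of $\TEPRF$ and reducing the final hybrid to the certified deletion property of BB84 states (\Cref{thm:CD_BB84}). Fix a QPT adversary $\qA$ against UP-VRA security. The special challenge I will aim for is the input $\hat s=\hat s_1\concat\cdots\concat\hat s_n$ with $\hat s_i=\tilde s_i$ when $\theta[i]=0$ and $\hat s_i\gets\bit^\ell$ when $\theta[i]=1$, whose associated win target is $\hat t=(\TEPRF.\Eval(\TEPRF.\key_{i,x[i]},\hat s_i))_{i\in[n]}$. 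The key observation is that both the honest target $\Eval(\msk,\cdot)$ and $\hat t$ arise by applying the single map $G_\key$ whose $i$-th block is $\TEPRF.\Eval(\TEPRF.\key_{i,x[i]},\cdot)$ to the challenge input, and that the whole experiment—generation of $\qsk$ and $\dvk$, the run of $\qA$, and the evaluation of the win predicate—is computable from $\key=(x,\theta,(\TEPRF.\key_{i,x[i]})_{i:\theta[i]=0},(\TEPRF.\key_{i,b})_{b\in\bit,\,i:\theta[i]=1})$ alone.

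First I would set $\hybi 0$ to be the original experiment $\expb{\UPFSKL,\qA}{up}{vra}$, whose winning condition checks $t^\prime=\Eval(\msk,s^*)=(\TEPRF.\Eval(\TEPRF.\key_{i,0},s^*_i))_i$ for a uniform $s^*$. In $\hybi 1$ I would only rewrite the win target as $G_\key(s^*)=(\TEPRF.\Eval(\TEPRF.\key_{i,x[i]},s^*_i))_i$, leaving $\qA$'s view untouched; by the equality property of $\TEPRF$ the two targets agree unless some block satisfies $s^*_i=\tilde s_i$, an event of probability at most $n\cdot 2^{-\ell}=\negl(\secp)$, so $\lvert\Pr[\hybi 0=1]-\Pr[\hybi 1=1]\rvert\le\negl(\secp)$. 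In $\hybi 2$ I would switch the challenge input from $s^*$ to the special challenge $\hat s$ and correspondingly the target to $G_\key(\hat s)=\hat t$. Since both the input handed to $\qA$ and the target are produced from the same string through $G_\key$, and every step is computable from $\key$, indistinguishability of $\hybi 1$ and $\hybi 2$ reduces to indistinguishability of $s^*$ and $\hat s$ given $\key$.

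This last indistinguishability is the step I expect to be the main obstacle. The two inputs differ only in the blocks with $\theta[i]=0$, where $s^*_i$ is uniform while $\hat s_i$ equals the hidden differing point $\tilde s_i$ used to generate $\TEPRF.\key_{i,x[i]}$; crucially, for these indices only $\TEPRF.\key_{i,x[i]}$ is ever revealed—it is the only key appearing in $\qsk_i$, and none of these keys occurs in $\dvk$—while $\TEPRF.\key_{i,1-x[i]}$ stays hidden. I would proceed by a block-wise hybrid over the indices with $\theta[i]=0$; for each such block the reduction samples two candidate differing points, plays the differing-point-hiding game to receive the single key $\TEPRF.\key_{i,x[i]}$, and thereby decouples this key from its own differing point, so that presenting $\tilde s_i$ is indistinguishable from presenting an independent uniform string. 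Summing over the at most $n$ blocks yields $\lvert\Pr[\hybi 1=1]-\Pr[\hybi 2=1]\rvert\le\negl(\secp)$.

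Finally I would bound $\Pr[\hybi 2=1]$ by a reduction $\qB=(\qB_0,\qB_1)$ to \Cref{thm:CD_BB84}, paralleling the PKE case. Given $\ket{x^\theta}$, $\qB_0$ samples $\tilde s_i$, runs $\TEPRF.\KG$, coherently writes $\TEPRF.\key_{i,u}$ into the second register of the $i$-th qubit to form $\qsk$, hands it to $\qA$, and on receiving $\cert=(e_i,d_i)_i$ outputs the Hadamard-basis certificate $y_i=e_i\oplus d_i\cdot(\TEPRF.\key_{i,0}\oplus\TEPRF.\key_{i,1})$. Given $\theta$ and $(x[i])_{i:\theta[i]=1}$, $\qB_1$ reconstructs $\dvk$, prepares $\hat s$, runs $\qA$ to obtain $t^\prime$, and for each $i$ with $\theta[i]=0$ sets $z_i$ to the bit $b$ for which $\TEPRF.\Eval(\TEPRF.\key_{i,b},\tilde s_i)=t^\prime_i$, which is well defined by the different-values-on-target property. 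Whenever $\qA$ wins $\hybi 2$, acceptance of $\cert$ forces $y_i=x[i]$ on the Hadamard positions while $t^\prime=\hat t$ forces $z_i=x[i]$ on the computational positions, so $\CDBB(\qB,n)=1$; hence $\Pr[\hybi 2=1]\le\Pr[\CDBB(\qB,n)=1]\le 2^{-\Omega(n)}=\negl(\secp)$ since $n=\omega(\log\secp)$. Combining the three bounds gives $\advb{\UPFSKL,\qA}{up}{vra}(\secp)\le\negl(\secp)$.
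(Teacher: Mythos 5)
Your proposal is correct and follows essentially the same route as the paper's proof: use the equality property to replace the winning target's key $\TEPRF.\key_{i,0}$ by $\TEPRF.\key_{i,x[i]}$ (losing only the $n\cdot 2^{-\ell}$ collision event), use differing-point hiding to argue that handing the adversary $\tilde s_i$ in the $\theta[i]=0$ blocks is indistinguishable from a fresh uniform input (valid because only one of the two keys is ever revealed for those indices and $\dvk$ contains none of them), and then reduce the final hybrid to the certified deletion property of BB84 states exactly as in the PKE case. The only cosmetic difference is that you keep the key generation fixed and move the challenge onto the differing points, whereas the paper first reprograms $\tilde s_i:=s^*_i$ and then re-parametrizes the sampling; the two orderings yield the same final distribution and the same reduction.
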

\begin{proof}  
Let $\qA$ be a QPT adversary against UP-VRA security of $\UPFSKL$. 
We consider the following sequence of hybrids. 
\begin{description}
\item[$\hybi{0}$:] This is the original security experiment $\expb{\UPFSKL,\qA}{up}{vra}(1^\secp)$. More specifically, it works as follows.
\begin{enumerate}
            \item  The challenger generates $x,\theta\gets\bit^n$ and 
     $\tilde{s}_i\gets \bit^\ell$  and $(\TEPRF.\key_{i,0},\TEPRF.\key_{i,1})\gets\TEPRF.\KG(1^\secp,\tilde{s}_i)$ for $i\in [n]$.
     The challenger also generates
            \begin{align}
    \qsk_i\seteq
    \begin{cases}
    \ket{x[i]}\ket{\TEPRF.\sk_{i,x[i]}} & (\textrm{if}~~\theta[i]=0)\\
    \frac{1}{\sqrt{2}}\left(\ket{0}\ket{\TEPRF.\sk_{i,0}}+(-1)^{x[i]}\ket{1}\ket{\TEPRF.\sk_{i,1}}\right) & (\textrm{if}~~\theta[i]=1).
    \end{cases}
    \end{align}
     The challenger sets
     \begin{align}
    &\qsk\seteq (\qsk_1,\ldots,\qsk_n),\\
    &\dvk\seteq ((x[i])_{i\in[n]:\theta[i]=1},\theta,(\TEPRF.\sk_{i,0},\TEPRF.\sk_{i,1})_{i\in[n]:\theta[i]=1}).
    \end{align}
     The challenger sends $\qsk$ to $\qA$.  
     \item $\qA$ sends $\cert=(e_i,d_i)_{i\in[n]}$ to the challenger.
     The challenger outputs $0$ and the experiment ends if $e_i\ne x[i] \oplus d_i\cdot(\TEPRF.\sk_{i,0}\oplus\TEPRF.\sk_{i,1})$  for some $i\in[n]$ such that $\theta[i]=1$. Otherwise, the challenger chooses $s^*=s^*_1\concat\ldots \concat s^*_n\gets \bit^{n\ell}$ and sends $s^*$ and $\dvk$ to $\qA$. 
     \item $\qA$ outputs $t^\prime=t_1^\prime\concat\ldots\concat t_n^\prime$. 
     The challenger outputs $1$ if $t_i^\prime=t_i^*$ where $t_i^*=\TEPRF.\Eval(\TEPRF.\sk_{i,0},s^*_i)$ for every $i\in [n]$ and otherwise $0$ as the final outcome of this experiment. 
        \end{enumerate}
By the definition, we clearly have 
\[
\Pr[\hybi{0}=1]=\advb{\UPFSKL,\qA}{up}{vra}(1^\secp).
\]



\takashi{I removed one hybrid here. As a result, the numbering of the remaining hybrids should be decremented by one. I did so carefully, but if you find this is not done somewhere, please just correct it.}

\item[$\hybi{1}$:] This is identical to $\hybi{0}$ except that the challenger generates $t^*_i$ using $\TEPRF.\sk_{i,x[i]}$ instead of $\TEPRF.\sk_{i,0}$ 
for $i\in [n]$ such that $\theta[i]=0$. 
This modification causes the change that $t^*_i$ is generated as  
\[
t^*_i\seteq \TEPRF.\Eval(\TEPRF.\key_{i,x[i]},s_i^*)
\]
for $i\in [n]$ such that $\theta[i]=0$.

By the equality property of TEPRF, we have 
\[
\TEPRF.\Eval(\TEPRF.\key_{i,x[i]},s_i^*)=
\TEPRF.\Eval(\TEPRF.\key_{i,0},s_i^*)
\]
unless $s_i^*=\tilde{s}_i$. 
Since $s_i^*$ is uniformly random, the probability that $s_i^*=\tilde{s}_i$ for some $i\in[n]$ is at most $n\cdot 2^{-\ell}\le \negl(\secp)$. Thus, we have 
\[
\abs{\Pr[\hybi{0}=1]-\Pr[\hybi{1}=1]}\le \negl(\secpar). 
\]

\item[$\hybi{2}$:] This is identical to $\hybi{1}$ except that the challenger samples $s^*=s_1^*\concat\ldots \concat s^*_n$ at the beginning of the experiment and sets $\tilde{s}_i\seteq s^*_i$ for $i\in [n]$ such that $\theta[i]=0$. 

Note that the only place where $\tilde{s}_i$ is used is in the generation of 
\[
(\TEPRF.\key_{i,0},\TEPRF.\key_{i,1})\gets \TEPRF.\qKG(1^\secpar,\tilde{s}_i).
\]
In $\hybi{2}$, uniformly random  $\tilde{s}_i$ that is independent of $s^*_i$ is used, but in $\hybi{3}$, we set $\tilde{s}_i=s^*_i$ for $i\in[n]$ such that $\theta[i]=0$. 
They are indistinguishable by the differing point hiding property of $\TEPRF$ since only one of  $\TEPRF.\key_{i,0}$ or $\TEPRF.\key_{i,1}$ is used in the hybrids for $i\in [n]$ such that $\theta[i]=0$. 
In fact, for $i\in [n]$ such that $\theta[i]=0$, 
$\qsk_i$ is a computational basis state which is either $\ket{0}\ket{\TEPRF.\key_{i,0}}$ or $\ket{1}\ket{\TEPRF.\key_{i,1}}$. 
Also, all the keys of $\TEPRF$ contained in the deletion verification key are $(\TEPRF.\key_{i,0},\TEPRF.\key_{i,1})_{i\in[n]:\theta[i]=1}$, and in particular the verification key contains no information about $\TEPRF.\key_{i,b}$ for $i\in[n]$ and $b\in \bit$ such that $\theta[i]=0$. 
Thus,  by a straightforward reduction to the differing point hiding property of $\TEPRF$, 
we have 
\[
\abs{\Pr[\hybi{1}=1]-\Pr[\hybi{2}=1]}\le \negl(\secpar). 
\] 
\item[$\hybi{3}$:]
This is identical to $\hybi{2}$ except that we change the way of sampling $\tilde{s}_i$ and $s^*_i$ for $i\in[n]$. 
Specifically, the challenger first samples $\tilde{s}_i\gets \bit^\ell$ at the beginning of the experiment without sampling $s^*_i$ for every $i\in[n]$. 
After receiving $\cert$ from $\qA$, the challenger sets 
$s^*_i\seteq \tilde{s}_i$ for every $i\in[n]$ such that $\theta[i]=0$ and 
samples $s^*_i\gets \bit^\ell$ for every $i\in[n]$ such that $\theta[i]=1$. 

Since $(\tilde{s}_i,s^*_i)$ is uniformly random conditioned  that $\tilde{s}_i=s^*_i$ for every $i\in[n]$ such that $\theta[i]=0$ in both hybrids, the change does not affect the view of $\qA$ and thus we have 
\[
\Pr[\hybi{2}=1]=\Pr[\hybi{3}=1]. 
\]
\end{description}

Below, by a reduction to the certified deletion property of BB84 states (\Cref{thm:CD_BB84}), 
we show 
\[
\Pr[\hybi{3}=1]\le\negl(\secp). 
\] 

\if0
We construct $\qB=(\qB_0,\qB_1,\qB_2)$ that tries to attack the strong monogamy  property of BB84 states as follows. 
\begin{description}
\item[$\qB_0$:] Given an $n$-qubit BB84 state $\ket{\psi}=\ket{\psi_1}\tensor\cdots\tensor\ket{\psi_n}$, $\qB_0$ generates $\tilde{s}_i\gets \bit^\ell$ for $i\in[n]$ and
$(\TEPRF.\key_{i,0},\TEPRF.\key_{i,1})\gets\PKE.\KG(1^\secp,\tilde{s}_i)$ for every $i\in[n]$.
Let $U_i$ be a unitary that maps $\ket{u}\ket{v}$ to $\ket{u}\ket{v\oplus\TEPRF.\sk_{i,u}}$ where $u\in \bit$ and $v\in \bit^{\ell_\key}$. 
$\qB_0$ applies $U_i$ to $\ket{\psi_i}\ket{0^{\ell_\key}}$ to obtain $\qsk_i$ for every $i\in[n]$.
$\qB_0$ then sets $\qsk\seteq (\qsk_1,\ldots,\qsk_n)$ 
     and sends $\qsk$ to $\qA$.  When $\qA$ outputs $\cert$, letting the state of $\qA$ at that time be $\qstateA$, $\qB_0$ sends $((\tilde{s}_i)_{i\in[n]},(\TEPRF.\sk_{i,b})_{i\in[n],b\in\bit},\cert,\qstateA)$ to $\qB_1$ and $((\TEPRF.\sk_{i,b})_{i\in[n],b\in\bit},\cert)$ to $\qB_2$.
\item[$\qB_1$:] Given $\theta$ and $((\tilde{s}_i)_{i\in[n]},(\TEPRF.\sk_{i,b})_{i\in[n],b\in\bit},\cert,\qstateA)$, $\qB_1$ first parses $\cert=(e_i,d_i)_{i\in[n]}$ and sets $x^\prime_i=e_i\oplus d_i\cdot(\TEPRF.\sk_{i,0}\oplus\TEPRF.\sk_{i,1})$ for every $i\in[n]$ such that $\theta[i]=1$, and sets $\dvk^\prime\seteq((x^\prime_i)_{i\in[n]:\theta[i]=1},\theta,\allowbreak (\TEPRF.\sk_{i,0},\TEPRF.\sk_{i,1})_{i\in[n]:\theta[i]=1})$.
$\qB_1$ sets $s^*_i\seteq \tilde{s}_i$ for $i\in [n]$ such that $\theta[i]=0$ and generates 
$s^*_i\gets \bit^\ell$ for $i\in [n]$ such that $\theta[i]=1$.   
$\qB_1$ then executes $\qA$ with the input $\dvk^\prime$ and $s^*=s^*_1\concat\ldots\concat s^*_n$ and the state $\qstateA$, and obtains an output $t^*=t^*_1\concat\ldots\concat t^*_n$. 
By the ``different values on target" property of $\TEPRF$, there is unique $b_i\in \bit$ such that \[
t^*_i=\TEPRF.\Eval(\TEPRF.\sk_{i,b_i},s^*_i)
\] 
for every $i\in[n]$ such that $\theta[i]=0$. 
$\qB_1$ sets $x^\prime_i\seteq b_i$ for every $i\in[n]$ such that $\theta[i]=0$, and outputs $(x^\prime_i)_{i\in[n]:\theta[i]=0}$.
\item[$\qB_2$:]Given $\theta$ and $(\cert,(\TEPRF.\sk_{i,b})_{i\in[n],b\in\bit})$, $\qB_2$ parses $\cert=(e_i,d_i)_{i\in[n]}$, and computes $x^\prime_i=e_i\oplus d_i\cdot(\TEPRF.\sk_{i,0}\oplus\TEPRF.\sk_{i,1})$ for every $i\in[n]$ such that $\theta[i]=1$.
$\qB_2$ outputs $(x^\prime_i)_{i\in[n]:\theta[i]=1}$.
\end{description}

$\qB$ perfectly simulates $\hybi{4}$ for $\qA$ until $\qA$ outputs $\cert$.
Let $x,\theta\in\bit^n$ be the strings such that $\ket{\psi}=\ket{x}_\theta$. 
Suppose that $\UPFSKL.\DelVrfy(\dvk,\cert)=\top$ holds in the simulated experiment. Then, we have
\begin{align}
e_i=x[i]\oplus d_i \cdot (\TEPRF.\sk_{i,0}\oplus\TEPRF.\sk_{i,1})
\end{align}
 for every $i\in[n]$ such that $\theta[i]=1$.
Thus, conditioned that $\UPFSKL.\DelVrfy(\dvk,\cert)=\top$, $\qB$ perfectly simulates the remaining part of $\hybi{3}$ for $\qA$.
Moreover, if $\qA$ wins the simulated experiment (i.e., $\hybi{3}$ outputs $1$), we also have
\[
t^*_i=\TEPRF.\Eval(\TEPRF.\sk_{i,x[i]},s^*_i)
\] 
for every $i\in[n]$ such that $\theta[i]=0$. 
From these facts, whenever $\qA$ wins the simulated execution of $\hybi{3}$, $\qB$ outputs the correct answer in the experiment  $\SMoEBB(\qB,n)$. 
Thus, we have 
\[
\Pr[\hybi{3}=1]\le\Pr[\SMoEBB(\qB,n)=1]\le \negl(\secpar)
\]
where the latter inequality follows from the strong monogamy property of BB84 states (\Cref{thm:SMoE_BB84}) 
and $n=\omega(\secpar)$. 
\fi

We construct an adversary $\qB=(\qB_0,\qB_1)$ against the certified deletion property of BB84 states (\Cref{thm:CD_BB84}) as follows. 
\begin{description}
\item[$\qB_0$:]Given an $n$-qubit BB84 state
$\ket{\psi}=\ket{x}_{\theta}$, 
$\qB_0$ parses 
$\ket{\psi}=\ket{\psi_1}\tensor\cdots\tensor\ket{\psi_n}$ and generates  $\tilde{s}_i\gets \bit^\ell$ for $i\in[n]$ and
$(\TEPRF.\key_{i,0},\TEPRF.\key_{i,1})\gets\TEPRF.\qKG(1^\secp,\tilde{s}_i)$ for every $i\in[n]$.
Let $U_i$ be a unitary that maps $\ket{u}\ket{v}$ to $\ket{u}\ket{v\oplus\TEPRF.\sk_{i,u}}$ where $u\in \bit$ and $v\in \bit^{\ell_\key}$. 
$\qB_0$ applies $U_i$ to $\ket{\psi_i}\ket{0^{\ell_\key}}$ to obtain $\qsk_i$ for every $i\in[n]$.
$\qB_0$ then sets $\qsk\seteq (\qsk_1,\ldots,\qsk_n)$ 
     and sends $\qsk$ to $\qA$.  When $\qA$ outputs $\cert=(e_i,d_i)_{i\in[n]}$, let $\qstateA$ be the state of $\qA$ at this point. 
     $\qB_0$ computes $y_i=e_i\oplus d_i\cdot(\TEPRF.\sk_{i,0}\oplus\TEPRF.\sk_{i,1})$ for every $i\in[n]$. 
     $\qB_0$ outputs $y\seteq y_1\concat\ldots\concat y_n$ and  
    $\qstate{st}_{\qB}\seteq ((\tilde{s}_i)_{i\in[n]},(\TEPRF.\sk_{i,b})_{i\in[n],b\in\bit},\qstateA)$.
\item[$\qB_1$:] Given $\theta$, $(x[i])_{i\in[n]:\theta[i]=1}$,  and $\qstate{st}_{\qB}=( (\tilde{s}_i)_{i\in[n]},(\TEPRF.\sk_{i,b})_{i\in[n],b\in\bit},\qstateA)$, $\qB_1$ sets $$\dvk^\prime\seteq((y_i)_{i\in[n]:\theta[i]=1},\theta,\allowbreak (\TEPRF.\sk_{i,0},\TEPRF.\sk_{i,1})_{i\in[n]:\theta[i]=1}).$$
$\qB_1$ sets $s^*_i\seteq \tilde{s}_i$ for $i\in [n]$ such that $\theta[i]=0$ and generates 
$s^*_i\gets \bit^\ell$ for $i\in [n]$ such that $\theta[i]=1$.   
$\qB_1$ then executes $\qA$ with the input $\dvk^\prime$ and $s^*=s^*_1\concat\ldots\concat s^*_n$ and the state $\qstateA$, and obtains an output $t^\prime=t^\prime_1\concat\ldots\concat t^\prime_n$.  
By the ``different values on target'' property of $\TEPRF$, there is unique $b_i\in \bit$ such that \[
t^\prime_i=\TEPRF.\Eval(\TEPRF.\sk_{i,b_i},s^*_i)
\]  
for every $i\in[n]$ such that $\theta[i]=0$. 
(Remember that each $t^\prime_i$ is a single bit.)
$\qB_1$ sets $z_i\seteq b_i$ for every $i\in[n]$ such that $\theta[i]=0$ and $z_i\seteq 0$ for every $i\in[n]$ such that $\theta[i]=1$,\footnote{
For $i\in[n]$ such that $\theta[i]=1$. 
$z_i$ can be set arbitrarily since it is irrelevant to the winning condition of the experiment for the certified deletion property of BB84 states.} and outputs $z=z_1\concat\ldots\concat z_n$.
\end{description}

$\qB$ perfectly simulates $\hybi{3}$ for $\qA$. 
If $\qA$ wins the simulated experiment (i.e., $\hybi{3}$ outputs $1$), we have
\begin{align}
e_i=x[i]\oplus d_i \cdot (\TEPRF.\sk_{i,0}\oplus\TEPRF.\sk_{i,1})
\end{align}
 for every $i\in[n]$ such that $\theta[i]=1$ 
and 
\[
t^\prime_i=\TEPRF.\Eval(\TEPRF.\sk_{i,x[i]},s^*_i)
\] 
for every $i\in[n]$ such that $\theta[i]=0$.
In this case, $y_i=x[i]$ for every $i\in [n]$ such that $\theta[i]=1$ and $z_i=x[i]$ for every $i\in [n]$ such that $\theta[i]=0$. 
Thus, we have 
\[
\Pr[\hybi{3}=1]\le\Pr[\CDBB(\qB,n)=1]\le \negl(\secpar)
\] 
where the latter inequality follows from the certified deletion property of BB84 states (\Cref{thm:CD_BB84}) 
and $n=\omega(\log\secpar)$. 

\if0
$\qB$ perfectly simulates $\hybi{4}$ for $\qA$ until $\qA$ outputs $\cert$.
Let $x,\theta\in\bit^n$ be the strings such that $\ket{\psi}=\ket{x}_\theta$. 
Suppose that $\UPFSKL.\DelVrfy(\dvk,\cert)=\top$ holds in the simulated experiment. Then, we have
\begin{align}
e_i=x[i]\oplus d_i \cdot (\TEPRF.\sk_{i,0}\oplus\TEPRF.\sk_{i,1})
\end{align}
 for every $i\in[n]$ such that $\theta[i]=1$.
Thus, conditioned that $\UPFSKL.\DelVrfy(\dvk,\cert)=\top$, 
we have $y_i=x[i]$ for every $i\in [n]$ such that $\theta[i]=1$. 
In this case, $\qB$ perfectly simulates the remaining part of $\hybi{4}$ for $\qA$.
Moreover, if $\qA$ wins the simulated experiment (i.e., $\hybi{4}$ outputs $1$), we also have
\[
t^*_i=\TEPRF.\Eval(\TEPRF.\sk_{i,x[i]},s^*_i)
\] 
\mor{$t_i'$ would be better}
for every $i\in[n]$ such that $\theta[i]=0$. 
This implies that we have $z_i=x[i]$ for every $i\in [n]$ such that $\theta[i]=0$. 
Thus, whenever $\qA$ wins the simulated execution of $\hybi{4}$, $\qB$ outputs the correct answer in the experiment  $\CDBB(\qB,n)$. 
Thus, we have 
\[
\Pr[\hybi{4}=1]\le\Pr[\CDBB(\qB,n)=1]\le \negl(\secpar)
\]
where the latter inequality follows from the certified deletion property of BB84 states (\Cref{thm:CD_BB84}) 
and $n=\omega(\secpar)$. 
\fi

Combining the above, we have 
\[
\advb{\UPFSKL,\qA}{up}{vra}(1^\secp)\le \negl(\secpar).
\] 
This completes the proof of \cref{thm:UPF_UP-VRA}.
\end{proof}
By \Cref{lem:upf-prf,thm:UPF_UP-VRA}, we obtain \Cref{thm:PRF-SKL}. 


\section{Digital Signatures with Secure Key Leasing}\label{sec:const-DS-SKL-classical-certificate}
In this section, we prove the following theorem. 
\begin{theorem}\label{thm:DS-SKL}
If the SIS assumption holds,  then there exist RUF-VRA secure DS-SKL scheme with classical revocation.\footnote{See \Cref{sec:lattice_pre} for the precise meaning of the SIS assumption.} 
\end{theorem}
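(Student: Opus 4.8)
The plan is to instantiate the general framework of \Cref{sec-overview-framework} with the two building blocks identified in the overview: a TEPRF of input length $\ell=\omega(\log\secpar)$ and a coherently-signable constrained signature scheme $\CS$. Since $\SIS$ implies one-way functions, \Cref{thm:OWF_to_TEPRF} supplies the TEPRF for free, so the first real task is to build a coherently-signable $\CS$ from $\SIS$. I would follow the sketch in the last paragraph of \Cref{sec:overview}: apply (a variant of) Tsabary's conversion \cite{TCC:Tsabary17} to the $\SIS$-based homomorphic signature scheme of \cite{STOC:GorVaiWic15}, so that a signature on $\msg$ is a short solution $\vx$ with $\mA_\msg\vx=\vzero \bmod q$ and a constrained key for $f$ is a trapdoor of $\mA_\msg$ usable precisely when $f(\msg)=1$. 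Coherent-signability would then be established through the quantum Gaussian sampling of \cite{ITCS:Poremba23,TCC:AnaPorVai23}; the key point to check is that the Gaussian superposition over $\SIS$ solutions, \emph{including the ancilla garbage}, is independent of which trapdoor is used, so that $\QSign$ can be run coherently over $\alpha\ket{0}\ket{\sk_{f_0}}\ket{\psi_0}+\beta\ket{1}\ket{\sk_{f_1}}\ket{\psi_1}$ without collapsing it.

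With the building blocks fixed, correctness follows as sketched. For a message block $\prfinp_i$, the equality property of the TEPRF gives $\TEPRF.\Eval(\TEPRF.\sk_{i,0},\prfinp_i)=\TEPRF.\Eval(\TEPRF.\sk_{i,1},\prfinp_i)$ except when $\prfinp_i$ equals the target $\tlprfinp_i$, an event of probability $2^{-\ell}=\negl(\secpar)$; hence evaluating the TEPRF coherently leaves the key register essentially unentangled, and coherent-signability of $\CS$ then lets $\QSign$ output a valid classical signature while perturbing $\qsigk$ only negligibly in trace distance (via the gentle measurement lemma \cite{Winter99}). This yields both signature verification correctness and reusability with static signing keys. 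Deletion verification correctness is exactly the computation already done for PKE-SKL and UPF-SKL.

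For RUF-VRA security I would mirror the hybrid argument proving \Cref{thm:UPF_UP-VRA}, establishing the two framework properties and reducing to the certified deletion property of BB84 states (\Cref{thm:CD_BB84}). Indistinguishability from the normal challenge is identical to the UPF-SKL case: the special challenge $\hatm$ replaces each block with $\theta[i]=0$ by the TEPRF target $\tlprfinp_i$, and since for those positions only $\TEPRF.\sk_{i,x[i]}$ ever appears (the key state $\qsk_i$ is the computational-basis state $\ket{x[i]}\ket{\DS.\sk_{i,x[i]}}$, and $\dvk$ reveals only the $\theta[i]=1$ keys), differing point hiding makes $\hatm$ computationally indistinguishable from a uniform message given $\key$. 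The reduction $\qB=(\qB_0,\qB_1)$ to \Cref{thm:CD_BB84} is then structurally the same as before: $\qB_0$ embeds the received BB84 state into $\qsigk$ using the classical keys $\DS.\sk_{i,b}=(\TEPRF.\sk_{i,b},\CS.\sk_{i,b})$, converts $\cert=(e_i,d_i)_i$ into $y_i=e_i\oplus d_i\cdot(\DS.\sk_{i,0}\oplus\DS.\sk_{i,1})$, and $\qB_1$ runs $\qA$ on the special challenge and reads off $z_i$ for the computational-basis positions.

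The extraction step is where DS-SKL departs from UPF-SKL, and I expect it to be the main obstacle in the security argument. Given an accepting forgery $\sigma^\prime=(\prfout^\prime_i,\CS.\sigma^\prime_i)_i$ on $\hatm$, I must argue that $\prfout^\prime_i=\TEPRF.\Eval(\TEPRF.\sk_{i,x[i]},\tlprfinp_i)$ for every $i$ with $\theta[i]=0$; this does not follow from equivocality but from unforgeability of $\CS$. Indeed $\CS.\sk_{i,x[i]}$ is constrained to $F[\TEPRF.\sk_{i,x[i]}]$, which accepts $(u,v)$ iff $v=\TEPRF.\Eval(\TEPRF.\sk_{i,x[i]},u)$, so a valid $\CS$ signature on $\tlprfinp_i\|\prfout^\prime_i$ with $\prfout^\prime_i\neq\TEPRF.\Eval(\TEPRF.\sk_{i,x[i]},\tlprfinp_i)$ is a forgery on a message rejected by $F[\TEPRF.\sk_{i,x[i]}]$. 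I would rule this out position by position by a reduction that obtains $\CS.\sk_{i,x[i]}$ from the $\CS$ challenger and samples everything else itself — crucially $\CS.\sk_{i,1-x[i]}$ is never needed for $\theta[i]=0$ and stays hidden — with a union bound over the at most $n$ such positions. Once $\prfout^\prime_i$ is pinned down, different-values-on-target of the TEPRF yields a unique $b_i$ with $\prfout^\prime_i=\TEPRF.\Eval(\TEPRF.\sk_{i,b_i},\tlprfinp_i)$, and $b_i=x[i]$; thus $\qB$ recovers $x[i]$ on all computational-basis positions and $y_i=x[i]$ on the Hadamard positions whenever $\qA$ wins, so $\qB$ wins the certified deletion game. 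As \Cref{thm:CD_BB84} bounds that probability by $2^{-\Omega(n)}$ with $n=\omega(\log\secpar)$, RUF-VRA security follows.
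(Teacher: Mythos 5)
Your proposal matches the paper's proof in all essential respects: TEPRF from OWF (hence from SIS), coherently-signable constrained signatures from SIS via the Tsabary/GVW route with quantum Gaussian sampling, the same correctness and reusability argument via equality plus coherent signability, and the same security hybrids (CS selective single-key unforgeability to pin down $\prfout^\prime_i$ on the $\theta[i]=0$ positions, differing point hiding to plant the TEPRF targets into the challenge message, then a reduction to the certified deletion property of BB84 states using different-values-on-target to extract $x[i]$). The only difference is presentational — the paper applies the CS-unforgeability hybrid before the differing-point-hiding hybrid rather than after — and this does not affect the argument.
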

\mor{You use TEPRF as well. TEPRF is constructed from qSIS?}\takashi{I added an explanation below.}
Our construction of DS-SKL is based on a primitive called \emph{coherently-signable constrained signatures} which we introduce in this paper in addition to TEPRFs. 
In \Cref{sec:CS_def}, we define coherently-signable constrained signatures. In \Cref{sec:const_DS-SKL}, we construct RUF-VRA secure DS-SKL scheme with classical revocation using coherently-signable constrained signatures. 
We defer the construction of coherently-signable constrained signatures based on the SIS assumption 
to  \Cref{sec:CS}. 
 We remark that TEPRFs exist if the SIS assumption holds since SIS assumption implies OWFs and OWFs imply TEPRFs by \Cref{thm:OWF_to_TEPRF}.

\subsection{Coherently-Signable Constrained Signatures}\label{sec:CS_def}
\begin{definition}[Constrained signatures]\label{def:cs}
A constrained signatures (CS) with the message space $\Ms$ and constraint class $\Fs= \setbk{f \colon \Ms \ra \zo{}}$ is a tuple of four algorithms $(\Setup,\Constrain,\Sign,\Vrfy)$.
\begin{description}
\item[$\Setup(1^\secp) \ra (\vk,\msk)$:] The setup algorithm is a PPT algorithm that takes as input the security parameter $1^\secp$, and outputs a master signing key $\msk$ and a verification key $\vk$. 
\item[$\Constrain(\msk,f)\ra \sigk_f$:] The constrain algorithm is a PPT algorithm that takes as input the master signing key $\msk$ and a constraint $f\in \Fs$, and outputs a constrained signing key $\sigk_f$.
\item[$\Sign(\sigk_f,\msg) \ra \sigma$:] The signing algorithm is a PPT algorithm that takes as input a constrained key $\sigk_f$ and an message $\msg \in \Ms$, and outputs a signature $\sigma$.
\item[$\Vrfy(\vk,\msg,\sigma) \ra \top/\bot$:] The verification algorithm is a deterministic classical polynomial-time algorithm that takes as input a verification key $\vk$,  message $\msg\in\Ms$, and signature $\sigma$, and outputs $\top$ or $\bot$. 
\end{description}
\begin{description}
\item[Correctness:] 
For any $\msg\in \Ms$ and $f\in \Fs$ such that $f(\msg)=1$, \takashi{I swapped the roles of $f(\msg)=0$ and $f(\msg)=1$. I fixed the whole paper based on this modification, but if you find it's not done somewhere, please correct it.} we have 
\begin{align}
    \Pr\left[\Vrfy(\vk,\msg,\sigma)=\top:
    \begin{array}{l}
    (\vk,\msk)\gets \Setup(1^\secp)\\
    \sigk_f \gets \Constrain(\msk,f)\\
    \sigma \gets \Sign(\sigk_f,\msg)
    \end{array}
    \right]\ge 1-\negl(\secp).
\end{align}
\end{description}
\end{definition}

\begin{definition}[Selective single-key security]
We say that a CS scheme satisfies selective single-key security if for any stateful QPT $\qA$, we have 
\begin{align}
    \Pr\left[\Vrfy(\vk,\msg,\sigma)=\top
    \land f(\msg)=0
    :
    \begin{array}{l}
    f \gets \qA(1^\secpar)\\
    (\vk,\msk)\gets \Setup(1^\secp)\\
    \sigk_f \gets \Constrain(\msk,f)\\
    (\msg,\sigma) \gets \qA(\vk,\sigk_f)
    \end{array}
    \right]\le \negl(\secp).
\end{align} 
\end{definition}

We introduce an additional property which we call coherent signability. Roughly, this ensures that a superposition of two constrained signing keys can be used to generate a signature almost without collapsing the state. Looking ahead, this property is needed to achieve reusability of DS-SKL constructed in \Cref{sec:const-DS-SKL-classical-certificate}. 
\begin{definition}[Coherent-signability]\label{def:coherent_signing}
\takashi{I changed the definition to allow for entanglement with other registers.}
We say that a CS scheme is coherently-signable if for any polynomial $L=L(\secp)$,  there is a QPT algorithm $\QSign$ that takes a quantum state $\ket{\psi}$ and a classical message $\msg\in \Ms$ and outputs a quantum state $\ket{\psi'}$ and a signature $\sigma$, satisfying the following: 
\begin{enumerate}
\item \label{item:coherent_one}
Let $f\in \Fs$, 
$(\vk,\msk)\in \Setup(1^\secp)$, and   
\mor{$(\vk,\msk)\gets \Setup(1^\secp)$? Ika douyou} \takashi{$(\vk,\msk)\in \Setup(1^\secp)$ means that $(\vk,\msk)$ is in the range of $\Setup(1^\secp)$. I believe this notation is common in cryptography. I added the definition in Preliminaries.}  
$\sigk_{f}\in \Constrain(\msk,f)$, 
the output distribution of $\QSign\left(\ket{z
}\ket{\sigk_f},\msg\right)$ is identical to that of $\Sign(\sigk_f,\msg)$ for any $z\in \bit^L$. 
\item \label{item:coherent_two}
\if0
For any 
$n\in \mathbb{N}$,  
$f_1,f_2,\ldots,f_n\in \Fs$, 
$(\vk,\msk)\in \Setup(1^\secp)$,  
$\sigk_{f_i}\in \Constrain(\msk,f_i)$ for $i\in[n]$,  
and $\msg\in \Ms$ such that $f_i(\msg)=1$ for every $i\in[n]$,  
let 
$\ket{\psi}$ be a state of the form $\ket{\psi}=\sum_{i\in [n]}\alpha_i\ket{i}\ket{\sigk_{f_i}}$ 
for $\alpha_i\in \CC$ such that $\sum_{i\in [n]}|\alpha_i|^2=1$. 
Suppose that we run $\QSign(\ket{\psi},\msg)$ and let $\ket{\psi'}$ be the state after the execution. 
Then the trace distance between $\ket{\psi}$ and $\ket{\psi'}$ is $\negl(\secpar)$.  
\fi
\if0
For any $f_0,f_1\in \Fs$, $(\vk,\msk)\in \Setup(1^\secp)$,  $\sigk_{f_0}\in \Constrain(\msk,f_0)$, 
$\sigk_{f_1}\in \Constrain(\msk,f_1)$, and $\msg\in \Ms$ such that $f_0(\msg)=f_1(\msg)=1$, let $\ket{\psi}$ be a state of the form $\ket{\psi}=\alpha\ket{0}\ket{\sigk_{f_0}}+ \beta\ket{1}\ket{\sigk_{f_1}}$  \takashi{I found that this was not enough since our quantum signing key is also entangled with keys of TEPRF. Need to modify the definitions. The proofs will be essentially unchanged.
} for $\alpha,\beta\in \CC$ such that $|\alpha|^2+|\beta|^2=1$. Suppose that we run $\QSign(\ket{\psi},\msg)$ and let $\ket{\psi'}$ be the state after the execution. \mor{state after the execution is ambigious. For example, junk register is also included? How about $\QSign$ output a signature as well as a quantum state, and require that the quantum state is close to $|\psi\rangle$?}Then the trace distance between $\ket{\psi}$ and $\ket{\psi'}$ is $\negl(\secpar)$. 
\fi
For any family $f_z\in \Fs$ for $z\in \bit^L$,  $(\vk,\msk)\in \Setup(1^\secp)$,  $\sigk_{f_z}\in \Constrain(\msk,f_z)$ for  $z\in \bit^L$,   and $\msg\in \Ms$ such that $f_z(\msg)=1$  for all $z\in \bit^L$, let $\ket{\psi}$ be a state of the form $\ket{\psi}=\sum_{z\in \bit^L}\alpha_z\ket{z}\ket{\sigk_{f_z}}$ for $\alpha_z\in \CC$ such that $\sum_{z\in \bit^L}|\alpha_z|^2=1$. 
Suppose that we run $(\ket{\psi'},\sigma)\gets\QSign(\ket{\psi},\msg)$. 
Then we have 
\begin{align}
    \|\ket{\psi}\bra{\psi}-\ket{\psi'}\bra{\psi'}\|_{tr}=\negl(\secp).
\end{align}
\end{enumerate}
\end{definition}

We prove the following theorem in \Cref{sec:CS}. 
\begin{theorem}\label{thm:CS}
Assuming the SIS assumption, for any polynomials $\ell=\ell(\secp)$ and $\iota=\iota(\secp)$, 
there is a selectively single-key secure and coherently-signable CS scheme for 
the function class $\mathcal{F}$ that consists of functions expressed as circuits of description size $\ell$ and input length $\iota$
\mor{This is
talking about $\cF$?} \takashi{I clarified it.}
\end{theorem}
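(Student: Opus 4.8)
The plan is to follow the blueprint sketched in the overview: instantiate Tsabary's generic conversion from homomorphic signatures to constrained signatures \cite{TCC:Tsabary17} with the SIS-based homomorphic signature scheme of \cite{STOC:GorVaiWic15}, and then argue that the resulting scheme is coherently-signable by invoking quantum Gaussian sampling \cite{ITCS:Poremba23,TCC:AnaPorVai23}. First I would recall the structure of the scheme. The verification key $\vk$ fixes a matrix $\mA$ (together with the gadget matrix $\mG$ and the matrices encoding circuits of description size $\ell$ and input length $\iota$) so that every message $\msg\in\Ms$ induces, via homomorphic evaluation, a matrix $\mA_\msg$; a signature on $\msg$ is a short vector $\vx$ with $\mA_\msg\vx=\vzero \bmod q$ (in the actual scheme, a short preimage of a fixed target syndrome). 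A constrained signing key $\sigk_f$ for a function $f$ is a short trapdoor that, for every $\msg$ with $f(\msg)=1$, can be homomorphically processed into a trapdoor of $\mA_\msg$, which in turn allows Gaussian preimage sampling via $\SamPre$ to produce $\vx$. The exact instantiation differs slightly from the literal Tsabary--GVW composition; the discrepancy is documented in \Cref{rem:CS_difference}.

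Second, I would establish correctness and selective single-key security. Correctness is immediate from the correctness of homomorphic evaluation and of Gaussian preimage sampling: when $f(\msg)=1$, the constrained key yields a trapdoor of $\mA_\msg$, so $\Sign$ outputs a short $\vx$ with $\mA_\msg\vx=\vzero$, which $\Vrfy$ accepts. For selective single-key security I would give a reduction to SIS following the GVW security proof. The reduction embeds an SIS challenge matrix into $\mA$ and programs the circuit-encoding matrices so that it can produce a valid constrained key for the adversary's declared $f$ \emph{without} knowing a trapdoor of $\mA_\msg$ for any $\msg$ with $f(\msg)=0$. Then any forgery $(\msg,\sigma)$ with $f(\msg)=0$ yields a short nonzero vector in the kernel of the challenge matrix, contradicting SIS.

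The crux---and the part I expect to be the main obstacle---is coherent-signability (\Cref{def:coherent_signing}). The key tool is that, given a trapdoor $\td$ of a matrix $\mB$, quantum Gaussian sampling prepares a discrete-Gaussian \emph{superposition} $\sum_{\vx}\rho(\vx)\ket{\vx}$ over preimages of a fixed syndrome of $\mB$, together with an ancilla (garbage) register, and crucially the resulting state---superposition and garbage alike---is statistically independent of the particular trapdoor $\td$ used to prepare it. To implement $\QSign$ on an input $\sum_{z\in\bit^L}\alpha_z\ket{z}\ket{\sigk_{f_z}}$ with all $f_z(\msg)=1$, I would (i) coherently and reversibly derive, in an ancilla register, a trapdoor of the \emph{same} matrix $\mA_\msg$ from each branch's $\sigk_{f_z}$ (a unitary homomorphic computation, with its own garbage left entangled with the control register), (ii) run quantum Gaussian sampling controlled on that trapdoor to obtain the trapdoor-independent superposition over preimages $\vx$ of $\mA_\msg$, (iii) measure the preimage register to obtain the classical signature $\sigma=\vx$, and (iv) uncompute steps (i)--(ii). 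Because the Gaussian superposition and its garbage do not depend on which trapdoor was used, and all branches target the identical $\mA_\msg$, the measurement in (iii) reveals no which-branch information, and the uncomputation in (iv) cleanly disentangles the derived-trapdoor and garbage registers from the control register. Hence the post-measurement state is negligibly close to the original $\sum_z\alpha_z\ket{z}\ket{\sigk_{f_z}}$ in trace distance, giving \Cref{item:coherent_two}; and \Cref{item:coherent_one} holds because measuring the Gaussian superposition reproduces exactly the discrete-Gaussian distribution of the classical $\SamPre$, so the output is identically distributed to $\Sign(\sigk_f,\msg)$.

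The main technical difficulty is precisely the requirement that \emph{all} ancilla/garbage produced along the way be trapdoor-independent or be cleanly uncomputed, since any residual which-path information about $z$ leaking into the environment would collapse the superposition of signing keys and destroy reusability of the DS-SKL scheme built on top. I would therefore have to verify the exact form of the quantum Gaussian sampling guarantee of \cite{ITCS:Poremba23,TCC:AnaPorVai23}, confirm that the homomorphic trapdoor-derivation step can be made unitary with trapdoor-independent garbage, and carefully track the (negligibly small) statistical errors from approximate Gaussian sampling in order to bound the final trace distance by $\negl(\secp)$.
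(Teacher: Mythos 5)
Your proposal is correct and follows essentially the same route as the paper: the scheme is the Tsabary/GVW composition (with the extra matrix $\mC$ added so that the evaluated matrix $[\mA\mid\mB\mH+\mC]$ is uniform and quantum Gaussian sampling applies for all but a negligible fraction of keys), selective single-key security is a GVW-style reduction to SIS, and coherent signability is argued exactly as you describe—coherently deriving the trapdoor $\mR_z\widehat{\mH}+\mR_z'$ of the common matrix, running trapdoor-independent quantum Gaussian sampling, measuring, and uncomputing. The technical difficulty you flag (trapdoor-independence of the garbage register) is precisely the point the paper addresses by generalizing the quantum Gaussian sampling theorem of Ananth et al. to superpositions of trapdoors.
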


\subsection{Construction of RUF-VRA Secure DS-SKL}\label{sec:const_DS-SKL} 
Let $\TEPRF=(\TEPRF.\ObvKG,\TEPRF.\KG,\TEPRF.\Eval)$ be a TEPRF with input length $\ell=\omega(\log \secpar)$. Let $\ell_\key$ be the length of a key generated by $\TEPRF.\KG(1^\secpar,s)$ for $s\in \bit^\ell$.\footnote{We can assume that all keys are of the same length w.l.o.g.} 
Let $\CS=(\CS.\Setup,\CS.\Constrain,\CS.\Sign,\CS.\Vrfy)$ be a constrained signature scheme for a function class $\Fs$ 
that satisfies selective single-key security and coherent signability with the coherent signing algorithm $\CS.\QSign$. 
We require that $\Fs$ contains the function $F[\TEPRF.\key]$ that appears in the description of the following scheme for any $\TEPRF.\key$ generated by $\TEPRF.\KG$.  
Let $\ell_\sigk$ be the length of a constrained signing key generated by $\CS.\Constrain(\msk,f)$ for $(\vk,\msk)\in \Setup(1^\secpar)$ and $f\in \Fs$.\footnote{We can assume that all constrained signing keys are of the same length w.l.o.g.} 
Then, we construct a DS-SKL scheme   $\DSSKL=(\DSSKL.\qKG,\DSSKL.\qSign, \DSSKL.\SigVrfy,\allowbreak\DSSKL.\qDel,\DSSKL.\DelVrfy)$ with classical revocation for $n\ell$-bit messages where $n=\omega(\log \secpar)$ as follows.

\begin{description}
\item[$\DSSKL.\qKG(1^\secp)$:] $ $
\begin{itemize}
    \item Generate $x,\theta\gets\bit^n$.
    \item $\tilde{s}_i\gets \bit^\ell$ for $i\in [n]$.
    \item Generate $(\TEPRF.\key_{i,0},\TEPRF.\key_{i,1})\gets\TEPRF.\KG(1^\secp,\tilde{s}_i)$ for $i\in [n]$.
    \item Generate $(\CS.\msk_i,\CS.\vk_i)\gets \CS.\Setup(1^\secpar)$ for $i\in[n]$. 
    \item Generate $\CS.\sigk_{i,b}\gets \CS.\Constrain(\CS.\msk_{i},F[\TEPRF.\key_{i.b}])$ for $i\in[n]$ and $b\in \bit$
    where $F[\TEPRF.\key_{i.b}]$ is a function that takes $(x,y)\in \bit^{\ell}\times \bit$ as input and outputs $1$ if $y=\TEPRF.\Eval(\TEPRF.\key_{i,b},x)$ 
    and otherwise outputs $0$. 
    \item Generate
    {\ifnum\llncs=1\footnotesize\fi
    \begin{align}
    \qsigk_i\seteq
    \begin{cases}
    \ket{x[i]}\ket{\TEPRF.\key_{i,x[i]}}\ket{\CS.\sigk_{i,x[i]}} & (\textrm{if~}\theta[i]=0)\\
    \frac{1}{\sqrt{2}}\left(\ket{0}\ket{\TEPRF.\key_{i,0}}\ket{\CS.\sigk_{i,0}}+(-1)^{x[i]}\ket{1}\ket{\TEPRF.\key_{i,1}}\ket{\CS.\sigk_{i,1}}\right) & (\textrm{if~}\theta[i]=1).
    \end{cases}
    \end{align}
    }
    \item Output 
    a quantum signing key 
    \[
    \qsigk\seteq(\qsigk_1,\ldots, \qsigk_n),
    \]
    a signature verification key
    \[
    \sigvk\seteq (\CS.\vk_1,\ldots,\CS.\vk_n).
    \]
    and a deletion verification key
    \[\dvk\seteq((x[i])_{i\in[n]:\theta[i]=1},\theta,(\TEPRF.\key_{i,0},\TEPRF.\key_{i,1},\CS.\sigk_{i,0},\CS.\sigk_{i,1})_{i\in[n]:\theta[i]=1}).
    \]
\end{itemize}

\item[$\DSSKL.\qSign(\qsigk,\msg)$:] $ $
\begin{itemize}
    \item Parse $\qsigk=(\qsigk_1,\ldots, \qsigk_n)$ 
    and $\msg=s_1\concat\ldots \concat s_n$ where $s_i\in \bit^\ell$ for each $i\in[n]$.
   \item Let $E_i$ be a unitary that works as follows: \[\ket{u}\ket{\TEPRF.\key}\ket{\CS.\sigk}\ket{v} \mapsto \ket{u}\ket{\TEPRF.\key}\ket{\CS.\sigk}\ket{v\oplus\TEPRF.\Eval(\TEPRF.\key,s_i)}\]
   where $u,v\in \bit$, $\TEPRF.\key\in \bit^{\ell_\key}$, and $\CS.\sigk\in \bit^{\ell_\sigk}$.  
   \item Apply $E_i$ to $\qsigk_i\tensor\ket{0}$ and measure the rightmost register to obtain $t_i$ for every $i\in[n]$. 
   Let $\widehat{\qsigk}_i$ be the state of the signing key after the measurement. 
   \item Generate $(\qsigk^{\prime}_i,\sigma_i)\gets \CS.\QSign(\widehat{\qsigk}_i,s_i\concat t_i)$ for $i\in [n]$ 
   where the first two registers of $\widehat{\qsigk}_i$ are regarded as the register for $z$ in \Cref{def:coherent_signing}. 
   \mor{In the definition of $\QSign$, it gets two registers as input, but not it gets three registers. Do you mean
   $\QSign$ acts on the first and third register?}\takashi{I clarified it.}
  \item Output 
  $\qsigk^\prime=(\qsigk^{\prime}_1,\ldots,\qsigk^{\prime}_n)$ and
  $\sigma\seteq (t_i,\sigma_i)_{i\in[n]}$. 
\end{itemize}

\item[$\DSSKL.\SigVrfy(\sigvk,\msg,\sigma)$:] $ $
\begin{itemize} 
    \item Parse $\sigvk=(\CS.\vk_1,\ldots,\CS.\vk_n)$, $\msg=s_1\concat\ldots \concat s_n$ where $s_i\in \bit^\ell$ for each $i\in[n]$, and $\sigma=(t_i,\sigma_i)_{i\in[n]}$. 
    \item  Output $\top$ if it holds that $\CS.\Vrfy(\vk_i,s_i\concat t_i,\sigma_i)=\top$ for every $i\in [n]$. Otherwise output $\bot$. 
\end{itemize}

\item[$\DSSKL.\qDel(\qsigk)$:] $ $
\begin{itemize}
\item Parse $\qsigk= (\qsigk_1,...,\qsigk_n)$.
\item Measure $\qsigk_i$ in the Hadamard basis and obtain $(e_i,d_i,c_i)$ for every $i\in[n]$.
\item Output $\cert\seteq (e_i,d_i,c_i)_{i\in [n]}$. 
\end{itemize}

\item[$\DSSKL.\DelVrfy(\dvk,\cert)$:] $ $
\begin{itemize}
    \item Parse $\dvk=((x[i])_{i\in[n]:\theta[i]=1},\theta,(\TEPRF.\key_{i,0},\TEPRF.\key_{i,1},\CS.\sigk_{i,0},\CS.\sigk_{i,1})_{i\in[n]:\theta[i]=1})$
    and $\cert=(e_i,d_i,c_i)_{i\in [n]}$. 
    \item Output $\top$ if it holds that 
    
\[e_i=x[i]\oplus d_i\cdot (\TEPRF.\key_{i,0}\oplus \TEPRF.\key_{i,1})
\oplus 
c_i\cdot (\CS.\sigk_{i,0}\oplus \CS.\sigk_{i,1})
\]
for every $i\in [n]$ such that $\theta[i]=1$. Otherwise, output $\bot$. 
\end{itemize}
\end{description}

\paragraph{\bf Signature verification correctness.}
We show that 
for any $\msg=s_1\concat\ldots\concat s_n\in \bit^{n\ell}$,  
$\CS.\Vrfy(\vk_i,s_i\concat t_i,\sigma_i)=\top$ holds for every $i\in[n]$ with overwhelming probability 
where 
$(\qsigk,\sigvk,\dvk)\gets\DSSKL.\qKG(1^\secp)$,   $t_i$ and $\sigma_i$ are computed as in $\DSSKL.\qSign(\qsigk,\msg)$. 

We first consider $i\in[n]$ such that $\theta[i]=0$. In this case, $\qsigk_i=\ket{x[i]}\ket{\TEPRF.\sk_{i,x[i]}}\ket{\CS.\sigk_{i,x[i]}}$. 
Then we have $t_i=\TEPRF.\Eval(\TEPRF.\sk_{i,x[i]},s_i)$ 
and thus we have $F[\TEPRF.\sk_{i,x[i]}](s_i\concat t_i)=1$. 
Thus, by the correctness of $\CS$, we have $\CS.\Vrfy(\vk_i,s_i\concat t_i,\sigma_i)=\top$. 

We next consider $i\in[n]$ such that $\theta[i]=1$. In this case, $\qsigk_i=\frac{1}{\sqrt{2}}(\ket{0}\ket{\TEPRF.\sk_{i,0}}\ket{\CS.\sigk_{i,0}}+(-1)^{x[i]}\ket{1}\ket{\TEPRF.\sk_{i,1}}\ket{\CS.\sigk_{i,1}})$.
Note that by the equality property of $\TEPRF$, we have $\Eval(\TEPRF.\sk_{i,0},s_i)=\Eval(\TEPRF.\sk_{i,1},s_i)$ unless $s_i=\tilde{s}_i$ where $\tilde{s}_i$ is the value chosen in $\DSSKL.\qKG$.  
In this case, we have $t_i=\Eval(\TEPRF.\sk_{i,0},s_i)=\Eval(\TEPRF.\sk_{i,1},s_i)$. 
Then we have $F[\TEPRF.\sk_{i,b}](s_i\concat t_i)=1$ for $b\in \bit$.  
Thus, by the correctness of $\CS$, we have $\CS.\Vrfy(\vk,s_i\concat t_i,\sigma_i)=\top$. 
Since the probability that $s_i=\tilde{s}_i$  is $2^{-\ell}\le \negl(\secpar)$, 
the above holds with overwhelming probability also in this case.

\paragraph{Reusability with static signing keys.} 
By the above proof of signature verification correctness, 
the measurement of $t_i$ does not collapse $\qsigk_i$ except for a negligible probability. Moreover, by the coherent signability, $\QSign$ only negligibly affects the state of $\qsigk_i$. Thus, the reusability with static signing keys holds.  

\paragraph{\bf Deletion verification correctness.}
We show that $e_i=x[i]\oplus d_i \cdot (\TEPRF.\sk_{i,0}\oplus\TEPRF.\sk_{i,1})
\oplus c_i\cdot (\CS.\sigk_{i.0}\oplus \CS.\sigk_{i,1})$ holds for every $(e_i,d_i,c_i)$ for $i\in[n]$ such that $\theta[i]=1$ where $(\qsigk,\sigvk,\dvk)\gets\DSSKL.\qKG(1^\secp)$ and 
$(e_i,d_i,c_i)_{i\in [n]}\gets \DSSKL.\qDel(\qsigk)$. 
For every $i\in[n]$ such that $\theta[i]=1$, we have 
\ifnum\llncs=0 $\qsigk_i=\frac{1}{\sqrt{2}}(\ket{0}\ket{\TEPRF.\sk_{i,0}}\ket{\CS.\sigk_{i,0}}+(-1)^{x[i]}\ket{1}\ket{\TEPRF.\sk_{i,1}}\ket{\CS.\sigk_{i,1}})$.
\else
\[\qsigk_i=\frac{1}{\sqrt{2}}(\ket{0}\ket{\TEPRF.\sk_{i,0}}\ket{\CS.\sigk_{i,0}}+(-1)^{x[i]}\ket{1}\ket{\TEPRF.\sk_{i,1}}\ket{\CS.\sigk_{i,1}}).\]
\fi
If we apply the Hadamard gate to every qubit of $\qsigk_i$, we obtain 
\ifnum\llncs=0
\begin{align}
&\frac{1}{\sqrt{2}}\left(
\begin{array}{ll}
\frac{1}{\sqrt{2^{\ell_\key+\ell_\sigk+1}}}\sum_{e_i,d_i,c_i} (-1)^{d_i\cdot\TEPRF.\sk_{i,0}\oplus c_i\cdot \CS.\sigk_{i,0}}\ket{e_i}\ket{d_i}\ket{c_i}\\
+
\frac{1}{\sqrt{2^{\ell_\key+\ell_\sigk+1}}}\sum_{e_i,d_i,c_i} (-1)^{x[i]\oplus e_i\oplus d_i\cdot\TEPRF.\sk_{i,1}\oplus c_i\cdot \CS.\sigk_{i,1}}\ket{e_i}\ket{d_i}\ket{c_i}
\end{array}
\right)
\\
=&
\frac{1}{\sqrt{2^{\ell_\key+\ell_\sigk}}}\sum_{
\substack{
(e_i,d_i):\\
d_i\cdot\TEPRF.\sk_{i,0}\oplus c_i\cdot \CS.\sigk_{i,0}=x[i]\oplus e_i\oplus d_i\cdot\TEPRF.\sk_{i,1}\oplus \CS.\sigk_{i,1}
}}(-1)^{d_i\cdot\TEPRF.\sk_{i,0}\oplus c_i\cdot \CS.\sigk_{i,0}}.\ket{e_i}\ket{d_i}\ket{c_i}.
\end{align}
\else
\begin{align}
&\frac{1}{\sqrt{2}}\left(
\begin{array}{ll}
\frac{1}{\sqrt{2^{\ell_\key+\ell_\sigk+1}}}\sum_{e_i,d_i,c_i} (-1)^{d_i\cdot\TEPRF.\sk_{i,0}\oplus c_i\cdot \CS.\sigk_{i,0}}\ket{e_i}\ket{d_i}\ket{c_i}\\
+
\frac{1}{\sqrt{2^{\ell_\key+\ell_\sigk+1}}}\sum_{e_i,d_i,c_i} (-1)^{x[i]\oplus e_i\oplus d_i\cdot\TEPRF.\sk_{i,1}\oplus c_i\cdot \CS.\sigk_{i,1}}\ket{e_i}\ket{d_i}\ket{c_i}
\end{array}
\right)
\\
=&
\frac{1}{\sqrt{2^{\ell_\key+\ell_\sigk}}}\sum_{
(e_i,d_i)\in S}(-1)^{d_i\cdot\TEPRF.\sk_{i,0}\oplus c_i\cdot \CS.\sigk_{i,0}}.\ket{e_i}\ket{d_i}\ket{c_i}
\end{align}
where $S$ consists of all $(e_i,d_i)$ satisfying 
\[
d_i\cdot\TEPRF.\sk_{i,0}\oplus c_i\cdot \CS.\sigk_{i,0}=x[i]\oplus e_i\oplus d_i\cdot\TEPRF.\sk_{i,1}\oplus \CS.\sigk_{i,1}. 
\]\fi
Thus, we have 
\ifnum\llncs=0 $e_i=x[i]\oplus d_i \cdot (\TEPRF.\sk_{i,0}\oplus\TEPRF.\sk_{i,1})\oplus c_i\cdot (\CS.\sigk_{i,0}\oplus \CS.\sigk_{i,1})$  
\else
\[e_i=x[i]\oplus d_i \cdot (\TEPRF.\sk_{i,0}\oplus\TEPRF.\sk_{i,1})\oplus c_i\cdot (\CS.\sigk_{i,0}\oplus \CS.\sigk_{i,1})\]
\fi
for every $i\in[n]$ such that $\theta[i]=1$.

\paragraph{\bf Security.} 

\begin{theorem}\label{thm:DS_RUF-VRA}
If $\TEPRF$ satisfies the properties of \Cref{def:TEPRF}, then $\DSSKL$ is RUF-VRA secure. 
\end{theorem}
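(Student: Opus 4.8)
The plan is to follow the same two-phase strategy as in the proofs of \Cref{thm:PKE_OW-VRA} and \Cref{thm:UPF_UP-VRA}: first walk through a short sequence of hybrids that turns the random challenge message $\msg^*$ into a \emph{special} one whose $\theta[i]=0$ blocks coincide with the differing points $\tilde{s}_i$ of the TEPRFs, and then reduce to the certified deletion property of BB84 states (\Cref{thm:CD_BB84}). Let $\qA$ be a QPT adversary against RUF-VRA security, write $\msg^*=s^*_1\concat\cdots\concat s^*_n$, and let $\hybi{0}$ be the real experiment $\expb{\DSSKL,\qA}{ruf}{vra}(1^\secp)$.

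First I would introduce $\hybi{1}$, which samples $\msg^*$ at the outset and, for every $i$ with $\theta[i]=0$, generates the TEPRF key pair with target $\tilde{s}_i\seteq s^*_i$ rather than a fresh random target. Indistinguishability from $\hybi{0}$ follows from the differing point hiding of $\TEPRF$ (\Cref{def:TEPRF}) via a hybrid over the $\theta[i]=0$ positions: for such $i$ the adversary's view contains only $\TEPRF.\key_{i,x[i]}$, since both the computational-basis component $\ket{x[i]}\ket{\TEPRF.\key_{i,x[i]}}\ket{\CS.\sigk_{i,x[i]}}$ and $\dvk$ are independent of $\TEPRF.\key_{i,1-x[i]}$, so the distinguisher can embed its single challenge key at position $i$ using $s^*_i$ as the alternative target. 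I would then add a purely syntactic $\hybi{2}$ that reorders the sampling (draw $\tilde{s}_i$ at the start, and only after receiving $\cert$ set $s^*_i\seteq\tilde{s}_i$ for $\theta[i]=0$ and draw $s^*_i$ freshly for $\theta[i]=1$); this leaves the distribution unchanged, and it is what lets the eventual reduction fix the targets before seeing $\theta$.

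The genuinely new step, absent from the PRF proof, is $\hybi{3}$, which outputs $0$ whenever $\qA$'s forgery $\sigma'=(t'_i,\sigma'_i)_{i\in[n]}$ verifies yet $t'_i\ne\TEPRF.\Eval(\TEPRF.\key_{i,x[i]},s^*_i)$ for some $i$ with $\theta[i]=0$. I would bound the probability of this abort by the selective single-key security of $\CS$, again through a hybrid over the $\theta[i]=0$ positions: the reduction declares the constraint $f=F[\TEPRF.\key_{i,x[i]}]$, receives $\CS.\vk_i$ and $\CS.\sigk_{i,x[i]}$ from the $\CS$ challenger, generates everything else itself (it never needs $\CS.\sigk_{i,1-x[i]}$), and simulates the experiment. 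If $\qA$ returns a verifying $\sigma'_i$ on $s^*_i\concat t'_i$ with $t'_i\ne\TEPRF.\Eval(\TEPRF.\key_{i,x[i]},s^*_i)$, then $F[\TEPRF.\key_{i,x[i]}](s^*_i,t'_i)=0$ and the reduction wins, so $\lvert\Pr[\hybi{2}=1]-\Pr[\hybi{3}=1]\rvert\le\negl(\secp)$.

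Finally I would build $\qB=(\qB_0,\qB_1)$ against \Cref{thm:CD_BB84} exactly as in the PKE and PRF reductions: $\qB_0$ embeds the BB84 state $\ket{x^\theta}$ into $\qsigk$ by writing self-generated TEPRF and CS keys into the registers with the appropriate unitaries, runs $\qA$, and upon receiving $\cert=(e_i,d_i,c_i)_{i\in[n]}$ outputs $y_i\seteq e_i\oplus d_i\cdot(\TEPRF.\key_{i,0}\oplus\TEPRF.\key_{i,1})\oplus c_i\cdot(\CS.\sigk_{i,0}\oplus\CS.\sigk_{i,1})$; $\qB_1$ receives $\theta$ and $(x[i])_{i:\theta[i]=1}$, reconstructs $\dvk$, forms the challenge message with $s^*_i\seteq\tilde{s}_i$ on the $\theta[i]=0$ blocks and fresh blocks elsewhere, runs $\qA$, and for each $\theta[i]=0$ uses the \emph{different values on target} property of $\TEPRF$ at the target $s^*_i=\tilde{s}_i$ to read off the unique $z_i\in\bit$ with $t'_i=\TEPRF.\Eval(\TEPRF.\key_{i,z_i},s^*_i)$. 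In $\hybi{3}$ a win forces $y_i=x[i]$ for $\theta[i]=1$ (by deletion verification) and, thanks to the abort, $t'_i=\TEPRF.\Eval(\TEPRF.\key_{i,x[i]},s^*_i)$ for $\theta[i]=0$, hence $z_i=x[i]$; therefore $\Pr[\hybi{3}=1]\le\Pr[\CDBB(\qB,n)=1]\le 2^{-\Omega(n)}=\negl(\secp)$ since $n=\omega(\log\secp)$. Collecting the bounds yields $\advb{\DSSKL,\qA}{ruf}{vra}(\secp)\le\negl(\secp)$. The main obstacle is the $\hybi{3}$ step: I must verify that for each $\theta[i]=0$ position the adversary's \emph{entire} view—the basis state $\ket{x[i]}\ket{\TEPRF.\key_{i,x[i]}}\ket{\CS.\sigk_{i,x[i]}}$ together with $\dvk$—is simulatable from the single constrained key $\CS.\sigk_{i,x[i]}$, so that $\CS.\sigk_{i,1-x[i]}$ remains hidden and a wrong $t'_i$ genuinely constitutes a forgery for the constraint $F[\TEPRF.\key_{i,x[i]}]$.
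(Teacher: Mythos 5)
Your proposal is correct and follows essentially the same route as the paper's proof: the same three hybrid modifications (the CS-forgery abort justified by selective single-key security, the switch $\tilde{s}_i\seteq s^*_i$ justified by differing point hiding, and the syntactic resampling), followed by the identical reduction to \Cref{thm:CD_BB84} using the ``different values on target'' property to extract $x[i]$ on the $\theta[i]=0$ positions. The only difference is that you introduce the CS-forgery abort after the differing-point-hiding switch rather than before; these two steps commute and both orderings go through, and the simulatability concern you flag at the end is resolved exactly as you anticipate, since for $\theta[i]=0$ neither $\qsigk_i$ nor $\dvk$ contains $\CS.\sigk_{i,1-x[i]}$ or $\CS.\msk_i$.
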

\begin{proof}  
Let $\qA$ be a QPT adversary against RUF-VRA security of $\DSSKL$. 
We consider the following sequence of hybrids. 
\begin{description}
\item[$\hybi{0}$:] This is the original security experiment $\expb{\DSSKL,\qA}{ruf}{vra}(1^\secp)$. More specifically, it works as follows.
\begin{enumerate}
            \item  The challenger generates $x,\theta\gets\bit^n$, 
     $\tilde{s}_i\gets \bit^\ell$, $(\TEPRF.\key_{i,0},\TEPRF.\key_{i,1})\gets\TEPRF.\KG(1^\secp,\tilde{s}_i)$ for $i\in [n]$,
     $(\CS.\msk_i,\CS.\vk_i)\gets \CS.\Setup(1^\secpar)$ for $i\in[n]$, and 
    $\CS.\sigk_{i,b}\gets \CS.\Constrain(\CS.\msk_{i},F[\TEPRF.\key_{i.b}])$ for $i\in[n]$ and $b\in \bit$.
    The challenger also generates 
    {\ifnum\llncs=1\footnotesize\fi 
    \begin{align}
    \qsigk_i\seteq
    \begin{cases}
    \ket{x[i]}\ket{\TEPRF.\key_{i,x[i]}}\ket{\CS.\sigk_{i,x[i]}} & (\textrm{if~}\theta[i]=0)\\
    \frac{1}{\sqrt{2}}\left(\ket{0}\ket{\TEPRF.\key_{i,0}}\ket{\CS.\sigk_{i,0}}+(-1)^{x[i]}\ket{1}\ket{\TEPRF.\key_{i,1}}\ket{\CS.\sigk_{i,1}}\right) & (\textrm{if~}\theta[i]=1).
    \end{cases}
    \end{align}  
    }
     The challenger sets
  \begin{align}
    &\qsigk\seteq(\qsigk_1,\ldots, \qsigk_n),\\
    &\sigvk\seteq (\CS.\vk_1,\ldots,\CS.\vk_n),\\
&\dvk\seteq((x[i])_{i\in[n]:\theta[i]=1},\theta,(\TEPRF.\key_{i,0},\TEPRF.\key_{i,1},\CS.\sigk_{i,0},\CS.\sigk_{i,1})_{i\in[n]:\theta[i]=1}).
\end{align}
     The challenger sends $\qsigk$ and $\sigvk$ to $\qA$.  
     \item $\qA$ sends $\cert=(e_i,d_i,c_i)_{i\in[n]}$ to the challenger.
     The challenger outputs $0$ and the experiment ends if $e_i\neq x[i]\oplus d_i \cdot (\TEPRF.\sk_{i,0}\oplus\TEPRF.\sk_{i,1})\oplus c_i\cdot (\CS.\sigk_{i,0}\oplus \CS.\sigk_{i,1})$  for some $i\in[n]$ such that $\theta[i]=1$. Otherwise, the challenger chooses $\msg^*=s^*_1\concat\ldots \concat s^*_n\gets \bit^{n\ell}$ and sends $\msg^*$ and $\dvk$ to $\qA$. 
     \item $\qA$ outputs $\sigma^\prime=(t^\prime_i,\sigma^\prime_i)_{i\in[n]}$.   
     The challenger outputs $1$ if $\CS.\Vrfy(\CS.\vk_i,s^*_i\concat t^\prime_i,\sigma^\prime_i)=\top$ for every $i\in [n]$  and otherwise $0$ as the final outcome of this experiment. 
        \end{enumerate}
By the definition, we clearly have 
\[
\Pr[\hybi{0}=1]=\advb{\DSSKL,\qA}{ruf}{vra}(1^\secp).
\]



\takashi{I removed one hybrid here. As a result, the numbering of the remaining hybrids should be decremented by one. I did so carefully, but if you find this is not done somewhere, please just correct it.}

\item[$\hybi{1}$:] This is identical to $\hybi{0}$ except that the challenger outputs 
$0$ if $t^\prime_i\neq \TEPRF.\Eval(\TEPRF.\key_{i,x[i]},s^*_i)$ for some $i\in [n]$ such that $\theta[i]=0$. 

The difference between $\hybi{0}$ and $\hybi{1}$ occurs only if there is $i\in[n]$ such  that $\theta[i]=0$, 
$\CS.\Vrfy(\CS.\vk_i,s^*_i\concat t^\prime_i,\sigma_i^*)=\top$, and  $t^\prime_i\neq \TEPRF.\Eval(\TEPRF.\key_{i,x[i]},s^*_i)$. 
Note that $t^\prime_i\neq \TEPRF.\Eval(\TEPRF.\key_{i,x[i]},s^*_i)$ is equivalent to 
$F[\TEPRF.\key_{i,x[i]}](s^*_i,t^\prime_i)=0$.  
For $i\in [n]$ such that $\theta[i]=0$, 
$\qsigk_i$ is a computational basis state $\ket{0}\ket{\TEPRF.\key_{i,x[i]}}\ket{\CS.\sigk_{i,x[i]}}$ where $\CS.\sigk_{i,x[i]}$ is a constrained signing key w.r.t. $F[\TEPRF.\key_{i,x[i]}]$. 
Also, all the signing keys of $\CS$ contained in the deletion verification key $\dvk$ are $(\CS.\sigk_{i,0},\CS.\sigk_{i,1})_{i\in[n]:\theta[i]=1}$,  and in particular the deletion verification key contains no information about $\CS.\msk_{i}$ 
for $i\in[n]$ such that $\theta[i]=0$. 
Thus,  by a straightforward reduction to the selective single-key security of $\CS$,  the above event occurs only with a negligible probability. 
Then we have 
\[
\abs{\Pr[\hybi{0}=1]-\Pr[\hybi{1}=1]}\le \negl(\secpar). 
\]

\item[$\hybi{2}$:] This is identical to $\hybi{1}$ except that the challenger samples $s^*=s_1^*\concat\ldots \concat s^*_n$ at the beginning of the experiment and sets $\tilde{s}_i\seteq s^*_i$ for $i\in [n]$ such that $\theta[i]=0$. 

Note that the only place where $\tilde{s}_i$ is used is in the generation of 
\[
(\TEPRF.\key_{i,0},\TEPRF.\key_{i,1})\gets \TEPRF.\KG(1^\secpar,\tilde{s}_i).
\]
In $\hybi{2}$, uniformly random  $\tilde{s}_i$ that is independent of $s^*_i$ is used, but in $\hybi{3}$, we set $\tilde{s}_i=s^*_i$ for $i\in[n]$ such that $\theta[i]=0$. 
They are indistinguishable by the differing point hiding property of $\TEPRF$ since only one of $\TEPRF.\key_{i,0}$ or $\TEPRF.\key_{i,1}$ is used in the hybrids for $i\in [n]$ such that $\theta[i]=0$. 
In fact, for $i\in [n]$ such that $\theta[i]=0$, 
$\qsigk_i$ is a computational basis state  $\ket{0}\ket{\TEPRF.\key_{i,x[i]}}\ket{\CS.\sigk_{i,x[i]}}$. 
Also, all the keys of $\TEPRF$  contained in 
the deletion verification key are   $(\TEPRF.\key_{i,0},\TEPRF.\key_{i,1})_{i\in[n]:\theta[i]=1}$, 
and in particular the deletion verification key contains no information about $\TEPRF.\key_{i,b}$ for $i\in[n]$ and $b\in \bit$ such that $\theta[i]=0$.  
Thus,  by a straightforward reduction to the differing point hiding property of $\TEPRF$, 
we have 
\[
\abs{\Pr[\hybi{1}=1]-\Pr[\hybi{2}=1]}\le \negl(\secpar). 
\] 
\item[$\hybi{3}$:]
This is identical to $\hybi{2}$ except that we change the way of sampling $\tilde{s}_i$ and $s^*_i$ for $i\in[n]$. 
Specifically, the challenger first samples $\tilde{s}_i\gets \bit^\ell$ at the beginning of the experiment without sampling $s^*_i$ for every $i\in[n]$. 
After receiving $\cert$ from $\qA$, the challenger sets 
$s^*_i\seteq \tilde{s}_i$ for every $i\in[n]$ such that $\theta[i]=0$ and 
samples $s^*_i\gets \bit^\ell$ for every $i\in[n]$ such that $\theta[i]=1$. 

Since $(\tilde{s}_i,s^*_i)$ is uniformly random conditioned  that $\tilde{s}_i=s^*_i$ for every $i\in[n]$ such that $\theta[i]=0$ in both hybrids, the change does not affect the view of $\qA$ and thus we have 
\[
\Pr[\hybi{2}=1]=\Pr[\hybi{3}=1]. 
\]
\end{description}

Below, by a reduction to the certified deletion property of BB84 states (\Cref{thm:CD_BB84}), 
we show 
\[
\Pr[\hybi{3}=1]\le\negl(\secp). 
\] 
We construct an adversary $\qB=(\qB_0,\qB_1)$ against the certified deletion property of BB84 states (\Cref{thm:CD_BB84}) as follows. 
\begin{description}
\item[$\qB_0$:] Given an $n$-qubit BB84 state
$\ket{\psi}=\ket{x}_{\theta}$, 
$\qB_0$ parses 
$\ket{\psi}=\ket{\psi_1}\tensor\cdots\tensor\ket{\psi_n}$ and generates 
$\tilde{s}_i \gets \bit^\ell$ for $i\in[n]$, $(\TEPRF.\key_{i,0},\TEPRF.\key_{i,1})\gets\TEPRF.\KG(1^\secp,\tilde{s}_i)$ for $i\in [n]$,
     $(\CS.\msk_i,\CS.\vk_i)\gets \CS.\Setup(1^\secpar)$ for $i\in[n]$, and 
   $\CS.\sigk_{i,b}\gets \CS.\Constrain(\CS.\msk_{i},F[\TEPRF.\key_{i.b}])$ for $i\in[n]$ and $b\in \bit$.
Let $U_i$ be a unitary that maps $\ket{u}\ket{v}\ket{w}$ to $\ket{u}\ket{v\oplus\TEPRF.\sk_{i,u}}\ket{w\oplus \CS.\sigk_{i,u}}$ where $u\in \bit$, $v\in\bit^{\ell_\key}$, and $w\in\bit^{\ell_\sigk}$.
$\qB_0$ applies $U_i$ to $\ket{\psi_i}\ket{0^{\ell_\key}}\ket{0^{\ell_\sigk}}$ to obtain $\qsigk_i$ for every $i\in[n]$.
$\qB_0$ then sets $\qsigk\seteq (\qsigk_1,\ldots,\qsigk_n)$ 
and $\sigvk\seteq (\CS.\vk_1,\ldots,\CS.\vk_n)$ 
     and sends $\qsigk$ and $\sigvk$ to $\qA$.  When $\qA$ outputs $\cert=(e_i,d_i,c_i)_{i\in[n]}$, let $\qstateA$ be the state of $\qA$ at this point. 
     $\qB_0$ computes $y_i=e_i\oplus d_i\cdot(\TEPRF.\sk_{i,0}\oplus\TEPRF.\sk_{i,1})\oplus c_i\cdot (\CS.\sigk_{i,0}\oplus \CS.\sigk_{i,1})$ for every $i\in[n]$. 
     $\qB_0$ outputs $y\seteq y_1\concat\ldots\concat y_n$ and  
    $\qstate{st}_{\qB}\seteq ((s^*_i)_{i\in[n]},(\TEPRF.\sk_{i,b},\CS.\sigk_{i,b})_{i\in[n],b\in\bit},\qstateA)$.
\item[$\qB_1$:] Given $\theta$ and $\qstate{st}_{\qB}= ((s^*_i)_{i\in[n]},(\TEPRF.\sk_{i,b},\CS.\sigk_{i,b})_{i\in[n],b\in\bit},\qstateA)$, $\qB_1$ sets  $$\dvk^\prime\seteq((y_i)_{i\in[n]:\theta[i]=1},\theta,\allowbreak (\TEPRF.\sk_{i,0},\TEPRF.\sk_{i,1},\CS.\sigk_{i,0},\CS.\sigk_{i,1})_{i\in[n]:\theta[i]=1}).$$ 
$\qB_1$ sets $s^*_i\seteq \tilde{s}_i$ for $i\in [n]$ such that $\theta[i]=0$ and generates 
$s^*_i\gets \bit^\ell$ for $i\in [n]$ such that $\theta[i]=1$.  
$\qB_1$ then executes $\qA$ with the input $\dvk^\prime$ and $\msg^*=s^*_1\concat\ldots\concat s^*_n$ and the state $\qstateA$, and obtains an output $\sigma^\prime=(t^\prime_i,\sigma^\prime_i)_{i\in [n]}$. 
By the ``different values on target'' property of TEPRF, there is unique $b_i\in \bit$ such that \[
t^\prime_i=\TEPRF.\Eval(\TEPRF.\sk_{i,b_i},s^*_i)
\] 
for every $i\in[n]$ such that $\theta[i]=0$. 
$\qB_1$ sets $z_i\seteq b_i$ for every $i\in[n]$ such that $\theta[i]=0$ and $z_i\seteq 0$ for every $i\in[n]$ such that $\theta[i]=1$,\footnote{
For $i\in[n]$ such that $\theta[i]=1$. 
$z_i$ can be set arbitrarily since it is irrelevant to the winning condition of the experiment for the certified deletion property of BB84 states.} and outputs $z=z_1\concat\ldots\concat z_n$. 
\end{description}

$\qB$ perfectly simulates $\hybi{3}$ for $\qA$. 
If $\qA$ wins the simulated experiment (i.e., $\hybi{3}$ outputs $1$), we have
\begin{align}
e_i=x[i]\oplus d_i \cdot (\TEPRF.\sk_{i,0}\oplus\TEPRF.\sk_{i,1})\oplus c_i\cdot (\CS.\sigk_{i,0}\oplus \CS.\sigk_{i,1})
\end{align}
 for every $i\in[n]$ such that $\theta[i]=1$ 
and 
\[
t^\prime_i=\TEPRF.\Eval(\TEPRF.\sk_{i,x[i]},s^*_i)
\]  
for every $i\in[n]$ such that $\theta[i]=0$.
In this case, $y_i=x[i]$ for every $i\in [n]$ such that $\theta[i]=1$ and $z_i=x[i]$ for every $i\in [n]$ such that $\theta[i]=0$. 
Thus, we have 
\[
\Pr[\hybi{3}=1]\le\Pr[\CDBB(\qB,n)=1]\le \negl(\secpar)
\] 
where the latter inequality follows from the certified deletion property of BB84 states (\Cref{thm:CD_BB84}) 
and $n=\omega(\log\secpar)$. 

\if0
$\qB$ perfectly simulates $\hybi{4}$ for $\qA$ until $\qA$ outputs $\cert$.
Let $x,\theta\in\bit^n$ be the strings such that $\ket{\psi}=\ket{x}_\theta$. 
Suppose that 
$\DSSKL.\DelVrfy(\dvk,\cert)=\top$ 
holds in the simulated experiment. Then, we have
\begin{align}
e_i=x[i]\oplus d_i \cdot (\TEPRF.\sk_{i,0}\oplus\TEPRF.\sk_{i,1})\oplus c_i\cdot (\CS.\sigk_{i,0}\oplus \CS.\sigk_{i,1})
\end{align}
for every $i\in[n]$ such that $\theta[i]=1$.
Thus, conditioned that $\UPFSKL.\Vrfy(\vk,\cert)=\top$, 
\mor{
$\DSSKL.\DelVrfy(\dvk,\cert)=\top$ 
}
we have $y_i=x[i]$ for every $i\in [n]$ such that $\theta[i]=1$. 
In this case, $\qB$ perfectly simulates the remaining part of $\hybi{4}$ for $\qA$.
Moreover, if $\qA$ wins the simulated experiment (i.e., $\hybi{4}$ outputs $1$), we also have
\[
t^*_i=\TEPRF.\Eval(\TEPRF.\sk_{i,x[i]},s^*_i)
\] 
for every $i\in[n]$ such that $\theta[i]=0$. 
This implies that we have $z_i=x[i]$ for every $i\in [n]$ such that $\theta[i]=0$.  
Thus, whenever $\qA$ wins the simulated execution of $\hybi{4}$, $\qB$ outputs the correct answer in the experiment  $\CDBB(\qB,n)$. 
Thus, we have 
\mor{$\CDBB$?}
\[
\Pr[\hybi{4}=1]\le\Pr[\SMoEBB(\qB,n)=1]\le \negl(\secpar)
\]
where the latter inequality follows from the certified deletion property of BB84 states (\Cref{thm:CD_BB84}) 
and $n=\omega(\secpar)$. 
\fi

\if0
We construct an adversary $\qB=(\qB_0,\qB_1)$ against the certified deletion property of BB84 states (\Cref{thm:CD_BB84}) as follows. 
\begin{description}
\item[$\qB_0$:] Given an $n$-qubit BB84 state $\ket{\psi}=\ket{\psi_1}\tensor\cdots\tensor\ket{\psi_n}$, $\qB_0$ generates 
$\tilde{s}_i \gets \bit^\ell$ for $i\in[n]$, $(\TEPRF.\key_{i,0},\TEPRF.\key_{i,1})\gets\TEPRF.\KG(1^\secp,\tilde{s}_i)$ for $i\in [n]$,
     $(\msk_i,\vk_i)\gets \CS.\Setup(1^\secpar)$ for $i\in[n]$, and 
    $\sigk_{i,b}\gets \CS.\Constrain(\msk_{i},F[\TEPRF.\key_{i.b}])$ for $i\in[n]$ and $b\in \bit$.
Let $U_i$ be a unitary that maps $\ket{u}\ket{v}\ket{w}$ to $\ket{u}\ket{v\oplus\TEPRF.\sk_{i,u}}\ket{w\oplus \CS.\sigk_{i,u}}$ where $u\in \bit$, $v\in\bit^{\ell_\key}$, and $w\in\bit^{\ell_\sigk}$.
$\qB_0$ applies $U_i$ to $\ket{\psi_i}\ket{0^{\ell_\key}}\ket{0^{\ell_\sigk}}$ to obtain $\qsigk_i$ for every $i\in[n]$.
$\qB_0$ then sets $\qsigk\seteq (\qsigk_1,\ldots,\qsigk_n)$ 
and $\sigvk\seteq (\CS.\vk_1,\ldots,\CS.\vk_n)$ 
     and sends $\qsigk$ and $\sigvk$ to $\qA$.  When $\qA$ outputs $\cert$, letting the state of $\qA$ at that time be $\qstateA$, $\qB_0$ sends $((s^*_i)_{i\in[n]},(\TEPRF.\sk_{i,b},\CS.\sigk_{i,b})_{i\in[n],b\in\bit},\cert,\qstateA)$ to $\qB_1$ and $((\TEPRF.\sk_{i,b},\CS.\sigk_{i,b})_{i\in[n],b\in\bit},\cert)$ to $\qB_2$.
\item[$\qB_1$:] Given $\theta$ and $((s^*_i)_{i\in[n]},(\TEPRF.\sk_{i,b},\CS.\sigk_{i,b})_{i\in[n],b\in\bit},\cert,\qstateA)$, $\qB_1$ first parses $\cert=(e_i,d_i,c_i)_{i\in[n]}$ and sets $x^\prime_i=e_i\oplus d_i\cdot(\TEPRF.\sk_{i,0}\oplus\TEPRF.\sk_{i,1}\oplus c_i\cdot (\CS.\sigk_{i,0}\oplus \CS.\sigk_{i,1}))$ for every $i\in[n]$ such that $\theta[i]=1$, and sets $\vk^\prime\seteq((x^\prime_i)_{i\in[n]:\theta[i]=1},\theta,\allowbreak (\TEPRF.\sk_{i,0},\TEPRF.\sk_{i,1},\CS.\sigk_{i,0},\CS.\sigk_{i,1})_{i\in[n]:\theta[i]=1})$. 
$\qB_1$ sets $s^*_i\seteq \tilde{s}_i$ for $i\in [n]$ such that $\theta[i]=0$ and generates 
$s^*_i\gets \bit^\ell$ for $i\in [n]$ such that $\theta[i]=1$.  
$\qB_1$ then executes $\qA$ with the input $\vk^\prime$ and $\msg^*=s^*_1\concat\ldots\concat s^*_n$ and the state $\qstateA$, and obtains an output $\sigma^*=(t^*_i,\sigma^*_i)_{i\in [n]}$. 
By the ``different values on target" property of TEPRF, there is unique $b_i\in \bit$ such that \[
t^*_i=\TEPRF.\Eval(\TEPRF.\sk_{i,b_i},s^*_i)
\] 
for every $i\in[n]$ such that $\theta[i]=0$. 
$\qB_1$ sets $x^\prime_i\seteq b_i$ for every $i\in[n]$ such that $\theta[i]=0$, and outputs $(x^\prime_i)_{i\in[n]:\theta[i]=0}$.
\item[$\qB_2$:]Given $\theta$ and $(\cert,(\TEPRF.\sk_{i,b},\CS.\sigk_{i,b})_{i\in[n],b\in\bit})$, $\qB_2$ parses $\cert=(e_i,d_i,c_i)_{i\in[n]}$, and computes $x^\prime_i=e_i\oplus d_i\cdot(\TEPRF.\sk_{i,0}\oplus\TEPRF.\sk_{i,1})\oplus c_i\cdot (\CS.\sigk_{i,0}\oplus \CS.\sigk_{i,1})$ for every $i\in[n]$ such that $\theta[i]=1$.
$\qB_2$ outputs $(x^\prime_i)_{i\in[n]:\theta[i]=1}$.
\end{description}

$\qB$ perfectly simulates $\hybi{3}$ for $\qA$ until $\qA$ outputs $\cert$.
Let $x,\theta\in\bit^n$ be the strings such that $\ket{\psi}=\ket{x}_\theta$. 
Suppose that $\DSSKL.\DelVrfy(\dvk,\cert)=\top$ holds in the simulated experiment. Then, we have
\begin{align}
e_i=x[i]\oplus d_i \cdot (\TEPRF.\sk_{i,0}\oplus\TEPRF.\sk_{i,1})\oplus c_i\cdot (\CS.\sigk_{i,0}\oplus \CS.\sigk_{i,1})
\end{align}
for every $i\in[n]$ such that $\theta[i]=1$.
Thus, conditioned that $\DSSKL.\DelVrfy(\dvk,\cert)=\top$, $\qB$ perfectly simulates the remaining part of $\hybi{3}$ for $\qA$.
Moreover, if $\qA$ wins the simulated experiment (i.e., $\hybi{3}$ outputs $1$), we also have
\[
t^*_i=\TEPRF.\Eval(\TEPRF.\sk_{i,x[i]},s^*_i)
\] 
for every $i\in[n]$ such that $\theta[i]=0$. 
From these facts, whenever $\qA$ wins the simulated execution of $\hybi{3}$, $\qB$ outputs the correct answer in the experiment  $\SMoEBB(\qB,n)$. 
Thus, we have 
\[
\Pr[\hybi{3}=1]\le\Pr[\SMoEBB(\qB,n)=1]\le \negl(\secpar)
\]
where the latter inequality follows from the strong monogamy property of BB84 states (\Cref{thm:SMoE_BB84}) 
and $n=\omega(\secpar)$. 
\fi

Combining the above, we have 
\[
\advb{\DSSKL,\qA}{ruf}{vra}(1^\secp)\le \negl(\secpar).
\] 
This completes the proof of \cref{thm:DS_RUF-VRA}.
\end{proof}
By \Cref{thm:CS,thm:DS_RUF-VRA}, we obtain \Cref{thm:DS-SKL}. 

\section{Construction of Coherently-Signable Constrained Signatures}\label{sec:CS}
In this section, we prove \Cref{thm:CS}, that is, we construct selectively single-key secure coherently- signable constrained signatures based on the SIS assumption.   
In \Cref{sec:lattice_pre}, we provide necessary background on lattices. In \Cref{sec:const_CS}, we give the construction of selectively single-key secure coherently- signable constrained signatures and its security proof.

\subsection{Lattice Preliminaries}\label{sec:lattice_pre} 
A lattice $\Lambda\subseteq \RR^m$ is a discrete additive subgroup of $\RR^m$. 
A shifted lattice is a subset that can be written as $\Lambda+\vt=\{\vx+\vt:\vx\in \Lambda\}$ by using a lattice $\Lambda$ and a vector $\vt\in \RR^n$. 
For $q\in \NN$, we write $\ZZ_q$ to mean the  additive group of integers modulo $q$. We represent elements of $\ZZ_q$ by elements in $\ZZ\cap (q/2,q/2]$ and often identify the elements of $\ZZ_q$ and their representatives. 
For $\mA\in \ZZ_q^{n\times m}$ and $\vy\in \ZZ_q^n$, we define 
$\Lambda_\mA^\bot:=\{\vx\in \ZZ^m:\mA\vx=\vzero \mod q\}$ and
$\Lambda_\mA^\vy:=\{\vx\in \ZZ^m:\mA\vx=\vy \mod q\}$. It is easy to see that 
$\Lambda_\mA^\bot$ is a lattice and
$\Lambda_\mA^\vy$ is a shifted lattice written as $\Lambda_\mA^\vy=\Lambda_\mA^\bot+ \vt$ for any $\vt\in \Lambda_\mA^\vy$.  

For a vector $\vx=(x_1,x_2,\ldots,x_n)\in \mathbb{R}^n$, we define 
$\|\vx\|_{\infty}\seteq \max_{i\in [n]}|x_i|$. 
Similarly, for a matrix $\mA=(a_{i,j})_{i\in[n],j\in[m]}\in \mathbb{R}^{n\times m}$, we define $\|\mA\|_\infty\seteq \max_{i\in[n],j\in[m]}a_{i,j}$. 
We often use the inequality $\|\mA\mB\|_\infty\le m\|\mA\|_\infty\|\mB\|_\infty$ for $\mA\in \RR^{n\times m}$ and $\mB\in\RR^{m\times \ell}$. 

For $m\in \NN$, the Gaussian function $\rho_\sigma:\RR^m \ra [0,1]$ with a scaling parameter $\sigma>0$ is defined as follows:
\[
\rho_{\sigma}(\vx)\seteq \exp(-\pi\|\vx\|^2/\sigma^2).
\]
For any countable set $S\subseteq \mathbb{R}^m$, let 
$\rho(S)\seteq \sum_{\vx \in S}\rho_{\sigma}(\vx)$ and  
if $0<\rho(S)<\infty$, 
we define the discrete Gaussian distribution over $S$ by the following probability density function: 
\[
D_{S,\sigma}(\vx)=\frac{\rho_\sigma(\vx)}{\rho_\sigma(S)}
\]
for $\vx\in S$. 

We rely on the following tail bound about discrete Gaussian. 

\if0
\begin{lemma}[\cite{Banaszczyk93,STOC:GenPeiVai08}]\label{lem:strong_tail}  
There is a constant $C$ such that the following holds. 
Let $n,m\in \NN$ and $q$ be a prime such that $m\ge 2n\log q$. 
For all but an at most $q^{-n}$-fraction of $\mA\in \ZZ_q^{n\times m}$, 
for any 
$\vy\in \ZZ_q^n$, $r\ge \frac{1}{2\pi}$, 
and any $\sigma\ge C\sqrt{\log m}$, 
\begin{align}
    \Pr_{\vx\gets D_{\Lambda_\mA^\vy,\sigma}}[\|\vx\|\ge r\sqrt{m}\sigma ]\le 6\left(r \sqrt{2\pi e}\cdot e^{-\pi r^2}\right)^m.
\end{align}
\end{lemma}
For completeness, we  prove \Cref{lem:strong_tail} in \Cref{sec:proof_strong_tail}. 
\fi

\begin{definition}
For $m\in \mathbb{N}$ and $R>0$, we define 
$$\mathcal{B}^m_\infty(R)=\{\vx\in \RR^m:\|\vx\|_\infty\le R\}.$$
\end{definition}
\if0
\begin{lemma}[{\cite[Lemma 1.5]{Banaszczyk93}}]\label{lem:tail}
For any $m$-dimensional lattice $\Lambda$,  $\sigma>0$, and $r\ge \frac{1}{2\pi}$,  
\[
\rho_\sigma(\Lambda\setminus  \mathcal{B}_\infty^m(r\sqrt{m}\sigma))\le \left(r \sqrt{2\pi e}\cdot e^{-\pi r^2}\right)^m\rho_{\sigma}(\Lambda).
\]
\end{lemma}
\fi

\begin{lemma}[{\cite[Lemma 2.10]{Banaszczyk95}}]\label{lem:tail_infty}
For any $m$-dimensional lattice $\Lambda$,  $\sigma>0$, and $r>0$,  
\[
\rho_\sigma(\Lambda\setminus  \mathcal{B}^m_\infty(r\sigma))\le \left(2m e^{-\pi r^2}\right)\rho_{\sigma}(\Lambda).
\]
\end{lemma}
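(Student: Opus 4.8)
The plan is to prove the bound for $\sigma=1$ and then recover the general case by scaling: since $\rho_\sigma(\vx)=\rho_1(\vx/\sigma)$ and $\vx\in\Lambda\setminus\mathcal{B}^m_\infty(r\sigma)$ if and only if $\vx/\sigma\in(\Lambda/\sigma)\setminus\mathcal{B}^m_\infty(r)$, applying the $\sigma=1$ statement to the rescaled lattice $\Lambda/\sigma$ immediately yields the claim for arbitrary $\sigma$. From now on I would fix $\sigma=1$ and write $\rho=\rho_1$.

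First I would decompose the complement of the $\ell_\infty$ ball coordinatewise. Since $\|\vx\|_\infty>r$ holds exactly when $|x_j|>r$ for some $j\in[m]$, a union bound gives
\[
\rho\!\left(\Lambda\setminus\mathcal{B}^m_\infty(r)\right)\le\sum_{j=1}^m\rho\!\left(\{\vx\in\Lambda:|x_j|>r\}\right).
\]
For each fixed coordinate $j$, lattice symmetry ($\Lambda=-\Lambda$) together with evenness of $\rho$ reduces the two-sided tail to the one-sided one, namely $\rho(\{|x_j|>r\})=2\rho(\{x_j>r\})$. Hence it suffices to show $\rho(\{\vx\in\Lambda:x_j>r\})\le e^{-\pi r^2}\rho(\Lambda)$ for every $j$, which then plugs back into the union bound to produce the factor $2m\,e^{-\pi r^2}$.

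The core step is this one-sided coordinate tail, which I would handle by an exponential-tilting (Chernoff-type) argument combined with a lattice shift. For any $\alpha>0$ and any $\vx$ with $x_j>r$, completing the square gives $e^{-\pi x_j^2}=e^{-\pi(x_j-\alpha)^2}e^{-2\pi\alpha x_j+\pi\alpha^2}\le e^{-\pi(x_j-\alpha)^2}e^{-2\pi\alpha r+\pi\alpha^2}$, so that $\rho(\vx)\le e^{-2\pi\alpha r+\pi\alpha^2}\,\rho(\vx-\alpha\ve_j)$. Summing over the lattice points with $x_j>r$ and then dropping the restriction (all terms are nonnegative) bounds the one-sided tail by $e^{-2\pi\alpha r+\pi\alpha^2}\rho(\Lambda-\alpha\ve_j)$. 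Choosing $\alpha=r$ makes the scalar factor equal to $e^{-\pi r^2}$, and it remains only to invoke the shift inequality $\rho(\Lambda-\alpha\ve_j)\le\rho(\Lambda)$.

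The main obstacle is establishing this shift inequality $\rho(\Lambda+\vu)\le\rho(\Lambda)$ for all $\vu$, which is the one genuinely nontrivial ingredient. I would obtain it from Poisson summation: writing $\rho(\Lambda+\vu)=\frac{1}{\det\Lambda}\sum_{\vy\in\Lambda^\ast}\rho(\vy)\cos(2\pi\langle\vu,\vy\rangle)$, using that (at $\sigma=1$) the Fourier transform of $\rho$ is again the nonnegative Gaussian $\rho$, each summand is bounded in absolute value by the corresponding nonnegative term at $\vu=\vzero$, so the full sum is maximized at $\vu=\vzero$. Combining the shift inequality with the tilting bound and the coordinatewise union bound completes the argument. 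Alternatively, since the statement is exactly Lemma~2.10 of Banaszczyk, one may simply cite it, but the above is the self-contained route.
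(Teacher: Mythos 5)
Your proposal is correct. Note that the paper itself gives no proof of this lemma; it is imported verbatim as a citation to Banaszczyk, so there is nothing internal to compare against. Your argument is the standard self-contained derivation and every step checks out: the scaling reduction to $\sigma=1$ is valid since $\rho_\sigma(\vx)=\rho_1(\vx/\sigma)$ and the $\ell_\infty$ ball rescales compatibly; the coordinatewise union bound and the symmetry reduction $\rho(\{|x_j|>r\})=2\rho(\{x_j>r\})$ are fine (the two one-sided events are disjoint for $r>0$ and a lattice is closed under negation); the completing-the-square identity $e^{-\pi x_j^2}=e^{-\pi(x_j-\alpha)^2}e^{-2\pi\alpha x_j+\pi\alpha^2}$ is exact, and with $\alpha=r$ the tilting factor becomes $e^{-\pi r^2}$; and the shift inequality $\rho(\Lambda+\vu)\le\rho(\Lambda)$ follows from Poisson summation together with the self-duality and nonnegativity of the Gaussian exactly as you say. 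Assembling these gives the claimed $2m\,e^{-\pi r^2}$ factor. The only thing worth making explicit in a written version is that dropping the constraint $x_j>r$ when passing to $\rho(\Lambda-\alpha\ve_j)$ is justified because all summands are nonnegative, which you do note.
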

\begin{corollary}\label{cor:distribution_close}
For $n,m\in \NN$, an integer $q\ge 2$, 
$\mA\in \ZZ_q^{n\times m}$, and $\sigma>0$, the statistical distance between $D_{\Lambda_\mA^\bot,\sigma}$ and $D_{\Lambda_\mA^\bot\cap \ZZ_q^m,\sigma}$ is at most $2m e^{-\pi \left(\frac{q/2-1}{\sigma}\right)^2}$.\footnote{Recall that an element of $\ZZ_q$ is represented as an element of $\ZZ \cap (-q/2,q/2]$.
In particular, $\Lambda_\mA^\bot\cap \ZZ_q^m$ means $\Lambda_\mA^\bot\cap (-q/2,q/2]^m$. 
}  
\end{corollary}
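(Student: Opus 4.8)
The plan is to recognize $D_{\Lambda_\mA^\bot\cap \ZZ_q^m,\sigma}$ as the distribution obtained from $D_{\Lambda_\mA^\bot,\sigma}$ by conditioning on the event that the sampled vector lands inside the representative box $(-q/2,q/2]^m$. Write $T\seteq \Lambda_\mA^\bot\cap (-q/2,q/2]^m$, which by the footnote is exactly $\Lambda_\mA^\bot\cap \ZZ_q^m$. Since $\vzero\in \Lambda_\mA^\bot$ we have $0<\rho_\sigma(\Lambda_\mA^\bot)<\infty$ and $0<\rho_\sigma(T)<\infty$, so both discrete Gaussians are well-defined, and for every $\vx\in T$ one has $D_{T,\sigma}(\vx)=\rho_\sigma(\vx)/\rho_\sigma(T)=D_{\Lambda_\mA^\bot,\sigma}(\vx)/D_{\Lambda_\mA^\bot,\sigma}(T)$. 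Thus $D_{T,\sigma}$ is precisely $D_{\Lambda_\mA^\bot,\sigma}$ renormalized on $T$.

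Next I would invoke the standard fact that conditioning a distribution on an event of probability $1-\eps$ perturbs it by at most $\eps$ in statistical distance: if $D$ is a distribution on a set $S$ and $T\subseteq S$ satisfies $\Pr_{\vx\gets D}[\vx\in T]=1-\eps$, then the statistical distance between $D$ and its restriction $D_T$ to $T$ is at most $\eps$. This follows by splitting the sum $\frac12\sum_{\vx}|D(\vx)-D_T(\vx)|$ into the contribution from $S\setminus T$, where $D_T(\vx)=0$ and the terms total $\eps$, and the contribution from $T$, where $|D(\vx)-D_T(\vx)|=D(\vx)\cdot \tfrac{\eps}{1-\eps}$ and the terms total $(1-\eps)\cdot \tfrac{\eps}{1-\eps}=\eps$, giving $\tfrac12(\eps+\eps)=\eps$. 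Applying this with $D=D_{\Lambda_\mA^\bot,\sigma}$ reduces the claim to the tail bound $\Pr_{\vx\gets D_{\Lambda_\mA^\bot,\sigma}}[\vx\notin (-q/2,q/2]^m]\le 2m\,e^{-\pi((q/2-1)/\sigma)^2}$.

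Finally I would establish this tail bound via \Cref{lem:tail_infty}. Setting $r\seteq (q/2-1)/\sigma$, the lemma gives $\rho_\sigma\!\left(\Lambda_\mA^\bot\setminus \mathcal{B}^m_\infty(q/2-1)\right)\le 2m\,e^{-\pi r^2}\,\rho_\sigma(\Lambda_\mA^\bot)$, hence $\Pr_{\vx\gets D_{\Lambda_\mA^\bot,\sigma}}[\vx\notin \mathcal{B}^m_\infty(q/2-1)]\le 2m\,e^{-\pi((q/2-1)/\sigma)^2}$. The only point requiring care — and the sole (minor) obstacle — is the containment $\mathcal{B}^m_\infty(q/2-1)\subseteq (-q/2,q/2]^m$: because $\Lambda_\mA^\bot\subseteq \ZZ^m$ is integral and the representative interval $(-q/2,q/2]$ is half-open on the left, one must use radius $q/2-1$ rather than $q/2$, so that $\|\vx\|_\infty\le q/2-1$ forces every coordinate into $[-(q/2-1),q/2-1]\subseteq (-q/2,q/2]$. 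This containment yields $\Pr[\vx\notin (-q/2,q/2]^m]\le \Pr[\vx\notin \mathcal{B}^m_\infty(q/2-1)]$, and chaining the three bounds completes the proof.
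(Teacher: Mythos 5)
Your proposal is correct and follows essentially the same route as the paper: both identify the statistical distance with the probability mass of $D_{\Lambda_\mA^\bot,\sigma}$ falling outside $\ZZ_q^m$ (your "conditioning" computation is exactly the paper's term-by-term split over $T$ and its complement), and both then bound that mass via the containment $\mathcal{B}^m_\infty(q/2-1)\subseteq (-q/2,q/2]^m$ together with \Cref{lem:tail_infty} at $r=(q/2-1)/\sigma$. The only cosmetic difference is that you package the first step as a general fact about conditioned distributions, whereas the paper computes the distance directly as $\rho_\sigma(\Lambda_\mA^\bot\setminus \ZZ_q^m)/\rho_\sigma(\Lambda_\mA^\bot)$.
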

\begin{proof}
The statistical distance between $D_{\Lambda_\mA^\bot,\sigma}$ and $D_{\Lambda_\mA^\bot\cap \ZZ_q^m,\sigma}$ is:
\begin{align}
\frac{\rho_{\sigma}(\Lambda_\mA^\bot\setminus \ZZ_q^m)}{\rho_{\sigma}(\Lambda_\mA^\bot)}
\le 
\frac{\rho_{\sigma}(\Lambda_\mA^\bot\setminus B^m_\infty(q/2-1))}{\rho_{\sigma}(\Lambda_\mA^\bot)}
\le 
2m e^{-\pi \left(\frac{q/2-1}{\sigma}\right)^2}
\end{align}
 where the first inequality follows
$B^{m}_\infty(q/2-1)\subseteq \ZZ_q^m$, 
and the second inequality follows from \Cref{lem:tail_infty}. 
\if0
\begin{align}
    &\frac{1}{2}\left(
    \sum_{\vx \in \Lambda_\mA^\bot\cap \ZZ_q^m}\left(
      \frac{\rho_{\sigma}(\vx)}{\rho_{\sigma}(\Lambda_\mA^\bot\cap \ZZ_q^m)}
      -
    \frac{\rho_{\sigma}(\vx)}{\rho_{\sigma}(\Lambda_\mA^\bot)}
    \right)
    +
    \sum_{\vx\in \Lambda_\mA^\bot\setminus \ZZ_q^m}
    \frac{\rho_{\sigma}(\vx)}{\rho_{\sigma}(\Lambda_\mA^\bot)}
    \right)\\
    &=\frac{\rho_{\sigma}(\Lambda_\mA^\bot\setminus \ZZ_q^m)}{\rho_{\sigma}(\Lambda_\mA^\bot)}
\end{align}
\fi
\end{proof}

By \cite[Lemma 2.11]{TCC:PeiRos06} and
\cite[Lemma 5.3]{STOC:GenPeiVai08}, we have the following lemma.  
\begin{lemma}[{\cite[Lemma 2.5]{TCC:AnaPorVai23} }]\label{lem:prob_zero}  
Let $n,m\in \NN$ and $q$ be a prime such that $m\ge 2n\log q$. 
For an at least $(1-2q^{-n})$-fraction of $\mA\in \ZZ_q^{n\times m}$, 
for any 
$\vy\in \ZZ_q^n$, 
$\vx^*\in \Lambda_\mA^\vy$,  
and any $\sigma=\omega(\sqrt{\log m})$, 
\begin{align}
    \Pr_{\vx\gets D_{\Lambda_\mA^\vy ,\sigma}}[\vx=\vx^*]\le 2^{-\Omega(m)}. 
\end{align}
\if0
\begin{align}
\frac
{\rho_\sigma(\vzero)}
{
\displaystyle\sum_{
\substack{
\vx\in \ZZ_q^m,\|\vx\|\le \sigma \sqrt{m}\\
\mA\vx=\vzero \mod q
}
}
\rho_\sigma(\vx)
}
\le
\negl(m).
\end{align}
\fi
\end{lemma}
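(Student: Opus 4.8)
The plan is to rewrite the target probability as a ratio of Gaussian masses and thereby reduce the entire statement to a lower bound on the Gaussian mass of the coset $\Lambda_\mA^\vy$. Since $\vx^*\in\Lambda_\mA^\vy$, we have $\Pr_{\vx\gets D_{\Lambda_\mA^\vy,\sigma}}[\vx=\vx^*]=\rho_\sigma(\vx^*)/\rho_\sigma(\Lambda_\mA^\vy)$, and because $\rho_\sigma$ is maximized at the origin with $\rho_\sigma(\vzero)=1$, the numerator satisfies $\rho_\sigma(\vx^*)\le 1$. Hence it suffices to show that $\rho_\sigma(\Lambda_\mA^\vy)\ge 2^{\Omega(m)}$ for all but a $2q^{-n}$ fraction of $\mA$, uniformly over the choice of $\vy$ and $\vx^*$.

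The two ingredients I would invoke are the following. First, a smoothing-parameter bound for $q$-ary lattices: for prime $q$ and $m\ge 2n\log q$, all but at most a $2q^{-n}$ fraction of $\mA\in\ZZ_q^{n\times m}$ are full rank, so that $\det(\Lambda_\mA^\bot)=q^n$, and additionally satisfy $\eta_{\epsilon}(\Lambda_\mA^\bot)\le\omega(\sqrt{\log m})$ for some negligible $\epsilon=\epsilon(m)$; this is the content I would draw from \cite[Lemma 5.3]{STOC:GenPeiVai08}. Second, the fact (obtained via Poisson summation) that a discrete Gaussian with width above the smoothing parameter spreads its mass nearly uniformly, so that for any coset representative $\vt$ and any $\sigma\ge\eta_{\epsilon}(\Lambda_\mA^\bot)$ one has $\rho_\sigma(\Lambda_\mA^\bot+\vt)\ge(1-\epsilon)\,\sigma^m/\det(\Lambda_\mA^\bot)$; this is the content I would draw from \cite[Lemma 2.11]{TCC:PeiRos06}. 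Since $\Lambda_\mA^\vy=\Lambda_\mA^\bot+\vt$ for any $\vt\in\Lambda_\mA^\vy$, combining the two with the hypothesis $\sigma=\omega(\sqrt{\log m})\ge\eta_{\epsilon}(\Lambda_\mA^\bot)$ yields $\rho_\sigma(\Lambda_\mA^\vy)\ge(1-\epsilon)\,\sigma^m/q^n$ for every such good $\mA$.

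The final step is an elementary counting estimate. Substituting the bound gives $\Pr_{\vx\gets D_{\Lambda_\mA^\vy,\sigma}}[\vx=\vx^*]\le q^n/((1-\epsilon)\sigma^m)$. The hypothesis $m\ge 2n\log q$ gives $q^n=2^{n\log q}\le 2^{m/2}$, while $\sigma=\omega(\sqrt{\log m})$ exceeds the fixed constant $2$ for all sufficiently large $m$, so that $\sigma^m\ge 2^m$. Therefore $q^n/\sigma^m\le 2^{m/2}/2^m=2^{-m/2}$, and with $1-\epsilon\ge 1/2$ the probability is at most $2^{-m/2+1}=2^{-\Omega(m)}$, as required.

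The main obstacle is not any single deep estimate but the bookkeeping of the exceptional set and the error term. I must ensure that the bad $\mA$ arising from the full-rank requirement and from the smoothing-parameter bound together stay within the $2q^{-n}$ fraction, and that the smoothing slack $\epsilon$ can be taken negligible precisely because $\sigma=\omega(\sqrt{\log m})$ sits strictly above $\eta_{\epsilon}(\Lambda_\mA^\bot)$ for a negligible $\epsilon$. The only genuinely quantitative point is verifying that the width assumption $\omega(\sqrt{\log m})$ simultaneously clears the smoothing threshold and makes $\sigma^m$ dominate $q^n\le 2^{m/2}$; both reduce to $\sigma$ eventually exceeding a fixed constant, which the $\omega(\cdot)$ assumption guarantees.
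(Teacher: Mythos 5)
Your proposal is correct and follows essentially the same route as the paper, which derives this lemma precisely by combining \cite[Lemma 2.11]{TCC:PeiRos06} (near-uniformity of Gaussian mass over cosets above the smoothing parameter) with \cite[Lemma 5.3]{STOC:GenPeiVai08} (smoothing bound for all but a $q^{-n}$ fraction of $\mA$), and accounts for the $(1-2q^{-n})$-fraction by the same union bound over the smoothing property and full-rankness that you describe. Your write-up simply makes explicit the ratio $\rho_\sigma(\vx^*)/\rho_\sigma(\Lambda_\mA^\vy)$ and the final counting step $q^n\le 2^{m/2}\le \sigma^m\cdot 2^{-m/2}$, which the paper leaves implicit in its citation.
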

\begin{remark}\label{rem:additional_assumption}
\cite[Lemma 2.5]{TCC:AnaPorVai23} claims that the above holds for all full-rank $\mA$ by citing 
\cite[Lemma 2.11]{TCC:PeiRos06} and
\cite[Lemma 5.3]{STOC:GenPeiVai08}. However, we found that we need to assume an additional property  of $\mA$ to apply \cite[Lemma 5.3]{STOC:GenPeiVai08}.\footnote{Concretely, we need to assume that 
for all $\vs\in \ZZ_q^n\setminus \{\vzero\}$, $\|\mA^\trans \vs \mod q\|_\infty\ge q/4$.
}
Fortunately, it is proven in \cite[Lemma 5.3]{STOC:GenPeiVai08} that an at least $(1-q^{-n})$-fraction of $\mA\in \ZZ_q^{n\times m}$ satisfies the required property.  Since an at least $(1-q^{-n})$-fraction of $\mA\in \ZZ_q^{n\times m}$ is full-rank (e.g., \cite[Lemma 5.1]{STOC:GenPeiVai08}), by taking the union bound,  an at least $(1-2q^{-n})$-fraction of $\mA\in \ZZ_q^{n\times m}$ satisfies both. 
\end{remark}

\if0
\begin{lemma}[{\cite[Lemma 5.1]{STOC:GenPeiVai08}}] \takashi{The cited lemma is unecessarily stronger than below.}
    Let $n,m$ be positive integers 
    $q$ be a prime \takashi{I'm not sure if it needs to be a prime} such that $m \ge 2n \log q$. Then $(1-q^{-n})$-fraction of $\mA\in \ZZ_q^{n\times m}$ is full-rank.   
\end{lemma}
\fi

\if0
\begin{lemma}[{\cite[Lemma 2.10]{TCC:AnaPorVai23}}]
Let $n,m\in \NN$ and $q$ be a prime such that $m\ge 2n\log q$ and 
let $\mA\in \ZZ_q^{n\times m}$ be a full-rank matrix. 
Then for any $\sigma=\omega(\sqrt{\log m})$, there exists a negligible function $\epsilon=\epsilon(m)$ such that
\begin{align}
\max_{\vy\in \ZZ_q^n}
\max_{
\substack{
\vx\in \ZZ_q^m,\|\vx\|\le \sigma \sqrt{m}\\
\mA\vc=\vy \mod q
}
}
\left\{
\frac
{\rho_\sigma(\vx)}
{
\displaystyle\sum_{
\substack{
\vz\in \ZZ_q^m,\|\vz\|\le \sigma \sqrt{m}\\
\mA\vz=\vy \mod q
}
}
\rho_\sigma(\vz)
}
\right\}
\le
2^{-m+1}\frac{1+\epsilon}{1-\epsilon}.
\end{align}
\end{lemma}
\fi

\if0
\noindent
\textbf{Distributions.}
For an integer $m > 0$, let $\Gauss{\ZZ^m}{\sigma}$ be the discrete Gaussian distribution over $\ZZ^m$ with parameter $\sigma > 0$. We use the following lemmas regarding distributions.

\begin{lemma} [\cite{JACM:Regev09}, Lemma~2.5] \label{lem:gauss_bound} 
	We have $\Pr[ \norm{\vx} > \sigma \sqrt{m} : \vx \samp \Gauss{\ZZ^m}{\sigma} ] < 2^{-2m}$. 
\end{lemma}

\begin{lemma}\label{Lemma:Noise_smudging}
	Let $\vx_0 \in \ZZ^m$ be a (fixed) vector such that $\| \vx_0 \|_\infty \leq \delta $
	and let $\vx \in \ZZ^m$ be a random vector that is chosen as $\vx \sample [-\smudpar, \smudpar]^m$.
	Then, the distribution of $\vx_0 + \vx$ is 
	within statistical distance ${m\delta }/\Gamma$ from the distribution of $\vx_0$.
\end{lemma}

\begin{lemma}[Leftover Hash Lemma] \label{lem:lhl}
	Let $q > 2$  be a prime, $m, n, k$ be positive integers such that $m > (n + 1) \log q + \omega(\log n)$, $k = \poly(n)$. Then, if we sample $\mA \samp \ZZ_q^{n \times m}$ and $\mR \sample \set{-1, 0, 1}^{m \times k}$, then $(\mA, \mA \mR)$ is distributed negligibly close to $U(\ZZ_q^{n \times m}) \times U(\ZZ_q^{n \times k})$. 
\end{lemma}
\fi

We rely on the quantum hardness of the short integer solution (SIS) problem defined below. 
\begin{definition} [Short Integer Solution Problem] \label{def:sis}	
 For parameters $n=n(\secp), m=m(\secp), q(\secp), \beta=\beta(\secp)$,  the quantum hardness of the \emph{short integer solution} (SIS) problem $\SIS_{n, m, q, \beta}$ means that for any QPT algorithm $\mathcal A$:
\begin{align}
\Pr[\mA\vu = \vzero \mod q ~\land~ \|\vu\|_\infty\le \beta(\secp) ~\land~ \vu \neq \vzero:
\mA\gets \ZZ_q^{m\times n}, \vu\gets \A(\mA)
]\le \negl(\secp).
\end{align}
\end{definition}

The parameter regime for the SIS problem considered in this work is similar to that in \cite{STOC:GorVaiWic15}.
We assume that for any $\beta=2^{\poly(\secp)}$, there are some $n=\poly(\secpar)$ and a prime $q=2^{\poly(\secpar)}$ such that for any $m=\poly(\secpar)$, the quantum hardness of $\SIS_{n, m, q, \beta}$ holds. (Note that we must have $q\ge \beta$ since otherwise $\SIS_{n, m, q, \beta}$ can be trivially solved.) We simply say that the SIS assumption holds to mean that the above is true.  
By Regev's reduction \cite{JACM:Regev09}, 
the assumption can be reduced to the quantum hardness of certain worst-case
lattice problems (e.g., ${\sf SIVP}$ or $\GapSVP$) with sub-exponential approximation factors, which is widely believe at the moment.

\paragraph{\bf Gadget Matrix.}
Let $n,k,q$ be positive integers such that $k\geq n\lceil \log q\rceil$.
A gadget matrix $\mG$ is defined as $\mI_n\otimes (1,2,...,2^{\lceil \log q\rceil-1})$ padded with $k - n\lceil \log q\rceil$ zero columns. 
The parameters $n,k,q$ are not explicitly written in the notation, but they will be clear from the context.  

\paragraph{\bf Trapdoors.}
There has been a long line of works \cite{Stoc:Ajtai96,STOC:GenPeiVai08,EC:AgrBonBoy10,JC:CHKP12,C:AgrBonBoy10,EC:MicPei12} about methods to sample a matrix along with a trapdoor that enables us to efficiently solve the SIS and LWE problems.
In particular, we rely on the following lemmas.

\begin{lemma}[{\cite[Lemma 2.2]{STOC:GorVaiWic15}}]\label{lem:trapdoor} 
There exist PPT algorithms $\TrapGen$, $\Sam$,  $\SamPre$  such that the following holds. Given integer $n\ge 1$, $q\ge 2$,  there exists some
$m^*=m^*(n,q)=O(n\log q)$, $\beta_{sam}=\beta_{sam}(n,q)=O(n\sqrt{\log q})$ such that for all $m\ge m^*$ and all $k=\poly(n)$, we have:
\begin{enumerate}
\item $\mR\gets \Sam(1^m,1^k,q)$ samples a matrix $\mR\in \ZZ_q^{m\times k}$ which satisfies $\|\mR\|_\infty\le \beta_{sam}$.
\item We have 
\[
\Delta(\mA,\mA')\le \negl(n)
\]
and
\[
\Delta((\mA,\td,\mR,\mV),(\mA,\td,\mR',\mV'))\le \negl(n)
\]
where $(\mA,\td)\gets \TrapGen(1^n,1^m,q)$, $\mA'\gets \ZZ_q^{n\times m}$, $\mR\gets \Sam(1^m,1^k,q)$, $\mV:=\mA\cdot \mR$, $\mV'\gets \ZZ_q^{n\times k}$, $\mR'\gets \SamPre(\mA,\mV',\td)$. 
 Moreover,
we guarantee that any
$\mR'\in \SamPre(\mA,\mV',\td)$ always satisfies 
$\mV'=\mA\cdot \mR'$ and $\|\mR'\|_\infty\le \beta_{sam}$.  
\end{enumerate}
\end{lemma}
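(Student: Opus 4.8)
The plan is to instantiate $\TrapGen$, $\Sam$, and $\SamPre$ via the gadget-based trapdoor construction of Micciancio--Peikert combined with the Gaussian preimage sampling of Gentry--Peikert--Vaikuntanathan, and to establish the two statistical-closeness claims through the leftover hash lemma and a smoothing-parameter argument. Concretely, for $\TrapGen(1^n,1^m,q)$ I would write $m=\bar m+w$ with $w=n\lceil\log q\rceil$, sample $\mA_0\gets\ZZ_q^{n\times \bar m}$ uniformly and a short Gaussian matrix $\mR_0\gets D_{\ZZ^{\bar m\times w},\sigma_0}$, and output $\mA\seteq[\mA_0\mid \mG-\mA_0\mR_0]$ together with the trapdoor $\td\seteq\mR_0$, where $\mG$ is the $n\times w$ gadget matrix. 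Since $\bar m\ge 2n\log q$, the leftover hash lemma shows that $(\mA_0,\mA_0\mR_0)$ is within $\negl(n)$ statistical distance of the uniform distribution, so the marginal distribution of $\mA$ is $\negl(n)$-close to uniform over $\ZZ_q^{n\times m}$; this immediately gives the first closeness claim $\Delta(\mA,\mA')\le\negl(n)$.

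For $\Sam(1^m,1^k,q)$ I would sample each column of $\mR$ from the discrete Gaussian $D_{\ZZ^m,\sigma}$ and reject any sample whose $\ell_\infty$-norm exceeds $\beta_{sam}\seteq O(n\sqrt{\log q})$; by the tail bound of \Cref{lem:tail_infty} this rejection occurs only with negligible probability, so the output is statistically close to $D_{\ZZ^m,\sigma}$ while deterministically satisfying $\|\mR\|_\infty\le\beta_{sam}$. For $\SamPre(\mA,\mV',\td)$, I would sample, for each target column $\vy$ of $\mV'$, a preimage $\vx\gets D_{\Lambda_\mA^{\vy},\sigma}$ using the gadget trapdoor: first sample a short Gaussian preimage of (a suitably shifted) $\vy$ under $\mG$, which is easy because $\mG$ has a publicly known short basis, then fold it through $\mR_0$ to obtain a preimage under $\mA$, and finally add a perturbation that corrects the covariance so that the result is distributed as the spherical discrete Gaussian over the coset $\Lambda_\mA^{\vy}$; a final rejection step enforces the deterministic bound $\|\mR'\|_\infty\le\beta_{sam}$. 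By construction every output of $\SamPre$ satisfies $\mV'=\mA\mR'$.

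The heart of the argument, and the step I expect to be the main obstacle, is the second closeness claim: that $(\mA,\td,\mR,\mA\mR)$ with $\mR\gets\Sam$ is statistically indistinguishable from $(\mA,\td,\mR',\mV')$ with $\mV'\gets\ZZ_q^{n\times k}$ and $\mR'\gets\SamPre(\mA,\mV',\td)$. The right tool is the smoothing lemma: when $\sigma$ exceeds the smoothing parameter $\eta_\epsilon(\Lambda_\mA^\bot)$, which holds for $\sigma=\omega(\sqrt{\log m})$ for all but a $\negl(n)$-fraction of $\mA$ by the same rank and regularity facts used in \Cref{lem:prob_zero} and \Cref{rem:additional_assumption}, the image $\mA\vx$ of a Gaussian $\vx\gets D_{\ZZ^m,\sigma}$ is within $\negl(n)$ of uniform over $\ZZ_q^n$, and conditioned on $\mA\vx=\vy$ the vector $\vx$ is distributed exactly as $D_{\Lambda_\mA^{\vy},\sigma}$. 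Hence the forward joint distribution factors in precisely the same way as the reverse one, and the two tuples agree up to the union of the $\negl(n)$ errors from smoothing, from the truncation in $\Sam$ and $\SamPre$, and from the near-uniformity of $\mA$. The delicate point is verifying that the perturbation in $\SamPre$ reproduces a \emph{spherical} Gaussian whose distribution, together with its internal randomness, is independent of which particular trapdoor $\td$ was used; this trapdoor-independence is exactly what will later allow us to run $\SamPre$ coherently over a superposition of trapdoors, and it must be checked carefully while simultaneously confirming that the claimed bounds $m^*=O(n\log q)$ and $\beta_{sam}=O(n\sqrt{\log q})$ are compatible with the smoothing condition.
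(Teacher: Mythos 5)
The paper does not prove this statement at all: \Cref{lem:trapdoor} is imported verbatim as a citation to \cite[Lemma 2.2]{STOC:GorVaiWic15}, so there is no in-paper proof to compare against. Your reconstruction is the standard and essentially correct route to it: $\TrapGen$ as the Micciancio--Peikert gadget construction $\mA=[\mA_0\mid \mG-\mA_0\mR_0]$ with near-uniformity of $\mA$ from the leftover hash lemma, $\Sam$ as a truncated discrete Gaussian, $\SamPre$ as perturbed gadget preimage sampling, and the second closeness claim from the GPV smoothing argument (image of a Gaussian is near-uniform; conditioned on the image, the preimage is the coset Gaussian). The parameter bookkeeping you flag does work out: the preimage-sampling width is $\tilde{O}(\sqrt{n\log q})$, so the $\ell_\infty$ tail bound comfortably fits under $\beta_{sam}=O(n\sqrt{\log q})$, and $m^*=O(n\log q)$ suffices for the leftover hash lemma and smoothing for all but a negligible fraction of $\mA$. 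Two small remarks. First, the ``always satisfies'' clause is most cleanly handled by making $\SamPre$ loop until the norm bound holds (it fails only with negligible probability per attempt), rather than by a one-shot rejection. Second, the ``delicate point'' you single out---that $\SamPre$'s output and internal randomness be independent of which trapdoor was used, to enable coherent signing over a superposition of trapdoors---is not part of this lemma and is not how the paper achieves coherence: the paper never runs $\SamPre$ in superposition. Instead it uses a separate quantum Gaussian sampler (\Cref{thm:QSampGauss}) built on the deterministic inversion trapdoor of \Cref{lem:R_is_trapdoor} and a Fourier-duality argument, precisely to sidestep the issue you are worried about. So you need not prove trapdoor-independence of $\SamPre$ here; the classical GPV statement that the output distribution is (statistically close to) $D_{\Lambda_\mA^{\vy},\sigma}$, hence independent of the trapdoor as a distribution, is all this lemma asks for.
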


\begin{definition}\label{def:LWE_trapdoor}
Let $\Invert$ be a classical deterministic polynomial-time algorithm. 
For  $n,m,q\in \NN$,    
$\beta_{err}>0$,
and $\mA\in \ZZ_q^{n\times m}$, 
we say that a classical string $\td$ is
a $\beta_{err}$-LWE-trapdoor of $\mA$ w.r.t. $\Invert$ if the following holds.  For any
$\mathbf{s}\in \mathbb{Z}_q^{n}$ and $\mathbf{e}\in \mathbb{Z}_q^{m}$  
such that $\|\mathbf{e}\|_{\infty}\le \beta_{err}$,  
we have 
\[
\Invert(\mathbf{A},\td,\mathbf{s}^\trans \mA+\ve^\trans \mod q)=(\vs,\ve).
\]
\end{definition}

\begin{lemma}[\cite{EC:MicPei12}]\label{lem:R_is_trapdoor}
There is a classical deterministic polynomial-time algorithm $\Invert$ such that the following holds. 
For $n,q,m,k\in \NN$ such that 
$q\ge 2$ and
$k\geq n\lceil \log q\rceil$,   
$\mathbf{A}\in \mathbb{Z}_q^{n\times m}$, 
and $\mathbf{R}\in \mathbb{Z}_q^{m\times k}$, 
$\mathbf{R}$ is a 
$\beta_{err}$-LWE-trapdoor of $\left[\mathbf{A}\mid \mathbf{A}\mathbf{R}+\mathbf{G}\right]$ w.r.t. $\Invert$ for  $\beta_{err}=q/(5k(m\|\mR\|_\infty+1))$ 
where $\mG\in \ZZ_q^{n\times k}$ is the gadget matrix. 
\end{lemma}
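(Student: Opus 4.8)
The plan is to exhibit the inversion algorithm explicitly and verify that it recovers $(\vs,\ve)$ exactly, following the gadget-trapdoor technique of \cite{EC:MicPei12}. Write $\mB\seteq [\mathbf{A}\mid \mathbf{A}\mathbf{R}+\mathbf{G}]\in \ZZ_q^{n\times (m+k)}$ and suppose the input vector is $\vc^\trans=\vs^\trans \mB+\ve^\trans \bmod q$ for some $\vs\in \ZZ_q^n$ and $\ve\in \ZZ_q^{m+k}$ with $\|\ve\|_\infty\le \beta_{err}$. I would define $\Invert$ to first parse $\vc=(\vc_1,\vc_2)$ with $\vc_1\in \ZZ_q^m$ and $\vc_2\in \ZZ_q^k$, so that $\vc_1^\trans=\vs^\trans\mA+\ve_1^\trans$ and $\vc_2^\trans=\vs^\trans(\mA\mR+\mG)+\ve_2^\trans$, where $\ve=(\ve_1,\ve_2)$. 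The key algebraic observation is that subtracting off the $\mA$-dependent part using the trapdoor $\mR$ cancels the unknown matrix $\mA$: computing
\begin{align}
\vw^\trans\seteq \vc_2^\trans-\vc_1^\trans\mR = \vs^\trans\mG + \left(\ve_2^\trans-\ve_1^\trans\mR\right) \bmod q
\end{align}
yields a gadget-only LWE instance whose secret is still $\vs$ and whose noise is $\ve_2^\trans-\ve_1^\trans\mR$.

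Next I would bound this transformed noise. Using the inequality $\|\mA\mB\|_\infty\le m\|\mA\|_\infty\|\mB\|_\infty$ recalled in the preliminaries together with $\|\ve\|_\infty\le \beta_{err}$, we get $\|\ve_2^\trans-\ve_1^\trans\mR\|_\infty\le \|\ve_2\|_\infty+\|\ve_1^\trans\mR\|_\infty\le \beta_{err}+m\beta_{err}\|\mR\|_\infty=\beta_{err}(1+m\|\mR\|_\infty)$. Substituting $\beta_{err}=q/(5k(m\|\mR\|_\infty+1))$ this equals exactly $q/(5k)$, hence is at most $q/5$, which lies strictly inside the (roughly $q/4$) decoding radius of the base-$2$ gadget for every $k\ge n\lceil\log q\rceil$. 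Therefore the standard deterministic coordinate-wise gadget inversion recovers $\vs$ from $\vw$: each entry $s_j$ is read off from the block of $\lceil\log q\rceil$ coordinates of $\vw$ aligned with $\vg=(1,2,\ldots,2^{\lceil\log q\rceil-1})$, while the zero-padded columns merely carry noise and are ignored. Since $\mG$ has full row rank, the decoded $\vs$ is the unique secret consistent with $\vw$. With $\vs$ in hand, $\Invert$ sets $\ve_1^\trans\seteq \vc_1^\trans-\vs^\trans\mA$ and $\ve_2^\trans\seteq \vc_2^\trans-\vs^\trans(\mA\mR+\mG)$, both reduced to representatives in $(-q/2,q/2]$, and outputs $(\vs,(\ve_1,\ve_2))$; by construction these agree with the true $\vs$ and $\ve$ whenever $\vs$ was decoded correctly. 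The whole procedure is deterministic and runs in polynomial time, as required.

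The one point that needs care — and what I regard as the \emph{main technical obstacle} — is the gadget-decoding step: one must pin down the exact decoding radius of $\mG$ under the chosen representative convention for $\ZZ_q$ and check that the bound $q/(5k)$ keeps every coordinate of $\ve_2^\trans-\ve_1^\trans\mR$ strictly inside it, so that no coordinate wraps around modulo $q$. The factor $5k$ in the definition of $\beta_{err}$ is precisely the slack that guarantees this uniformly in $k$. Once this norm bound is in place, correctness is immediate from the cancellation identity above, and in particular the statement holds for \emph{all} $\mA$ and $\mR$, with no probabilistic or worst-case assumption on the matrices.
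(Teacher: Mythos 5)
Your construction of $\Invert$ and the cancellation identity $\vc_2^\trans-\vc_1^\trans\mR=\vs^\trans\mG+(\ve_2^\trans-\ve_1^\trans\mR)$ are exactly the paper's approach, and your noise bound $\|\ve_2^\trans-\ve_1^\trans\mR\|_\infty\le \beta_{err}(1+m\|\mR\|_\infty)=q/(5k)$ is correct. The gap is in the one step you yourself flag as the crux: the gadget-decoding radius. You claim the base-$2$ gadget decodes per-coordinate noise up to ``roughly $q/4$'' and then check $q/5<q/4$. That radius is only right for a power-of-two modulus with naive bit-by-bit decoding; in the paper's application $q$ is a large prime, and the correct condition (from \cite{EC:MicPei12}, Theorem 4.1, which the paper invokes) is that $\|\mS^\trans\ve'\|_\infty<q/2$ for a basis $\mS$ of $\Lambda_\mG^\bot$ with $\|\mS\|_\infty\le 2$. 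The last column of $\mS$ encodes the binary expansion of $q$ and has up to $\lceil\log q\rceil$ nonzero entries, so the per-coordinate decoding radius is only $\Theta(q/\log q)$, not $q/4$ --- a polynomially large discrepancy since $q=2^{\poly(\secp)}$ here.

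The tell is that your argument never actually uses the factor $k$: if a per-coordinate radius of $q/4$ were correct, the lemma would hold with $\beta_{err}=q/(5(m\|\mR\|_\infty+1))$, which is false. The factor $k$ in the denominator is there precisely to absorb the blow-up from $\ve'\mapsto \mS^\trans\ve'$: one bounds $\|\mS^\trans\ve'\|_\infty\le k\|\mS\|_\infty\|\ve'\|_\infty\le 2k\cdot q/(5k)=2q/5<q/2$, which is how the paper closes the argument. Your final numbers happen to satisfy the correct condition as well (since $q/(5k)\le q/(5\lceil\log q\rceil)$ is well inside the true radius), so the lemma you are proving is true and your algorithm is the right one, but the justification as written would also ``prove'' a false strengthening. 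To repair it, replace the $q/4$ claim with the $\|\mS^\trans\ve'\|_\infty<q/2$ criterion and carry the factor $2k$ through the estimate.
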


Though \Cref{lem:R_is_trapdoor} is implicitly shown in \cite{EC:MicPei12}, we give a proof in \Cref{sec:proof_MR} for completeness. 

\if0
\begin{lemma} \takashi{$n,m,q$ are fixed}
There is a classical deterministic polynomial-time algorithm $\Invert$ that satisfies the following:
For any $\mathbf{A}\in \mathbb{Z}_q^{n\times m_0}$, $\mathbf{R}\in \mathbb{Z}_q^{m_0\times m_1}$, $\mathbf{s}\in \mathbb{Z}_q^{n}$, and $\mathbf{e}\in \mathbb{Z}_q^{m_0+m_1}$ such that 
$\|\mathbf{R}\|\le ??$ 
and $\|\mathbf{e}\|\le ??$, 
let $\mathbf{b}:=\mathbf{s}^\top\cdot \left[\mathbf{A}\concat \mathbf{A}\mathbf{R}+\mathbf{G}\right]+\mathbf{e}^{\top}$. 
Then we have 
\[
\Invert(\mathbf{A},\mathbf{R},\mathbf{b})=(\mathbf{s},\mathbf{e}).
\] 
\end{lemma}
\fi

\if0
\begin{lemma}
There exists a tuple of algorithms $(\TrapGen,\InvertLeft,\InvertRight)$ with the following syntax: 
\begin{itemize}
\item $\TrapGen(1^n,1^m,q)\ra (\mathbf{A},\td_{\mathbf{A}})$: This is a PPT algorithm that takes parameters $1^n,1^m,q$ as input, and outputs a matrix $\mathbf{A}\in \mathbb{Z}_q^{n\times m}$ and trapdoor  $\td_{\mathbf{A}}$. 
\item $\InvertLeft(\mathbf{A},\mathbf{B},\mathbf{b},\td_{\mathbf{A}})\ra (\mathbf{s},\mathbf{e})$: 
This is a classical deterministic polynomial-time algorithm that takes as input $\mathbf{A}\in \mathbb{Z}_q^{n\times m}$, $\mathbf{B}\in \mathbb{Z}_q^{n\times m'}$, $\mathbf{b}\in \mathbb{Z}_q^{n}$, and trapdoor $\td_{\mathbf{A}}$, and outputs $\mathbf{s}\in \mathbb{Z}_q^{n}$ and $\mathbf{e}\in \mathbb{Z}_q^{m+m'}$ 
\item $\InvertRight(\mathbf{A},\mathbf{R},\mathbf{b})\ra (\mathbf{s},\mathbf{e})$: 
This is a classical deterministic polynomial-time algorithm that takes as input $\mathbf{A}\in \mathbb{Z}_q^{n\times m}$, $\mathbf{R}\in \mathbb{Z}_q^{m\times m'}$, $\mathbf{b}\in \mathbb{Z}_q^{n}$, and outputs $\mathbf{s}\in \mathbb{Z}_q^{m+m'}$ and $\mathbf{e}\in \mathbb{Z}_q^{n}$ 
\end{itemize}
They satisfy the following for any positive integers $m=m(n)$, $m'=m'(n)$, and $q=q(n)$ where $m\in \Omega(n\log q)$: 
\begin{enumerate}
\item The distribution of $\mathbf{A}(1^n,1^m,q)$ generated by $\TrapGen$ is negligibly close to the uniform distribution over $\mathbb{Z}_q^{n\times m}$. \takashi{need to make it more formal}
\item 
For any 
$\mathbf{B}\in \mathbb{Z}_q^{n\times m'}$, 
$\mathbf{s}\in \mathbb{Z}_q^{m+m'}$,  and $\mathbf{e}\in \mathbb{Z}_q^{n}$ such that $\|\mathbf{e}\|\le ??$, 
let $\mathbf{b}:=\mathbf{s}^\top\cdot \left[\mathbf{A}\concat \mathbf{B}\right]+\mathbf{e}^{\top}$. 
Then we have 
\[
\Pr[\InvertLeft(\mathbf{A},\mathbf{B},\mathbf{b},\td_{\mathbf{A}})= (\mathbf{s},\mathbf{e})]\ge 1-\negl(n)
\]
where $(\mathbf{A},\td_{\mathbf{A}})\gets \TrapGen(1^n,1^m,q)$. 
\item 
For any 
$n\in \mathbb{N}$, 
$\mathbf{A}\in \mathbb{Z}_q^{n\times m}$, $\mathbf{R}\in \mathbb{Z}_q^{m\times m'}$, $\mathbf{s}\in \mathbb{Z}_q^{n}$, and $\mathbf{e}\in \mathbb{Z}_q^{m_0+m_1}$ such that 
$\|\mathbf{R}\|\le ??$ 
and $\|\mathbf{e}\|\le ??$, 
let $\mathbf{b}:=\mathbf{s}^\top\cdot \left[\mathbf{A}\concat \mathbf{A}\mathbf{R}+\mathbf{G}\right]+\mathbf{e}^{\top}$. 
Then we have 
\[
\InvertRight(\mathbf{A},\mathbf{R},\mathbf{b})=(\mathbf{s},\mathbf{e}).
\]
\end{enumerate}
\takashi{I think this is implicit in MP12}
\end{lemma}
\fi

\paragraph{\bf Lattice Evaluation.}
The following lemma is an abstraction of the evaluation algorithms developed in a long sequence of works~\cite{EC:MicPei12,C:GenSahWat13,EC:BGGHNS14,STOC:GorVaiWic15}.
\begin{lemma}[{\cite[Theorem 2.5]{C:Tsabary19}}]\label{lem:lattice_eval}
There exist classical deterministic polynomial-time algorithms $\EvalF$ and $\EvalFX$ such that for any positive integers $n,q,\ell_{\mathsf{in}},\ell_{\mathsf{out}}$ and $m\geq n\lceil  \log q\rceil$, for any depth $d$ boolean circuit $f:\bit^{\ell_{\mathsf{in}}}\rightarrow \bit^{\ell_{\mathsf{out}}}$, input $x\in \bit^{\ell_{\mathsf{in}}}$ with $x[1]=1$,\footnote{This assumption is not explicitly stated in the original statement of \cite[Theorem 2.5]{C:Tsabary19}, but it is pointed out in \cite[Remark 3.2]{AC:KNYY20} that it would have been necessary.} and matrix $\mA\in \ZZ_q^{n\times m\ell }$, the outputs $\mH:= \EvalF(f,\mA)$ and $\widehat{\mH}:=\EvalFX(f,x,\mA)$ are both in $\ZZ^{m\ell_{\mathsf{in}}\times m\ell_{\mathsf{out}}}$ and it holds that $\|\mH\|_{\infty},\|\widehat{\mH}\|_{\infty}\leq (2m)^d$, and 
\[
[\mA-x\otimes \mG]\widehat{\mH}=\mA\mH-f(x)\otimes \mG \mod q
\]
where we regard $x$ and $f(x)$ as a row vector and $\mG\in \ZZ_q^{n\times m}$ is the gadget matrix. 
\end{lemma}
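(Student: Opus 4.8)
The plan is to construct $\EvalF$ and $\EvalFX$ by homomorphic, gate-by-gate evaluation on lattice encodings, following the GSW/BGG$^+$ paradigm. Write $\mA=[\mA_1\mid\cdots\mid\mA_{\ell_{\mathsf{in}}}]$ with each $\mA_i\in\ZZ_q^{n\times m}$, and recall the gadget has an efficiently computable binary ``inverse'' $\mG^{-1}(\cdot)$ with $\mG\,\mG^{-1}(\mathbf{M})=\mathbf{M}$ and $\|\mG^{-1}(\mathbf{M})\|_\infty\le 1$. For a wire carrying value $v\in\bit$ with public matrix $\mA_w$, regard $\mA_w-v\mG$ as its encoding; note $[\mA-x\otimes\mG]=[\mathbf{E}_1\mid\cdots\mid\mathbf{E}_{\ell_{\mathsf{in}}}]$ with $\mathbf{E}_i=\mA_i-x[i]\mG$. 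First I would fix a layered arithmetic realization of the depth-$d$ boolean circuit $f$ over the gate set $\{+,\times\}$ together with the constant $1$, using $\neg a=1-a$, $a\wedge b=ab$, $a\vee b=a+b-ab$, and so on; the constant-$1$ wire is supplied by the hypothesis $x[1]=1$, which is exactly why it is needed. Each boolean gate then uses at most one multiplication and a bounded number of additions.

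The core is the per-wire invariant, maintained for every wire $w$ (with value $v_w$ on input $x$): two matrices $\mathbf{H}_w,\widehat{\mathbf{H}}_w\in\ZZ^{m\ell_{\mathsf{in}}\times m}$ with $\mA\mathbf{H}_w=\mA_w$ (independent of $x$) and $[\mA-x\otimes\mG]\,\widehat{\mathbf{H}}_w=\mA_w-v_w\mG$. For input wire $i$ set $\mathbf{H}_i=\widehat{\mathbf{H}}_i=\mathbf{S}_i$, the binary selector placing $\mI_m$ in block $i$; then $\mA\mathbf{S}_i=\mA_i$ and $[\mA-x\otimes\mG]\mathbf{S}_i=\mathbf{E}_i$. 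For an addition gate with inputs $a,b$ take $\mathbf{H}_{\mathrm{out}}=\mathbf{H}_a+\mathbf{H}_b$ and $\widehat{\mathbf{H}}_{\mathrm{out}}=\widehat{\mathbf{H}}_a+\widehat{\mathbf{H}}_b$. For a multiplication gate set $\mathbf{H}_{\mathrm{out}}=\mathbf{H}_a\,\mG^{-1}(\mA_b)$ and $\widehat{\mathbf{H}}_{\mathrm{out}}=\widehat{\mathbf{H}}_a\,\mG^{-1}(\mA_b)+v_a\,\widehat{\mathbf{H}}_b$; a direct computation gives $[\mA-x\otimes\mG]\widehat{\mathbf{H}}_{\mathrm{out}}=(\mA_a-v_a\mG)\mG^{-1}(\mA_b)+v_a(\mA_b-v_b\mG)=\mA_a\mG^{-1}(\mA_b)-v_av_b\mG$, as required. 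The crucial asymmetry is visible here: $\mG^{-1}(\mA_b)$ depends only on $(f,\mA)$, so $\mathbf{H}_{\mathrm{out}}$ is $x$-independent, whereas the correction term $v_a\widehat{\mathbf{H}}_b$ in $\widehat{\mathbf{H}}_{\mathrm{out}}$ carries the $x$-dependence that accounts for the $-v_av_b\mG$ term.

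Then $\EvalF(f,\mA)$ composes the $\mathbf{H}_w$ in topological order and outputs $\mH=[\mathbf{H}_{o_1}\mid\cdots\mid\mathbf{H}_{o_{\ell_{\mathsf{out}}}}]$ over the $\ell_{\mathsf{out}}$ output wires, and $\EvalFX(f,x,\mA)$ does the same with $\widehat{\mathbf{H}}_w$ to output $\widehat{\mH}$; both lie in $\ZZ^{m\ell_{\mathsf{in}}\times m\ell_{\mathsf{out}}}$ and are computable in deterministic polynomial time since $\mG^{-1}$ and integer matrix arithmetic are. The global identity follows block-by-block from the invariant at the output wires: block $j$ reads $[\mA-x\otimes\mG]\widehat{\mathbf{H}}_{o_j}=\mA\mathbf{H}_{o_j}-v_{o_j}\mG=(\mA\mH)_j-f(x)[j]\mG$, which assembles to $[\mA-x\otimes\mG]\widehat{\mH}=\mA\mH-f(x)\otimes\mG \bmod q$. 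The norm bound is proved by induction on wire depth using the stated inequality $\|\mathbf{AB}\|_\infty\le m\|\mathbf{A}\|_\infty\|\mathbf{B}\|_\infty$: depth-$0$ transforms are binary, hence have $\infty$-norm $\le 1=(2m)^0$; a multiplication maps norm-$N$ operands to norm $\le mN+N=(m+1)N$, and additions at most sum a bounded number of norm-$N$ operands, so each level multiplies the accumulated $\infty$-norm by at most $2m$, yielding $\|\mH\|_\infty,\|\widehat{\mH}\|_\infty\le(2m)^d$. The main obstacle is the disciplined bookkeeping that keeps $\mathbf{H}_w$ strictly $x$-independent while $\widehat{\mathbf{H}}_w$ absorbs all $x$-dependence yet still satisfies the \emph{same} linear relation, together with handling fan-out and choosing the arithmetic gate realization so that the per-level blow-up stays within the factor $2m$ needed for the exact $(2m)^d$ bound; the relation identities and the norm induction are otherwise routine.
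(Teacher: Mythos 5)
The paper does not prove this lemma at all---it imports it verbatim from \cite{C:Tsabary19} (itself an abstraction of the BGG${}^{+}$14/GSW key-homomorphic evaluation)---and your reconstruction is exactly that standard proof: the input selector matrices, the addition rule, the multiplication rule $\widehat{\mathbf{H}}_{\mathrm{out}}=\widehat{\mathbf{H}}_a\,\mG^{-1}(\mA_b)+v_a\widehat{\mathbf{H}}_b$ with its asymmetric $x$-dependence, and the per-level $(2m)^d$ norm induction all match the cited argument, and your gate identities and norm estimates check out (using $m\geq n\lceil\log q\rceil$ so the one-multiplication-plus-additions realization of each boolean gate stays within the factor $2m$). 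You also correctly identify why the hypothesis $x[1]=1$ is needed---it supplies the constant-$1$ wire used to arithmetize NOT/OR---which is precisely the point of the paper's footnote citing \cite{AC:KNYY20}.
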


\paragraph{\bf Gaussian Superpositions.} 
The following lemma about efficient generation of Gaussian superpositions is shown in \cite{TCC:AnaPorVai23} based on the  Grover-Rudolph algorithm \cite{GR02}.
\begin{lemma}[{\cite[Sec. 3.1]{TCC:AnaPorVai23} }]\label{lem:Gaussian_superposition}
There is a quantum algorithm that given $m\in \NN$, $q\ge 2$, and $\sigma\in (\sqrt{8m},q/\sqrt{8m})$, and outputs a quantum state within trace distance $2^{-\Omega(m)}$ of the normalized variant of the following state:
\[
\sum_{\ve \in \ZZ_q^m}\rho_\sigma(\ve)\ket{\ve}.
\]
The algorithm runs in time $\poly(m,\log q)$. 
\end{lemma}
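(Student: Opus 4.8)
The plan is to follow the Grover--Rudolph state-preparation paradigm \cite{GR02}, as in \cite{TCC:AnaPorVai23}, using two facts: a spherical Gaussian weight factorizes across coordinates, and in the stated parameter window the relevant one-dimensional Gaussian partial sums are computable to exponential precision in polynomial time.

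First I would reduce to a single coordinate. Since $\rho_\sigma(\ve)=\prod_{i=1}^m e^{-\pi e_i^2/\sigma^2}$, the normalized target factorizes as
\[
\bigotimes_{i=1}^m\ket{g},\qquad \ket{g}\seteq\frac{1}{\sqrt{W}}\sum_{e\in\ZZ_q}\rho_\sigma(e)\ket{e},\quad W\seteq\sum_{e\in\ZZ_q}\rho_\sigma(e),
\]
where each factor lives on $\lceil\log q\rceil$ qubits. By subadditivity of trace distance under tensoring, it suffices to prepare a state within trace distance $2^{-\Omega(m)}/m$ of $\ket{g}$ in time $\poly(\log q,m)$ and then tensor $m$ copies, which gives total error $2^{-\Omega(m)}$.

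For the single coordinate I would run the Grover--Rudolph construction over a clean power-of-two--sized range $R\subseteq\ZZ$ of signed $\lceil\log q\rceil$-bit integers containing all representatives of $\ZZ_q$. Writing $e$ in binary, one starts from $\ket{0}^{\otimes\lceil\log q\rceil}$ and at step $j$ applies a $Y$-rotation to qubit $j$ controlled on the prefix $b_1\cdots b_{j-1}$; a fixed prefix pins down a contiguous block $B\subseteq R$, and the angle $\theta$ is chosen so that $\cos^2\theta=W(B_0)/W(B)$, where $B_0\subseteq B$ is the half with $b_j=0$ and $W(S)\seteq\sum_{e\in S}\rho_\sigma(e)$. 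A direct induction shows that the infinite-precision circuit outputs exactly the Gaussian superposition over $R$. This state is within trace distance $2^{-\Omega(m)}$ of $\ket{g}$: the upper bound $\sigma<q/\sqrt{8m}$ forces $q/2>\sigma\sqrt{2m}$, so by the tail bound of \Cref{lem:tail_infty} the Gaussian mass on $R\setminus(-q/2,q/2]$ (and, after reduction mod $q$, the wraparound) is a $2^{-\Omega(m)}$ fraction of the total.

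The crux, and the main obstacle, is to approximate each block weight $W(B)$ --- equivalently each rotation angle --- to additive precision $2^{-\Omega(m)}$ in time $\poly(\log q,m)$, even though $q$, $\sigma$, and the block sizes may be exponentially large, which rules out term-by-term summation. Here the lower bound $\sigma>\sqrt{8m}\gg1$ is essential: it makes $\rho_\sigma$ vary slowly across consecutive integers, so the discrete block sum $\sum_{e\in B}\rho_\sigma(e)$ is well-approximated by the Gaussian integral $\int_B\rho_\sigma(t)\,dt=\tfrac{\sigma}{2}\bigl(\mathrm{erf}(\sqrt\pi\,b/\sigma)-\mathrm{erf}(\sqrt\pi\,a/\sigma)\bigr)$ up to explicitly computable corrections, with a discretization error that the smoothness drives below $2^{-\Omega(m)}$; since $\mathrm{erf}$ and $\rho_\sigma$ are computable to $\poly(\log q,m)$ bits, every $W(B)$ and hence every angle is efficiently approximable to the needed precision. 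Making this discretization bound precise (e.g. via Poisson summation against a smooth window, or a rapidly converging theta-series) is the main technical effort. Prefixes whose block carries negligible weight also carry negligible amplitude, so the ill-conditioned angle there may be clamped without affecting the output. Finally I would propagate the per-angle errors over the $O(\log q)$ rotations of a branch, then over the $m$ coordinates, and add the truncation error from the passage to $\ZZ_q$, obtaining overall trace distance $2^{-\Omega(m)}$ and running time $\poly(\log q,m)$, as claimed.
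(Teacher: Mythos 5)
First, a point of calibration: the paper does not prove this lemma at all --- it is imported from \cite{TCC:AnaPorVai23} with only a pointer to the Grover--Rudolph algorithm --- so there is no in-paper proof to compare against, and your outline (factor the spherical weight into $m$ one-dimensional factors, prepare each factor by Grover--Rudolph, and charge the passage from $\ZZ$ to $\ZZ_q$ to the tail bound, which the window $\sigma<q/\sqrt{8m}$ makes $2^{-\Omega(m)}$) is exactly the intended route. That said, there is a concrete bug in your setup: Grover--Rudolph prepares $\sum_e\sqrt{p(e)}\ket{e}$ from a probability vector $p$, so to obtain amplitudes proportional to $\rho_\sigma(e)$ you must run it on $p(e)\propto\rho_\sigma(e)^2=\rho_{\sigma/\sqrt{2}}(e)$. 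As written, with $W(S)=\sum_{e\in S}\rho_\sigma(e)$ and $\cos^2\theta=W(B_0)/W(B)$, your circuit outputs amplitudes proportional to $\sqrt{\rho_\sigma(e)}=\rho_{\sqrt{2}\sigma}(e)$, i.e.\ a Gaussian of the wrong width, and the normalization $1/\sqrt{W}$ is off for the same reason. This is fixable by squaring the weights, but it changes which partial sums you must compute.

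Second, the step you yourself flag as ``the main technical effort'' is a genuine gap, and it is the entire content of the lemma. The assertion that each dyadic block weight equals the Gaussian integral ``up to explicitly computable corrections, with a discretization error that the smoothness drives below $2^{-\Omega(m)}$'' is not true as stated: the leading sum-versus-integral correction is of order $\max_{e\in B}\rho_\sigma(e)$, which for a central block of length $L$ is only an $O(1/L)$ \emph{relative} correction --- nowhere near $2^{-\Omega(m)}$, and not even $\negl(\secp)$ when $L$ is small. Reaching exponential precision requires taking $\Theta(m)$ Euler--Maclaurin (or smoothed Poisson-summation) correction terms, bounding the remainder via Hermite-polynomial estimates on the derivatives of $\rho_\sigma$ using $\sigma^2>8m$, and then separately handling deep-tail blocks, where $W(B)$ is so small that no truncated expansion achieves small relative error; your ``clamping'' remark is the right instinct, but it is precisely the place where one must prove that the total amplitude routed through ill-approximated branches is $2^{-\Omega(m)}$ via an amplitude-weighted error propagation through the Grover--Rudolph tree. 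One also needs the prefix-dependent angles to be computed coherently into an ancilla register, since at depth $j$ there are $2^{j-1}$ blocks and they cannot be enumerated. None of this is carried out, so the proposal is an accurate plan whose decisive estimates are missing.
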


\paragraph{\bf Duality Lemma.} 
\begin{lemma}[{\cite[Lemma 16]{ITCS:Poremba23}}]\label{lem:duality}
Let $n,m\in \NN$, $q\ge 2$ be a prime modulus and $\sigma\in (\sqrt{8m},q/\sqrt{8m})$. 
For an at least $(1-2q^{-n})$-fraction of $\mA\in \ZZ_q^{n\times m}$ and any $\vy\in \ZZ_q^n$, 
if we apply the $q$-ary quantum Fourier transform to the normalized variant of the following state: 
\[
\ket{\psi_{\vy}}:=\sum_{\vs\in \ZZ_q^n} \sum_{\ve\in\ZZ_q^m}\rho_{q/\sigma}(\ve)\omega_q^{-\langle\vs,\vy\rangle}\ket{\vs^\trans \mA + \ve^\trans \mod q},
\]
then the resulting state is within trace distance 
 $\negl(m)$ of the normalized variant of the following state:
\[
\ket{\hat{\psi}_\vy}:=\sum_{\vx\in \ZZ_q^m\cap \Lambda_\mA^\vy
}
\rho_\sigma(\vx)\ket{\vx}.
\] 
\end{lemma}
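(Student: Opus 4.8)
The plan is to prove the statement by directly computing the action of the $q$-ary quantum Fourier transform on $\ket{\psi_\vy}$ and matching the resulting amplitudes, coefficient by coefficient, to those of $\ket{\hat\psi_\vy}$ up to a global constant and a negligible additive error. Recall that the QFT acts as $\ket{\vz}\mapsto q^{-m/2}\sum_{\vx\in\ZZ_q^m}\omega_q^{\langle\vx,\vz\rangle}\ket{\vx}$. First I would apply it to the unnormalized $\ket{\psi_\vy}=\sum_{\vs,\ve}\rho_{q/\sigma}(\ve)\,\omega_q^{-\langle\vs,\vy\rangle}\ket{\vs^\trans\mA+\ve^\trans}$ and collect the amplitude on a fixed basis state $\ket{\vx}$. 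Using $\langle\vx,\vs^\trans\mA+\ve^\trans\rangle=\langle\vs,\mA\vx\rangle+\langle\ve,\vx\rangle$, this amplitude factorizes into an $\vs$-sum $\sum_{\vs\in\ZZ_q^n}\omega_q^{\langle\vs,\mA\vx-\vy\rangle}$ and an $\ve$-sum $\sum_{\ve\in\ZZ_q^m}\rho_{q/\sigma}(\ve)\,\omega_q^{\langle\ve,\vx\rangle}$.

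The $\vs$-sum evaluates exactly to $q^n$ when $\mA\vx=\vy\bmod q$ and to $0$ otherwise, so the QFT image is supported precisely on $\Lambda_\mA^\vy\cap\ZZ_q^m$, matching the support of $\ket{\hat\psi_\vy}$. The crux is the $\ve$-sum, which I would evaluate via Poisson summation. First I would extend the sum over the box $\ZZ_q^m$ (represented in $(-q/2,q/2]^m$) to all of $\ZZ^m$; the discarded tail has weight at most $2m\,e^{-\pi(\sigma/2)^2}\rho_{q/\sigma}(\ZZ^m)$ by \Cref{lem:tail_infty} applied with radius $q/2$ (using $\sigma>\sqrt{8m}$), which is negligible. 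Then, applying Poisson summation over $\ZZ^m$ to $\ve\mapsto\rho_{q/\sigma}(\ve)\,e^{2\pi i\langle\ve,\vx\rangle/q}$ and using that the continuous Fourier transform of a Gaussian is a Gaussian, the sum becomes $(q/\sigma)^m\sum_{\vw\in\ZZ^m}\rho_\sigma(\vx-q\vw)$. For $\vx\in(-q/2,q/2]^m$ the term $\vw=\vzero$ contributes $\rho_\sigma(\vx)$, while any other term satisfies $\|\vx-q\vw\|_\infty\ge q/2$ and is thus bounded by $e^{-\pi(q/2)^2/\sigma^2}\le e^{-2\pi m}$ because $\sigma<q/\sqrt{8m}$; summing these over $\ZZ^m$ remains negligible by the same tail bound. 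Hence the $\ve$-sum equals $(q/\sigma)^m\rho_\sigma(\vx)$ up to negligible relative error, and the QFT image is proportional to $\sum_{\vx\in\Lambda_\mA^\vy\cap\ZZ_q^m}\rho_\sigma(\vx)\ket{\vx}$, i.e.\ the unnormalized $\ket{\hat\psi_\vy}$ (whose support, by \Cref{cor:distribution_close}, differs negligibly from that of the full discrete Gaussian $D_{\Lambda_\mA^\vy,\sigma}$).

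It remains to upgrade these coefficient-wise approximations to a trace-distance bound between the \emph{normalized} states. The plan is to bound $\|\ket{\Phi}-c\ket{\hat\psi_\vy}\|$ for the unnormalized QFT image $\ket{\Phi}$, with an appropriate global constant $c$ absorbing the factors $q^{-m/2}$, $q^n$, and $(q/\sigma)^m$, and then invoke the standard fact that two vectors at small $\ell_2$-distance, relative to their norms, have normalizations at small trace distance. This is where I expect the main obstacle to lie: the additive errors from the $\ZZ^m$-truncation and from the $\vw\ne\vzero$ Poisson tails must be shown negligible \emph{relative to} $\|\ket{\hat\psi_\vy}\|=\big(\sum_{\vx\in\Lambda_\mA^\vy\cap\ZZ_q^m}\rho_\sigma(\vx)^2\big)^{1/2}$, which demands a lower bound on the Gaussian mass of the coset $\Lambda_\mA^\vy$. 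This is exactly the role of the ``good $\mA$'' hypothesis: for an at least $(1-2q^{-n})$-fraction of $\mA$ the coset is well-spread — full rank (guaranteed since $m\ge 2n\log q$) ensures $\Lambda_\mA^\vy\ne\emptyset$ for every $\vy$, and the smoothing regime $\sigma=\omega(\sqrt{\log m})$ implied by $\sigma>\sqrt{8m}$, together with the fraction bound underlying \Cref{lem:prob_zero}, guarantees that no single point dominates the Gaussian mass and that the truncation of \Cref{cor:distribution_close} is negligible. Carefully threading this normalization bound through the two approximation steps, rather than the Poisson computation itself, is the delicate part of the argument.
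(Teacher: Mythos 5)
The paper does not prove this lemma at all: it is imported verbatim from \cite[Lemma 16]{ITCS:Poremba23}, with only \Cref{rem:correction_2} adjusting the hypothesis from ``full-rank $\mA$'' to ``a $(1-2q^{-n})$-fraction of $\mA$.'' So there is no in-paper proof to compare against; what you have written is a blind reconstruction of Poremba's argument, and its computational core is the standard (and correct) route: the amplitude on $\ket{\vx}$ after the QFT factorizes into the character sum $\sum_{\vs}\omega_q^{\langle\vs,\mA\vx-\vy\rangle}=q^n\cdot\mathds{1}[\mA\vx=\vy]$ and the Gaussian-weighted exponential sum over $\ve$, and Poisson summation converts the latter into $(q/\sigma)^m\sum_{\vw}\rho_\sigma(\vx-q\vw)$, whose $\vw=\vzero$ term gives the claimed dual Gaussian amplitude. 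Your tail estimates for the $\ZZ_q^m\to\ZZ^m$ extension and for the $\vw\neq\vzero$ aliases are the right ones given $\sigma\in(\sqrt{8m},q/\sqrt{8m})$ (note only that \Cref{lem:tail_infty} as stated covers lattices, so for the aliasing sum you need its shifted-lattice version, which Banaszczyk also provides).

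The genuine gap is the step you yourself flag and defer: converting coefficient-wise additive errors into a trace-distance bound between the normalized states requires a lower bound on $\rho_{\sigma}$-mass of the coset $\Lambda_\mA^{\vy}\cap\ZZ_q^m$ that is uniform in $\vy$, and this is precisely where the ``good $\mA$'' hypothesis does its work. Your description of that hypothesis is slightly off: the relevant property is not the no-dominant-point bound of \Cref{lem:prob_zero} (which the paper uses elsewhere, to argue a sampled signature is nonzero), but the condition from \cite[Lemma 5.3]{STOC:GenPeiVai08} highlighted in \Cref{rem:additional_assumption} --- namely that $\|\mA^\trans\vs \bmod q\|_\infty\ge q/4$ for all $\vs\neq\vzero$ --- which is a smoothing-type statement guaranteeing that $\rho_\sigma(\Lambda_\mA^{\vy})$ is within a $(1\pm\negl(m))$ factor of $\rho_\sigma(\Lambda_\mA^{\bot})$ for every $\vy$, hence bounded below independently of $\vy$. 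Without carrying that normalization argument through (comparing the accumulated $\ell_2$ error, summed over the roughly $q^{m-n}$ points of the coset, against $\|\ket{\hat\psi_\vy}\|$), the proof is an accurate outline rather than a complete argument; but the outline matches the actual proof of the cited lemma, and the missing piece is quantitative bookkeeping rather than a missing idea.
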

\begin{remark}\label{rem:correction_2}
The original statement of \cite[Lemma 16]{ITCS:Poremba23} claims that the above holds for all full-rank $\mA$. However, since their proof relies on \cite[Lemma 5.3]{STOC:GenPeiVai08}, 
we believe that an additional property for $\mA$ is needed by the same reason as explained in \Cref{rem:additional_assumption}. As explained in  \Cref{rem:additional_assumption}, an at least $(1-2q^{-n})$-fraction of $\mA\in \ZZ_q^{n\times m}$ is full-rank and satisfies the required property for \cite[Lemma 5.3]{STOC:GenPeiVai08}. This is reflected in the above statement of \Cref{lem:duality}.
\if0
However, we found that this may not follow from their proof. 
In their proof, the assumption that $\mA$ is full-rank is used for applying 
 \cite[Lemma 11]{ITCS:Poremba23} and \cite[Lemma 12]{ITCS:Poremba23}. In particular, this assumption in \cite[Lemma 11]{ITCS:Poremba23} is inherited from \cite[Lemma 5.3]{STOC:GenPeiVai08}. However, by investigating the proof of \cite[Lemma 5.3]{STOC:GenPeiVai08},  one can see that the property required for $\mA$ is \emph{not} the full-rank property.  Fortunately, it is proven in \cite[Lemma 5.3]{STOC:GenPeiVai08} that an at least $(1-q^{-n})$-fraction of $\mA\in \ZZ_q^{n\times m}$ satisfies the required property.  We note that \cite[Lemma 12]{ITCS:Poremba23} uses the assumption that $\mA$ is full-rank, so we need to assume both properties in the proof of  \cite[Lemma 16]{ITCS:Poremba23}. 
 Since an at least $(1-q^{-n})$-fraction of $\mA\in \ZZ_q^{n\times m}$ is full-rank (e.g., \cite[Lemma 5.1]{STOC:GenPeiVai08}), by taking the union bound,  an at least $(1-2q^{-n})$-fraction of $\mA\in \ZZ_q^{n\times m}$ satisfies both. This is reflected to the above statement of \Cref{lem:duality}.
 \fi
\end{remark}


\paragraph{\bf Quantum Gaussian Sampling.} \takashi{I changed the statement of the theorem to clarify the meaning of the state after the execution.}
Ananth et al. \cite{TCC:AnaPorVai23} showed that a LWE-trapdoor can be used to generate a Gaussian superposition over the corresponding lattice (i.e., a state of the form of $\ket{\hat{\psi}_\vy}$ in \Cref{lem:duality}). We observe that a similar procedure works to generate a Gaussian superposition given a \emph{superposition} of LWE-trapdoors almost without collapsing the superposition.  This is stated in the following theorem. 
\begin{theorem}[Generalization of {\cite[Theorem 3.3]{TCC:AnaPorVai23}}]\label{thm:QSampGauss}
Let $n,m$ be positive integers 
and $q$ be a prime. 
For any classical deterministic polynomial-time algorithm $\mathsf{Invert}$, there exist a QPT algorithm $\QSampGauss$ 
such that for at least $(1-2q^{-n})$-fraction of $\mA\in \ZZ_q^{n\times m}$, the following holds. 
Let $\sigma\in (\sqrt{8m},q/\sqrt{8m})$, 
$\vy\in \ZZ_q^n$, 
and $\ket{\psi}$ be a state of the form 
$\sum_{z\in \bit^L}\alpha_z\ket{z}\ket{\td_z}$
where $\alpha_z\in \mathbb{C}$ satisfies $\sum_{z\in \bit^L}|\alpha_z|^2$ and $\td_z$ is a $\beta_{err}$-LWE-trapdoor of $\mA$ w.r.t. $\mathsf{Invert}$ 
for $\beta_{err}\ge \frac{q}{\sigma}\left(\frac{n\log q+m}{\pi}\right)^{1/2}$. 
Then $\QSampGauss(\mA,\ket{\psi},\vy,\sigma)$ outputs states $\rho$ and $\tau$ where
$\rho$ is within trace distance $\negl(m)$ of $\ket{\psi}$ and $\tau$ is
within trace distance $\negl(m)$ of the normalized variant of the following state:  
\[
\sum_{\vx\in \ZZ_q^m\cap \Lambda_\mA^\vy}
\rho_\sigma(\vx)\ket{\vx}.
\]   
\end{theorem}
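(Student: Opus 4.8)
The plan is to observe that the single-trapdoor Gaussian sampling procedure of Ananth et al.\ \cite{TCC:AnaPorVai23} is \emph{trapdoor-independent}: the Gaussian superposition it produces does not depend on which trapdoor is used, and the only step that touches the trapdoor is a reversible uncomputation that succeeds identically for every valid $\beta_{err}$-LWE-trapdoor. Hence the whole procedure can be run coherently over the superposition $\ket{\psi}=\sum_{z\in\bit^L}\alpha_z\ket{z}\ket{\td_z}$ while leaving this register intact.

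Concretely, $\QSampGauss(\mA,\ket{\psi},\vy,\sigma)$ would append fresh ancillas and, using a QFT for the $\vs$-register and \Cref{lem:Gaussian_superposition} for the $\ve$-register (the latter applies since $q/\sigma\in(\sqrt{8m},q/\sqrt{8m})$ by the constraint on $\sigma$), prepare a state $\negl(m)$-close to
\[
\sum_{z\in\bit^L}\alpha_z\ket{z}\ket{\td_z}\tensor\Bigl(\sum_{\vs\in\ZZ_q^n}\omega_q^{-\langle\vs,\vy\rangle}\ket{\vs}\Bigr)\tensor\Bigl(\sum_{\ve\in\ZZ_q^m}\rho_{q/\sigma}(\ve)\ket{\ve}\Bigr).
\]
It then computes $\vw:=\vs^\trans\mA+\ve^\trans \bmod q$ into a new register, runs $\Invert(\mA,\td_z,\vw)$ controlled on the trapdoor register, and XORs the outcome back into the $(\vs,\ve)$-registers to erase them. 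Finally it applies the $q$-ary QFT to the $\vw$-register, outputting the $(\ket{z},\ket{\td_z})$-register as $\rho$ and the $\vw$-register as $\tau$.

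The crux is the uncomputation. By \Cref{def:LWE_trapdoor}, for \emph{every} $z$ and every $(\vs,\ve)$ with $\|\ve\|_\infty\le\beta_{err}$ we have $\Invert(\mA,\td_z,\vs^\trans\mA+\ve^\trans)=(\vs,\ve)$; in particular the map $(\vs,\ve)\mapsto\vw$ is injective on this domain, so on this ``good'' part the XOR step sends the $(\vs,\ve)$-registers back to $\ket{0}\ket{0}$ in a way that is identical across all $z$. Thus the trapdoor register factors out and the $\vw$-register carries exactly the state $\ket{\psi_\vy}$ of \Cref{lem:duality}. The ``bad'' part $\|\ve\|_\infty>\beta_{err}$, where $\Invert$ may misfire and leave garbage entangled with $\ket{z}\ket{\td_z}$, has negligible weight: writing $\beta_{err}=r\cdot(q/\sigma)$ with $r=(\tfrac{n\log q+m}{\pi})^{1/2}$, \Cref{lem:tail_infty} bounds its mass by $2m\,e^{-\pi r^2}=2m\,e^{-(n\log q+m)}=\negl(m)$. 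Hence the actual post-uncomputation state is within $\negl(m)$ trace distance of $\bigl(\sum_z\alpha_z\ket{z}\ket{\td_z}\bigr)\tensor\ket{0}\ket{0}\tensor\ket{\psi_\vy}$.

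Finally, for the $(1-2q^{-n})$-fraction of good $\mA$, \Cref{lem:duality} shows the $q$-ary QFT maps the normalized $\ket{\psi_\vy}$ to within $\negl(m)$ of the normalized $\sum_{\vx\in\ZZ_q^m\cap\Lambda_\mA^\vy}\rho_\sigma(\vx)\ket{\vx}$, which is $\tau$; combined with the decoupling above, $\rho$ is $\negl(m)$-close to $\ket{\psi}$ and $\tau$ is $\negl(m)$-close to the claimed Gaussian superposition. I expect the main obstacle to be exactly this uncomputation analysis --- certifying that the failing tail contributes only $\negl(m)$ trace-distance error and that on the good part the trapdoor register factors out uniformly in $z$ --- whereas the preparation and the final QFT step follow the single-trapdoor argument of \cite{TCC:AnaPorVai23} together with \Cref{lem:duality} essentially verbatim.
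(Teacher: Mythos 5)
Your algorithm and overall plan coincide with the paper's: prepare the phase-modulated uniform-over-$\vs$, Gaussian-over-$\ve$ state, compute $\vw=\vs^\trans\mA+\ve^\trans$, coherently run $\Invert$ on the trapdoor register to erase $(\vs,\ve)$, apply the $q$-ary QFT, and invoke \Cref{lem:duality}; your observation that the good branch returns the ancillas to $\ket{\vzero}\ket{\vzero}$ identically for every $z$, so the trapdoor register factors out, is also exactly the paper's argument.

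However, there is a genuine gap at precisely the step you flag as the crux. You argue: the bad set $\{\|\ve\|_\infty>\beta_{err}\}$ has probability mass $P_{bad}\le 2m\,e^{-(n\log q+m)}=\negl(m)$, \emph{hence} the post-uncomputation state is within $\negl(m)$ trace distance of the ideal $\ket{\psi}\otimes\ket{\vzero}\ket{\vzero}\otimes\ket{\psi_\vy}$. This inference does not follow from plain negligibility of $P_{bad}$. Removing the bad branch from the \emph{pre}-uncomputation state costs only $\sqrt{P_{bad}}$ in norm, since the branches are orthogonal in the $(\vs,\ve)$ registers. But the \emph{target} state has those registers zeroed out, so its bad branch is $\sum_{\vs}\sum_{\|\ve\|_\infty>\beta_{err}}\rho_{q/\sigma}(\ve)\omega_q^{-\langle\vs,\vy\rangle}\ket{\vzero}\ket{\vzero}\ket{\vs^\trans\mA+\ve^\trans}$, in which up to $q^n$ pairs $(\vs,\ve)$ collide on the same $\vw$ and interfere. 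Bounding the norm of this term requires Cauchy--Schwarz over the $q^n$ values of $\vs$, which inflates the bound to $q^n\cdot\|\ket{\psi_3}\|^2\cdot P_{bad}$ (this is the paper's \Cref{prop:difference_psi4}, and \Cref{rem:gap} explicitly identifies this as the fixable gap in the original proof of \cite{TCC:AnaPorVai23} that your "hence" reproduces). The good news is that the tail bound you computed is actually $P_{bad}\le q^{-n}\cdot 2m\,e^{-m}=q^{-n}\cdot\negl(m)$ --- exactly strong enough to absorb the $q^n$ loss --- so your proof is completed by writing out the triangle-inequality decomposition $U_{inv}\ket{\psi_3}-\ket{\psi_4}=U_{inv}(\ket{\psi_3}-\ket{\psi'_3})+(\ket{\psi'_4}-\ket{\psi_4})$ together with the Cauchy--Schwarz estimate above; as written, the proposal asserts the conclusion without this argument.
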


Though the proof of \Cref{thm:QSampGauss} is very similar to that of \cite[Theorem 3.3]{TCC:AnaPorVai23}, we give a full proof since there seems to be a (fixable) gap in their proof (see \Cref{rem:gap}).

\begin{proof}[Proof of \Cref{thm:QSampGauss}]
For ease of exposition, we describe how $\QSampGauss$ works when 
$\ket{\psi}$ is a computational basis state $\ket{z}\ket{\td_z}$ for some $z\in \bit^L$ and
$\beta_{err}$-LWE-trapdoor $\td_z$.  When $\ket{\psi}$ is a superposition over $\ket{z}\ket{\td_z}$, it just runs the same procedure coherently. 

\begin{enumerate}
\item[Algorithm $\QSampGauss$]
\item[Input:] $\mA\in \ZZ_q^{n\times m}$, $\ket{\psi}=\ket{z}\ket{\td_z}$, $\vy\in \ZZ_q^n$, $\sigma\in (\sqrt{8m},q/\sqrt{8m})$.
\item 
Use the algorithm of \Cref{lem:Gaussian_superposition}
to generate a quantum state within trace distance $2^{-\Omega(m)}$ of the normalized variant of the following state: 
\[
\ket{\psi_1}=\sum_{\vs\in \ZZ_q^n}\ket{\vs}\otimes \sum_{\ve\in \ZZ_q^m} \rho_{q/\sigma}(\ve)\ket{\ve}\otimes \ket{\vzero}.
\]
For the ease of exposition, we describe the rest of the algorithm as if it exactly generates the above state. 
\item Apply an appropriate phase operator to the first register to obtain the normalized variant of
\[
\ket{\psi_2}=\sum_{\vs\in \ZZ_q^n}\omega_q^{-\langle\vs,\vy\rangle}\ket{\vs}\otimes \sum_{\ve\in \ZZ_q^m} \rho_{q/\sigma}(\ve)\ket{\ve}\otimes \ket{\vzero}.
\]
\item Apply the unitary $U_\mA:\ket{\vs}\ket{\ve}\ket{0}\ra \ket{\vs}\ket{\ve}\ket{\vs^\trans \mA+\ve^\trans}$ to obtain the normalized variant of
\[
\ket{\psi_3}=\sum_{\vs\in \ZZ_q^n}\sum_{\ve\in \ZZ_q^m}\rho_{q/\sigma}(\ve)\omega_q^{-\langle\vs,\vy\rangle}\ket{\vs}\ket{\ve}\ket{\vs^\trans \mA+\ve^\trans \mod q}.
\]
\item \label{step:coherently_invert}
Apply the unitary $U_{inv}:\ket{\vs}\ket{\ve}\ket{\vy}\ra \ket{\vs-\Invert_1(\mA,\td_z,\vy)}\ket{\ve-\Invert_2(\mA,\td_z,\vy)}\ket{\vy}$
where $\Invert_1$ and $\Invert_2$ are the algorithms that outputs the first and second components of outputs of $\Invert$.
Below, we prove that the state at this point is within trace distance $\negl(m)$ of the normalized variant of the following state:
\[
\ket{\psi_4}=\sum_{\vs\in \ZZ_q^n}\sum_{\ve\in \ZZ_q^m}\rho_{q/\sigma}(\ve)\omega_q^{-\langle\vs,\vy\rangle}\ket{\vzero}\ket{\vzero}\ket{\vs^\trans \mA+\ve^\trans \mod q}.
\] 
\item  \label{step:output}
Apply the $q$-ary quantum Fourier transform to the rightmost register.

\item Output $\ket{z}\ket{\td_z}$ as $\rho$
and the state in the rightmost register as $\tau$
discarding the rest of registers. 
\end{enumerate}

Assuming that the state at the end of Step \ref{step:coherently_invert} is
within trace distance $\negl(n)$ of the normalized variant of $\ket{\psi_4}$, \Cref{lem:duality} immediately implies that for an at least $(1-2q^{-m})$-fraction of $\mA$,  the state after Step \ref{step:output}  is within trace distance $\negl(m)$ of 
\[
\ket{\vzero}\ket{\vzero}\otimes 
\displaystyle\sum_{\vx\in \ZZ_q^m\cap \Lambda_\mA^\vy}
\rho_\sigma(\vx)\ket{\vx}.
\]
This immediately implies 
\Cref{thm:QSampGauss}. Indeed, the state of the discarded registers is negligibly close to $\ket{\vzero}\ket{\vzero}$ independently of the LWE-trapdoor $\td_z$, and thus if we run the above procedure on a superposition $\ket{\psi}=\sum_{z\in \bit^L}\ket{z}\ket{\td_z}$,  
the above procedure only negligibly affects $\ket{\psi}$.   \takashi{Does this explanation make sense?}

In the rest of the proof, we prove that 
the state at the end of Step \ref{step:coherently_invert} is
within trace distance $\negl(n)$ of the normalized variant of $\ket{\psi_4}$.

Let $D_{\ZZ_q^m,q/\sigma}$ be the
probability density function on $\ZZ_q^m$ defined as follows: 
\[
D_{\ZZ_q^m,q/\sigma}(\vx)=\frac{\rho_{q/\sigma}(\vx)}{\rho_{q/\sigma}(\ZZ_q^m)}
\]
for $\vx\in \ZZ_q^m$.
Let 
\[
P_{bad}\seteq \Pr_{\ve\gets D_{\ZZ_q^m,q/\sigma}}[\|\ve\|_\infty> \beta_{err}].
\]
We define subnormalized states $\ket{\psi'_3}$ and $\ket{\psi'_4}$ as follows. 
\[
\ket{\psi'_3}=\sum_{\vs\in \ZZ_q^n}\sum_{\ve\in \ZZ_q^m:\|\ve\|_\infty\le \beta_{err}}\rho_{q/\sigma}(\ve)\omega_q^{-\langle\vs,\vy\rangle}\ket{\vs}\ket{\ve}\ket{\vs^\trans \mA+\ve^\trans},
\]
\[
\ket{\psi'_4}=\sum_{\vs\in \ZZ_q^n}\sum_{\ve\in \ZZ_q^m:\|\ve\|_\infty\le \beta_{err}}\rho_{q/\sigma}(\ve)\omega_q^{-\langle\vs,\vy\rangle}\ket{\vzero}\ket{\vzero}\ket{\vs^\trans \mA+\ve^\trans}.
\]

We show the following propositions. 
\begin{proposition}\label{prop:bound_bad_prob}
   It holds that $P_{bad}\le q^{-n}\cdot \negl(m)$.
\end{proposition}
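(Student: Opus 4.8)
The plan is to express $P_{bad}$ as a ratio of Gaussian masses and to apply the tail bound of \Cref{lem:tail_infty} twice --- once to bound the numerator and once to lower-bound the denominator. Set $s \seteq q/\sigma$; since $\sigma\in(\sqrt{8m},q/\sqrt{8m})$ we also have $s\in(\sqrt{8m},q/\sqrt{8m})$. By definition of the discrete Gaussian,
\[
P_{bad}=\frac{\rho_{s}\bigl(\{\vx\in\ZZ_q^m:\|\vx\|_\infty>\beta_{err}\}\bigr)}{\rho_{s}(\ZZ_q^m)}.
\]

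For the numerator, I would first observe that $\{\vx\in\ZZ_q^m:\|\vx\|_\infty>\beta_{err}\}\subseteq \ZZ^m\setminus\mathcal{B}^m_\infty(\beta_{err})$, so by nonnegativity of $\rho_s$ the numerator is at most $\rho_s(\ZZ^m\setminus\mathcal{B}^m_\infty(\beta_{err}))$. Writing $\beta_{err}=r\cdot s$ with $r=\beta_{err}\sigma/q$, the hypothesis $\beta_{err}\ge\frac{q}{\sigma}\bigl(\tfrac{n\log q+m}{\pi}\bigr)^{1/2}$ gives $\pi r^2\ge n\log q+m$, and \Cref{lem:tail_infty} applied to the lattice $\ZZ^m$ with parameter $s$ yields
\[
\rho_s(\ZZ^m\setminus\mathcal{B}^m_\infty(\beta_{err}))\le 2m\,e^{-\pi r^2}\,\rho_s(\ZZ^m)\le 2m\,q^{-n}e^{-m}\,\rho_s(\ZZ^m),
\]
where the last inequality uses $e^{-\pi r^2}\le e^{-(n\log q+m)}\le q^{-n}e^{-m}$ (this holds whether $\log$ denotes the natural logarithm or $\log_2$, since $e^{-n\log_2 q}=q^{-n\log_2 e}\le q^{-n}$ for $q\ge 2$). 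This is the step where the crucial $q^{-n}$ factor is produced.

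For the denominator, I would use that $\ZZ^m\cap\mathcal{B}^m_\infty(q/2-1)\subseteq\ZZ_q^m$, so $\rho_s(\ZZ_q^m)\ge \rho_s(\ZZ^m)-\rho_s(\ZZ^m\setminus\mathcal{B}^m_\infty(q/2-1))$. Writing $q/2-1=r''\cdot s$ with $r''=(q/2-1)\sigma/q\ge \sigma/2-1>\sqrt{2m}-1$ (using $\sigma>\sqrt{8m}=2\sqrt{2m}$ together with $\sigma/q<1$), \Cref{lem:tail_infty} gives $\rho_s(\ZZ^m\setminus\mathcal{B}^m_\infty(q/2-1))\le 2m\,e^{-\pi r''^2}\rho_s(\ZZ^m)\le\tfrac12\rho_s(\ZZ^m)$ for all sufficiently large $m$, hence $\rho_s(\ZZ_q^m)\ge\tfrac12\rho_s(\ZZ^m)$. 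Combining the two estimates cancels $\rho_s(\ZZ^m)$ and yields
\[
P_{bad}\le\frac{2m\,q^{-n}e^{-m}\,\rho_s(\ZZ^m)}{\tfrac12\rho_s(\ZZ^m)}=4m\,q^{-n}e^{-m}=q^{-n}\cdot\negl(m),
\]
as desired.

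I expect the main subtlety to be the denominator bound: one must verify that the Gaussian mass over the box $\ZZ_q^m$ is a constant fraction of the mass over all of $\ZZ^m$, which is exactly where the constraint $\sigma>\sqrt{8m}$ (equivalently $s<q/\sqrt{8m}$, so that the box half-width $q/2$ dwarfs $s$) enters. It is essential to keep the estimate multiplicative (as a ratio) rather than routing through the statistical distance between $D_{\ZZ^m,s}$ and $D_{\ZZ_q^m,s}$, since an additive $\negl(m)$ error would not carry the $q^{-n}$ factor demanded by the statement.
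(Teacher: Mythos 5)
Your proof is correct and follows essentially the same route as the paper's: both express $P_{bad}$ as a ratio of Gaussian masses, bound the numerator via \Cref{lem:tail_infty} with $r=\bigl(\tfrac{n\log q+m}{\pi}\bigr)^{1/2}$ to extract the $q^{-n}$ factor, and control the discrepancy between $\rho_{q/\sigma}(\ZZ_q^m)$ and $\rho_{q/\sigma}(\ZZ^m)$ by a second application of the tail bound at radius $q/2-1$ using $\sigma>\sqrt{8m}$. The only difference is cosmetic: you lower-bound the denominator directly by $\tfrac12\rho_{q/\sigma}(\ZZ^m)$, whereas the paper packages the same estimate as the multiplicative factor $\bigl(1-2me^{-\pi(\sigma(q/2-1)/q)^2}\bigr)^{-1}$.
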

\begin{proof}
    We have 
    \if0
    \begin{align}
    P_{bad}
        &=\Pr_{\ve\gets D_{\ZZ_q^m,q/\sigma}}[\|\ve\|_\infty> \beta_{err}]\\
        &\le 
        \Pr_{\ve\gets D_{\ZZ^m,q/\sigma}}[\|\ve\|_\infty> \beta_{err}]
        +
        2m e^{-\pi \left(\frac{q-2}{2\sigma}\right)^2}\\
        &\le 
        \frac{\rho_{q/\sigma}(\ZZ^m\setminus \mathcal{B}_\infty^m(q\sqrt{m(\log q+1)}/\sigma))}{\rho_{q/\sigma}(\ZZ^m)}+ 2m e^{-\pi \left(\frac{\sigma(q/2-1)}{q}\right)^2}
    \end{align}
    \fi
    \begin{align}
        P_{bad}
        &=\Pr_{\ve\gets D_{\ZZ_q^m,q/\sigma}}[\|\ve\|_\infty> \beta_{err}]\\
        &=\frac{\sum_{\ve\in\ZZ_q^m:\|\ve\|_\infty> \beta_{err}}\rho_{q/\sigma}(\ve)}{\sum_{\vz\in \ZZ_q^m}\rho_{q/\sigma}(\vz)}\\
        &=\frac{\sum_{\ve\in\ZZ_q^m:\|\ve\|_\infty> \beta_{err}}\rho_{q/\sigma}(\ve)}{\sum_{\vz\in \ZZ^m}\rho_{q/\sigma}(\vz)}
        \cdot 
        \frac{\sum_{\vz\in \ZZ^m}\rho_{q/\sigma}(\vz)}{\sum_{\vz\in \ZZ_q^m}\rho_{q/\sigma}(\vz)}\\
        &\le \frac{\rho_{q/\sigma}\left(\ZZ^m\setminus \mathcal{B}_\infty^m\left(\frac{q}{\sigma}\left(\frac{n\log q+m}{\pi}\right)^{1/2}\right)\right)}{\rho_{q/\sigma}(\ZZ^m)}\cdot \left(1-\frac{\rho_{q/\sigma}(\ZZ^m\setminus \mathcal{B}^m_{\infty}(q/2-1))}{\rho_{q/\sigma}(\ZZ^m)}\right)^{-1}\\ 
      &\le \left(2me^{-(n\log q+m)}\right)
      \cdot 
      \left(1-
      2m e^{-\pi \left(\frac{\sigma (q/2-1)}{q}\right)^2}
      \right)^{-1}\\
      &\le q^{-n}\cdot \negl(m). 
    \end{align}
    where 
 the first inequality follows from
 $\beta_{err}\ge \beta_{err}\ge \frac{q}{\sigma}\left(\frac{n\log q+m}{\pi}\right)^{1/2}$ and
$B^{m}_\infty(q/2-1)\subseteq \ZZ_q^m$,\footnote{Recall that an element of $\ZZ_q$ is represented as an element of $\ZZ \cap (-q/2,q/2]$} 
the second inequality follows from  \Cref{lem:tail_infty},
and the third inequality follows from 
$m2^{-m+1}=\negl(m)$ and 
 $\sigma>\sqrt{8m}$.
\end{proof}
\begin{proposition}\label{prop:difference_psi3}
It holds that 
$
\|\ket{\psi_3}-\ket{\psi'_3}\|^2 =\|\ket{\psi_3}\|^2\cdot P_{bad}.
$
\end{proposition}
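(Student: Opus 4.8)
The plan is to exploit the fact that, for the purpose of computing norms, $\ket{\psi_3}$ is a superposition over \emph{mutually orthogonal} computational basis states, so both squared norms in the claim collapse into sums of squared amplitudes. First I would observe that for each pair $(\vs,\ve)\in\ZZ_q^n\times\ZZ_q^m$ the vector $\ket{\vs}\ket{\ve}\ket{\vs^\trans\mA+\ve^\trans \mod q}$ is a single computational basis state, and that two distinct pairs yield orthogonal vectors already because their first two registers differ (the third register is a deterministic function of $(\vs,\ve)$ and hence irrelevant to orthogonality). Consequently the terms of $\ket{\psi_3}$ are pairwise orthogonal with amplitude moduli $|\rho_{q/\sigma}(\ve)\omega_q^{-\langle\vs,\vy\rangle}|=\rho_{q/\sigma}(\ve)$, since $|\omega_q^{-\langle\vs,\vy\rangle}|=1$ and $\rho_{q/\sigma}(\ve)\ge 0$.

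Next I would note that $\ket{\psi_3}-\ket{\psi'_3}$ is exactly the sub-sum of $\ket{\psi_3}$ obtained by retaining only the terms with $\|\ve\|_\infty>\beta_{err}$, because $\ket{\psi'_3}$ keeps precisely the terms with $\|\ve\|_\infty\le\beta_{err}$. By the orthogonality established above, the squared norm of any such sub-sum is the sum of the squared amplitudes of its terms, and the amplitude is independent of $\vs$, so the $\vs$-sum contributes a flat factor $q^n=|\ZZ_q^n|$:
\begin{align}
\|\ket{\psi_3}-\ket{\psi'_3}\|^2
= \sum_{\vs\in\ZZ_q^n}\sum_{\substack{\ve\in\ZZ_q^m\\ \|\ve\|_\infty>\beta_{err}}}\rho_{q/\sigma}(\ve)^2
= q^n\sum_{\substack{\ve\in\ZZ_q^m\\ \|\ve\|_\infty>\beta_{err}}}\rho_{q/\sigma}(\ve)^2,
\qquad
\|\ket{\psi_3}\|^2
= q^n\sum_{\ve\in\ZZ_q^m}\rho_{q/\sigma}(\ve)^2 .
\end{align}
Taking the ratio, the factor $q^n$ cancels and leaves
\begin{align}
\frac{\|\ket{\psi_3}-\ket{\psi'_3}\|^2}{\|\ket{\psi_3}\|^2}
=\frac{\sum_{\ve\in\ZZ_q^m:\|\ve\|_\infty>\beta_{err}}\rho_{q/\sigma}(\ve)^2}{\sum_{\ve\in\ZZ_q^m}\rho_{q/\sigma}(\ve)^2},
\end{align}
which is precisely the weight that the (normalized) Gaussian-amplitude state on $\ZZ_q^m$ places on error vectors with $\|\ve\|_\infty>\beta_{err}$, i.e. $P_{bad}$. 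Rearranging gives the stated identity $\|\ket{\psi_3}-\ket{\psi'_3}\|^2=\|\ket{\psi_3}\|^2\cdot P_{bad}$.

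The argument itself is essentially bookkeeping once orthogonality is in place, so the only point demanding care — and the one I would treat as the main obstacle — is the amplitude-versus-probability convention. Because the superposition is written with amplitude $\rho_{q/\sigma}(\ve)$ (per \Cref{lem:Gaussian_superposition}) rather than $\sqrt{\rho_{q/\sigma}(\ve)}$, the weights entering both squared norms are $\rho_{q/\sigma}(\ve)^2$, and one must confirm that the distribution defining $P_{bad}$ is read consistently with this (using $\rho_{q/\sigma}(\ve)^2=\rho_{q/(\sqrt{2}\sigma)}(\ve)$ to match the Gaussian parameter if needed). This reconciliation is harmless for the downstream tail bound of \Cref{prop:bound_bad_prob}, since \Cref{lem:tail_infty} applies to any Gaussian width, but it is the place where a careless identification could introduce an error.
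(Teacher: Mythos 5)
Your proof is correct and follows essentially the same route as the paper's: expand both squared norms via orthogonality of the computational basis states, observe that the amplitude modulus is $\rho_{q/\sigma}(\ve)$ independently of $\vs$ since $|\omega_q^{-\langle\vs,\vy\rangle}|=1$, and identify the resulting ratio of tail mass to total mass with $P_{bad}$. The convention issue you flag is real and worth recording — the ratio genuinely involves $\rho_{q/\sigma}(\ve)^2=\rho_{q/(\sqrt{2}\sigma)}(\ve)$ whereas $P_{bad}$ is defined with first powers of $\rho_{q/\sigma}$, a discrepancy the paper's own proof silently elides in its last equality; as you note it is harmless downstream because the tail bound of \Cref{prop:bound_bad_prob} holds for either Gaussian width.
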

\begin{proof}
We have 
\begin{align}
\|\ket{\psi_3}-\ket{\psi'_3}\|^2
&=
\left\|\sum_{\vs\in \ZZ_q^n}\sum_{\ve\in \ZZ_q^m:\|\ve\|_\infty> \beta_{err}}\rho_{q/\sigma}(\ve)\omega_q^{-\langle\vs,\vy\rangle}\ket{\vs}\ket{\ve}\ket{\vs^\trans \mA+\ve^\trans}\right\|^2\\
&=\sum_{\vs\in \ZZ_q^n}\sum_{\ve\in \ZZ_q^m:\|\ve\|_\infty> \beta_{err}}\left|
\rho_{q/\sigma}(\ve)\omega_q^{-\langle\vs,\vy\rangle}
\right|^2\\
&=
\sum_{\vs\in \ZZ_q^n}
\sum_{\ve\in \ZZ_q^m}\left|
\rho_{q/\sigma}(\ve)
\right|^2 \cdot
\frac{\sum_{\ve\in \ZZ_q^m:\|\ve\|_\infty> \beta_{err}}\left|
\rho_{q/\sigma}(\ve)
\right|^2}
{\sum_{\ve\in \ZZ_q^m}\left|
\rho_{q/\sigma}(\ve)
\right|^2}\\
&=\|\ket{\psi_3}\|^2\cdot P_{bad}
\end{align}
where we used $|\omega_q^{-\langle\vs,\vy\rangle}|=1$ in the third equality and the fourth equality immediately follows from the definitions of $\ket{\psi_3}$  and $P_{bad}$. 
\end{proof}
\begin{proposition}\label{prop:U_inv}
It holds that $U_{inv}\ket{\psi'_3}=\ket{\psi'_4}$.
\end{proposition}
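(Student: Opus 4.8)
The plan is to prove the identity term by term, using the defining property of an LWE-trapdoor (Definition~\ref{def:LWE_trapdoor}). First I would note that $U_{inv}$ is the controlled shift $\ket{\vs}\ket{\ve}\ket{\vw}\mapsto\ket{\vs-\Invert_1(\mA,\td_z,\vw)}\ket{\ve-\Invert_2(\mA,\td_z,\vw)}\ket{\vw}$, which is manifestly a unitary (it permutes the computational basis), so it suffices to track its action on each basis ket appearing in $\ket{\psi'_3}$ and then invoke linearity.

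Next I would exploit the crucial feature of the subnormalized state $\ket{\psi'_3}$: its summation ranges only over error vectors $\ve$ with $\|\ve\|_\infty\le\beta_{err}$. Fix such a term, namely $\rho_{q/\sigma}(\ve)\,\omega_q^{-\langle\vs,\vy\rangle}\ket{\vs}\ket{\ve}\ket{\vs^\trans\mA+\ve^\trans}$. Since $\td_z$ is by assumption a $\beta_{err}$-LWE-trapdoor of $\mA$ with respect to $\Invert$, and since $\|\ve\|_\infty\le\beta_{err}$, Definition~\ref{def:LWE_trapdoor} guarantees $\Invert(\mA,\td_z,\vs^\trans\mA+\ve^\trans\bmod q)=(\vs,\ve)$, that is, $\Invert_1(\mA,\td_z,\vs^\trans\mA+\ve^\trans)=\vs$ and $\Invert_2(\mA,\td_z,\vs^\trans\mA+\ve^\trans)=\ve$.

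Consequently, $U_{inv}$ sends this ket to $\ket{\vs-\vs}\ket{\ve-\ve}\ket{\vs^\trans\mA+\ve^\trans}=\ket{\vzero}\ket{\vzero}\ket{\vs^\trans\mA+\ve^\trans}$, while the coefficient $\rho_{q/\sigma}(\ve)\,\omega_q^{-\langle\vs,\vy\rangle}$ is a scalar untouched by $U_{inv}$. Summing over all $\vs\in\ZZ_q^n$ and all $\ve$ with $\|\ve\|_\infty\le\beta_{err}$ then reproduces $\ket{\psi'_4}$ exactly, establishing $U_{inv}\ket{\psi'_3}=\ket{\psi'_4}$.

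I do not anticipate a genuine obstacle here: the claim is a direct consequence of trapdoor correctness, and the only point requiring care is to confirm that the third-register value $\vs^\trans\mA+\ve^\trans\bmod q$ is precisely the syndrome on which $\Invert$ is invoked, so that Definition~\ref{def:LWE_trapdoor} applies verbatim. The restriction $\|\ve\|_\infty\le\beta_{err}$ built into $\ket{\psi'_3}$ is exactly what places every term in the regime where the trapdoor inverts correctly, which is the reason the truncated states $\ket{\psi'_3}$ and $\ket{\psi'_4}$ were introduced rather than $\ket{\psi_3}$ and $\ket{\psi_4}$; the remaining (negligible) discrepancy between the truncated and full states is then controlled separately via \Cref{prop:difference_psi3} and \Cref{prop:bound_bad_prob}.
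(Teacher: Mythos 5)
Your proof is correct and follows exactly the paper's argument: the paper likewise observes that since each term of the truncated state $\ket{\psi'_3}$ has $\|\ve\|_\infty\le\beta_{err}$, the trapdoor correctness guarantee $\Invert(\mA,\td_z,\vs^\trans\mA+\ve^\trans)=(\vs,\ve)$ from \Cref{def:LWE_trapdoor} applies to every basis ket, so $U_{inv}$ maps each term to $\ket{\vzero}\ket{\vzero}\ket{\vs^\trans\mA+\ve^\trans}$ and the identity follows by linearity. Your write-up merely spells out the per-term computation that the paper compresses into "immediately follows from the definitions."
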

\begin{proof}
Recall that if $\td$ is a $\beta_{err}$-LWE-trapdoor w.r.t. $\Invert$, then we have 
$\Invert(\mathbf{A},\td,\mathbf{s}^\trans \mA+\ve)=(\vs,\ve)$ by \Cref{def:LWE_trapdoor}. 
Then $U_{inv}\ket{\psi'_3}=\ket{\psi'_4}$ immediately follows from the definitions of 
$U_{inv}$, 
$\ket{\psi'_3}$, and $\ket{\psi'_4}$. 
\end{proof}
\begin{proposition}\label{prop:difference_psi4}
It holds that 
$
\|\ket{\psi_4}-\ket{\psi'_4}\|^2\le q^{n}\cdot \|\ket{\psi_3}\|^2\cdot P_{bad}.
$
\end{proposition}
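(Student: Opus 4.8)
The plan is to bound $\|\ket{\psi_4}-\ket{\psi'_4}\|^2$ by analyzing the difference state
$$\ket{\psi_4}-\ket{\psi'_4}=\ket{\vzero}\ket{\vzero}\otimes\sum_{\vs\in\ZZ_q^n}\sum_{\substack{\ve\in\ZZ_q^m\\\|\ve\|_\infty>\beta_{err}}}\rho_{q/\sigma}(\ve)\,\omega_q^{-\langle\vs,\vy\rangle}\ket{\vs^\trans\mA+\ve^\trans\bmod q}.$$
The crucial difference from \Cref{prop:difference_psi3} is that here the first two registers are fixed to $\ket{\vzero}\ket{\vzero}$, so the map $(\vs,\ve)\mapsto\vs^\trans\mA+\ve^\trans$ is no longer kept injective by the ancilla labels, and distinct pairs can collide in the third register, producing interference. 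This is exactly why a factor of $q^n$ appears rather than the clean equality obtained in \Cref{prop:difference_psi3}.

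First I would factor out the fixed $\ket{\vzero}\ket{\vzero}$ and group the remaining amplitudes by the value $\vw\in\ZZ_q^m$ recorded in the third register. For each fixed $\vw$ the contributing pairs are indexed by $\vs\in\ZZ_q^n$ (the error is forced to $\ve^\trans=\vw-\vs^\trans\mA\bmod q$), so there are at most $q^n$ of them, and the aggregate coefficient is $c_\vw=\sum_{\vs:\,\|\vw-\vs^\trans\mA\|_\infty>\beta_{err}}\rho_{q/\sigma}(\vw-\vs^\trans\mA)\,\omega_q^{-\langle\vs,\vy\rangle}$, giving $\|\ket{\psi_4}-\ket{\psi'_4}\|^2=\sum_\vw|c_\vw|^2$.

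The key step is then a Cauchy--Schwarz estimate: since each $c_\vw$ is a sum of at most $q^n$ terms, each of magnitude $\rho_{q/\sigma}(\ve)$ (as $|\omega_q^{-\langle\vs,\vy\rangle}|=1$), I obtain $|c_\vw|^2\le q^n\sum_{\vs:\,\|\ve\|_\infty>\beta_{err}}\rho_{q/\sigma}(\vw-\vs^\trans\mA)^2$. Summing over $\vw$ and re-indexing the resulting double sum by $(\vs,\ve)$ — which ranges bijectively over $\ZZ_q^n\times\ZZ_q^m$ via $\vw=\vs^\trans\mA+\ve^\trans$ — collapses it to $q^n\sum_{\vs\in\ZZ_q^n}\sum_{\|\ve\|_\infty>\beta_{err}}\rho_{q/\sigma}(\ve)^2$. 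By the computation already carried out in the proof of \Cref{prop:difference_psi3}, this last sum equals $\|\ket{\psi_3}\|^2\cdot P_{bad}$, yielding the claimed bound $\|\ket{\psi_4}-\ket{\psi'_4}\|^2\le q^n\cdot\|\ket{\psi_3}\|^2\cdot P_{bad}$.

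The main obstacle is precisely the loss of orthogonality caused by discarding the $\vs,\ve$ registers; the whole argument is organized around controlling that interference via the per-$\vw$ Cauchy--Schwarz bound. I would also note that the $q^n$ factor is harmless downstream: \Cref{prop:bound_bad_prob} gives $P_{bad}\le q^{-n}\cdot\negl(m)$, so the $q^n$ cancels and the overall trace-distance error between the true post-Step~\ref{step:coherently_invert} state and $\ket{\psi_4}$ (obtained by combining this proposition with \Cref{prop:difference_psi3} and \Cref{prop:U_inv} through the triangle inequality and unitarity of $U_{inv}$) remains negligible.
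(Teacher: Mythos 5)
Your proof is correct and follows essentially the same route as the paper: expand the squared norm by grouping amplitudes according to the value $\vw$ held in the (now sole distinguishing) third register, bound each aggregate coefficient $|c_\vw|^2$ by Cauchy--Schwarz over the at most $q^n$ colliding terms indexed by $\vs$, and re-index the resulting double sum back to $(\vs,\ve)$ to recognize $\|\ket{\psi_3}\|^2\cdot P_{bad}$ as in \Cref{prop:difference_psi3}. Your framing of the $q^n$ loss as the price of interference from discarding the $\vs,\ve$ registers, and the observation that it is absorbed by the $q^{-n}$ in \Cref{prop:bound_bad_prob}, both match the paper's reasoning (cf. \Cref{rem:gap}).
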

\begin{proof}
\takashi{Tianwei's email: 
While going through the proof of Proposition 8.18 on page 41, I noticed
a potential point of confusion. Specifically, should the expression $\|
|\psi_4\rangle - \psi'_4 \|_2$ equal the following, considering that
$y^{\prime} \coloneqq s^\top A + e^\top$ is not necessarily identical to
$y$ in the exponent?
\[
\sum_{y^{\prime} \in \mathbb{Z}_q^m} q \sum_{s \in \mathbb{Z}_q^n, e \in
\mathbb{Z}_q^m | \|e\|_\infty > \beta_{\text{err}} \land s^\top A +
e^\top = {y^{\prime}}^\top}\rho_{q/\sigma}(e) \cdot \omega^{-\langle s, y
\rangle}
\]}
For $\ve\in \ZZ_q^m$, let
\[
\chi_{bad}(\ve)=
\begin{cases}
1 & (\|\ve\|_\infty>\beta_{err})\\
0 & (\|\ve\|_\infty\le \beta_{err})
\end{cases}.
\]
We have 
\begin{align}
\|\ket{\psi_4}-\ket{\psi'_4}\|^2
&=
\left\|\sum_{\vs\in \ZZ_q^n}\sum_{\ve\in \ZZ_q^m:\|\ve\|_\infty> \beta_{err}}\rho_{q/\sigma}(\ve)\omega_q^{-\langle\vs,\vy\rangle}\ket{\vzero}\ket{\vzero}\ket{\vs^\trans \mA+\ve^\trans}\right\|^2\\
&=\sum_{\vt\in \ZZ_q^m}\left|
\sum_{
\substack{\vs\in \ZZ_q^n,\ve\in \ZZ_q^m:\\
\|\ve\|_\infty> \beta_{err}~\land~\vs^\trans \mA+\ve^\trans=\vt^\trans
}
}
\rho_{q/\sigma}(\ve)\omega_q^{-\langle\vs,\vy\rangle}
\right|^2\\
&=\sum_{\vt\in \ZZ_q^m}\left|
\sum_{\vs\in \ZZ_q^n}
\chi_{bad}(\vt^\trans-\vs^\trans\mA)
\rho_{q/\sigma}(\vt^\trans-\vs^\trans\mA)\omega_q^{-\langle\vs,\vy\rangle}
\right|^2\\
&\le q^{n}\sum_{\vt\in \ZZ_q^m}\sum_{\vs\in \ZZ_q^n}\left|
\chi_{bad}(\vt^\trans-\vs^\trans\mA)
\rho_{q/\sigma}(\vt^\trans-\vs^\trans\mA)\omega_q^{-\langle\vs,\vy\rangle}
\right|^2\\
&= q^{n}\cdot \sum_{\vs\in \ZZ_q^n}\sum_{\ve\in \ZZ_q^m:\|\ve\|_\infty> \beta_{err}}\left|
\rho_{q/\sigma}(\ve)\omega_q^{-\langle\vs,\vy\rangle}
\right|^2\\
&=q^{n}\cdot \|\ket{\psi_3}\|^2\cdot P_{bad}
\end{align}
where 
the inequality in the fourth line follows from Cauchy-Schwarz inequality and the final equality is derived similarly to the proof of \Cref{prop:difference_psi3}.
\end{proof}

By \Cref{prop:U_inv}, we have
\begin{align} 
U_{inv}\ket{\psi_3}
&=U_{inv}\ket{\psi'_3}+U_{inv}(\ket{\psi_3}-\ket{\psi'_3})\\
&=\ket{\psi'_4}+U_{inv}(\ket{\psi_3}-\ket{\psi'_3})\\
&=\ket{\psi_4}+U_{inv}(\ket{\psi_3}-\ket{\psi'_3})+(\ket{\psi'_4}-\ket{\psi_4}).
\end{align}
Thus, by the triangle inequality and \Cref{prop:bound_bad_prob,prop:difference_psi3,prop:difference_psi4}, 
\begin{align} \label{eq:difference_state}
\|U_{inv}\ket{\psi_3}-\ket{\psi_4}\|\le \|\ket{\psi_3}-\ket{\psi'_3}\|+\|\ket{\psi'_4}-\ket{\psi_4}\|
\le \|\psi_3\|\cdot \negl(m). 
\end{align}
Thus, we have 
\begin{align}
\left\|\frac{U_{inv}\ket{\psi_3}}{\|\ket{\psi_3}\|}-\frac{\ket{\psi_4}}{\|\ket{\psi_4}\|}\right\|
&\le 
\left\|\frac{U_{inv}\ket{\psi_3}}{\|\ket{\psi_3}\|}-\frac{\ket{\psi_4}}{\|\ket{\psi_3}\|}\right\|
+
\left\|\frac{\ket{\psi_4}}{\|\ket{\psi_3}\|}-\frac{\ket{\psi_4}}{\|\ket{\psi_4}\|}\right\|\\
&\le \negl(m) 
\end{align}
where the second inequality follows from Eq. \ref{eq:difference_state}.\footnote{The first term is negligible by dividing both sides of Eq. \ref{eq:difference_state} by $\|\ket{\psi_3}\|$.
For the second term, observe 
$
\left\|\frac{\ket{\psi_4}}{\|\ket{\psi_3}\|}-\frac{\ket{\psi_4}}{\|\ket{\psi_4}\|}\right\|
=
\left|
\frac{\|\ket{\psi_4}\|}{\|\ket{\psi_3}\|}-1
\right|
$
and that Eq. \ref{eq:difference_state} implies
$\|\ket{\psi_4}\|=(1\pm \negl(m))\|\ket{\psi_3}\|$.
}
Note that for any normalized states $\ket{\psi}$ and $\ket{\psi'}$, we have $\|\ket{\psi}\bra{\psi}-\ket{\psi'}\bra{\psi'}\|_{tr}\le 4\|\ket{\psi}-\ket{\psi'}\|$ by \cite[Theorem 3.1]{BBBV97}.
Thus, the above inequality implies that the state $\frac{U_{inv}\ket{\psi_3}}{\|\ket{\psi_3}\|}$, which is obtained at the end of 
Step \ref{step:coherently_invert}, is
 within trace distance $\negl(m)$ from $\frac{\ket{\psi_4}}{\|\ket{\psi_4}\|}$.
This completes the proof of \Cref{thm:QSampGauss}. 
\end{proof}
\begin{remark}\label{rem:gap}
In the proof of \cite[Theorem 3.3]{TCC:AnaPorVai23}, they seem to claim, using our terminology, that the state at the end of Step \ref{step:coherently_invert} is
within trace distance $\negl(m)$ of the normalized variant of $\ket{\psi_4}$ only assuming $P_{bad}\le \negl(m)$. However, we do not see how to prove it. In our proof, we needed to show  $P_{bad}\le q^{-n}\cdot \negl(m)$ to deal with the loss of factor $q^{n}$ due to Cauchy-Schwarz inequality used in the proof of \Cref{prop:difference_psi4}. Fortunately, such a strong bound for $P_{bad}$ is easy to prove if we take sufficiently large $\beta_{err}$ since the tail bound of discrete Gaussian (\Cref{lem:tail_infty}) exponentially decreases. 
\end{remark}

\subsection{Construction}\label{sec:const_CS}  
We construct a coherently-signable CS scheme for the function class $\Fs$ that consists of functions expressed as circuits of 
description size $\ell=\ell(\secp)$ 
and input length $\iota=\iota(\secp)$ for bounded polynomials $\ell,\iota$. 
For $f\in \Fs$, 
we write $\descf$ to mean the description of $f$, which is a $\ell$-bit string. Without loss of generality, we assume that the first bit of $\descf$ is $1$.\footnote{This convention is made to meet the requirement of \Cref{lem:lattice_eval}.}  
In the description of the scheme, we regard $\descf\in \bit^\ell$ as a row vector in $\ZZ_q^\ell$ in a canonical manner. 
Let $U_\msg$ be a universal circuit for $\Fs$, i.e., for any $f\in \Fs$ and $\msg\in \bit^{k}$, 
we have $U_\msg(f)=f(\msg)$. 
Let $d=d(\secp)$ be the largest depth of $U_\msg$. 
Since the size of $U_\msg$ is $\poly(\secp)$, $d$ is also a polynomial. 
Whenever we consider LWE-trapdoors, they are w.r.t. the algorithm $\Invert$ of \Cref{lem:R_is_trapdoor}, and $\QSampGauss$ is the quantum algorithm in \Cref{thm:QSampGauss} w.r.t. $\Invert$. 
The choice of parameters
$n,m,q,\sigma,\beta_{sam},\beta_{ver},\beta_{SIS}$ are given below the description of the scheme. 
\begin{description}
\item[$\Setup(1^\secp)$:]~ 
\begin{itemize}
    \item Generate $(\mA,\td)\gets \TrapGen(1^n,1^m,q)$, $\mB\gets \ZZ_q^{n\times m \ell}$, and $\mC\gets \ZZ_q^{n\times m}$.  
    \item Output a verification key $\vk=(\mA,\mB,\mC)$ and a master secret key $\msk=(\vk,\td)$. 
\end{itemize}
\item[$\Constrain(\msk,f)$:]~
\begin{itemize}
\item Parse $\msk=(\vk=(\mA,\mB,\mC),\td)$. 
\item Generate $\mR \gets \SamPre(\td,\mB-\langle f\rangle\otimes \mG)$ where $\mG\in \ZZ_q^{n\times m}$ is the gadget matrix. 
\item Generate $\mR' \gets \SamPre(\td,\mC)$.
\item Output $\sigk_f=(\vk,f,\mR,\mR')$. 
\end{itemize}
\item[$\Sign(\sigk_f,\msg)$:]~
\begin{itemize}
\item Parse  $\sigk_f=(\vk=(\mA,\mB,\mC),f,\mR,\mR')$. 
\item Compute $\mH\gets \EvalF(U_{\msg},\mB)$
and  $\widehat{\mH}:=\EvalFX(U_{\msg},\langle f\rangle,\mB)$.
\item Generate $(\rho,\tau)\gets \QSampGauss([\mA\mid \mB\mH+\mC],\ket{\mR\widehat{\mH}+\mR'},\vzero,\sigma)$.\footnote{Here, we run $\QSampGauss$ with $L=0$ in which case $z$ is a null string.} 
\item Measure $\tau$ in the computational basis to obtain a vector $\vx\in \ZZ_q^{m}$.
\item Output a signature $\signature = \vx$.\footnote{In this subsection, we denote the signature by $\signature$ instead of $\sigma$ to avoid a notational collision.}
\end{itemize} 
\item[$\Vrfy(\vk,\msg,\signature)$:]~
\begin{itemize}
\item Parse $\vk=(\mA,\mB,\mC)$ and $\signature=\mathbf{x}$.
\item Compute $\mH\gets \EvalF(U_{\msg},\mB)$.
\item Output $\top$ if $[\mA\mid \mB\mH+\mC]\vx=\vzero$, $\|\vx\|_\infty\le \beta_{ver}$, and $\vx\neq \vzero$, 
and output $\bot$ otherwise.
\end{itemize} 
\end{description}
\begin{remark}\label{rem:CS_difference} \takashi{I added this remark.}
The above scheme is almost identical to the one obtained by applying the conversion of \cite{TCC:Tsabary17} to the context-hiding homomorphic signature scheme of \cite[Section 6]{STOC:GorVaiWic15}. 
One difference is that we introduce an additional matrix $\mC$ in $\vk$ and define a signature to be a solution to the SIS problem with respect to $[\mA\mid \mB\mH+\mC]$ instead of one to $[\mA\mid \mB\mH]$.  This additional matrix $\mC$ is introduced to make sure that $[\mA\mid \mB\mH+\mC]$ is uniformly distributed so that the algorithm $\QSampGauss$ of \Cref{thm:QSampGauss} works for $[\mA\mid \mB\mH+\mC]$ with an overwhelming probability. This modification almost does not affect the security proof and thus our security proof is essentially identical to that in \cite{STOC:GorVaiWic15}. 
\end{remark}

\paragraph{\bf Parameters.} 
We choose parameters $n,m,\beta_{sam}=\poly(\secp)$ and $q,\sigma,\beta_{ver},\beta_{SIS}=2^{\poly(\secp)}$ in such a way that all of the following conditions are satisfied for sufficiently large $\secp$.\footnote{$\beta_{SIS}$ does not appear in the scheme and only appears as a parameter for the underlying SIS problem.} 
\begin{enumerate}
\item Both items of \Cref{lem:trapdoor} hold for $n,m,q,\beta_{sam},$ and $k\in \{m, m\ell\}$.  \label{item:trapdoor}
\item $m\geq 2n\lceil \log q\rceil$ \label{item:m_lower}
\item 
$\sigma \ge (5m^3\ell\beta_{sam}(2m)^d + 5m )\cdot \left(\frac{n\log q+m}{\pi}\right)^{1/2}$ 
\label{item:sigma_lower}
\item $\sigma\in (\sqrt{8m},q/\sqrt{8m})$ 
\label{item:sigma_interval}
\item $\beta_{ver}= \omega(\log m) \sigma$ \label{item:beta_ver_lower}
\item $\beta_{SIS}\ge 4m^2 \ell \beta_{sam}(2m)^d\beta_{ver}$ \label{item:beta_SIS_lower}
\item The quantum hardness of $\SIS_{n, m, q, \beta_{SIS}}$ holds. \label{item:SIS}
\end{enumerate}
In the proofs of correctness, coherent-signability, and security, whenever we use one of the above conditions, we refer to the number of the corresponding condition.

To meet the above conditions, for example, we can choose the parameters as follows:
\begin{itemize}
\item We set $\beta_{SIS}\seteq 2^{d\lceil\log^2 \secp\rceil}$ and take a positive integer $n=\poly(\secp)$ and a prime $q=2^{\poly(\secp)}>\beta_{SIS}$ in such a way that for any $m=\poly(\secpar)$, the quantum hardness of $\SIS_{n, m, q, \beta_{SIS}}$ holds. (Recall that the SIS assumption in this paper ensures that we can take such $n$ and $q$. See the paragraph below \Cref{def:sis}.) 
Then Condition \ref{item:SIS} is satisfied. 
\item We take $m^*=O(n\log q)$ and $\beta_{sam}=O(n\sqrt{\log q})$ as in \Cref{lem:trapdoor} for the above chosen $n$ and $q$, and set $m=\max\{m^*,2n\lceil\log q \rceil\}$. Then Conditions  \ref{item:trapdoor} and \ref{item:m_lower} are satisfied. 
\item We set $\sigma\seteq   (5m^3\ell\beta_{sam}(2m)^d + 5m )\cdot \left(\frac{n\log q+m}{\pi}\right)^{1/2}$. 
Then Condition \ref{item:sigma_lower} is satisfied. Condition \ref{item:sigma_interval} is also satisfied for sufficiently large $\secp$ noting that 
$(5m^3\ell\beta_{sam}(2m)^d + 5m )\cdot \left(\frac{n\log q+m}{\pi}\right)^{1/2}=\poly(\secp)^d=2^{O(d\log \lambda)}$  and $q/\sqrt{8m}\ge \beta_{SIS}/\sqrt{8m}=2^{d\lceil\log^2 \secp\rceil}/\poly(\secp)=2^{\Omega(d \log^2 \secp)}$. 
\item We set $\beta_{ver}\seteq \lceil\log^2 m\rceil\sigma$. Then Condition \ref{item:beta_ver_lower} is satisfied. 
We can see that Condition \ref{item:beta_SIS_lower} is also satisfied for sufficiently large $\secp$ 
noting that $\beta_{SIS}\seteq 2^{d\lceil\log^2 \secp\rceil}$ and $4m^2 \ell \beta_{sam}(2m)^d\beta_{ver}=\poly(\secp)^d=2^{O(d\log \lambda)}$. 
\end{itemize}

\paragraph{\bf Correctness.} 
Suppose that $f$ and $\msg$ satisfy $f(\msg)=1$. 
Let  
$\sigk_f=(\vk=(\mA,\mB,\mC),f,\mR,\mR')$ be an honestly generated signing key for a function $f$.  
By \Cref{lem:trapdoor}, which is applicable by Condition \ref{item:trapdoor}, 
we have 
$\|\mR\|_\infty,\|\mR'\|_\infty\le \beta_{sam}$, 
\[\mA\mR=\mB-\descf \otimes \mG, 
\]
and 
\[
\mA\mR'=\mC. 
\]
By \Cref{lem:lattice_eval}, 
we have 
\[
[\mB-\descf \otimes G]\widehat{\mH}=\mB\mH-\mG
\]
and $\|\widehat{\mH}\|_\infty\le (2m)^d$ 
where $\mH$ and $\widehat{\mH}$ are as in the signing algorithm 
and we used $U_{\msg}(f)=f(\msg)=1$. 
Combining the above, we have 
\[
\mA(\mR\widehat{\mH}+\mR')+\mG=\mB\mH+\mC.
\]
Thus, by \Cref{lem:R_is_trapdoor} and $m\geq 2n\lceil \log q\rceil\geq  n\lceil \log q\rceil$ (Condition \ref{item:m_lower}), 
$\mR\widehat{\mH}+\mR'$ is a $\beta_{err}$-LWE-trapdoor of $[\mA \mid \mB\mH+\mC]$ where 
$\beta_{err}=q/(5m(m\|\mR \widehat{\mH}+\mR'\|_\infty+1))\ge q/(5m^3\ell\beta_{sam}(2m)^d + 5m ) \ge \frac{q}{\sigma}\left(\frac{n\log q+m}{\pi}\right)^{1/2}$ where the final inequality follows from Condition \ref{item:sigma_lower}.
We note that  $[\mA \mid \mB\mH+\mC]$ is uniformly distributed over $\ZZ_q^{n\times 2m}$. 
Thus, by \Cref{thm:QSampGauss}, which is applicable by $m\ge  2n\lceil \log q\rceil\ge n$ (Condition \ref{item:m_lower}) and $\sigma\in (\sqrt{8m},q/\sqrt{8m})$ (Condition \ref{item:sigma_interval}), 
for at least $(1-\negl(\secp))$-fraction of $\vk$,  
the second output $\tau$ of $\QSampGauss([\mA\mid \mB\mH+\mC],\ket{\mR\widehat{\mH}+\mR'},\vzero,\sigma)$ is a state negligibly close to the normalized variant of the following state: 
\[
\sum_{
\vx\in \ZZ_q^m\cap \Lambda_{[\mA\mid \mB\mH+\mC]}^\bot
}
\rho_{\sigma}(\vx)\ket{\vx}
\]
If we measure the above state, the outcome is distributed according to the distribution $D_{\ZZ_q^m\cap \Lambda_{[\mA\mid \mB\mH+\mC]}^\bot,\sigma}$. 
By \Cref{cor:distribution_close} and $q/\sigma>\sqrt{8m}$ (Condition \ref{item:sigma_interval}), the distribution is within statistical distance $\negl(m)$ of the distribution $D_{\Lambda_{[\mA\mid \mB\mH+\mC]}^\bot,\sigma}$. 
Thus, it suffices to prove that $\vx$ passes the verification with an overwhelming probability assuming that $\vx$ is sampled from $D_{\Lambda_{[\mA\mid \mB\mH+\mC]}^\bot,\sigma}$. 
We show this below. 

First, $\vx$ in the support of $D_{\Lambda_{[\mA\mid \mB\mH+\mC]}^\bot,\sigma}$ satisfies $[\mA\mid \mB\mH+\mC]\vx=\vzero$. 
Second, by \Cref{lem:tail_infty} and $\beta_{ver}= \omega(\log m) \sigma$ (Condition \ref{item:beta_ver_lower}), 
\begin{align}
\Pr_{\vx \gets D_{\Lambda_{[\mA\mid \mB\mH+\mC]}^\bot,\sigma}}[\|\vx\|_\infty\ge \beta_{ver}]
\le 2m e^{-\pi \left(\frac{\beta_{ver}}{\sigma}\right)^2}\le \negl(m).
\end{align}
Moreover, by \Cref{lem:prob_zero}, $m\ge 2n\log q$ (Condition \ref{item:m_lower}), and $\sigma=\omega(\sqrt{\log m})$ (Condition \ref{item:sigma_interval}),  
\begin{align}
\Pr_{\vx \gets D_{\Lambda_{[\mA\mid \mB\mH+\mC]}^\bot,\sigma}}[\vx=\vzero]
\le 2^{-\Omega(m)}. 
\end{align}
Thus, an honestly generated signature $\signature=\vx$ passes the verification except for a negligible probability. 

\paragraph{\bf Coherent-signability.}  \takashi{Changed the explanation according to the modification of the definition.}
Let 
$f_z\in \Fs$ for $z\in \bit^L$,  
$(\vk,\msk)\in \Setup(1^\secp)$,  
and 
$\sigk_{f_z}=(\vk,f_z,\mR_z,\mR'_z)\in \Constrain(\msk,f_z)$ for  $z\in \bit^L$. 
The coherent signing algorithm $\QSign$
takes $\sum_{z\in \bit^L}\alpha_z\ket{z}\ket{\sigk_{f_z}}$ and $\msg\in \Ms$ as input where $\sum_{z\in \bit^L}|\alpha_z|^2=1$   and $f_z(\msg)=1$  for all $z\in \bit^L$ and works as follows.
\begin{enumerate}
\item Measure $\vk=(\mA,\mB,\mC)$. Note that this does not collapse the state since all the branches share the same $\vk$.
Thus, the state remains to be the following:
\begin{align}
\sum_{z\in \bit^L}\alpha_z\ket{z}\ket{\vk,f_z,\mR_z,\mR'_z}. \label{eq:original}
\end{align}
\item Compute $\mH= \EvalF(U_\msg,\mB)$ and $\widehat{\mH}=\EvalFX(U_\msg,\descf,\mB)$.
\item \label{step:coherently_compute_td}
By coherently computing $\mR_z\widehat{\mH}+\mR'_z$ in a new register,  
generate 
\begin{align}
\ket{\psi}\seteq \sum_{z\in \bit^L}\alpha_z\ket{z}\ket{\vk,f_z,\mR_z,\mR'_z}\ket{\mR_z\widehat{\mH}+\mR'_z}. \label{eq:compute_RH}
\end{align}
\item \label{step:QSampGauss}
Run $(\rho,\tau)\gets \QSampGauss([\mA \mid \mB\mH+\mC],\ket{\psi},\vzero,\sigma)$ where the concatenation of the first two registers of $\ket{\psi}$ is regarded as the first register of $\ket{\psi}$ in \Cref{thm:QSampGauss}.  
\item Measure $\tau$ in the computational basis and let $\vx$ be the outcome. 
\item Uncompute the third register of $\rho$ (i.e., apply the inverse of Step \ref{step:coherently_compute_td}), discard the third register, and let $\rho'$ be the resulting state.
\item Output $\rho'$ and a signature $\mathsf{sig}=\vx$. 
\end{enumerate}

We show that the above describe algorithm $\QSign$ satisfies the requirements of coherent-signability as stated in  \Cref{def:coherent_signing}.
\Cref{item:coherent_one} of \Cref{def:coherent_signing} immediately follows from the definitions of $\QSign$ and $\Sign$ where we observe that when $\ket{\psi}$ is a computational basis state $\ket{z}{\sigk_{f_z}}$, $\QSign(\ket{z}{\sigk_{f_z}},\msg)$ works exactly in the same way as $\Sign(\sigk_{f_z},\msg)$. 
We show \Cref{item:coherent_two} of \Cref{def:coherent_signing} below.
By the proof of the correctness, for every $z\in \bit^L$, 
$\mR_z\widehat{\mH}+\mR'_z$ is a $\beta_{err}$-LWE-trapdoor of $[\mA\mid \mB\mH+\mC]$ for $\beta_{err}\ge \frac{q}{\sigma}\left(\frac{n\log q+m}{\pi}\right)^{1/2}$ 
(assuming Conditions \ref{item:trapdoor}, \ref{item:m_lower}, and \ref{item:sigma_lower}). 
Thus, 
by \Cref{thm:QSampGauss},
which is applicable by $m\ge  2n\lceil \log q\rceil\ge n$ (Condition \ref{item:m_lower}) and $\sigma\in (\sqrt{8m},q/\sqrt{8m})$ (Condition \ref{item:sigma_interval}), 
 for a $(1-\negl(\secp))$-fraction of $\vk$, 
the first output $\rho$ of $\QSampGauss$ obtained in Step \ref{step:QSampGauss} is negligibly close to $\ket{\psi}$.
Thus, $\rho$ is negligibly close to the original state $\sum_{z\in \bit^L}\alpha_z\ket{z}\ket{\sigk_{f_z}}$. 

\paragraph{Selective single-key security.}
We reduce selective single-key security to the quantum hardness of $\SIS_{n, m, q, \beta_{SIS}}$.  
This is very similar to the security proof of the leveled fully homomorphic signature scheme of \cite{STOC:GorVaiWic15}, but we prove it for completeness. 
For an adversary $\A$, we consider the following hybrids.
\begin{description} 
 \item[$\hybi{1}$:]
 This is the original experiment for selective single-key security. 
  \item[$\hybi{2}$:] 
  This hybrid is identical to the previous one except for that the way of generating 
  $\mB$, $\mC$ (which are part of $\vk$), and
  $\mR,\mR'$ (which are part of $\sigk_f$) as follows:   Upon receiving $f$ from $\A$, 
  the challenger
   generates $\mR\gets \Sam(1^m,1^{m\ell},q)$
   and $\mR'\gets \Sam(1^m,1^m,q)$, 
   and 
 sets $\mB:=\mA\mR+\descf \otimes \mG$ and $\mC:=\mA\mR'$. 
 By \Cref{lem:trapdoor}, which is applicable by Condition \ref{item:trapdoor}, 
 this hybrid is indistinguishable from the previous one. 
  \item[$\hybi{3}$:] 
  This hybrid is identical to the previous one except that the challenger generates 
  $\mA$ as 
  $\mA\gets \ZZ_q^{n\times m}$. Note that this modification can be made since the challenger is no longer using $\td$ in the previous hybrid. By \Cref{lem:trapdoor}, which is applicable by Condition \ref{item:trapdoor},  
 this hybrid is indistinguishable from the previous one. 
 \end{description}
Suppose that in $\hybi{3}$, 
$\A$ outputs 
$\msg$ and 
$\signature=\vx$ such that 
$f(\msg)=0$ and $\Vrfy(\vk,\msg,\signature)=\top$, i.e., 
$[\mA\mid \mB\mH+\mC]\vx=\vzero$, $\|\vx\|_\infty\le \beta_{ver}$, and $\vx\neq \vzero$. 
By \Cref{lem:lattice_eval},  and $m\geq 2n\lceil \log q\rceil\geq  n\lceil \log q\rceil$ (Condition \ref{item:m_lower}), 
we have 
\[
[\mB-\descf \otimes G]\widehat{\mH}=\mB\mH
\] 
where we used $U_{\msg}(f)=f(\msg)=0$. 
Since we set $\mB=\mA\mR+\descf \otimes \mG$ and $\mC=\mA\mR'$, the above implies 
\[
\mA(\mR\widehat{\mH}+\mR')=\mB\mH+\mC.
\]
Thus, we have 
\[
\mA [\mI\mid \mR\widehat{\mH}+\mR']\vx=\vzero.  
\]
By \Cref{lem:trapdoor}, which is applicable by Condition \ref{item:trapdoor},  
we have $\|\mR\|_\infty,\|\mR'\|_\infty\le \beta_{sam}$ and 
by \Cref{lem:lattice_eval} and $m\geq 2n\lceil \log q\rceil\geq  n\lceil \log q\rceil$ (Condition \ref{item:m_lower}), we have $\|\widehat{\mH}\|_\infty\le (2m)^d$. Thus,  
$\|[\mI\mid \mR\widehat{\mH}+\mR']\vx\|_{\infty}\le 4m^2 \ell \beta_{sam}(2m)^d\beta_{ver}\le \beta_{SIS}$ where the final inequality follows from Condition \ref{item:beta_SIS_lower}. 
By the assumed quantum hardness of $\SIS_{n, m, q, \beta_{SIS}}$ (Condition \ref{item:SIS}),  
this occurs only with a negligible probability. 
This means that above scheme is selectively single-key secure.

\fi 
\ifnum\anonymous=1
\else
\section*{Acknowledgement}
We thank Alexander Poremba for collaboration at the early stage of this work, Jiahui Liu for a useful discussion on  DS-SKL, and Tianwei Zhang for pointing out a notational issue. 
TM
is supported by JST CREST JPMJCR23I3, JST Moonshot JPMJMS2061-5-1-1, JST FOREST, MEXT QLEAP,
the Grant-in Aid for Transformative Research Areas (A) 21H05183, and the Grant-in-Aid for Scientific Research (A)
No.22H00522.
\fi

	\ifnum\llncs=1
\bibliographystyle{splncs04}
\bibliography{abbrev3,crypto,siamcomp_jacm,other}
	\else
\bibliographystyle{alpha} 
\bibliography{abbrev3,crypto,siamcomp_jacm,other}
	\fi

\ifnum\cameraready=0
	\ifnum\llncs=0
	\appendix
 \ifnum\llncs=0
\section{On Definition of TEPRF}\label{sec:TEPRF}
\else 
\section{Two-key Equivocal PRF}\label{sec:TEPRF} \fi
\ifnum\llncs=1
\subsection{Definition}
Our constructions of PRF-SKL and DS-SKL rely on a primitive called two-key equivocal PRF (TEPRF) introduced in  \cite{C:HJOSW16}.
\begin{definition}[Two-Key Equivocal PRF]\label{def:TEPRF} 
A two-key equivocal PRF (TEPRF) with input length $\ell$ (and output length $1$)\footnote{Though we can also define it for larger output length, we assume the output length to be $1$ by default similarly to \cite{C:HJOSW16}.} is a tuple of two algorithms $(\KG,\Eval)$.
\begin{description}
\item[$\KG(1^\secp,s^*) \ra (\key_0,\key_1)$:] The key generation algorithm is a PPT algorithm that takes as input the security parameter $1^\secp$ and a string $s^*\in \bit^\ell$, and outputs two keys $\key_0$ and $\key_1$. 
\item[$\Eval(\key,s) \ra b$:] The evaluation algorithm is a deterministic classical polynomial-time algorithm that takes as input a key $\key$ and an input $s\in \bit^\ell$, and outputs a bit $b\in \bit$. 
\end{description}
\begin{description}
    \item[Equality:]
    For all $\secp\in \mathbb{N}$, $s^*\in \bit^\ell$, $(\key_0,\key_1)\gets \KG(1^\secpar,s^*)$, $s\in \bit^\ell\setminus \{s^*\}$, 
    \[
    \Eval(\key_0,s)=\Eval(\key_1,s).
    \]
    \item[Different values on target:] 
     For all $\secp\in \mathbb{N}$, $s^*\in \bit^n$, $(\key_0,\key_1)\gets \KG(1^\secpar,s^*)$, 
    \[
    \Eval(\key_0,s^*)\ne \Eval(\key_1,s^*).
    \]
  \item[Differing point hiding.] 
For any (stateful) QPT adversary $\A$, 
        \begin{align}
    \abs{
    \begin{array}{l}
    \Pr\left[
    \A(\key_b)=1:
    \begin{array}{l}
    (s^*_0,s^*_1,b) \gets \A(1^\secpar)\\
    (\key_0,\key_1)\gets \KG(1^\secpar,s^*_0)\\
    \end{array}
    \right]\\
    -
   \Pr\left[
    \A(\key_b)=1:
    \begin{array}{l}
    (s^*_0,s^*_1,b) \gets \A(1^\secpar)\\
    (\key_0,\key_1)\gets \KG(1^\secpar,s^*_1)\\
    \end{array}
    \right]
    \end{array}
    }\le \negl(\secp).
    \end{align}
\end{description}
\end{definition}
Our definition of TEPRFs is slightly different from the original one in \cite{C:HJOSW16}. In \Cref{sec:relation_def}, we show that their definition implies ours. Since they show that OWFs imply TEPRFs under their definition, a similar implication holds for our definition of TEPRFs. 
\begin{theorem}\label{thm:OWF_to_TEPRF}
Assuming the existence of OWFs, there exist TEPRFs with input length $\ell$  for any polynomial $\ell=\ell(\secpar)$.
\end{theorem}
\fi 

\ifnum\llncs=1\subsection{Relation to the Original Definition}\label{sec:relation_def}\fi 
We show that TEPRFs as defined in \cite{C:HJOSW16} imply those as defined in \Cref{def:TEPRF}. 
For the sake of distinction, we call TEPRFs under their definition \emph{original TEPRFs}.

\begin{definition}[Original TEPRF~{\cite[Definition 8]{C:HJOSW16}}]\label{def:TEPRF_HJO}
An original TEPRF with input length $\ell$ (and output length $1$) is a tuple of three algorithms $(\ObvKG,\KG,\Eval)$.
\begin{description}
\item[$\ObvKG(1^\secp) \ra \key$:] The oblivious key generation algorithm is a PPT algorithm that takes as input the security parameter $1^\secp$, and outputs a key $\key$.
\item[$\KG(1^\secp,s^*) \ra (\key_0,\key_1)$:] The key generation algorithm is a PPT algorithm that takes as input the security parameter $1^\secp$ and a string $s^*\in \bit^\ell$, and outputs two keys $\key_0$ and $\key_1$. 
\item[$\Eval(\key,s) \ra b$:] The evaluation algorithm is a deterministic classical polynomial-time algorithm that takes as input a key $\key$ and an input $s\in \bit^\ell$, and outputs a bit $b\in \bit$. 
\end{description}
\begin{description}
    \item[Equality:]
    For all $\secp\in \mathbb{N}$, $s^*\in \bit^\ell$, $(\key_0,\key_1)\gets \KG(1^\secpar,s^*)$, $s\in \bit^\ell\setminus \{s^*\}$, 
    \[
    \Eval(\key_0,s)=\Eval(\key_1,s).
    \]
    \item[Different values on target:] 
     For all $\secp\in \mathbb{N}$, $s^*\in \bit^n$, $(\key_0,\key_1)\gets \KG(1^\secpar,s^*)$, 
    \[
    \Eval(\key_0,s^*)\ne \Eval(\key_1,s^*).
    \]
    \item[Indistinguishability:] 
    For any (stateful) QPT adversary $\A$, 
    \ifnum\llncs=0
    \begin{align}
    \abs{
    \Pr\left[
    \A(\key_b)=1:
    \begin{array}{l}
    (s^*,b) \gets \A(1^\secpar)\\
    (\key_0,\key_1)\gets \KG(1^\secpar,s^*)\\
    \end{array}
    \right]
    -
\Pr\left[
    \A(\key)=1:
    \begin{array}{l}
    (s^*,b) \gets \A(1^\secpar)\\
    \key\gets \ObvKG(1^\secpar)\\
    \end{array}
    \right]
    }\le \negl(\secp).
    \end{align}
    \else
    \begin{align}
    \abs{
        \begin{array}{l}
    \Pr\left[
    \A(\key_b)=1:
    \begin{array}{l}
    (s^*,b) \gets \A(1^\secpar)\\
    (\key_0,\key_1)\gets \KG(1^\secpar,s^*)\\
    \end{array}
    \right]\\
    -
\Pr\left[
    \A(\key)=1:
    \begin{array}{l}
    (s^*,b) \gets \A(1^\secpar)\\
    \key\gets \ObvKG(1^\secpar)\\
    \end{array}
    \right]
    \end{array}}\le \negl(\secp).
    \end{align}
\fi
\end{description}
\end{definition}
\begin{theorem}[{\cite[Claim 4]{C:HJOSW16}}]\label{thm:OWF_to_TEPRF_original}
Assuming the existence of OWFs, there exist original TEPRFs (as defined in \Cref{def:TEPRF_HJO}) with input length $\ell$  for any polynomial $\ell=\ell(\secpar)$.
\end{theorem}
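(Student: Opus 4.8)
The plan is to invoke \cite[Claim 4]{C:HJOSW16} directly, but to reprove it from scratch I would build the (original) TEPRF from a puncturable PRF, which follows from OWFs via the GGM construction. Fix a puncturable PRF $F$ with input length $\ell$ and single-bit output, with puncturing algorithm $\Puncture$. The two paired keys should agree everywhere except at the chosen target $s^*$: concretely, I would let each TEPRF key consist of a key punctured at $s^*$ together with a hardwired output bit, so that $\Eval$ computes $F(s)$ on all $s \ne s^*$ and returns the hardwired bit on $s^*$. Setting the hardwired bit to $0$ in $\key_0$ and to $1$ in $\key_1$ immediately yields both the \emph{equality} property (the two keys share the same punctured key, hence agree on every $s \ne s^*$) and the \emph{different values on target} property. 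The oblivious generator $\ObvKG$ would output an object of the same shape but produced without reference to any special point.

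The main obstacle, and essentially the only place where real work is needed, is the \emph{indistinguishability} requirement, because the adversary itself chooses $s^*$ and therefore knows it when inspecting the key. A single key must hide \emph{which} point is the programmed one, whereas a naive GGM punctured key leaks the punctured point, since its co-path seeds reveal $s^*$. The crux of the proof is thus to realize the ``evaluate normally except at the hidden point $s^*$'' functionality so that an individual key is computationally indistinguishable from an oblivious key even given $s^*$. I would implement the test ``is $s$ the special point?'' through a comparison masked by the PRF/PRG rather than by storing $s^*$ (or any value revealing it) in the clear, and then argue indistinguishability through a hybrid that first replaces the hardwired value at $s^*$ by the genuine pseudorandom value $F(s^*)$ and then replaces the hidden target by a freshly sampled oblivious one. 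Each step reduces to the pseudorandomness of the punctured PRF and the security of the masking primitive; because the output is a single bit, this masking can be carried out from OWFs alone.

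The delicate point to get right is that \emph{equality} must hold for \emph{all} $s \ne s^*$ (perfectly), not merely with overwhelming probability, so the hidden-comparison mechanism has to be collision-free on the whole evaluation domain; I would guarantee this by generating the masking data at key-generation time to be consistent with exactly one programmed point, so that the ``except at $s^*$'' branch is never triggered elsewhere. Given that, the remaining hybrids are routine PRF-security reductions and I expect no further difficulty. Since the construction uses only a puncturable PRF together with a PRF/PRG-based masking, all of which follow from OWFs, the theorem holds for any polynomial input length $\ell = \ell(\secpar)$.
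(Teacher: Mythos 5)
The paper does not reprove this statement; it simply imports \cite[Claim 4]{C:HJOSW16}, so the only question is whether your from-scratch argument would actually go through. It would not, for two reasons.

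First, your concrete assignment of hardwired target bits --- $0$ in $\key_0$ and $1$ in $\key_1$ --- already contradicts the indistinguishability property, independently of how the puncturing is hidden. The adversary chooses $(s^*,b)$ with $b=0$, receives a key $\key'$, and outputs $\Eval(\key',s^*)$. Your real $\key_0$ always returns the hardwired $0$ at $s^*$, whereas any sensible oblivious key must return a pseudorandom (hence essentially unbiased) bit there, since it is generated without reference to $s^*$ and must be pseudorandom as a function. This gives a distinguishing advantage close to $1/2$. The target values have to be randomized per key pair, e.g.\ $\key_0$ carries a uniform bit $r$ and $\key_1$ carries $1\oplus r$ (or, equivalently, each key's value at $s^*$ is individually pseudorandom while the two values are forced to disagree), so that evaluating either single key at the known point $s^*$ reveals nothing.

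Second, the step you describe as "implement the test `is $s$ the special point?' through a comparison masked by the PRF/PRG" is not a construction --- it is precisely the entire technical content of the claim, and you leave it unspecified. What is needed is a key whose distribution is computationally independent of $s^*$ while the \emph{pair} of keys agrees perfectly off $s^*$ and disagrees at $s^*$; a plain GGM-punctured key fails because the positions of the co-path seeds are verifiably tied to $s^*$, as you note, but "masking the comparison" does not explain how to avoid this while preserving perfect off-target equality. The known OWF-based realization is a GGM-tree construction in the style of distributed point functions (Gilboa--Ishai; Boyle--Gilboa--Ishai), where the two keys share per-level correction words and each key in isolation is simulatable without $s^*$; this (or the equivalent construction in \cite{C:HJOSW16}) is the lemma your proof is missing. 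As written, your argument reduces the theorem to an undefined "masking primitive" whose security you then invoke, which is circular.
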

\begin{theorem}\label{thm:TEPRF_original_to_TEPRF}
If original TEPRFs (as defined in \Cref{def:TEPRF_HJO}) with input length $\ell$  exist, then TEPRFs (as defined in \Cref{def:TEPRF}) with input length $\ell$  exist.
\end{theorem}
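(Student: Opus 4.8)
The plan is to take the original TEPRF $(\ObvKG,\KG,\Eval)$ and simply discard the oblivious key generation algorithm $\ObvKG$, defining the new TEPRF to be the pair $(\KG,\Eval)$. Since the syntax of $\KG$ and $\Eval$ is identical in the two definitions, and the \emph{equality} and \emph{different values on target} properties in \Cref{def:TEPRF} are word-for-word the same as those in \Cref{def:TEPRF_HJO}, these two correctness properties are inherited verbatim. Thus the only thing that requires argument is that the \emph{indistinguishability} property of the original TEPRF implies the \emph{differing point hiding} property of \Cref{def:TEPRF}.

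To establish differing point hiding, I would use $\ObvKG$ as an intermediate hybrid. Fix a QPT adversary $\A$ for the differing point hiding game, which outputs a triple $(s^*_0,s^*_1,b)$ and then receives $\key_b$. For each $\beta\in\bit$, consider the reduction $\A'_\beta$ against indistinguishability that runs $\A$ to obtain $(s^*_0,s^*_1,b)$, outputs the pair $(s^*_\beta,b)$ as its own challenge, forwards the received key to $\A$, and echoes $\A$'s final bit. Since $\A'_\beta$ then plays exactly the indistinguishability game with target $s^*_\beta$, the indistinguishability property guarantees that the distribution of $\A$'s output when it is fed $\key_b\gets\KG(1^\secpar,s^*_\beta)$ is negligibly close to its output when it is fed $\key\gets\ObvKG(1^\secpar)$.

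The key observation is that the oblivious distribution $\key\gets\ObvKG(1^\secpar)$ does not depend on the index $\beta$, so the two experiments (one with target $s^*_0$, one with target $s^*_1$) pass through a common intermediate experiment. Applying indistinguishability once for $\beta=0$ and once for $\beta=1$ and combining the two negligible bounds via the triangle inequality yields precisely the differing point hiding inequality of \Cref{def:TEPRF}. This is exactly the argument sketched in the (now inlined) differing-point-hiding lemma, so the remaining calculation is routine. There is no substantial obstacle; the only point to handle with care is the mismatch in the adversary's interface---the differing point hiding adversary commits to two targets $s^*_0,s^*_1$ whereas the indistinguishability adversary commits to a single target---which is resolved by the simple wrapper $\A'_\beta$ that selects the relevant target. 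Together with \Cref{thm:OWF_to_TEPRF_original}, this also yields \Cref{thm:OWF_to_TEPRF}.
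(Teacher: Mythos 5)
Your proposal is correct and matches the paper's proof: both discard $\ObvKG$, inherit the equality and different-values-on-target properties verbatim, and derive differing point hiding by applying indistinguishability at $s^*_0$ and $s^*_1$, observing that the $\ObvKG$ term is independent of $\beta$, and invoking the triangle inequality. Your explicit wrapper $\A'_\beta$ just spells out the reduction the paper leaves implicit.
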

\begin{proof}
Let $(\ObvKG,\KG,\Eval)$ be an original TEPRF (that satisfies \Cref{def:TEPRF_HJO}). Then we show that $(\KG,\Eval)$ is a TEPRF (that satisfies \Cref{def:TEPRF}).  
The equality and different values on target properties immediately follow from those of the original TEPRF. Below, we show the different values on target property. 
    By indistinguishability of the original TEPRF, 
    for $\beta\in \bit$, we have
    \ifnum\llncs=0 
        \begin{align}
    \abs{
    \Pr\left[
    \A(\key_b)=1:
    \begin{array}{l}
    (s^*_0,s^*_1,b) \gets \A(1^\secpar)\\
    (\key_0,\key_1)\gets \KG(1^\secpar,s^*_\beta)\\
    \end{array}
    \right]
    -
   \Pr\left[
    \A(\key)=1:
    \begin{array}{l}
    (s^*_0,s^*_1,b) \gets \A(1^\secpar)\\
    \key\gets \ObvKG(1^\secpar)\\
    \end{array}
    \right]
    }\le \negl(\secp).
    \end{align}
    \else
            \begin{align}
    \abs{
    \begin{array}{l}
    \Pr\left[
    \A(\key_b)=1:
    \begin{array}{l}
    (s^*_0,s^*_1,b) \gets \A(1^\secpar)\\
    (\key_0,\key_1)\gets \KG(1^\secpar,s^*_\beta)\\
    \end{array}
    \right]\\
    -
   \Pr\left[
    \A(\key)=1:
    \begin{array}{l}
    (s^*_0,s^*_1,b) \gets \A(1^\secpar)\\
    \key\gets \ObvKG(1^\secpar)\\
    \end{array}
    \right]
    \end{array}
    }\le \negl(\secp).
    \end{align}
    \fi
Note that the second probability does not depend on $\beta$. 
Then differing point hiding follows from the above inequality and the triangle inequality. 
\end{proof}
\Cref{thm:OWF_to_TEPRF} follows from 
\Cref{thm:OWF_to_TEPRF_original,thm:TEPRF_original_to_TEPRF}.

\section{Comparisons among Security Definitions of PKE-SKL}\label{sec:comparison_def}
We introduce the definitions of IND-KLA and OW-KLA security by \cite{EC:AKNYY23} adopted to the classical revocation setting. 
\begin{definition}[IND-KLA Security~\cite{EC:AKNYY23}]\label{def:IND-CPA_PKESKL}
We say that a PKE-SKL scheme $\PKESKL$ with classical revocation for the message space $\Ms$ is IND-KLA secure, if it satisfies the following requirement, formalized by the experiment $\expb{\PKESKL,\qA}{ind}{kla}(1^\secp,\coin)$ between an adversary $\qA$ and the challenger $\qC$:
        \begin{enumerate}
            \item  $\qC$ runs $(\ek,\qdk,\dvk)\gets\qKG(1^\secp)$ and sends $\ek$ and $\qdk$ to $\qA$. 
            \item Throughout the experiment, $\qA$ can access the following (stateful) deletion verification oracle $\Oracle{\DelVrfy}$ where $V$ is initialized to be $\unreturned$:
            \begin{description}
               \item[ $\Oracle{\DelVrfy}(\cert)$:] It runs $d \gets \DelVrfy(\dvk,\cert)$ and returns $d$.  
               If $V=\bot$ and $d=\top$, it updates $V\seteq \top$. 
            \end{description}
            \item $\qA$ sends $(\msg_0^*,\msg_1^*)\in \Ms^2$
            to $\qC$. If $V=\unreturned$, $\qC$ output $0$ as the final output of this experiment. Otherwise, $\qC$ generates $\ct^*\la\Enc(\ek,\msg_\coin^*)$ and sends $\ct^*$ to $\qA$.
            \item $\qA$ outputs a guess $\coin^\prime$ for $\coin$. $\qC$ outputs $\coin'$ as the final output of the experiment.
        \end{enumerate}
        For any QPT $\qA$, it holds that
\begin{align}
\advb{\PKESKL,\qA}{ind}{kla}(\secp) \seteq \abs{\Pr[\expb{\PKESKL,\qA}{ind}{kla} (1^\secp,0) \ra 1] - \Pr[\expb{\PKESKL,\qA}{ind}{kla} (1^\secp,1) \ra 1] }\leq \negl(\secp).
\end{align}  
\end{definition}

\begin{definition}[OW-KLA Security~\cite{EC:AKNYY23}]\label{def:OW-CPA_PKESKL}
We say that a PKE-SKL scheme $\PKESKL$ with classical revocation for the message space $\Ms$ is OW-KLA secure, if it satisfies the following requirement, formalized by the experiment $\expb{\PKESKL,\qA}{ow}{kla}(1^\secp)$ between an adversary $\qA$ and the challenger $\qC$:
        \begin{enumerate}
            \item  $\qC$ 
            runs $(\ek,\qdk,\dvk)\gets\qKG(1^\secp)$ and sends $\ek$ and $\qdk$ to $\qA$. 
              \item Throughout the experiment, $\qA$ can access the following (stateful) deletion verification oracle $\Oracle{\DelVrfy}$ where $V$ is initialized to be $\unreturned$:
              \begin{description}
               \item[   $\Oracle{\DelVrfy}(\cert)$:] It runs $d \gets \DelVrfy(\dvk,\cert)$ and returns $d$. 
               If $V=\bot$ and $d=\top$, it updates $V\seteq \top$. 
            \end{description}
            \item 
            $\qA$ sends $\requestchallenge$
            to $\qC$. If $V=\unreturned$, $\qC$ outputs $0$ as the final output of this experiment. Otherwise, $\qC$ 
            chooses $\msg^*\gets \Ms$, 
            generates $\ct^*\la\Enc(\ek,\msg^*)$ and sends $\ct^*$ to $\qA$.
            \item $\qA$ outputs $\msg$. $\qC$ outputs $1$ if $\msg=\msg^*$ and otherwise outputs $0$ as the final output of the experiment.
        \end{enumerate}
        For any QPT $\qA$, it holds that
\begin{align}
\advb{\PKESKL,\qA}{ow}{kla}(\secp) \seteq \Pr[\expb{\PKESKL,\qA}{ow}{kla} (1^\secp) \ra 1]\leq \negl(\secp).
\end{align}
\end{definition}

\begin{lemma}\label{lem:IND-VRA_to_IND-KLA}
    If a PKE-SKL scheme with classical revocation for $\Ms$ such that $|\Ms|=\poly(\secp)$ satisfies IND-VRA security, then it also satisfies IND-KLA security.  
\end{lemma}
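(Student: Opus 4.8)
The plan is to prove the statement by contraposition, through a reduction: from an IND-KLA adversary $\qA$ with non-negligible advantage I would build an IND-VRA adversary $\qB$ with non-negligible advantage, contradicting IND-VRA security. The two obstacles to running the IND-KLA game inside the IND-VRA game are that (i) IND-KLA grants $\qA$ oracle access to $\DelVrfy(\dvk,\cdot)$ with polynomially many queries while never revealing $\dvk$, whereas IND-VRA lets $\qB$ submit only a \emph{single} certificate and then hands over $\dvk$; and (ii) in IND-VRA the certificate and the challenge message pair are submitted together, while in IND-KLA the first accepting verification query and the choice of $(\msg_0^*,\msg_1^*)$ occur at different moments. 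Let $q=\poly(\secp)$ bound $\qA$'s number of verification queries and recall $|\Ms|=\poly(\secp)$, so that $N := q\cdot|\Ms|^2=\poly(\secp)$.

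The reduction $\qB$, on input $\ek,\qdk$, forwards them to $\qA$ and samples a guess $i^*\gets[q]$ and $m^*=(\msg_0^*,\msg_1^*)\gets\Ms^2$. It answers $\qA$'s first $i^*-1$ verification queries with $\bot$; on the $i^*$-th query $\cert_{i^*}$ it submits $(\cert_{i^*},m^*)$ to its own IND-VRA challenger. If that certificate is rejected, $\qB$ aborts; otherwise it receives $\dvk$ and $\ct^*$, answers the $i^*$-th query with $\top$, and answers all subsequent queries honestly by running $\DelVrfy(\dvk,\cdot)$ itself. Crucially, once $\qB$ holds $\dvk$ it re-examines the first $i^*-1$ queries and aborts if any of them was in fact accepting; this guarantees that ``no abort'' coincides exactly with the event that $i^*$ is the first accepting query, correctly handling the case where the guessed index is too late. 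When $\qA$ finally outputs its challenge pair, $\qB$ aborts if it differs from $m^*$ and otherwise forwards $\ct^*$; it outputs $\qA$'s guess $\coin'$ if it never aborted, and a uniformly random bit on any abort.

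For the analysis I would fix the challenger's and $\qA$'s randomness and argue that, whenever $\qA$ is eligible (submits a valid certificate before the challenge) with first-accepting index $I$ and challenge pair $M$, there is exactly one guess $(i^*,m^*)=(I,M)$ for which $\qB$ reaches the no-abort branch, and along that branch $\qB$'s transcript is identical to the genuine IND-KLA execution, so $\qB$ reproduces $\qA$'s output. The remaining sub-cases ($i^*<I$; $i^*=I$ with $m^*\neq M$; $i^*>I$; and $\qA$ ineligible) each trigger an abort, with the $i^*>I$ case relying precisely on the post-hoc re-check. Since the guess is uniform and all events determining an abort are fixed before $\ct^*$ is ever given to $\qA$, the abort probability and the abort-time random output are independent of the challenge bit; hence their contributions cancel in the advantage difference, and eligibility itself is determined independently of $\coin$. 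This yields $\advb{\PKESKL,\qB}{ind}{vra}(\secp)\ge \tfrac{1}{N}\,\advb{\PKESKL,\qA}{ind}{kla}(\secp)$, and IND-VRA security forces the left side to be negligible, so $\advb{\PKESKL,\qA}{ind}{kla}(\secp)\le N\cdot\negl(\secp)=\negl(\secp)$.

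The hard part, and the reason the polynomial-message-space hypothesis is essential, is obstacle (ii): because IND-VRA forces the certificate and messages to be committed simultaneously, $\qB$ must commit to a message-pair guess in advance, which is affordable only when $|\Ms|=\poly(\secp)$. The second delicate point is making the verification-oracle simulation faithful before $\dvk$ is available; I expect the cleanest route to an exact advantage bound is the combination of the post-hoc consistency check (to pin down ``no abort'' as the first-accepting-query event) with the observation that every abort event is $\coin$-independent, so it contributes identically to the $\coin=0$ and $\coin=1$ executions.
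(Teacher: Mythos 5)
Your reduction is essentially identical to the paper's: both guess the index $k\in[q]$ of the first accepting verification query together with the message pair in $\Ms^2$, answer the first $k-1$ queries with $\bot$, perform the same post-hoc consistency check using $\dvk$ once it is revealed, abort (outputting a random bit) on any mismatch, and obtain the advantage relation with loss factor $\frac{1}{q|\Ms|^2}$. Your analysis is if anything slightly more explicit than the paper's about why the abort events are independent of the challenge bit, but the approach and the key ideas are the same.
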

\begin{lemma}\label{lem:OW-VRA_to_OW-KLA}
    If a PKE-SKL scheme with classical revocation satisfies OW-VRA security, then it also satisfies OW-KLA security.  
\end{lemma}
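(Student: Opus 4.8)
The plan is to reduce OW-KLA security to OW-VRA security by building, from any QPT OW-KLA adversary $\qA$, a QPT OW-VRA adversary $\qB$. The only essential difference between the two games is that OW-KLA grants $\qA$ oracle access to $\Oracle{\DelVrfy}(\cdot)$ throughout (but never reveals $\dvk$), whereas OW-VRA lets $\qB$ submit a single certificate and, if it is accepting, hands $\qB$ the verification key $\dvk$ together with the challenge ciphertext $\ct^*$. The crux is therefore to simulate the deletion verification oracle for $\qA$ using only the information $\qB$ can legitimately obtain. I will assume, as is standard in this setting, that queries to $\Oracle{\DelVrfy}$ are classical; let $Q = Q(\secp) = \poly(\secp)$ bound the number of such queries.

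First I would have $\qB$, on receiving $(\ek,\qdk)$ from its challenger, forward them to $\qA$ and sample a guess $i^* \gets [Q]$ for the index of the \emph{first} oracle query on which $\DelVrfy$ returns $\top$. For the first $i^*-1$ queries $\qB$ answers $\bot$; for the $i^*$-th query, carrying a certificate $\cert$, $\qB$ submits $\cert$ to its own OW-VRA challenger. If the challenger's verification fails, the guess was wrong and $\qB$ simply aborts; if it succeeds, the challenger returns $\dvk$ and $\ct^*$, so $\qB$ replies $\top$ to this query and can henceforth answer every subsequent oracle query by evaluating $\DelVrfy(\dvk,\cdot)$ itself. $\qB$ stores $\ct^*$ and releases it to $\qA$ exactly when $\qA$ issues $\requestchallenge$, tracking the flag $V$ internally in the obvious way. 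Finally, $\qB$ outputs whatever message $\qA$ outputs.

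Next I would argue that, conditioned on the guess being correct, the simulation is perfect. When $i^*$ equals the index of the first accepting query, all earlier certificates are genuinely rejecting, so the forced $\bot$ answers agree with the real oracle; the $i^*$-th answer is $\top$, consistent because the OW-VRA challenger accepted; and from query $i^*$ onward $\qB$ holds $\dvk$ and reproduces the oracle exactly. Hence the flag $V$ becomes $\top$ at precisely query $i^*$, as in the real game. Runs in which $\qA$ issues $\requestchallenge$ before any accepting query satisfy $V=\unreturned$ in the real game and are therefore losing, so they contribute nothing to $\qA$'s winning probability and may be discarded. By induction on the transcript, for any real execution in which $\qA$ wins, its unique first accepting index lies in $[Q]$ and $\qB$'s run with $i^*$ equal to that index is identical to the real execution and wins the OW-VRA game. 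Since $i^*$ is sampled independently and hits that index with probability $1/Q$, we obtain $\advb{\PKESKL,\qB}{ow}{vra}(\secp) \ge \frac{1}{Q}\,\advb{\PKESKL,\qA}{ow}{kla}(\secp)$, and OW-VRA security forces $\advb{\PKESKL,\qA}{ow}{kla}(\secp) \le Q\cdot\negl(\secp)=\negl(\secp)$.

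The main obstacle is the timing asymmetry: $\qB$ can spend its single certificate submission only once, yet $\qA$ expects a fully functional verification oracle both before and after the first accepting query and may interleave $\requestchallenge$ arbitrarily. The index-guessing step is what resolves this, and the delicate points to verify carefully are (i) that answering $\bot$ before query $i^*$ is faithful, which needs $i^*$ to be the \emph{first} accepting index rather than merely \emph{an} accepting index; (ii) that the classical nature of the oracle queries lets $\qB$ intercept and branch on the $i^*$-th query; and (iii) that executions in which $\qA$ requests the challenge prematurely are already losing in the real game and hence safely ignored. (The companion statement, \Cref{lem:IND-VRA_to_IND-KLA}, would follow by the same simulation, with the extra observation that a one-way solver can be built from an IND distinguisher by guessing the challenge message from the polynomial-size space $\Ms$.)
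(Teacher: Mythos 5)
Your proposal is correct and follows essentially the same route as the paper: guess the index of the first accepting deletion-verification query among the polynomially many classical queries, answer $\bot$ before it, spend the single OW-VRA certificate submission on that query, and use the revealed $\dvk$ to simulate the oracle afterwards, losing only a $1/Q$ factor. The paper proves the IND variant this way (with an additional guess of the challenge message pair) and notes that the OW case drops the message guess exactly as you observe.
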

Since the proofs of \Cref{lem:IND-VRA_to_IND-KLA} amd \Cref{lem:OW-VRA_to_OW-KLA} are almost identical, we only prove \Cref{lem:IND-VRA_to_IND-KLA} after which we briefly explain the difference for the proof of \Cref{lem:OW-VRA_to_OW-KLA}.
\begin{proof}[Proof of \Cref{lem:IND-VRA_to_IND-KLA}]
Let $(\qKG, \Enc, \qDec,\qDel,\DelVrfy)$ be an IND-VRA secure PKE-SKL scheme with classical revocation for message space $\Ms$ such that $|\Ms|=\poly(\secp)$. Toward construction, we assume that there is a QPT adversary $\qA$ that breaks IND-KLA security. 
Without loss of generality, we assume that $\qA$ makes exactly $q$ queries to the deletion verification oracle for some polynomial $q=q(\secp)$. 
Then we construct a QPT adversary $\qB$ that breaks IND-VRA security as follows. 
\if0 
Upon receiving $\ek$ and $\qdk$ from the challenger, $\qB$ works as follows. 
\begin{enumerate}
\item Choose $k\gets [q]$ and $\msg_0,\msg_1\gets \Ms$.  
\item Invoke $\qA$ on input $(\ek,\qdk)$. 
\item For $i\in [k-1]$, when $\qA$ makes $i$-th deletion verification query $\cert_i$, return $\bot$ as a response from the deletion verification oracle. 
\item When $\qA$ makes $k$-th deletion verification query $\cert_k$, return $\top$ as a response from the deletion verification oracle. 
\item Send $\cert_k$ and $(\msg_0,\msg_1)$ to the challenger. 
\item Receive $\dvk$ and $\ct^*$ from the challenger.  
\item If $\DelVrfy(\dvk,\cert_i)=\top$ for some  $i\in [k-1]$ or  $\DelVrfy(\dvk,\cert_k)=\bot$, output a random bit and halt. Otherwise, proceed to the next step. 
\item Run $\qA$ until it sends $(\msg_0^*,\msg^*_1)$ to the challenger where $\qB$ simulates the deletion verification oracle by using $\dvk$. 
\item If $(\msg_0^*,\msg^*_1)\neq (\msg_0,\msg_1)$,  output a random bit and halt.  Otherwise, proceed to the next step. 
\item Send $\ct^*$ to $\qA$.
\item Run $\qA$ until it outputs the final output $\coin'$ where $\qB$ simulates the deletion verification oracle by using $\dvk$. 
\item Output $\coin'$. 
\end{enumerate}
\fi
\begin{description}
\item[$\qB$:]  
Upon receiving $\ek$ and $\qdk$ from the challenger, it works as follows. 
Choose $k\gets [q]$ and $\msg_0,\msg_1\gets \Ms$.  
Run $\qA$ on input $(\ek,\qdk)$ until it makes $k$-th deletion verification query where $\qB$ simulates the deletion verification oracle by returning $\bot$ for the first $k-1$ queries and returning $\top$ for the $k$-th query.  
For $i\in [k]$, let $\cert_i$ be the $i$-th deletion verification query by $\qA$. 
Send $\cert_k$ and $(\msg_0,\msg_1)$ to the challenger and   
receive $\dvk$ and $\ct^*$ from the challenger.  
If $\DelVrfy(\dvk,\cert_i)=\top$ for some  $i\in [k-1]$ or  $\DelVrfy(\dvk,\cert_k)=\bot$, output a random bit and immediately halt. Otherwise, run $\qA$ until it sends $(\msg_0^*,\msg^*_1)$ to the challenger where $\qB$ simulates the deletion verification oracle by using $\dvk$. 
If $(\msg_0^*,\msg^*_1)\neq (\msg_0,\msg_1)$,  output a random bit and immediately halt.  Otherwise, send $\ct^*$ to $\qA$ and run $\qA$ until it outputs the final output $\coin'$ where $\qB$ simulates the deletion verification oracle by using $\dvk$. 
Output $\coin'$. 
\end{description}
By the description of $\qB$, one can see that $\qB$ perfectly simulates the IND-KLA experiment for $\qA$ conditioned on that $\qB$ does not abort, which occurs with probability $\frac{1}{q|\Ms|^2}$. Thus, we have 
\[
\advb{\PKESKL,\qB}{ind}{vra}(\secp)
=
\frac{1}{q|\Ms|^2}\advb{\PKESKL,\qA}{ind}{kla}(\secp). 
\]
Since $q=\poly(\secp)$ and we assume $|\Ms|=\poly(\secp)$, 
if $\qA$ breaks IND-KLA security, then $\qB$ breaks IND-VRA security. This is contradiction. Thus, the scheme is IND-KLA secure.  
\end{proof}
\begin{proof}[Proof of \Cref{lem:OW-VRA_to_OW-KLA} (sketch)]
The proof can be done almost in the same manner as the proof of \Cref{lem:IND-VRA_to_IND-KLA}. One difference is that in the OW-KLA experiment, the challenge message is chosen by the challenger instead of the adversary. Thus, we do not need to guess the challenge messages $(\msg_0^*,\msg_1^*)$ unlike the proof of \Cref{lem:IND-VRA_to_IND-KLA}. This is why the proof works for schemes with an  arbitrarily large message space.
\end{proof}
\begin{remark}
The above proof implicitly shows that IND-VRA security implies a stronger variant of  IND-KLA security where the adversary is also given the deletion verification key after passing the deletion verification for the first time when the message space is polynomial-size. We can show a similar implication for the case of one-wayness without the restriction to a polynomial-size message space.
\end{remark}

 \section{Transformation from UPF-SKL to PRF-SKL}\label{sec:UPF-to-PRF}
In this section, we show a generic conversion from UPF-SKL to PRF-SKL using the quantum Goldreich-Levin lemma. Though it is very similar to the OW-to-IND conversion for PKE-SKL~\cite[Lemma 3.12]{EC:AKNYY23}, we provide a full proof for completeness.  
\subsection{Preparation}
We need the quantum Goldreich-Levin lemma established by \cite{C:CLLZ21} based on \cite{AC02}.
\begin{lemma}[Quantum Goldreich-Levin with Quantum Auxiliary Input~{\cite[Lemma~B.12]{C:CLLZ21}}]\label{lem:QGL}
There exists a QPT algorithm $\qExt$ that satisfies the following.
Let $n \in \mathbb{N}$, $x\in \bit^n$, $\epsilon\in [0,1/2]$,  and $\qA$ be a quantum algorithm with a quantum auxiliary input $\qaux$ such that    
\begin{align}
    \Pr\left[\qA(\qaux,r)\ra x\cdot r ~\middle |~ r\gets \bit^n\right]
    \geq \frac{1}{2}+\epsilon.
\end{align}
Then, we have
\begin{align}
    \Pr\left[\qExt([\qA],\qaux)\rightarrow x \right]\geq 4\epsilon^2.
\end{align}
where $[\qA]$ means the description of $\qA$. 
\end{lemma}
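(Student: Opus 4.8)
The plan is to follow the coherent-access and phase-kickback strategy of Adcock--Cleve, adapted to tolerate a quantum auxiliary input. The first step is to replace the predictor $\qA$ by a unitary. Using the description $[\qA]$, deferring all intermediate measurements, and purifying the (possibly mixed) auxiliary state $\qaux$ on a reference register that the computation never touches, I would obtain a unitary $U$ acting on an input register $\qreg{R}$ holding $r\in\bit^n$, a single-qubit output register $\qreg{B}$, and ancillas, such that for each $r$ the state $\ket{\psi_r}\seteq U\ket{r}_{\qreg{R}}\ket{0}_{\qreg{B}}\qaux$ yields the value $x\cdot r$ when $\qreg{B}$ is measured, with some probability $p_r$ satisfying $\Exp_r[p_r]\ge \tfrac12+\epsilon$ by hypothesis.

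Next I would specify $\qExt$: prepare $\frac{1}{\sqrt{2^n}}\sum_r\ket{r}_{\qreg{R}}\ket{0}_{\qreg{B}}\qaux$, apply $U$, apply the phase operation $P\colon \ket{b}_{\qreg{B}}\mapsto (-1)^b\ket{b}_{\qreg{B}}$ (phase kickback on the predicted bit), apply $U^\dagger$ to uncompute, apply $H^{\tensor n}$ to $\qreg{R}$, measure $\qreg{R}$ in the computational basis, and output the outcome. Writing $\ket{\phi}$ for the state just before the final Hadamard and $\ket{\chi}\seteq \frac{1}{\sqrt{2^n}}\sum_r(-1)^{x\cdot r}\ket{r}_{\qreg{R}}\ket{0}_{\qreg{B}}\qaux$ for the ``clean'' target (whose Hadamard transform on $\qreg{R}$ equals $\ket{x}$), the success probability of $\qExt$ is at least $|\langle\chi|\phi\rangle|^2$, since $\ket{\chi}$ lies in the range of the projector onto a correct measurement outcome.

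The core computation is to evaluate $\langle\chi|\phi\rangle$. Because $U$ acts as $\ket{r}\ket{0}\qaux\mapsto\ket{r}\ket{\psi_r}$ on each branch and the phases $(-1)^{x\cdot r}$ commute past $U$, one obtains
\begin{align}
\langle\chi|\phi\rangle=\frac{1}{2^n}\sum_r(-1)^{x\cdot r}\bra{\psi_r}P\ket{\psi_r}.
\end{align}
Decomposing $\ket{\psi_r}$ according to the value in $\qreg{B}$ shows $\bra{\psi_r}P\ket{\psi_r}=2q_r-1$, where $q_r$ is the probability that $\qreg{B}$ reads $0$; a short case analysis on whether $x\cdot r$ is $0$ or $1$ gives $(-1)^{x\cdot r}(2q_r-1)=2p_r-1$ in both cases. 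Hence $\langle\chi|\phi\rangle=2\Exp_r[p_r]-1\ge 2\epsilon$, and since this quantity is real and nonnegative, the success probability is at least $(2\epsilon)^2=4\epsilon^2$, which is exactly the claimed bound.

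The step I expect to require the most care is the reduction to a unitary $U$ in the presence of the quantum auxiliary input. I would need to argue that purifying $\qaux$ and deferring measurements does not change the per-$r$ prediction probabilities $p_r$ (so that the hypothesis transfers verbatim), that $\qExt$ applies physical operations only to the single given copy of $\qaux$ while the purification serves purely as an analysis device, and that the $U$-then-$U^\dagger$ sandwich genuinely restores the ancilla and reference registers on every branch, so that the overlap $\langle\chi|\phi\rangle$ has the clean closed form above. Everything else—the phase-kickback identity and the inner-product bookkeeping—is routine once this coherent-access setup is in place.
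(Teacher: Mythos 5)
The paper never proves this lemma at all: it imports it verbatim from \cite[Lemma~B.12]{C:CLLZ21}, and your reconstruction is precisely the Adcock--Cleve phase-kickback argument underlying that cited result --- purify $\qaux$, run the predictor coherently, apply $(-1)^{b}$ on the output qubit, uncompute, Hadamard, and lower-bound the overlap with the clean state by $\frac{1}{2^n}\sum_r(-1)^{x\cdot r}\bra{\psi_r}P\ket{\psi_r}=2\Exp_r[p_r]-1\ge 2\epsilon$, so the success probability is at least $4\epsilon^2$. Your computation is correct, including the case analysis $(-1)^{x\cdot r}(2q_r-1)=2p_r-1$ and the observation that the purifying reference register is untouched so the marginal statistics are unchanged; the one implicit step --- that $U$ preserves the input register $\qreg{R}$, which kills the cross terms $r\ne r'$ in $\langle\chi|\phi\rangle$ --- is the standard WLOG obtained by first copying the classical input into workspace, and is worth stating explicitly.
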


\subsection{Transformation}
\begin{lemma}[Restatement of \Cref{lem:upf-prf}]\label{lem:upf-prf_app}
If there exists a UP-VRA secure UPF-SKL scheme with classical revocation, then there exists a PR-VRA secure PRF-SKL scheme with classical revocation.
\end{lemma}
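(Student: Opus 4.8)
The plan is to upgrade a UP-VRA secure UPF-SKL scheme into a PR-VRA secure PRF-SKL scheme by applying the quantum Goldreich-Levin lemma (\Cref{lem:QGL}) in the standard way that one-wayness is upgraded to indistinguishability. Let $\UPFSKL=(\qKG,\Eval,\qLEval,\qDel,\DelVrfy)$ be a UP-VRA secure scheme with domain $\Domprf$ and range $\Ranprf=\bit^m$ for some $m=m(\secpar)$. I would define a PRF-SKL scheme $\PRFSKL$ whose domain is $\Domprf\times\bit^m$ and whose range is $\bit$, setting
\[
\PRFSKL.\Eval(\msk,(\prfinp,r))\seteq \Eval(\msk,\prfinp)\cdot r,
\]
i.e., the Goldreich-Levin hardcore bit of the UPF output, where $r\in\bit^m$ is the extra part of the input and $x\cdot r=\bigoplus_{j} x[j]\cdot r[j]$. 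The algorithms $\qKG$, $\qDel$, $\DelVrfy$ are inherited unchanged, and $\PRFSKL.\qLEval(\qsk,(\prfinp,r))$ runs $\qLEval(\qsk,\prfinp)$ to obtain $\prfout'$ and outputs $\prfout'\cdot r$. Evaluation correctness and deletion verification correctness are inherited immediately from those of $\UPFSKL$, using reusability (\Cref{rem:reusability＿PRF}) so that the single leased key can be evaluated on the inner UPF input before taking the inner product.

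For security, I would argue by contradiction: suppose a QPT adversary $\qA$ breaks PR-VRA security of $\PRFSKL$ with non-negligible advantage. In the PR-VRA game, after $\qA$ submits an accepting $\cert$, it receives $(\dvk,(\prfinp^*,r^*),\prfout^*_\coin)$ where $\prfout^*_0=\Eval(\msk,\prfinp^*)\cdot r^*$ and $\prfout^*_1\gets\bit$. Standard reasoning shows that a non-negligible distinguishing advantage between the real hardcore bit and a random bit translates into a non-negligible advantage in \emph{predicting} $\Eval(\msk,\prfinp^*)\cdot r^*$ from $(\dvk,\prfinp^*,r^*)$ together with the post-$\cert$ quantum state of $\qA$; concretely, one builds a predictor whose success probability over random $r^*$ exceeds $1/2$ by a non-negligible amount. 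I would then build an adversary $\qB$ against UP-VRA security of $\UPFSKL$: $\qB$ runs $\qA$, forwarding $\qsk$ and relaying the deletion certificate, and upon receiving the UP-VRA challenge $(\dvk,\prfinp^*)$ it sets this as $\qA$'s inner UPF challenge. At this point $\qB$ holds $\qA$'s residual quantum state $\qaux$ as quantum auxiliary input, and the map $r\mapsto(\text{$\qA$'s guess on }(\dvk,\prfinp^*,r))$ is exactly the kind of predictor to which \Cref{lem:QGL} applies. Invoking $\qExt$ on this predictor, $\qB$ extracts the full string $\Eval(\msk,\prfinp^*)$ with probability at least $4\epsilon^2$, which is non-negligible, contradicting UP-VRA security.

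The key steps in order are: (i) define $\PRFSKL$ as the hardcore-bit construction and verify the two correctness properties and reusability are inherited; (ii) convert a PR-VRA distinguisher into a hardcore-bit predictor, carefully accounting for the fact that the verification key $\dvk$ and the challenge are revealed only \emph{after} $\cert$ passes, so the reduction must defer fixing $r^*$ until the post-certificate phase; (iii) package $\qA$'s post-certificate state as quantum auxiliary input and apply \Cref{lem:QGL}; (iv) check that $\qB$ is admissible in the UP-VRA game, i.e., the certificate $\qB$ forwards still passes $\DelVrfy$ and the challenge input distribution matches. I expect the main obstacle to be step (iii): one must ensure that the predictor fed to $\qExt$ is a genuine fixed quantum algorithm with a fixed quantum auxiliary state, rather than an interactive process, and that the probability bookkeeping correctly isolates the conditional event that $\qA$'s certificate is accepted. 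Since $\dvk$ is only given after the certificate, the reduction $\qB$ must snapshot $\qA$'s internal state at the moment the UP-VRA challenger would deliver $(\dvk,\prfinp^*)$, and treat the subsequent behavior of $\qA$ as a $r^*$-parametrized family sharing this single auxiliary state; this is precisely the structure \Cref{lem:QGL} is designed for, so the difficulty is careful conditioning rather than any new idea. Because this argument is essentially identical to \cite[Lemma 3.12]{EC:AKNYY23} modulo the syntactic differences between PKE-SKL and PRF-SKL, I would present the reduction in full but keep the hardcore-bit calculations at the level of citing the quantum Goldreich-Levin lemma.
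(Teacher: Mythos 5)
Your proposal is correct and follows essentially the same route as the paper's proof: the identical hardcore-bit construction (appending $r$ to the input and outputting $\Eval(\msk,\prfinp)\cdot r$), followed by snapshotting the adversary's post-certificate state as quantum auxiliary input, converting the distinguisher into a predictor of the inner product, and invoking the quantum Goldreich--Levin extractor to break UP-VRA security, with the same careful conditioning on the certificate being accepted. The only cosmetic difference is that the paper makes the distinguisher-to-predictor step explicit by switching to a variant experiment with $t_1^*=t_0^*\oplus 1$ and having the predictor output $\coin'\oplus t^*$ for a uniformly chosen $t^*$, which is the "standard reasoning" you defer to.
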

\begin{proof}
$\UPFSKL=(\UPFSKL.\qKG,\UPFSKL.\Eval, \UPFSKL.\qLEval,\allowbreak\UPFSKL.\qDel,\UPFSKL.\DelVrfy)$ be a UP-VRA secure UPF-SKL scheme with classical revocation whose input space is $\bit^n$ and output space is $\bit^\ell$. Then, we construct an IND-VRA secure PKE-SKL scheme with classical revocation $\PRFSKL=(\PRFSKL.\qKG,\PRFSKL.\Eval, \allowbreak \PRFSKL.\qLEval,\allowbreak\PRFSKL.\qDel,\PRFSKL.\DelVrfy)$  whose input space is $\bit^{n+\ell}$ and output space is $\bit$ as follows. 
\begin{description}
\item[$\PRFSKL.\qKG(1^\secp)\ra(\msk,\qsk,\dvk)$:] On input the security parameter $1^\lambda$, 
run $(\msk,\qsk,\dvk)\gets \UPFSKL.\qKG(1^\secp)$ and output $(\msk,\qsk,\dvk)$.

\item[$\PRFSKL.\Eval(\msk,s)\ra t$:]On input a master secret key $\msk$ and an input $s \in \bit^{n+\ell}$, 
parse $s=s^\prime \concat r$ where $s^\prime\in \bit^n$ and $r\in \bit^\ell$,  
generate $t^\prime \gets \UPFSKL.\Eval(\msk,s^\prime)$,  
and output $t\seteq t^\prime \cdot r$. 

\item[$\PRFSKL.\qLEval(\qsk,s)\ra t$:] On input a secret key $\qsk$ and an input $s \in \bit^{n+\ell}$, 
parse $s=s^\prime \concat r$ where $s^\prime\in \bit^n$ and $r\in \bit^\ell$,  
generate $t^\prime \gets \UPFSKL.\qLEval(\qsk,s^\prime)$,  
and output $t\seteq t^\prime \cdot r$. 

\item[$\PRFSKL.\qDel(\qsk)$:] On input a secret key $\qsk$, output $\cert\gets\UPFSKL.\qDel(\qsk)$.

\item[$\PRFSKL.\DelVrfy(\dvk,\cert)\ra\top/\bot$:] On input a verification key $\dvk$ and a  certificate $\cert$, run $\UPFSKL.\DelVrfy(\dvk,\cert)$ and
output whatever $\UPFSKL.\DelVrfy$ outputs.  
\end{description}
The evaluation correctness and verification correctness of $\PRFSKL$ immediately follow from those of $\UPFSKL$.  
In the following, we prove that $\PRFSKL$ is PR-VRA secure assuming that $\UPFSKL$ is UP-VRA secure. 
Toward contradiction, suppose that $\PRFSKL$ is not PR-VRA secure.  Let $\expb{\PRFSKL,\qA}{pr}{vra'} (1^\secp,\coin)$ be an experiment that works similarly to $\expb{\PRFSKL,\qA}{pr}{vra} (1^\secp,\coin)$ (defined in \Cref{def:PR-VRA}) except that we set $t^*_1\seteq t^*_0 \oplus 1$ instead of choosing $t^*_1$ uniformly from the output space. (Recall that $\PRFSKL$ has 1-bit outputs.) 
Then the indistinguishability between $\expb{\PRFSKL,\qA}{pr}{vra} (1^\secp,0)$ and $\expb{\PRFSKL,\qA}{pr}{vra} (1^\secp,1)$ is equivalent to that between $\expb{\PRFSKL,\qA}{pr}{vra'} (1^\secp,0)$ and $\expb{\PRFSKL,\qA}{pr}{vra'} (1^\secp,1)$. Thus,  by our assumption that $\PRFSKL$ is not PR-VRA secure, 
there is a QPT adversary $\qA$ such that 
\begin{align} \label{eq:IND_A_wins_prf}
\Pr_{\coin\gets \bit}[\expb{\PRFSKL,\qA}{pr}{vra'} (1^\secp,\coin) \ra \coin]\ge 1/2+\epsilon(\secp)
\end{align}
for a non-negligible $\epsilon(\secp)$. 
We divide $\qA$ into the following two stages $\qA_0$ and $\qA_1$: 
\begin{description}
\item[$\qA_0(\qsk)\rightarrow (\cert,\qst_\qA)$:] Upon receiving $\qsk$ from the challenger, output a classical certificate $\cert$ and a quantum state $\qst_\qA$. 
\item[$\qA_1(\qst_\qA,\dvk,s^*,t^*)\rightarrow \coin'$:] Upon receiving the state $\qst_\A$ from $\qA_0$, and $(\dvk,s^*,t^*)$ from the challenger, output $\coin'$.  
\end{description} 

We have 
\begin{align}
&\Pr_{\coin\gets \bit}[\expb{\PRFSKL,\qA}{pr}{vra'} (1^\secp,\coin) \ra \coin]\\
&=\Pr_{\coin\gets \bit}[\expb{\PRFSKL,\qA}{pr}{vra'} (1^\secp,\coin) \ra \coin \wedge \DelVrfy(\dvk,\cert)=\top]\\
&+\Pr_{\coin\gets \bit}[\expb{\PRFSKL,\qA}{pr}{vra'} (1^\secp,\coin) \ra \coin \wedge \DelVrfy(\dvk,\cert)=\bot]\\
&=
\Pr[\DelVrfy(\dvk,\cert)=\returned]
\cdot \Pr_{\coin\gets \bit}[\expb{\PRFSKL,\qA}{pr}{vra'} (1^\secp,\coin) \ra \coin \mid \DelVrfy(\dvk,\cert)=\returned]\\
&+\frac{1}{2}(1-\Pr[\DelVrfy(\dvk,\cert)=\returned]). \label{eq:advantage_A_decompose_prf}
\end{align}
By \cref{eq:IND_A_wins_prf,eq:advantage_A_decompose_prf}, 
we have 
\begin{align}\label{eq:conditioned_on_returned_prob_prf}
 \Pr_{\coin\gets \bit}[\expb{\PRFSKL,\qA}{pr}{vra'} (1^\secp,\coin) \ra \coin \mid \DelVrfy(\dvk,\cert)=\returned]\ge \frac{1}{2}+\frac{\epsilon(\secp)}{\Pr[\DelVrfy(\dvk,\cert)=\returned]}.
\end{align}

Then, we construct an adversary $\qB=(\qB_0,\qB_1)$ against UP-VRA security of $\UPFSKL$ that works as follows. 
\begin{description}
\item[$\qB_0(\qdk)\rightarrow (\cert,\qst_\qA)$:]  
This is identical to $\qA_0$. Specifically, run $(\cert,\qst_\qA)\gets \qA_0(\qsk)$ and output $\cert$ and $\qst_\qA$.   
\item[$\qB_1(\qst_\qA,\dvk,s^\prime )\rightarrow t^\prime$:] Upon receiving $\qst_\qA$ from $\qB_0$  and $(\dvk,s^\prime)$ from the challenger, set $\qaux\seteq (\qst_\qA,\dvk,s^\prime)$ and define an algorithm $\qA'$ as follows.
\begin{description}
\item[$\qA'(\qaux,r)$:]
On input $\qaux=(\qst_\qA,\dvk,s^\prime)$ and $r\in \bit^\ell$,  
choose $t^*\gets \bit$ and $r \gets \bit^\ell$,  
set $s^*=s^\prime \concat r$, 
run $\coin' \gets \qA_1(\qst_\qA,\dvk,s^*,t^*)$, and output $\coin'\oplus t^*$.  
\end{description}
Run $t^\prime \gets \qExt([\qA'],\qaux)$, and output $t^\prime$ where  
$\qExt$ is the algorithm as in \Cref{lem:QGL} and
$[\qA']$ is the description of $\qA'$.
\end{description}

In the following, we show that $\qB$ breaks UP-VRA security of $\UPFSKL$. 
Let $\qG$ be an algorithm that works as follows.
\begin{description}
\item[$\qG(1^\secp)$:] 
Generate $(\msk,\qsk,\dvk)\gets\UPFSKL.\qKG(1^\secp)$, 
$(\cert,\qst_\qA)\gets \A_0(\qsk)$,    
$s^\prime \gets \bit^\ell$, and 
$t^\prime \seteq \UPFSKL(\UPFSKL.\msk,s^\prime)$. 
Let $V\seteq \returned$ if $\DelVrfy(\dvk,\cert)=\returned$ and $V\seteq \unreturned$ otherwise.    
Output $(V,\qst_\qA,\dvk,s^\prime,t^\prime)$. 
\end{description}

By \cref{eq:conditioned_on_returned_prob_prf} and a standard averaging argument,  
with probability at least $\frac{\epsilon(\secp)}{2\Pr[V=\returned]}$ over the choice of  $(V,\qst_\qA,\dvk,s^\prime,t^\prime)\gets \qG(1^\secp)$  conditioned on $V=\returned$, we have 
\begin{align} 
    \Pr\left[
    \A_1(\qst_\qA,\dvk,s^*,t^*)\rightarrow \coin
    \right]\ge \frac{1}{2}+\frac{\epsilon(\secp)}{2\Pr[V=\returned]}\ge \frac{1}{2}+\frac{\epsilon(\secp)}{2}
\end{align}
where  
    $r\gets \bit^\ell$, 
    $s^*\seteq s^\prime\concat r$, 
    $\coin\gets \bit$,  
    and 
    $t^*\seteq t^\prime\cdot r\oplus \coin$.
    
Therefore, with probability at least $\frac{\epsilon(\secp)}{2}$ over the choice of  $(V,\qst_\qA,\dvk,s^\prime,t^\prime)\gets \qG(1^\secp)$, we have 
\begin{align}\label{eq:IND_A_wins_rewrite_again} 
    \Pr\left[
    \A_1(\qst_\qA,\dvk,s^*,t^*)\rightarrow \coin
    \right]\ge \frac{1}{2}+\frac{\epsilon(\secp)}{2}
\end{align}
where  
    $r\gets \bit^\ell$, 
    $s^*\seteq s^\prime\concat r$,   
   $\coin\gets \bit$, 
    and 
    $t^*\seteq t^\prime\cdot r\oplus \coin$.

For such $(V,\qst_\qA,\dvk,s^\prime,t^\prime)$, if we let $\qaux=(\qst_\qA,\dvk,s^\prime)$, \cref{eq:IND_A_wins_rewrite_again} directly implies  
\begin{align}
\Pr_{r\gets \bit^n}\left[\qA'(\qaux,r)\ra t^\prime\cdot r\right]\geq \frac{1}{2}+\frac{\epsilon(\secp)}{2}.
\end{align}
Therefore, by \Cref{lem:QGL}, we have 
\begin{align} \label{eq:ext_succeed}
    \Pr\left[\qExt([\qA'],\qaux)\rightarrow t^\prime \right]\geq \epsilon(\secp)^2.
\end{align}
Since \cref{eq:ext_succeed} \emph{and} $V=\returned$ hold at the same time with probability at least $\frac{\epsilon(\secp)}{2}$ over the choice of $(V,\qst_\qA,\dvk,s^\prime,t^\prime)$,  we have  
\begin{align}
    \Pr_{(V,\qst_\qA,\dvk,s^\prime,t^\prime)\gets \qG(1^\secp) }\left[
    V=\returned
    \wedge
    \qB_1(\qst_\qA,\dvk,s^\prime)\rightarrow t^\prime 
    \right]\ge \frac{\epsilon(\secp)^3}{2}.
\end{align} 
By the definitions of $\qB=(\qB_0,\qB_1)$ and $\qG$ and the assumption that $\epsilon(\secp)$ is non-negligible, this implies that $\qB$ breaks UP-VRA security of $\UPFSKL$. 
\end{proof}

\begin{remark}
The above conversion does not work as a compiler from a plain UPF to a plain PRF. For such a compiler, we need to take a seed for the Goldreich-Levin predicate from a key rather than input~\cite{C:NaoRei98}. 
However, this is not a problem for our purpose since we can easily add security as a plain PRF on top of any PR-VRA secure PRF-SKL as noted in \Cref{rem:security_plain_PRF}. 
\end{remark}
 \section{Proof of \Cref{lem:R_is_trapdoor}}\label{sec:proof_MR}
We rely on the following lemma shown by \cite{EC:MicPei12}. 
\begin{lemma}[{\cite[Theorem 4.1]{EC:MicPei12}}]\label{lem:G_invert}
    For $n,q,k\in \NN$ such that 
$q\ge 2$ and
$k\geq n\lceil \log q\rceil$, 
let $\mG\in \ZZ_q^{n\times k}$ be the gadget matrix. 
There is a basis $\mS\in \ZZ_q^{k\times k}$ of $\Lambda_\mG^\bot$ such that $\|\mS\|_\infty\le 2$.\footnote{The original statement of \cite[Theorem 4.1]{EC:MicPei12} does not give a bound for the $\ell_\infty$ norm of $\mS$, but this can be easily seen from its proof.} Moreover, there is a classical deterministic algorithm $\Invert_\mG$ that takes $\vs^\trans \mG+\ve^\trans$ and outputs $(\vs,\ve)$ for any $\vs\in \ZZ_q^n$ and $\ve\in \ZZ_q^k$ such that $\|\mS^\trans \ve\|_\infty< q/2$. 
The running time of $\Invert_\mG$ is $\poly(n,k,\log q)$.
\end{lemma}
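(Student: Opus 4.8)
The plan is to prove the two assertions of the lemma in turn: first exhibit the short basis $\mS$ explicitly and verify its properties, then describe and analyze the decoding algorithm $\Invert_\mG$, and finally bound its running time. Throughout, write $\ell\seteq\lceil\log q\rceil$ and $\vg\seteq(1,2,\ldots,2^{\ell-1})^\trans$, so that the non-padded part of $\mG$ is $\mI_n\otimes\vg^\trans$.

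First I would construct a short basis $\mS_\vg\in\ZZ^{\ell\times\ell}$ of the scalar lattice $\Lambda_\vg^\bot=\{\vx\in\ZZ^\ell:\vg^\trans\vx=0\bmod q\}$. Take $\mS_\vg$ to be the lower-bidiagonal matrix whose first $\ell-1$ columns are $2\mathbf{u}_i-\mathbf{u}_{i+1}$ (the ``$2$ on the diagonal, $-1$ just below'' pattern, with $\mathbf{u}_i$ the standard basis vectors) and whose last column is the binary expansion $(q_0,\ldots,q_{\ell-1})^\trans$ of $q$. A direct computation gives $\vg^\trans(2\mathbf{u}_i-\mathbf{u}_{i+1})=2\cdot2^{i-1}-2^i=0$ over $\ZZ$ and $\vg^\trans(q_0,\ldots,q_{\ell-1})^\trans=q\equiv0\bmod q$, so every column lies in $\Lambda_\vg^\bot$, while $\|\mS_\vg\|_\infty=2$. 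To confirm it is a genuine basis rather than merely a generating set I would check $|\det\mS_\vg|=q$; since $\vg$ is primitive, the map $\vx\mapsto\vg^\trans\vx\bmod q$ is surjective, so $[\ZZ^\ell:\Lambda_\vg^\bot]=q$, matching the determinant. I then set
\[
\mS\seteq\begin{pmatrix}\mI_n\otimes\mS_\vg & \mathbf{0}\\ \mathbf{0} & \mI_{k-n\ell}\end{pmatrix},
\]
where the identity block handles the $k-n\ell$ padding coordinates, which are unconstrained because the corresponding columns of $\mG$ vanish. Block-diagonality makes $\mS$ a basis of $\Lambda_\mG^\bot$ with $\|\mS\|_\infty\le2$, giving the first claim.

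For the inversion I would phrase $\Invert_\mG$ as Babai round-off decoding on the dual lattice. Lifting the input $\mathbf{b}^\trans=\vs^\trans\mG+\ve^\trans\bmod q$ to $\ZZ^k$, the admissible errors $\{\mathbf{b}-\mG^\trans\vs\bmod q\}$ form a coset of $\Lambda_\mG\seteq\{\mG^\trans\vs\bmod q\}+q\ZZ^k=q(\Lambda_\mG^\bot)^*$, which has the integer basis $q\mS^{-\trans}$ (integrality follows from $\det\mS=\pm q$, so $q\mS^{-1}=\pm\operatorname{adj}(\mS)$). The algorithm computes $\ve\seteq\mathbf{b}-q\,\mS^{-\trans}\round{\tfrac1q\mS^\trans\mathbf{b}}$ and then recovers $\vs$ from $\mG^\trans\vs=\mathbf{b}-\ve\bmod q$ by inverting the gadget structure coordinate-wise (reading $s_i$ off the first coordinate of each block, using $\vg_1=1$). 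Correctness is exactly the round-off guarantee: the computed $\ve$ equals the true error precisely when $(q\mS^{-\trans})^{-1}\ve=\tfrac1q\mS^\trans\ve$ has every coordinate in $(-\tfrac12,\tfrac12]$, i.e. when $\|\mS^\trans\ve\|_\infty<q/2$, which is the hypothesis. All steps are polynomial: $\mS^\trans\mathbf{b}$ is cheap since $\mS$ is sparse, and applying $\mS^{-1}$ to an integer vector is exact back-substitution, so the whole procedure is deterministic and runs in $\poly(n,k,\log q)$. The main obstacle I anticipate is bookkeeping rather than conceptual: verifying that the explicit $\mS_\vg$ is a true basis (index/determinant equal to $q$) for an arbitrary modulus $q$, not just a power of two, where the binary-expansion last column is exactly what repairs the construction; and ensuring the round-off is carried out with exact rational arithmetic (through the unimodular back-substitution) so that the correctness threshold $\|\mS^\trans\ve\|_\infty<q/2$ holds with no precision slack.
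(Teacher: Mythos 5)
Your proposal is correct, but note that the paper does not prove this lemma at all: it is imported verbatim from Micciancio--Peikert (cited as [MP12, Theorem 4.1]), with only a footnote observing that the $\ell_\infty$ bound on $\mS$ is visible in their proof. What you have written is essentially a faithful reconstruction of the MP12 argument itself --- the bidiagonal $2/{-1}$ basis with the binary expansion of $q$ in the last column is exactly their basis $\mS_k$ for general moduli, and your Babai round-off against $q\mS^{-\trans}$ with correctness region $\|\mS^\trans\ve\|_\infty < q/2$ is precisely their parallelepiped condition $\ve \in q\,\mathcal{P}_{1/2}(\mS^{-\trans})$ --- so there is nothing methodologically new relative to the cited source, but the reconstruction is sound.

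One local misstatement to fix: for the full block-diagonal matrix you have $\det\mS = \pm q^{n}$ (each block $\mS_{\vg}$ contributes $q$, the padding block contributes $1$), not $\det\mS=\pm q$, so the identity $q\mS^{-1}=\pm\operatorname{adj}(\mS)$ fails globally. The integrality of $q\mS^{-\trans}$ that your algorithm needs still holds, but the justification must be made per block: $q\mS_{\vg}^{-1}=\pm\operatorname{adj}(\mS_{\vg})$ since $\det\mS_{\vg}=q$, and the padding block gives $q\mI$, which is trivially integral. With that repaired, every step checks out, including the index argument $[\ZZ^{\ell}:\Lambda_{\vg}^{\bot}]=q$ certifying that your columns form a genuine basis, and the recovery of $\vs$ from the first coordinate of each block using $g_1=1$.
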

Then we prove \Cref{lem:R_is_trapdoor}
\begin{proof}[Proof of \Cref{lem:R_is_trapdoor}]
The algorithm $\Invert$ works as follows. On input $\left[\mathbf{A}\mid \mathbf{A}\mathbf{R}+\mathbf{G}\right]\in \ZZ_q^{n\times (m+k)}$, $\mR\in \ZZ_q^{m\times k}$, and $\vy \in \ZZ_q^{m+k}$, 
run $(\vs,\ve')\gets \Invert_\mG(\vy_2^\trans-\vy_1^\trans \mR)$
where $\vy_1\in \ZZ_q^m$ and $\vy_2\in \ZZ_q^k$ are vectors such that $\vy^\trans=(\vy_1^\trans\mid \vy_2^\trans)$. 
Then it sets $\ve\seteq 
\vy-
\vs^\trans \left[\mathbf{A}\mid \mathbf{A}\mathbf{R}+\mathbf{G}\right] \mod q$ and outputs $(\vs,\ve)$. 
Below, we show that the above algorithm works correctly, 
that is, 
if $\vy^\trans=\vs^\trans \left[\mathbf{A}\mid \mathbf{A}\mathbf{R}+\mathbf{G}\right]+\ve^\trans \mod q$ where $\vs\in \ZZ_q^n$ and $\ve\in \ZZ_q^{m+k}$ such that $\|\ve\|_\infty\le q/(5k(m\|\mR\|_\infty+1))$, then it outputs $(\vs,\ve)$. 
To show this, we observe that $\vy_2^\trans-\vy_1^\trans \mR=\vs^\trans \mG+ \ve_2^\trans-\ve_1^\trans \mR$ where $\ve_1\in \ZZ_q^m$ and $\ve_2\in \ZZ_q^k$ are vectors such that $\ve^\trans=(\ve_1^\trans\mid \ve_2^\trans)$. 
Let $\ve'\seteq \ve_2^\trans-\ve_1^\trans \mR$ and $\mS\in \ZZ_q^{k\times k}$ be the matrix as in \Cref{lem:G_invert}. 
Then we have $\|\mS^\trans \ve'\|_\infty\le 
k\|\mS\|_{\infty}(\|\ve\|_\infty+m\|\ve\|_\infty\|\mR\|_\infty)\le 2k\|\ve\|_\infty(m\|\mR\|_\infty+1)\le 2q/5<q/2$. Thus, by \Cref{lem:G_invert},  $\Invert_\mG(\vy_2^\trans-\vy_1^\trans \mR)$ outputs $(\vs,\ve')$. Thus, $\Invert(\left[\mathbf{A}\mid \mathbf{A}\mathbf{R}+\mathbf{G}\right], \mR,\vy)$ outputs $(\vs,\ve)$. 
\end{proof}

\else
	\newpage
	 	\appendix
	 	\setcounter{page}{1}
 	{
	\noindent
 	\begin{center}
	{\Large SUPPLEMENTAL MATERIALS}
	\end{center}
 	}
	\setcounter{tocdepth}{2}

  \section{Definitions of PRF-SKL and DS-SKL}\label{sec:definitions_PRF-SKL_and_DS-SKL}
\subsection{Pseudorandom and Unpredictable Functions with Secure Key Leasing}\label{def:PRF-SKL}
In this subsection, we define pseudorandom functions with secure key leasing (PRF-SKL) with classical revocation. 
The syntax is similar to that of key-revocable PRFs in \cite{TCC:AnaPorVai23}   
except that we introduce a deletion algorithm that generates a classical certificate of deletion of a key.\footnote{\cite{TCC:AnaPorVai23} also mentions a classical revocation variant, but they do not provide its formal definition.}

\begin{definition}[PRF-SKL with classical revocation]
A PRF-SKL scheme  $\PRFSKL$ with classical revocation for the domain $\Domprf$ and the range $\Ranprf$ is a tuple of five algorithms $(\qKG, \Eval, \qLEval,\qDel,\DelVrfy)$. 
\begin{description}
\item[$\qKG(1^\secp)\ra(\msk,\qsk,\dvk)$:] The key generation algorithm is a QPT algorithm that takes a security parameter $1^\lambda$, and outputs a master secret key $\msk$, a secret key $\qsk$, and a deletion verification key $\dvk$.

\item[$\Eval(\msk,\prfinp)\ra \prfout$:] The evaluation algorithm is a deterministic classical polynomial-time algorithm that takes a master secret key $\msk$ and an input $\prfinp \in \Domprf$, and outputs a value $\prfout$.

\item[$\qLEval(\qsk,\prfinp)\ra \prfout$:] The leased evaluation algorithm is a QPT algorithm that takes a secret key $\qsk$ and an input $\prfinp \in \Domprf$, and outputs a value $\prfout$.

\item[$\qDel(\qsk)\ra\cert$:] The deletion algorithm is a QPT algorithm that takes a secret key $\qsk$, and outputs a classical string $\cert$.

\item[$\DelVrfy(\dvk,\cert)\ra\top/\bot$:] The deletion verification algorithm is a deterministic classical polynomial-time algorithm that takes a deletion verification key $\dvk$ and a deletion certificate $\cert$, and outputs $\top$ or $\bot$.

\item[Evaluation correctness:]For every $\prfinp \in \Domprf$, we have
\begin{align}
\Pr\left[
\qLEval(\qsk,\prfinp) \allowbreak = \Eval(\msk,\prfinp)
\ \middle |
\begin{array}{ll}
(\msk,\qsk,\dvk)\gets\qKG(1^\secp)\\
\end{array}
\right] 
=1-\negl(\secp).
\end{align}

\item[Deletion verification correctness:] We have 
\begin{align}
\Pr\left[
\DelVrfy(\dvk,\cert)=\top
\ \middle |
\begin{array}{ll}
(\msk,\qsk,\dvk)\gets\qKG(1^\secp)\\
\cert\gets\qDel(\qsk)
\end{array}
\right] 
=1-\negl(\secp).
\end{align}
\end{description}
\end{definition}
\begin{remark}[Reusability]\label{rem:reusability＿PRF}
We can assume without loss of generality that a key of a PRF-SKL scheme is reusable, i.e., it can be reused to evaluate on  (polynomially) many inputs. This is because the output of the leased evaluation algorithm is almost unique by evaluation correctness, and thus such an operation can be done without almost disturbing the input state by the gentle measurement lemma~\cite{Winter99}.    
\end{remark}

Our security definition is similar to that in \cite{TCC:AnaPorVai23} except that we allow the adversary to receive the verification key after submitting a valid certificate.  
\begin{definition}[PR-VRA security]\label{def:PR-VRA}
We say that a PRF-SKL scheme  $\PRFSKL$ with classical revocation for the domain $\Domprf$ and the range $\Ranprf$ is PR-VRA secure, if it satisfies the following requirement, formalized by the experiment $\expb{\PRFSKL,\qA}{pr}{vra}(1^\secp,\coin)$ between an adversary $\qA$ and the challenger:
        \begin{enumerate}
            \item  The challenger runs $(\msk,\qsk,\dvk)\gets\qKG(1^\secp)$ and sends $\qsk$ to $\qA$. 
            \item $\qA$ sends $\cert$ to the challenger. If $\DelVrfy(\dvk,\cert)=\bot$, the challenger outputs $0$ as the final output of this experiment. Otherwise, the challenger generates $\prfinp^*\la\Domprf$, $\prfout^*_0\la\Eval(\msk,\prfinp^*)$, and $\prfout^*_1\la\Ranprf$, and sends $(\dvk,\prfinp^*,\prfout^*_\coin)$ to $\qA$.
            \item $\qA$ outputs a guess $\coin^\prime$ for $\coin$. The challenger outputs $\coin'$ as the final output of the experiment.
        \end{enumerate}
        For any QPT $\qA$, it holds that
\begin{align}
\advb{\PRFSKL,\qA}{pr}{vra}(\secp) \seteq \abs{\Pr[\expb{\PRFSKL,\qA}{pr}{vra} (1^\secp,0) = 1] - \Pr[\expb{\PRFSKL,\qA}{pr}{vra} (1^\secp,1) = 1] }\leq \negl(\secp).
\end{align} 
\end{definition}
\begin{remark}[Relationship with the definition of \cite{TCC:AnaPorVai23}]
It is easy to see that PR-VRA security implies the security of \cite{TCC:AnaPorVai23} since theirs is a weaker variant of PR-VRA security where the verification key is not given to the adversary.\footnote{Strictly speaking, \cite{TCC:AnaPorVai23} only gives a security definition in the quantum revocation setting, but it can be extended to the classical revocation setting in a straightforward manner.}
\end{remark}
\begin{remark}[Variant with the deletion verification oracle]
We can consider a seemingly stronger definition where we allow the adversary to make arbitrarily many queries to the deletion verification oracle in the first phase. However, we can show that the PR-VRA security implies security in such a setting as well by a similar argument to the proof of \Cref{lem:IND-VRA_to_IND-KLA}. Thus, we omit the deletion verification oracle from the definition for simplicity. 
\end{remark}
\begin{remark}[Security as a plain PRF]\label{rem:security_plain_PRF}
PR-VRA security does not imply security as a plain PRF though it \emph{does} imply security as a plain \emph{weak} PRF. However, there is a very simple transformation to add security as a plain PRF as observed in \cite[Claim 9.8]{TCC:AnaPorVai23}.  
We can simply take an XOR of PR-VRA secure PRF-SKL and a plain PRF to achieve both security simultaneously. 
Thus we focus on achieving PR-VRA security in this paper. 
\end{remark}

\if0
 \begin{definition}[Strong PR-KLA Security]\label{def:strong-PR-KLA}
Let $\gamma\geq0$.
We define the game $\expc{\PRFSKL,\qA}{s}{pr}{kla}(\secp,\gamma)$ as follows.
\begin{enumerate}
            \item  The challenger runs $(\msk,\qsk,\dvk)\gets\qKG(1^\secp)$ and sends $\qsk$ to $\qA$. 
            \item $\qA$ outputs a classical string $\cert$ and a quantum program with classical input and output $(\ket{\psi},\mat{U})$ whose first register (i.e., output register) is $\bbC^2$ and $\mat{U}$ is a compact classical description of $\{\mat{U}_{\dvk^\prime,\prfinp,\prfout}\}_{\dvk^\prime,\prfinp,\prfout}$.

\item Let $D$ be the following distribution.
\begin{description}
\item[$D$:]Generate $\beta\gets\bit$, $\prfinp\gets\Domprf$, $\prfout_0\gets\Eval(\msk,\prfinp)$, and $\prfout_1\gets\Ranprf$. Output $(\beta,\prfinp,\prfout_\beta)$.
\end{description}
Also, let $\cP^{\dvk}=(\mat{P}^{\dvk}_{\beta,\prfinp,\prfout},\mat{Q}^{\dvk}_{\beta,\prfinp,\prfout})_{b,\prfinp,\prfout}$ be a collection of binary outcome projective measurements, where
\begin{align}
\mat{P}^{\dvk}_{\beta,\prfinp,\prfout}=\mat{U}_{\dvk,\prfinp,\prfout}^\dagger\ket{\beta}\bra{\beta}\mat{U}_{\dvk,\prfinp,\prfout}\textrm{~~~~and~~~~}\mat{Q}^{\dvk}_{\beta,\prfinp,\prfout}=\mat{I}-\mat{P}^{\dvk}_{\beta,\prfinp,\prfout},
\end{align}
and moreover $\cP^{\dvk}_D=(\mat{P}^{\dvk}_D,\mat{Q}^{\dvk}_D)$ be the mixture of $\cP^{\dvk}$ with respect to $D$.
Then, the challenger applies $\projimp(\mat{P}^{\dvk}_{D})$ to $\ket{\psi}$ and obtains $p$.
If $\DelVrfy(\dvk,\cert)=\top$ and $p> \frac{1}{2}+\gamma$, the challenger outputs $1$.
Otherwise, the challenger outputs $0$.
 \end{enumerate}
We say that $\PRFSKL$ satisfies strong PR-KLA security if for any QPT $\qA$ and $\gamma\in[0,1]$, it holds that
\begin{align}
\advc{\PRFSKL,\qA}{s}{psdr}{kla}(\secp) \seteq \Pr[\expc{\PRFSKL,\qA}{s}{psdr}{kla} (1^\secp,\epsilon) = 1]\leq \negl(\secp).
\end{align} 
\end{definition}
\fi

As an intermediate goal towards constructing PRF-SKL, we introduce a primitive which we call unpredictable functions with secure key leasing (UPF-SKL).\footnote{Though UPF-SKL and PRF-SKL are syntactically identical, we treat them as different primitives for clarity.}

\begin{definition}[UPF-SKL]\label{def:UPF-SKL}
A UPF-SKL scheme $\mathsf{UPFSKL}$ with classical revocation has the same syntax as PRF-SKL with classical revocation and satisfies the following security, which we call UP-VRA security, 
formalized by the experiment $\expb{\UPFSKL,\qA}{up}{vra}(1^\secp)$ between an adversary $\qA$ and the challenger:
        \begin{enumerate}
            \item  The challenger runs $(\msk,\qsk,\dvk)\gets\qKG(1^\secp)$ and sends $\qsk$ to $\qA$. 
            \item $\qA$ sends $\cert$ to the challenger. If $\DelVrfy(\dvk,\cert)=\bot$, the challenger outputs $0$ as the final output of this experiment. Otherwise, the challenger generates $\prfinp^*\la\Domprf$, and sends $(\dvk,\prfinp^*)$ to $\qA$.
            \item $\qA$ outputs a guess $t^\prime$ for the output on $s^*$. The challenger outputs $1$ if $t^\prime=\Eval(\msk,s^*)$ and otherwise outputs $0$. 
        \end{enumerate}
        For any QPT $\qA$, it holds that
\begin{align}
\advb{\UPFSKL,\qA}{up}{vra}(\secp) \seteq \Pr[\expb{\UPFSKL,\qA}{up}{vra} (1^\secp) = 1] \leq \negl(\secp).
\end{align} 
\end{definition}

By the quantum Goldreich-Levin lemma~\cite{AC02,C:CLLZ21} we have the following theorem. 
\begin{lemma}\label{lem:upf-prf}
If there exists a UP-VRA secure UPF-SKL scheme with classical revocation, then there exists a PR-VRA secure PRF-SKL scheme with classical revocation.
\end{lemma}
See \Cref{sec:UPF-to-PRF} for the proof. 

\subsection{Digital Signatures with Secure Key Leasing}
In this subsection, we define digital signatures with secure key leasing (DS-SKL) with classical revocation. 
A recent work by Morimae et al.~\cite{TQC:MorPorYam24} introduced 
a similar primitive called digital signatures with revocable signing keys. A crucial difference from their definition is that we require the quantum signing key to be static, i.e., the state of the signing key almost does not change before and after signing. 
\mor{static is the same as reusability defined below? In that case, it is better to use a single terminology.
reusability is formaly defined, but static is not. In remark 3.18, you use static.}
\takashi{
Yes, that is the same as reusability defined below. 
However, reusability is a confusing term when comparing with \cite{TQC:MorPorYam24} since theirs is also "reusable" in some sense. 
To clarify the difference, I renamed it to "reusability with static signing keys".
}

\begin{definition}[DS-SKL with classical revocation]
A DS-SKL scheme $\DSSKL$  with classical revocation  
is a tuple of five algorithms $(\qKG, \qSign, \SigVrfy,\qDel,\DelVrfy)$. 
Below, let $\cM$  be the message space of $\DSSKL$. 
\begin{description}
\item[$\qKG(1^\secp)\ra(\qsigk, \sigvk,\dvk)$:] The key generation algorithm is a QPT algorithm that takes a security parameter $1^\lambda$, and outputs a signing key $\qsigk$, a signature verification key $\sigvk$, and a deletion verification key $\dvk$.

\item[$\qSign(\qsigk,\msg)\ra(\qsigk',\sigma)$:] The signing algorithm is a QPT algorithm that takes a signing key $\qsigk$ and a message $\msg \in \cM$, and outputs a subsequent signing key $\qsigk'$ and a signature $\sigma$.  \takashi{I modified the syntax to output $\qsigk'$.}

\item[$\SigVrfy(\sigvk,\msg,\sigma)\ra \top/\bot$:] The signature verification algorithm is a deterministic classical polynomial-time algorithm that takes a signature verification key $\sigvk$, a message $\msg \in \cM$, and a signature $\sigma$, and outputs $\top$ or $\bot$. 

\item[$\qDel(\qsigk)\ra\cert$:] The deletion algorithm is a QPT algorithm that takes a signing key $\qsigk$, and outputs a deletion certificate $\cert$.

\item[$\DelVrfy(\dvk,\cert)\ra\top/\bot$:] The deletion verification algorithm is a deterministic classical polynomial-time algorithm that takes a deletion  verification key $\dvk$ and a deletion certificate $\cert$, and outputs $\top$ or $\bot$.

\item[Signature verification correctness:]For every $\msg \in \cM$, we have
\begin{align}
\Pr\left[
\SigVrfy(\sigvk, \msg,\sigma) \allowbreak = \top
\ \middle |
\begin{array}{ll}
(\qsigk,\sigvk,\dvk)\gets\qKG(1^\secp)\\
(\qsigk',\sigma)\gets\qSign(\qsigk,\msg)
\end{array}
\right] 
=1-\negl(\secp).
\end{align}

\item[Deletion verification correctness:] We have 
\begin{align}
\Pr\left[
\DelVrfy(\dvk,\cert)=\top
\ \middle |
\begin{array}{ll}
(\qsigk,\sigvk,\dvk)\gets\qKG(1^\secp)\\
\cert\gets\qDel(\qsigk)
\end{array}
\right] 
=1-\negl(\secp).
\end{align}
\item[Reusability with static signing keys:]
Let $\qsigk$ be an honestly generated signing key and $\msg\in \mathcal{M}$ be any message. 
Suppose that we run $(\qsigk',\sigma)\gets\qSign(\qsigk,\msg)$. 
Then we have 
\begin{align}
    \|\qsigk-\qsigk'\|_{tr}=\negl(\secp).
\end{align}
\end{description}
\end{definition}

\begin{remark}[Reusability]\label{rem:reusability_DS}
The previous work \cite{TQC:MorPorYam24} considered a weaker definition of reusability, where $\sigk'$ is not required to be close to $\sigk$ as long as it can be still used to generate signatures on further messages.
To emphasize the difference from their definition, we call the above property reusability with static signing keys. 
Unlike the cases of PKE-SKL and PRF-SKL, we cannot assume (even the weaker version of) reusability without loss of generality since signatures generated by the signing algorithm may not be unique. Thus, we explicitly state it as a requirement.
\end{remark}

We next introduce the security notions for DS-SKL with classical revocation.

\if0
\begin{definition}[EUF-CMA Security]\label{def:EUF-CMA}
We say that a DS-SKL scheme with classical revocation $\DSSKL$  with the message space $\cM$ is EUF-CMA secure, if it satisfies the following requirement, formalized from the experiment $\expb{\DSSKL,\qA}{euf}{cma}(1^\secp)$ between an adversary $\qA$ and the challenger:
        \begin{enumerate}
            \item  The challenger runs $(\qsigk,\sigvk,\dvk)\gets\qKG(1^\secp)$ and sends $\sigvk$ to $\qA$. \takashi{Can we also give $\dvk$ to $\qA$?}  
            \item $\qA$ makes arbitrarily many \takashi{classical?} queries to the signing oracle    $\qSign(\qsigk,\cdot)$. \takashi{We may need to remark that $\qsigk$ may change after each querybut only negligibly.}
            \item $\qA$ outputs a message $\msg^*$ that is never queried to the signing oracle and a 
            signature $\sigma^*$. The challenger outputs $1$ if $\SigVrfy(\sigvk,\msg^*,\sigma^*)=\top$ and otherwise outputs $0$.  
        \end{enumerate}
        For any QPT $\qA$, it holds that
\begin{align}
\advb{\DSSKL,\qA}{euf}{cma}(\secp) \seteq \Pr[\expb{\DSSKL,\qA}{euf}{cma} (1^\secp) = 1] \leq \negl(\secp).
\end{align} 
\end{definition}
\fi

\begin{definition}[RUF-VRA security]\label{def:RUF-VRA} 
We say that a DS-SKL scheme  $\DSSKL$ with classical revocation  for the message space $\cM$ is RUF-VRA secure,\footnote{"RUF" stands for "\textbf{R}andom message \textbf{U}n\textbf{F}orgeability".} if it satisfies the following requirement, formalized by the experiment $\expb{\DSSKL,\qA}{ruf}{vra}(1^\secp)$  between an adversary $\qA$ and the\mor{a dattari the dattari surunode touitsu shitahougayoi}\takashi{let's use "the".} challenger:
        \begin{enumerate}
            \item  The challenger runs $(\qsigk,\sigvk,\dvk)\gets\qKG(1^\secp)$ and sends $\qsigk$ and $\sigvk$ to $\qA$. 
            \item $\qA$ sends $\cert$  to the challenger. If $\DelVrfy(\dvk,\cert)=\bot$, the challenger outputs $0$ as the final output of this experiment. Otherwise, the challenger chooses $\msg^*\la\mathcal{M}$, and sends $\dvk$ and $\msg^*$ to $\qA$.
            \item $\qA$ outputs a signature $\sigma^\prime$. The challenger outputs $1$ if $\SigVrfy(\sigvk,\msg^*,\sigma^\prime)=\top$ and otherwise outputs $0$.  
        \end{enumerate}
        For any QPT $\qA$, it holds that
\begin{align}
\advb{\DSSKL,\qA}{ruf}{vra}(\secp) \seteq \Pr[\expb{\DSSKL,\qA}{ruf}{vra} (1^\secp) = 1] \leq \negl(\secp).
\end{align} 
\end{definition}

\begin{remark}[Comparison with \cite{TQC:MorPorYam24} ]\label{rem:comparison_DS}
We cannot directly compare our security definition with that in \cite{TQC:MorPorYam24} since the syntax is significantly different. In \cite{TQC:MorPorYam24}, the deletion verification algorithm additionally takes a set $S$ of messages as input. Intuitively, the deletion verification algorithm verifies that the signer generated signatures only on messages that belong to $S$. Thus, their security definition requires that the adversary cannot generate a valid signature on any message outside $S$ after submitting a certificate that passes verification w.r.t. $S$. While their definition looks conceptually stronger, such security is impossible to achieve when the signing key is static as in our definition since the adversary may generate signatures on any messages without being known by the verifier. Thus, we adopt the above random message style definition analogously to a security definition of copy-protection of signing keys~\cite{TCC:LLQZ22}. 
\end{remark}
\begin{remark}[Variant with the deletion  verification oracle]
We can consider a seemingly stronger definition where we allow the adversary to make arbitrarily many queries to the deletion verification oracle in the first phase similarly. However, we can show that the RUF-VRA security implies security in such a setting as well by a similar argument to the proof of \Cref{lem:IND-VRA_to_IND-KLA}.  Thus, we omit the deletion verification oracle from the definition for simplicity.
\end{remark}
\begin{remark}[Security as a plain DS]
 RUF-VRA security does not imply EUF-CMA security as a plain DS. However, similarly to the case of PRF, there is a very simple transformation to add EUF-CMA security as a plain DS. 
We can simply parallelly run an RUF-VRA secure DS-SKL scheme and EUF-CMA secure plain DS scheme where the signature verification algorithm accepts if the signature consists of valid signatures of both schemes. It is straightforward to see that the resulting scheme satisfies both RUF-VRA security as an DS-SKL scheme and EUF-CMA security as a plain DS scheme. Thus, we focus on constructing an RUF-VRA secure DS-SKL scheme.  
\end{remark}

	\setcounter{tocdepth}{2}
	\tableofcontents

	\fi
\fi

\end{document}